\keywords{Bisimulation, Nash Equilibrium, Logic and Games, Concurrency}
\newcommand{\AP}{\mathrm{AP}}
\newcommand{\Ag}{\mathrm{Ag}}
\newcommand{\Ac}{\mathrm{Ac}}
\newcommand{\Dir}{\mathrm{Dir}}
\newcommand{\States}{\mathrm{St}}
\newcommand{\valf}{\lambda}
\newcommand{\evalf}{\valf}
\newcommand{\transf}{\delta}
\newcommand{\set}[1]{\{#1\}}
\newcommand{\direction}{d}
\newcommand{\cgspruns}{\mathit{runs}}
\newcommand{\computation}{\kappa}
\newcommand{\fincomputation}{\kappa}
\newcommand{\computations}{\mathit{comps}^{\omega}}
\newcommand{\fincomputations}{\mathit{comps}}
\newcommand{\VarSet}{\mathrm{Var}}
\newcommand{\paths}{\cgspruns}
\newcommand{\finruns}{\paths}
\newcommand{\asgFun}{\chi}
\newcommand{\AsgSet}{\mathrm{Asg}}
\newcommand{\labFun}{\lambda}
\newcommand{\AgSet}{\Ag}
\newcommand{\APSet}{\AP}
\newcommand{\EExs}[1]{\langle\langle#1\rangle\rangle}
\newcommand{\AAll}[1]{[[ #1 ]]}
\def\TEMPORAL#1{\mbox{\small$\mathbf{#1}$}}
\def\ltlnext{\TEMPORAL{X}}
\def\sometime{\TEMPORAL{F}}
\def\always{\TEMPORAL{G}}
\def\until{\,\TEMPORAL{U}\,}
\newcommand{\X}{\ltlnext}
\newcommand{\U}{\until}
\newcommand{\SetN}{\mathbb{N}}
\newcommand{\strFun}{f}
\newcommand{\StrSet}{\mathrm{Str}}
\newcommand{\xtransarrow}[1]{\mathrel{\raisebox{-0.2em}{$\xrightarrow{#1}$}}}
\newcommand{\transarrow}[1]{\overset{#1}{\longrightarrow}}
\newcommand{\Bisim}{R}
\newcommand{\Bisimilar}{\sim}
\newcommand{\bisim}{\Bisimilar}
\newcommand{\pairwisebisim}{\mathrel{\dot{\bisim}}}
\newcommand{\statewisebisim}{\mathrel{\dot{\bisim}}}
\newcommand{\run}{\rho}
\newcommand{\runs}{\mathit{runs}^\omega}
\newcommand{\traces}{\mathit{traces}^\omega}
\newcommand{\fintraces}{\mathit{traces}}
\newcommand{\trace}{\tau}
\newcommand{\fintrace}{\trace}
\newcommand{\finrun}{\pi}
\newcommand{\actions}{\Ac}
\newcommand{\action}{a}
\newcommand{\states}{\States}
\newcommand{\state}{s}
\newcommand{\midd}{\mathrel:}
\newcommand{\strategy}{f}
\newcommand{\stratprof}{\strategy}
\newcommand{\goalset}{\Gamma}
\newcommand{\agents}{\mathrm{Ag}}
\newlength{\wordlength}
\newcommand{\wordbox}[3][c]{\settowidth{\wordlength}{#3}\text{\makebox[\wordlength][#1]{#2}}}
\newcommand{\mathwordbox}[3][c]{\settowidth{\wordlength}{$#3$}\text{\makebox[\wordlength][#1]{$#2$}}}
\newcommand{\dcheck}[1]{\check{\check{#1}}}
\newcommand{\bphtext}[1]{\textcolor{black}{#1}}
\newcommand{\bphnotenew}[1]{{\color{cyan!70!black}{{\bf BPH}: #1}}}
\renewcommand{\bphnotenew}[1]{}
\newcommand{\bphnote}[1]{\bphnotenew{#1}}
\newenvironment{bph}{\color{cyan!70!black}}{\color{black}}
\newcommand{\mwbactions}[3]{\mathwordbox[l]{#2}{x},\mathwordbox[l]{#3}{x},\mathwordbox[l]{#1}{x}}
\newcommand{\mwbactionstwo}[3]{\mathwordbox[l]{#1}{x},\mathwordbox[l]{#2}{x}}
\newcommand{\aonetwooo}{a}
\newcommand{\bonetwooo}{b}
\newcommand{\aonethree}{a'}
\newcommand{\bonethree}{b'}
\newcommand{\atwo}{a}
\newcommand{\btwo}{b}
\newcommand{\athree}{a}
\newcommand{\bthree}{b}
\renewcommand{\phi}{\varphi}
\begin{document}

\title[Nash Equilibrium and Bisimulation Invariance]{Nash Equilibrium and Bisimulation Invariance}

\author[J.~Gutierrez]{Julian Gutierrez\rsuper{a}}

\author[P.~Harrenstein]{Paul Harrenstein\rsuper{a}}
\address{\lsuper{a}Department of Computer Science, University of Oxford}
\email{\{julian.gutierrez,paul.harrenstein,mjw\}@cs.ox.ac.uk}

\author[G.~Perelli]{\texorpdfstring{\\}{}Giuseppe Perelli\rsuper{b}}
\address{\lsuper{b}Department of Computer Science, University of G\"{o}teborg}
\email{perelli@chalmers.se}

\author[M.~Wooldridge]{Michael Wooldridge\rsuper{a}}

\begin{abstract}
  Game theory provides a well-established framework for the analysis
  of concurrent and multi-agent systems. The basic idea is that
  concurrent processes (agents) can be understood as corresponding to
  players in a game; plays represent the possible computation runs of
  the system; and strategies define the behaviour of
  agents. Typically, strategies are modelled as functions from
  sequences of system states to player actions. Analysing a system in
  such a setting involves computing the set of (Nash) equilibria in
  the concurrent game. However, we show that, with respect to the
  above model of strategies (arguably, the ``standard'' model in the
  computer science literature), {\em bisimilarity does not preserve the existence of
    Nash equilibria}. Thus, two concurrent games which are
  behaviourally equivalent from a semantic perspective, and which from
  a logical perspective satisfy the same temporal logic formulae,
  may  nevertheless have fundamentally different properties (solutions)
  from a game theoretic perspective. Our aim in this paper is to
  explore the issues raised by this discovery. After illustrating the
  issue by way of a motivating example, we present three models of
  strategies with respect to which the existence of Nash equilibria is
  preserved under bisimilarity. We use some of these models of
  strategies to provide new semantic foundations for logics for
  strategic reasoning, and investigate restricted scenarios where
  bisimilarity can be shown to preserve the existence of Nash
  equilibria with respect to the conventional model of strategies in
  the computer science literature.
\end{abstract}

\maketitle

\section{Introduction}\label{secn:intro}
The concept of \emph{bisimilarity} plays a central role in both the
theory of concurrency~\cite{milner:89a,HM85} and
logic~\cite{vanbenthem:76a,HM85}.  In the context of concurrency, bisimilar
systems are regarded as \emph{behaviourally equivalent}---appearing to
have the same behaviour when interacting with an arbitrary
environment. From a logical/verification perspective, bisimilar
systems are known to satisfy the \emph{same temporal logic properties}
with respect to languages such as LTL, CTL, or the
$\mu$-calculus~\cite{Pnueli77,ClarkeE81,Kozen83}.
These features, in turn, make it possible to verify
temporal logic properties of concurrent systems using
bisimulation-based approaches~\cite{SR12}. For example, temporal logic model
checking techniques~\cite{CGP02} may be optimised by applying them to the smallest
bisimulation equivalent model of the system being analysed; or,
indeed, to any other model within the system's bisimulation
equivalence class. This is possible because the properties that one is
interested in checking are {\em bisimulation invariant}.

Model checking~\cite{CGP02} is not the only verification technique
that can benefit from bisimulation invariance: consider abstraction
and refinement techniques~\cite{CGL94,CC02} (where a set of states is
either collapsed or broken down in order to build a somewhat simpler
set of states); coinduction methods~\cite{S09} (which can be used to
check the correctness of an implementation with respect to a given
specification); or reduced BDD representations of a system~\cite{B92}
(where isomorphic, and therefore bisimilar, subgraphs are merged,
thereby eliminating part of the initial state space of the
system). Bisimulation invariance is therefore a powerful and fundamental concept
in the formal analysis and verification of concurrent and multi-agent
systems, which plays an important role in many verification tools.

Game theory provides another important framework for the analysis and
verification of concurrent and multi-agent systems. Within this
framework, a concurrent/multi-agent system is viewed as a game, where
processes/agents correspond to players, system executions (that is,
computation runs) to plays, and individual process behaviours are
modelled as player strategies, which are used to resolve the possible
nondeterministic choices available to each player.
A widely-used model for strategies in concurrent games is to view a
strategy for a process/agent/player~$i$ as a function~$f_i$ which maps finite
histories $s_0,s_1,\ldots ,s_k$ of system states to actions
$f_i(s_0,s_1,\ldots,s_k)$ available to~$i$ at
state~$s_k$. (In what follows, we use the terms process, agent, and
player interchangeably.) We refer to this as the ``conventional''
model of strategies, as it is the best-known and most widely-used
model in logic, AI, and computer science (and indeed in extensive form
games~\cite{OR94}).  For instance, specification languages such as
Alternating-time Temporal Logic (ATL~\cite{AHK02}), and formal models
such as concurrent game structures~\cite{AHK02} use this model of
strategies. If we model a concurrent/multi-agent system as a game in
this way, then the analysis and verification of the system reduces to
computing the set of (Nash) equilibria in the
associated 
game; in some cases, the analysis reduces to the computation of a
winning strategy in the game, that is, a strategy that ensures that
the players who follow such a plan will achieve their goal no matter
how the other players in the system play, {\em i.e.}, against any
other possible counter-strategy.

Now, because bisimilar 
systems are regarded as 
behaviourally equivalent, and bisimilar systems satisfy the same set
of temporal logic properties, it is natural to ask whether the Nash
equilibria of bisimilar structures can be identified in a similar way; that is, we ask the
following question:
\begin{center}
\emph{Is Nash equilibrium invariant under bisimilarity?}
\end{center}
We show that, for the ``conventional'' model of strategies, the answer
to this question is, in general, no. More specifically, the answer
critically depends on precisely how players' strategies are
modelled. With the conventional model of strategies, we find the answer is
positive only for some two-player games, but negative in general
for games with more than two players. This means, for instance, that,
in the general case, bisimulation-based techniques cannot be used when
one is also reasoning about the Nash equilibria of concurrent systems
that are formally modelled as multi-player (concurrent) games.

For instance, given a concurrent and reactive system, represented as a
collection of individual system components, say $P_1,\ldots,P_n$, one
may want to know if a given temporal logic property, say $\phi$, is
satisfied by these system components whenever they choose to use
strategies that form an equilibrium, that is, we want to know whether
for some/every computation run~$\rho\in \mathit{NE}(P_1,\ldots,P_n)$
we have $\rho\models\phi$, where $\mathit{NE}(P_1,\ldots,P_n)$ denotes
the set of all computation runs that may be generated as a result of
$P_1,\ldots,P_n$ selecting strategies that form a Nash equilibrium.
Because we are interested in concurrent systems, and bisimilarity is
one of the most important behavioural equivalences in
concurrency~\cite{Milner80,HM85,NicolaV95,GlabbeekW96}, it is
highly desirable  that properties which hold in equilibrium are
sustained across all systems that are bisimilar to $P_1,\ldots,P_n$, meaning
that for every (temporal logic) property $\phi$ and every process
$P'_i$, if $P'_i$ is bisimilar to $P_i\in\{P_1,\ldots,P_n\}$, then
$\phi$ is satisfied in equilibrium by $P_1,\ldots P_i \ldots P_n$ if
and only if is also satisfied in equilibrium by
$P_1,\ldots P'_i \ldots,P_n$, the system in which $P_i$ is replaced by
$P'_i$, that is, across all bisimilar systems
to~$P_1,\ldots,P_n$. This property, called {\em invariance under
  bisimilarity}, has been widely used for decades for the semantic
analysis ({\em e.g.}, for modular and compositional reasoning) and
formal verification ({\em e.g.}, for temporal logic model checking) of
concurrent systems. Unfortunately, as shown here, and already
discussed in~\cite{GHW15-concur}, the satisfaction of temporal logic
properties in equilibrium is not invariant under bisimilarity, thus
posing a verification challenge for the modular and compositional
reasoning of concurrent systems, since individual system components in
a concurrent system cannot be replaced by (behaviourally equivalent)
bisimilar ones, while preserving the temporal logic properties that
the overall system satisfies in equilibrium.  This is also
a problem from a synthesis point of view.  Indeed, a strategy for a
system component~$P_i$ may not be a valid strategy for a bisimilar
system component~$P'_i$.  As a consequence, the problem of building
strategies for individual processes in the concurrent
system~$P_1,\ldots P_i \ldots P_n$ may not, in general, be the same as
building strategies for a bisimilar
system~$P_1,\ldots P'_i \ldots P_n$, again, dashing any hope of
modular reasoning on concurrent systems.

Motivated by these observations---which bring together in a striking
way a fundamental concept in game theory and a fundamental concept in
logic/concurrency---the purpose of the present paper is to investigate
these issues in detail. We first present a motivating example, to
illustrate the basic point that using the conventional model of
strategies, bisimulation need not preserve Nash equilibria. We then
present three alternative models of strategies in which Nash
equilibria and their existence are preserved under bisimilarity.  We
also study the above question for different classes of systems, for
instance deterministic and nondeterministic ones, and explore
applications to logic. Specifically, we investigate the implications
of replacing the conventional model of strategies with some of the
models we propose in this paper in logics for strategic
reasoning~\cite{MMPV14,CHP10}, in particular, the semantic
implications with respect to Strategy Logic (SL~\cite{MMPV14}).
We also show that, within the conventional model of strategies, Nash
equilibrium is preserved by bisimilarity in certain two-player games
as well as in the class of concurrent game structures that are induced
by iterated Boolean games~\cite{GHW15}, a framework that can be used
to reason about the strategic behaviour of AI, autonomous, and
multi-agent systems~\cite{WGHMPT16}.
Our main invariance results are summarised in Table~\ref{tab:summary}.

\subsection{A Motivating Example}\label{section:motivating_example}
So far we have mentioned some cases where one needs or desires a
property to be invariant under bisimilarity. However, one may still
wonder why it is so important that the particular property of having a Nash equilibrium
is preserved under bisimilarity. One reason has its roots in
automated formal verification. To illustrate this, imagine that the
system of Figure~\ref{fig:cgswithne} is given as input to a
verification tool. It is likely that such a tool will try to perform
as many optimisations as possible to the system before any analysis is
performed. Perhaps the simplest of such optimisations---as is being done
by virtually every model checking tool---is to reduce the input system
by merging \emph{isomorphic} subtrees.
{This is done, \emph{e.g.}, when generating the ROBDD
  representation of a system.}  If such an optimisation is made,
the tool will construct the (bisimilar) system in
Figure~\ref{fig:cgswithoutne}.
(Observe that the subgraphs rooted at~$s_1$ and~$s'_1$ are isomorphic.)
However, with respect to the existence
of Nash equilibria, such a transformation is unsound in the
general case.

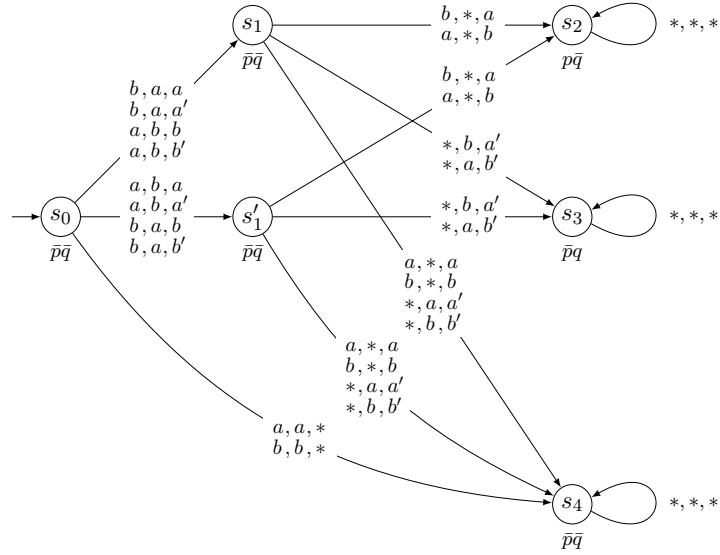
\begin{figure}
\centering
 	\scalebox{.85}{
 	  \begin{tikzpicture}[scale=1]

 	  \tikzstyle{every ellipse node}=[draw,inner xsep=3.5em,inner ysep=1.2em,fill=black!15!white,draw=black!15!white]
 	  \tikzstyle{every circle node}=[fill=white,draw,minimum size=1.6em,inner sep=0pt]



 	  \draw(0,0) node(0){}
 	  			++ 	( 0:.9)	  	node[label=-90:{\footnotesize$\bar p\bar q$},circle](v0){$s_0$}
 				++	(  0:3)   	node[label=-90:{\footnotesize$\bar p\bar q$},circle](v1){$s'_1$}
					+( 90:3)  	node[label=-90:{\footnotesize$\bar p\bar q$},circle](w1){$s_1$}
 					++( 0:5)  	node[label=-90:{\footnotesize$\bar p     q$},circle](v3){$s_3$}
 					+( 90:3)  	node[label=-90:{\footnotesize$     p\bar q$},circle](v2){$s_2$}
					+(-90:4.5)	node[label=-90:{\footnotesize$\bar p\bar q$},circle](v4){$s_4$}
 				;

 	\draw[-latex] (v0) --node[inner xsep=0pt, inner ysep=0pt, pos=.5,fill=white](){
						\scalebox{.85}{$\begin{array}{l}
							\mwbactions{\aonetwooo}{\atwo}{\bthree}\\[-.2em]
							\mwbactions{\aonethree}{\atwo}{\bthree}\\[-.2em]
							\mwbactions{\bonetwooo}{\btwo}{\athree}\\[-.2em]
							\mwbactions{\bonethree}{\btwo}{\athree}
						\end{array}$}} (v1);

	\draw[-latex] (v0) --node[inner xsep=0pt, inner ysep=0pt, pos=.5,fill=white](){
						\scalebox{.85}{$\begin{array}{l}
							\mwbactions{\aonetwooo}{\btwo}{\athree}\\[-.2em]
							\mwbactions{\aonethree}{\btwo}{\athree}\\[-.2em]
							\mwbactions{\bonetwooo}{\atwo}{\bthree}\\[-.2em]
							\mwbactions{\bonethree}{\atwo}{\bthree}
						\end{array}$}} (w1);

 	\draw[-latex] (v0) to[bend right=25]node[inner xsep=0pt, inner ysep=0pt, pos=.545,fill=white](){
						\scalebox{.85}{$\begin{array}{l}
							\mwbactions{\ast}{\atwo}{\athree}\\[-.2em]
							\mwbactions{\ast}{\btwo}{\bthree}
						\end{array}$}} (v4);
 	\draw[-latex] (v1) --node[inner xsep=0pt, inner ysep=0pt, pos=.7,fill=white](){\scalebox{.85}{$\begin{array}{l}\mwbactions{\aonetwooo}{\btwo}{\ast}\\[-.2em]\mwbactions{\bonetwooo}{\atwo}{\ast}\end{array}$}} (v2);
 	\draw[-latex] (v1) --node[inner xsep=0pt, inner ysep=0pt, pos=.7,fill=white](){\scalebox{.85}{$\begin{array}{l}\mwbactions{\aonethree}{\ast}{\bthree}\\[-.2em]\mwbactions{\bonethree}{\ast}{\athree}\end{array}$}} (v3);
 	\draw[-latex] (w1) --node[inner xsep=0pt, inner ysep=0pt, pos=.7,fill=white](){\scalebox{.85}{$\begin{array}{l}\mwbactions{\aonetwooo}{\btwo}{\ast}\\[-.2em]\mwbactions{\bonetwooo}{\atwo}{\ast}\end{array}$}} (v2);
 	\draw[-latex] (w1) --node[inner xsep=0pt, inner ysep=0pt, pos=.7,fill=white](){\scalebox{.85}{$\begin{array}{l}\mwbactions{\aonethree}{\ast}{\bthree}\\[-.2em]\mwbactions{\bonethree}{\ast}{\athree}\end{array}$}} (v3);
	\draw[-latex] (v1) to[bend right=17.5]node[inner xsep=0pt, inner ysep=0pt, pos=.45,fill=white,fill opacity=1](){\scalebox{.85}{$\begin{array}{l}\mwbactions{\aonetwooo}{\atwo}{\ast}\\[-.2em]\mwbactions{\bonetwooo}{\btwo}{\ast}\\[-.2em]\mwbactions{\aonethree}{\ast}{\athree}\\[-.2em]\mwbactions{\bonethree}{\ast}{\bthree}\end{array}$}} (v4);
	\draw[-latex] (w1) --node[inner xsep=0pt, inner ysep=0pt, pos=.565,fill=white,fill opacity=1](){\scalebox{.85}{$\begin{array}{l}\mwbactions{a}{\atwo}{\ast}\\[-.2em]\mwbactions{\bonetwooo}{\btwo}{\ast}\\[-.2em]\mwbactions{\aonethree}{\ast}{\athree}\\[-.2em]\mwbactions{\bonethree}{\ast}{\bthree}\end{array}$}} (v4);
 	\draw[-latex] (0) -- (v0);

 	\draw[-latex] (v2.70-90) .. controls +(60-90:4em) and +(120-90:4em) .. node[pos=.5,fill=white,right,xshift=.5ex](){\footnotesize$\ast,\ast,\ast$} (v2.110-90);

 	\draw[-latex] (v3.70-90) .. controls +(60-90:4em) and +(120-90:4em) .. node[pos=.5,fill=white,right,xshift=.5ex](){\footnotesize$\ast,\ast,\ast$} (v3.110-90);

 	\draw[-latex] (v4.70-90) .. controls +(60-90:4em) and +(120-90:4em) .. node[pos=.5,fill=white,right,xshift=.5ex](){\footnotesize$\ast,\ast,\ast$} (v4.110-90);

 	\end{tikzpicture}
 	}
 	\caption{The game~$G_0$ on concurrent game structure~$M_0$ with a Nash equilibrium.}%
 	\label{fig:cgswithne}
\end{figure}

For instance, suppose that the system in
Figure~\ref{fig:cgswithne} represents a 3-player game, where each
transition is labelled by the choices~$x,y,z$ made by player~$1$,~$2$,
and~$3$, respectively, and asterisk~$\ast$ being a wildcard for any
action for the player in the respective position. Thus, whereas
players~$1$ and~$2$ can choose to play either~$a$ or~$b$ at each
state, player~$3$ can choose between~$a$, $b$, $a'$, or~$b'$. The
states are labelled by valuations~$xy$ over~$\set{p,q}$,
where~$\bar x$ indicates that~$x$ is set to false. Assume that
player~$1$ would like~$p$ to be true sometime, that player~$2$
would like~$q$ to be true sometime, and that player~$3$ desires
to prevent both player~$1$ and player~$2$ from achieving their goals.
Accordingly, their preferences/goals can, respectively, be formally
represented by the LTL formulae
 \begin{align*}
 	\gamma_1	&	=	\sometime p,	&
 	\gamma_2	&	=	\sometime q,&
	\text{and}	&&
 	\gamma_3	&	=	\always \neg(p\vee q),
 \end{align*}
where, informally, $\sometime \phi$ means ``eventually $\phi$ holds'' and $\always \phi$ means ``always $\phi$ holds''.
Moreover,  given these players' goals and the conventional model of
strategies, we will see later in Section~\ref{subsection:run_based_no_preservation} that the system in
Figure~\ref{fig:cgswithne} has a Nash equilibrium, whereas  no Nash equilibria
exists in the
(bisimilar) concurrent system presented in Figure~\ref{fig:cgswithoutne}.

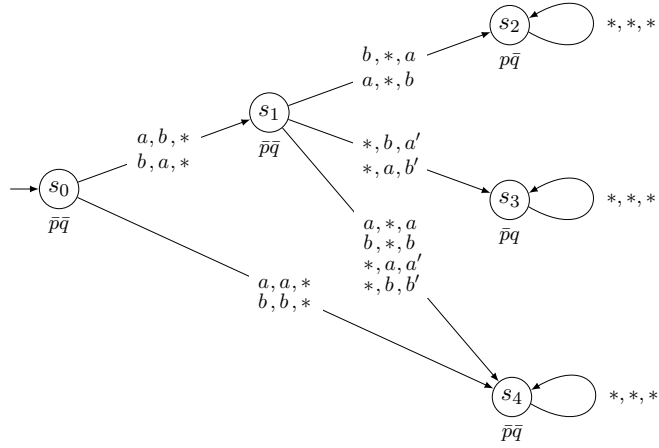
\begin{figure}
\centering
 	\scalebox{.85}{
 	  \begin{tikzpicture}[scale=1]

 	  \tikzstyle{every ellipse node}=[draw,inner xsep=3.5em,inner ysep=1.2em,fill=black!15!white,draw=black!15!white]
 	  \tikzstyle{every circle node}=[fill=white,draw,minimum size=1.6em,inner sep=0pt]


 	  \draw(0,0) node(0){}
 	  			++ 	( 0:.9)	  node[label=-90:{\footnotesize$\bar p\bar q$},circle](v0){$s_0$}
 				++	(  20:3.5)  node[label=-90:{\footnotesize$\bar p\bar q$},circle](v1){$s_1$}
 					+( 20:4)  node[label=-90:{\footnotesize$     p\bar q$},circle](v2){$s_2$}
 					+(-20:4)  node[label=-90:{\footnotesize$\bar p     q$},circle](v3){$s_3$}
 				;
 	 \draw(0,.9) ++(-27.5:9)	node[label=-90:{\footnotesize$\bar p\bar q$},circle](v4){$s_4$};

 	\draw[-latex] (v0) --node[inner xsep=0pt, inner ysep=0pt,pos=.5,fill=white](){\scalebox{.85}{$\begin{array}{l}\mwbactions{\ast}{\atwo}{\bthree}\\\mwbactions{\ast}{\btwo}{\athree}\end{array}$}} (v1);

 	\draw[-latex] (v1) --node[inner xsep=0pt, inner ysep=0pt,pos=.5,fill=white](){\scalebox{.85}{$\begin{array}{l}\mwbactions{\aonetwooo}{\btwo}{\ast}\\\mwbactions{\bonetwooo}{\atwo}{\ast}\end{array}$}} (v2);

 	\draw[-latex] (v1) --node[inner xsep=0pt, inner ysep=0pt,pos=.5,fill=white](){\scalebox{.85}{$\begin{array}{l}\mwbactions{\aonethree}{\ast}{\bthree}\\\mwbactions{\bonethree}{\ast}{\athree}\end{array}$}} (v3);

 	\draw[-latex] (v0) to[bend right=0]node[inner xsep=0pt, inner ysep=0pt,pos=.5,fill=white](){\scalebox{.85}{$\begin{array}{l}
							\mwbactions{\ast}{\atwo}{\athree}\\[-.2em]
							\mwbactions{\ast}{\btwo}{\bthree}
						\end{array}$}} (v4);

	\draw[-latex] (v1) to[bend right=0]node[inner xsep=0pt, inner ysep=0pt,pos=.5,fill=white](){\scalebox{1}{\scalebox{.85}{$\begin{array}{l}\mwbactions{\aonetwooo}{\atwo}{\ast}\\[-.2em]\mwbactions{\bonetwooo}{\btwo}{\ast}\\[-.2em]\mwbactions{\aonethree}{\ast}{\athree}\\[-.2em]\mwbactions{\bonethree}{\ast}{\bthree}\end{array}$}}} (v4);

 	\draw[-latex] (0) -- (v0);

 	\draw[-latex] (v2.70-90) .. controls +(60-90:4em) and +(120-90:4em) .. node[pos=.5,fill=white,right,xshift=.5ex](){\footnotesize$\ast,\ast,\ast$} (v2.110-90);

 	\draw[-latex] (v3.70-90) .. controls +(60-90:4em) and +(120-90:4em) .. node[pos=.5,fill=white,right,xshift=.5ex](){\footnotesize$\ast,\ast,\ast$} (v3.110-90);

 	\draw[-latex] (v4.70-90) .. controls +(60-90:4em) and +(120-90:4em) .. node[pos=.5,fill=white,right,xshift=.5ex](){\footnotesize$\ast,\ast,\ast$} (v4.110-90);

 	\end{tikzpicture}
 	}
 	\caption{The game~$G_1$ on concurrent game structure~$M_1$ without a Nash equilibrium.}%
 	\label{fig:cgswithoutne}
\end{figure}

This example illustrates a major issue when analysing (the existence of) Nash equilibria in the most widely used models of strategies and multi-player games in the computer science literature, namely, that even the simplest and most innocuous optimisations commonly used in automated verification are not necessarily sound with respect to game-theoretic analyses.

Because the problem is so fundamental, one may wonder whether bisimilarity is not the right behavioural equivalence for multi-player games, or whether Nash equilibrium is not the right solution concept for game-theoretic analyses of concurrent and multi-agent systems modelled as multi-player games. We will discuss these questions in more detail in Section~\ref{secn:conc}, as we do not have a definite answer, but for now we would like to make a couple of observations. On the one hand, that our results also hold both for ``alternating'' (bisimilarity) relations, as defined in~\cite{AHKV98}, which are intended to capture strategic behaviour in multi-player games, as well as for trace equivalence, as defined in CSP~\cite{BrookesHR84}, an equivalence much weaker than bisimilarity. On the other hand, that our negative results also hold for solution concepts stronger than Nash equilibrium, {\em e.g.}, for strong and subgame-perfect Nash equilibria, suggesting that the problem is not a particular defect of Nash equilibrium. Indeed, we think that the issue underlying the mismatch between bisimilarity and Nash equilibrium lies elsewhere. We will propose a very general solution to this problem, that is, a way to reconcile bisimilarity and Nash equilibrium, based on a new definition of strategy in a multi-player game. To do this, some concepts and definitions have to be introduced first.

\section{Preliminaries}
We begin by introducing the main notational conventions, models, and
technical concepts used in this 
paper.

\subsection*{Sets}
Given any set~$S=\set{s,q,r,\ldots}$, we use
$S^*$, $S^\omega$, and $S^+$ for, respectively, the sets of finite, infinite, and non-empty finite sequences of elements in~$S$.
If 
$w_1\in S^*$
and $w_2$ is any other (finite or
infinite) sequence, we write $w_1w_2$ for their concatenation. The
empty sequence is denoted by~$\epsilon$.

\subsection*{Concurrent Game Structures}
We use the model of concurrent game structures, which are well-established in the logic and computer science literatures~(see, for instance,~\cite{AHK02}).
A \emph{concurrent game structure~(CGS)} is a tuple $M=(\Ag,\AP,\Ac,\States,s^0_M,\valf,\transf)$, where
$\Ag=\set{1,\dots,n}$ is a set of \emph{players} or agents, $\AP$ a set of \emph{propositional variables},~$\Ac$ is a set of \emph{actions}, $\States$ is a set of \emph{states} containing a unique \emph{initial state}~$s^0_M$.
With each player~$i\in\Ag$ and each state~$s\in\states$, we associate a non-empty set~$\Ac_i(s)$ of \emph{feasible} actions that, intuitively,~$i$ can perform when in state~$s$.
By a \emph{direction} or \emph{decision} we understand a profile of actions~$\direction=(\action_1,\dots,\action_n)$ in $\actions\times\cdots\times\actions$ and we let~$\Dir$ denote the set of directions. A direction~$\direction=(\action_1,\dots,\action_n)$ is \emph{legal at state~$s$} if $a_i\in\Ac_i(s)$ for every player~$i$.
Unless stated otherwise, by ``direction'' we will henceforth generally mean ``legal direction''.
Furthermore, $\valf\colon\states\to 2^{\AP}$ is a \emph{labelling function}, associating with every state~$s$ a \emph{valuation}~$v\in 2^\AP$.
Finally,~$\delta$ is a \emph{deterministic} \emph{transition function}, which
associates with each state~$s$ and every legal direction~$\direction=(a_1,\dots,a_n)$ at~$s$ a state $\delta(s,a_1,\dots,a_n)$.
As such~$\delta$
characterises the behaviour of the system when $\direction=(\action_1,\dots,\action_n)$ is performed at state~$\state$.

\subsubsection*{Computations, Runs, and Traces}

The possible behaviours exhibited by a CGS can be described at at
least three different levels of abstraction. In what follows, we
distinguish between \emph{computations}, \emph{runs}, and
\emph{traces}.  Computations carry the most information, while traces
carry the least, in the sense that every computation induces a unique
run and every run induces a unique trace, but not necessarily the
other way round. \emph{The distinctions we make between computations,
  runs, and traces may appear to be insignificant, but are in fact central in
  our analysis of bisimilarity and Nash equilibrium.}

A state~$\state'$ is \emph{accessible} from another state~$\state$
whenever there is some~$\direction=(\action_1,\dots,\action_n)$ such
that~$\direction$ is legal at~$\state$ and
$\delta(s,\action_1,\dots,\action_n)=\state'$.  {For easy readability
  we then also write $s\transarrow{\direction} s'$.}  An
\emph{(infinite) computation} is then an infinite sequence of
directions $\computation=\direction_0,\direction_1,\direction_2,\dots$
such that there are states $s_0,s_1,\dots$ with~$s_0=s^0_M$ and
$
s_0\transarrow{\direction_0}s_1\transarrow{\direction_1}s_2\transarrow{\direction_2}\cdots\text.
$ Observe that, having assumed the transition function~$\transf$ to be
complete and deterministic, in every concurrent game model the
states~$s_0,s_1,\dots$ in the above definition always exist and are
unique.  A \emph{finite computation} is any finite prefix of a
computation~$\fincomputation$. We also allow a finite computation to
be the empty sequence~$\epsilon$ of directions.
The sets of infinite and finite computations are denoted by $\computations_M$ and $\fincomputations_M$, respectively.
We also use $\transf^{*}(s, \direction_{0}, \direction_{1}, \ldots \direction_{k})$ to denote the unique state that is reached from the state $s$ after applying the computation $\direction_{0}, \direction_{1}, \ldots \direction_{k}$.

An \emph{(infinite) run} is an infinite sequence $\run=\state_0,\state_1,\state_2\dots$ of states of sequentially accessible states, with $s_0=s^0_M$.
We say that run~$s_0,\dots,s_k$ is \emph{induced} by computation~$d_0,\dots,d_{k-1}$ if $	s_0\transarrow{\direction_0}s_1\transarrow{\direction_1}s_2\transarrow{\direction_2}\cdots$ and $s_0=s^0_M$.
 Thus, every computation induces a unique run and every run is induced by at least one computation.
By a \emph{finite run} or \emph{(finite) history} we mean a finite prefix of a run.
The sets of infinite and finite runs are denoted by $\runs_M$ and $\finruns_M$, respectively.

An \emph{(infinite) trace} is a sequence $\trace=v_0,v_1,v_2,\dots$ of
valuations such that there is a run~$\run=s_0,s_1,s_2,\dots$
in~$\runs_M$ such that $v_k=\lambda(s_k)$ for every $k\ge 0$, that is,
$\trace=\valf(s_0),\valf(s_1),\valf(s_2),\dots$.  In that case we
say that trace~$\trace$ is \emph{induced} by run~$\run$, and if~$\run$
is induced by computation~$\computation$, also that~$\trace$ is
induced by~$\computation$.
By a \emph{finite trace} we
mean a finite prefix of a trace.  We denote the sets of finite and
infinite traces of a concurrent game structure~$M$ by $\fintraces_M$
and $\traces_M$, respectively.
%

We use~$\run_M(\computation)$ to denote the run induced by a computation~$\computation$ in CGS~$M$, and write~$\finrun_M(\computation)$ if~$\computation$ is finite on the understanding that $\finrun_M(\epsilon)=s_M^0$.
Also, if $\run=s_0,s_1,s_2,\dots$ is a run, by $\trace_M(\run)$ we denote the trace $\valf(s_0),\valf(s_1),\valf(s_2),\dots$, and similarly for finite runs~$\finrun\in\finruns_M$. Finally, $\trace_M(\run_M(\computation))$ is abbreviated as~$\trace_M(\computation)$. When no confusion is likely, we omit the subscript~$M$ and the qualification `finite'.

\subsection*{Bisimilarity}

One of the most important behavioural/observational equivalences in concurrency is bisimilarity, which is usually defined over Kripke structures or labelled transition systems (see, {\em e.g.},~\cite{milner:89a,HM85}). However, the equivalence can be uniformly defined for general concurrent game structures, where decisions/directions play the role of, for instance, actions in transition systems.
Formally, let $M=(\AP,\Ag,\Ac,\States,s^0_{M},\valf,\transf)$ and $M'=(\AP,\Ag,\Ac,\States',s^0_{M'},\valf',\transf')$ be two concurrent game structures.
A \emph{bisimulation}, denoted by $\Bisimilar$,
between states $s^*\in\States$ and $t^*\in\States'$ is a non-empty binary
relation $\Bisim \subseteq \States \times \States'$, such that $s^* \mathrel{\Bisim} t^*$ and for all $s,s'\in\States$, $t,t'\in\States'$, and $d\in\Dir$: 
\begin{itemize}
	\item $s\mathrel{\Bisim} t$ implies $\valf(s)=\valf'(t)$,
	\item $s\mathrel{\Bisim} t$ and $s\transarrow{\direction}s'$ implies $t\transarrow{\direction}t''$ for some $t''\in\States'$ with $s'\mathrel{\Bisim}t''$,
	\item $s\mathrel{\Bisim} t$ and $t\transarrow{\direction}t'$ implies $s\transarrow{\direction}s''$ for some $s''\in\States$ with $s''\mathrel{\Bisim}t'$.
\end{itemize}
Then, if there is a bisimulation between two states~$s^*$ and~$t^*$, we say that they are \emph{bisimilar} and write $s^*\Bisimilar t^*$ in such a case. We also say that concurrent game structures~$M$ and~$M'$ are \emph{bisimilar} (in symbols $M\Bisimilar M'$) if $s^0_{M} \Bisimilar s^0_{M'}$.
Since the transition functions of concurrent game structures, as defined, are deterministic, we have
the following simple but useful facts.
We say that runs~$\run=s_0,s_1,\dots$ and $\run'=s'_0,s_1',\dots$  are \emph{statewise bisimilar} (in symbols $\run\pairwisebisim\run'$) if $s_k\sim s'_k$ for every $k\ge 0$.
%
Both bisimilarity and statewise bisimilarity are equivalence relations, which is a standard result in the literature (see, for instance,~\cite{demri:2016a,baier_katoen:2008a,milner:89a}).

We find, moreover, that the sets of (finite) computations as well as the sets of (finite) traces of two bisimilar concurrent game structures are \emph{identical}. In order to see this, the following simple auxiliary result is useful.\label{page:bisim_identical}
\begin{lem}%
	 \label{lemma:deterministic_bisimulation}
	Let $M\sim M'$, $s,s'\in\States$ and $t,t'\in\States'$, and $\direction$ a direction. Then,
	$s\sim t$, $s\transarrow{\direction} s'$, and $t\transarrow{\direction} t'$ together imply $s'\sim t'$.
\end{lem}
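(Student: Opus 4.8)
The plan is to unfold the definition of bisimilarity and then exploit the determinism of the transition function to pin down the target state uniquely. Since $s \sim t$, by definition there is a bisimulation $\Bisim \subseteq \States \times \States'$ with $s \mathrel{\Bisim} t$. The two transition hypotheses $s \transarrow{\direction} s'$ and $t \transarrow{\direction} t'$ are then combined with the matching clauses of $\Bisim$.

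First I would apply the forward (``zig'') clause of the bisimulation to the transition $s \transarrow{\direction} s'$: because $s \mathrel{\Bisim} t$, there exists some state $t'' \in \States'$ with $t \transarrow{\direction} t''$ and $s' \mathrel{\Bisim} t''$, and hence $s' \sim t''$. Next comes the crucial step, where the determinism assumption on concurrent game structures is used. By hypothesis we also have $t \transarrow{\direction} t'$, and since $\transf'$ is deterministic the direction $\direction$ applied at $t$ leads to exactly one successor, so $t'' = \transf'(t,\direction) = t'$. Substituting $t'' = t'$ into $s' \mathrel{\Bisim} t''$ yields $s' \mathrel{\Bisim} t'$, and therefore $s' \sim t'$, as required.

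I do not expect a genuine obstacle here: the argument merely collapses the existential witness in the matching clause of bisimilarity to a unique state by appealing to determinism. The only point deserving emphasis is that determinism is essential---for a nondeterministic transition function the witness $t''$ supplied by the bisimulation need not coincide with the given $t'$, and the statement would fail. Symmetrically, one could instead begin from the backward (``zag'') clause applied to $t \transarrow{\direction} t'$ and use determinism of $\transf$ to identify the resulting witness with $s'$; either direction establishes the claim.
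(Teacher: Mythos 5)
Your proof is correct and follows exactly the paper's argument: invoke the forward matching clause of the bisimulation to obtain a witness $t''$ with $t\transarrow{\direction}t''$ and $s'\sim t''$, then use determinism of the transition function to conclude $t''=t'$. The additional remarks on the necessity of determinism and the symmetric variant are sound but not needed.
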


\begin{proof}
	Assume $s\sim t$, $s\transarrow{\direction} s'$, and $t\transarrow{\direction} t'$. As $M\sim M'$, there is a $t''\in\States$ such that $t\transarrow{\direction} t''$ and $s'\sim t''$. Since the transition function is deterministic, moreover, it follows that $t''=t'$. Hence, $s'\sim t'$, as desired.
\end{proof}

Using this observation we also have the following result.
\begin{lem}%
	\label{lemma:trace_identity}\label{lemma:computation_identity}
	Let~$M$ and~$M'$ be bisimilar concurrent game structures. Then,
	\begin{enumerate}
		\renewcommand{\itemsep}{.5ex}
	\item\label{item:app:computation_identity}\label{item:computation_identity} {$\computations_M=\computations_{M'}$} and $\fincomputations_M=\fincomputations_{M'}$,
	\item\label{item:app:trace_identity}\label{item:trace_identity} $\mathwordbox{\traces_M=\traces_{M'}}{{\computations_M=\computations_{M'}}
	}$ and $\mathwordbox{\fintraces_M=\fintraces_{M'}}{\computations_M=\computations_{M'}}$.
	\end{enumerate}
\end{lem}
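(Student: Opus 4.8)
The plan is to derive both identities from a single inductive transfer argument that walks a computation through the two structures simultaneously. Since bisimilarity is symmetric, it suffices in each case to prove one inclusion; the reverse inclusion is obtained by swapping the roles of~$M$ and~$M'$, and equality follows. So I would concentrate on showing $\computations_M \subseteq \computations_{M'}$ for part~\ref{item:computation_identity}, and then bootstrap part~\ref{item:trace_identity} and the finite versions from it.

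For part~\ref{item:computation_identity}, take $\computation = \direction_0,\direction_1,\direction_2,\dots \in \computations_M$, witnessed by states $s_0,s_1,\dots$ with $s_0 = s^0_M$ and $s_k \transarrow{\direction_k} s_{k+1}$ for all~$k$. I would build a matching run in~$M'$ by induction, maintaining the invariant $s_k \sim t_k$. The base case is $t_0 := s^0_{M'}$, and $s_0 \sim t_0$ holds precisely because $M \sim M'$. For the inductive step, given $s_k \sim t_k$ and $s_k \transarrow{\direction_k} s_{k+1}$, the forward transfer clause of bisimulation yields a state $t_{k+1}$ with $t_k \transarrow{\direction_k} t_{k+1}$ and $s_{k+1} \sim t_{k+1}$; determinism of~$\transf'$ makes this $t_{k+1}$ unique. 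This produces a run $t_0 \transarrow{\direction_0} t_1 \transarrow{\direction_1} \cdots$ from $s^0_{M'}$ along~$\computation$, witnessing $\computation \in \computations_{M'}$. For the finite computations, since every feasible action set~$\Ac_i(s)$ is non-empty the transition relation is total, so $\fincomputations_M$ is exactly the set of finite prefixes of elements of $\computations_M$; equality of the infinite sets therefore forces equality of the prefix sets, giving $\fincomputations_M = \fincomputations_{M'}$.

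For part~\ref{item:trace_identity}, I would observe that the induction above does more than realise~$\computation$ in~$M'$: it shows that the canonical runs $\run_M(\computation) = s_0,s_1,\dots$ and $\run_{M'}(\computation) = t_0,t_1,\dots$ are statewise bisimilar, i.e.\ $\run_M(\computation) \pairwisebisim \run_{M'}(\computation)$. (Alternatively, this follows directly from Lemma~\ref{lemma:deterministic_bisimulation}, applied inductively along the two runs, starting from $s^0_M \sim s^0_{M'}$.) The first clause of bisimulation then gives $\valf(s_k) = \valf'(t_k)$ for every~$k$, so $\trace_M(\computation) = \trace_{M'}(\computation)$. Since every trace is induced by some run and every run by some computation, we have $\traces_M = \{\trace_M(\computation) : \computation \in \computations_M\}$ and likewise for~$M'$; combining this with $\computations_M = \computations_{M'}$ yields $\traces_M = \traces_{M'}$, and the finite case $\fintraces_M = \fintraces_{M'}$ follows by the same prefix argument as for computations.

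The step carrying the actual content is the maintenance of the invariant $s_k \sim t_k$ along the \emph{entire} computation, since this is simultaneously what lets us realise~$\computation$ in~$M'$ and what forces the valuations to agree position-by-position. I expect the main (minor) obstacle to be the careful use of determinism: it is Lemma~\ref{lemma:deterministic_bisimulation}, together with completeness of~$\transf$ and~$\transf'$, that guarantees the run we construct is \emph{the} unique run induced by~$\computation$ in~$M'$, so that the abbreviations $\run_{M'}(\computation)$ and $\trace_{M'}(\computation)$ are unambiguous and the trace transferred really is the one attached to~$\computation$. Everything else is routine bookkeeping.
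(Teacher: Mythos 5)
Your proposal is correct and follows essentially the same route as the paper's proof: an induction along the computation maintaining the invariant $s_k \sim t_k$ (the paper phrases the step as ``bisimilarity gives the transition, Lemma~\ref{lemma:deterministic_bisimulation} gives $s_{k+1}\sim t_{k+1}$'', which is exactly your forward-clause-plus-determinism step), with the reverse inclusions by symmetry and the finite/trace statements extracted from the same induction via statewise bisimilarity of $\run_M(\computation)$ and $\run_{M'}(\computation)$. The only cosmetic difference is your appeal to totality to identify $\fincomputations_M$ with the prefixes of $\computations_M$, which in the paper holds directly by the definition of finite computations.
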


\begin{proof}
	For part~\ref{item:app:computation_identity}, let~$\computation=d_0,d_1,\dots,d_2,\dots$ be a computation in~$\computations_M$ and $s_0,s_1,s_2,\dots$ the states in~$\states_M$
	such that $s_0\xtransarrow{d_0}s_1\xtransarrow{d_1}s_2\xtransarrow{d_2}\cdots$. We show, by induction on~$k$, that there are states~$t_0,t_1,t_2,\dots$ in~$\states_{M'}$ such that
	$t_0\xtransarrow{d_0}t_1\xtransarrow{d_1}t_2\xtransarrow{d_2}\cdots$, where $t_0=s^0_{M'}$.
	It suffices to prove by induction on~$k$ that for every $k\ge 0$ there is a $t_{k+1}$ such that $t_0\xtransarrow{d_0}\dots\xtransarrow{d_k}t_{k+1}$ and $s_{k+1}\bisim t_{k+1}$.
	For $k=0$, we have $t_0=s_{M'}^0$. Then, observe that, by definition,~$s_0=s_M^0$ and, as $M\bisim M'$, it immediately follows that $s_0\bisim t_0$
	and that there is a~$t_{1}$ such that $t_0\xtransarrow{d_0}t_1$ and $s_1\bisim t_1$.
	For the induction step, we may assume that  there are $t_0,\dots,t_k$ with $s_{M'}^0=t_0\xtransarrow{d_0}\dots\xtransarrow{d_{k-1}}t_{k}$ and $s_k\bisim t_k$. By bisimilarity of~$M$ and~$M'$ we then immediately obtain that there is a $t_{k+1}$ such that $t_0\xtransarrow{d_0}\dots\xtransarrow{d_{k-1}}t_k\xtransarrow{d_k}t_{k+1}$. By Lemma~\ref{lemma:deterministic_bisimulation} it then follows that $s_{k+1}\bisim t_{k+1}$.
	Hence, $\computations_M\subseteq\computations_{M'}$.
	As the inclusion in the opposite direction is proven by an analogous argument, we may conclude that $\computations_M=\computations_{M'}$. It also follows that $\fincomputations_M=\fincomputations_{M'}$, the latter being defined as the finite prefixes of~$\computations_M$ and~$\computations_{M'}$, respectively.

Observe that from the argument in part~\ref{item:app:computation_identity} it also follows that $\run_M(\computation)\statewisebisim\run_{M'}(\computation)$ for every $\computation\in\computations_{M}=\computations_{M'}$.
	For part~\ref{item:app:trace_identity}, consider an arbitrary trace~$\trace\in\traces_M$. Then, there is a computation~$\computation\in\computations_M$ such that $\trace_M(\computation)=\trace$. By part~\ref{item:app:computation_identity}, also $\computation\in\computations_{M'}$. Moreover, $\run_M(\computation)\statewisebisim\run_{M'}(\computation)$. By the definition of (statewise) bisimilarity it then follows that $\trace=\trace_M(\computation)=\trace_{M'}(\computation)$. Accordingly, $\traces_M\subseteq\traces_{M'}$ and the inclusion in the opposite direction ensues by an analogous argument. We then conclude that $\fintraces_M=\fintraces_{M'}$, the latter being defined as the finite prefixes of~$\traces_M$ and~$\traces_{M'}$, respectively.
\end{proof}
 Moreover, every (finite) computation~$\computation$ gives rise to statewise bisimilar (finite) runs and identical (finite) traces in bisimilar concurrent game structures.
\begin{lem}%
\label{lemma:computations_induce_bisimilar_runs}
		Let~$M$ and~$M'$ be bisimilar concurrent game structures and $\computation\in\computations_M$ and $\computation'\in\fincomputations_M$. Then,
	\begin{enumerate}
		\renewcommand{\itemsep}{.5ex}
		\item\label{item:app:computations_induce_i}\label{item:computations_induce_i} $\run_M(\computation)\statewisebisim\run_{M'}(\computation)$ and $\finrun_{M}(\computation')\statewisebisim\finrun_{M'}(\computation')$,
		\item\label{item:app:computations_induce_ii}\label{item:computations_induce_ii} $\mathwordbox{\trace_M(\computation)}{\run_M(\computation)}=\mathwordbox[l]{\trace_{M'}(\computation)}{\run_{M'}(\computation)}$
			and $\mathwordbox{\fintrace_{M}(\computation')}{\finrun_{M}(\computation')}=\mathwordbox{\fintrace_{M'}(\computation')}{\finrun_{M'}(\computation')}$.
	\end{enumerate}
\end{lem}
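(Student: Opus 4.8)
The plan is to read both parts off the inductive construction already performed in the proof of Lemma~\ref{lemma:trace_identity}, since statewise bisimilarity of the induced runs is precisely the by-product of that argument. So for part~\ref{item:computations_induce_i} in the infinite case, I would take a computation $\computation = d_0, d_1, d_2, \dots$ in $\computations_M$ and let $s_0, s_1, s_2, \dots$ be the states of $M$ with $s_0 = s^0_M$ and $s_0 \xtransarrow{d_0} s_1 \xtransarrow{d_1} s_2 \cdots$; these states constitute $\run_M(\computation)$. By part~\ref{item:computation_identity} of Lemma~\ref{lemma:computation_identity} we have $\computation \in \computations_{M'}$, so $\computation$ induces a run $\run_{M'}(\computation) = t_0, t_1, t_2, \dots$ in $M'$ with $t_0 = s^0_{M'}$ and $t_0 \xtransarrow{d_0} t_1 \xtransarrow{d_1} t_2 \cdots$, these states being unique by determinism of $\transf'$. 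I would then establish $s_k \bisim t_k$ for all $k \geq 0$ by induction on $k$: the base case $s_0 \bisim t_0$ is immediate from $M \bisim M'$, i.e., $s^0_M \bisim s^0_{M'}$; and for the step, given $s_k \bisim t_k$, the two transitions $s_k \xtransarrow{d_k} s_{k+1}$ and $t_k \xtransarrow{d_k} t_{k+1}$ feed directly into Lemma~\ref{lemma:deterministic_bisimulation} to give $s_{k+1} \bisim t_{k+1}$. This yields $\run_M(\computation) \statewisebisim \run_{M'}(\computation)$.

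For a finite computation $\computation' \in \fincomputations_M$, I would run exactly the same induction up to the length of $\computation'$, but handle the empty sequence as a separate base case: when $\computation' = \epsilon$ both $\finrun_M(\epsilon) = s^0_M$ and $\finrun_{M'}(\epsilon) = s^0_{M'}$ are single-state runs, bisimilar because $M \bisim M'$; and for a non-empty $\computation'$ the finite version of the induction above delivers $\finrun_M(\computation') \statewisebisim \finrun_{M'}(\computation')$.

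Part~\ref{item:computations_induce_ii} then drops out of part~\ref{item:computations_induce_i} together with the first clause of the definition of bisimulation. Writing $\run_M(\computation) = s_0, s_1, \dots$ and $\run_{M'}(\computation) = t_0, t_1, \dots$, statewise bisimilarity gives $s_k \bisim t_k$ and hence $\valf(s_k) = \valf'(t_k)$ for every $k$; therefore $\trace_M(\computation) = \valf(s_0), \valf(s_1), \dots = \valf'(t_0), \valf'(t_1), \dots = \trace_{M'}(\computation)$. Applying the same reasoning to the finite prefix induced by $\computation'$ gives $\fintrace_M(\computation') = \fintrace_{M'}(\computation')$.

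I expect no real obstacle, as the core content has already been secured in Lemma~\ref{lemma:trace_identity}; the only points that need care are bookkeeping ones: checking that the states $t_k$ produced by the induction genuinely are the states of the run $\run_{M'}(\computation)$ (guaranteed because $\transf'$ is deterministic and complete, so the induced run is unique), and treating the empty-computation base case of the finite statement on its own.
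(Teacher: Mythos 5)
Your proposal is correct and follows essentially the same route as the paper's own proof: invoke Lemma~\ref{lemma:computation_identity} to get $\computation\in\computations_{M'}$, then an induction on~$k$ using Lemma~\ref{lemma:deterministic_bisimulation} to show $s_k\bisim t_k$, with part~\ref{item:computations_induce_ii} dropping out via the labelling clause of bisimulation. The extra bookkeeping you flag (the empty-computation base case and uniqueness of the induced run by determinism) is handled implicitly in the paper but is entirely consistent with its argument.
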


\begin{proof}
	For part~\ref{item:computations_induce_i}, first observe that by virtue of Lemma~\ref{lemma:computation_identity}, we also have that $\computation\in\computations_{M'}$ and $\fincomputation\in\fincomputations_{M'}$.
	Let $\computation=d_0,d_1,d_2,\dots$,
	$\run_M(\computation)=s_0,s_1,s_2,\dots$, and
	$\run_{M'}(\computation)=t_0,t_1,t_2,\dots$.
	We prove by induction on~$k$ that $s_k\bisim t_k$ for every~$k\ge 0$.
	If $k=0$, then $s_0=s^0_M\bisim s^0_{M'}=t_0$. For the induction step,
	we may assume that $s_k\bisim t_k$. Then, $s_k\xtransarrow{d_k}s_{k+1}$ and
	$t_k\xtransarrow{d_k}t_{k+1}$. Lemma~\ref{lemma:deterministic_bisimulation} now yields $s_{k+1}\bisim t_{k+1}$, as desired.
	The argument for the second part of~\ref{item:computations_induce_i} proceeds by an analogous argument.

	Part~\ref{item:app:computations_induce_ii} then follows almost immediately from part~\ref{item:app:computations_induce_i}.
	Let $\run_M(\computation)=s_0,s_1,s_2,\dots$ and $\run_{M'}(\computation)=t_0,t_1,t_2,\dots$. Now observe that for every $k\ge0$ we have that $\evalf_M(s_k)=\evalf_{M'}(t_k)$.
	Accordingly, $\trace_M(\computation)=\trace_{M'}(\computation)$. For~$\computation'$ a similar argument yields the result.
\end{proof}
However, as runs are sequences of states and the states of different concurrent game structures~$M$ and~$M'$ may be distinct, even if they are bisimilar, no identification of their sets~$\runs_M$ and~$\runs_{M'}$  of runs can generally be made.

\section{Games on Concurrent Game Structures}%
\label{secn:bis-in-cgs}

Concurrent game structures specify the actions the players can take at
each state and which states are reached if they all concurrently
decide on an action. In game theoretic terms, these structures loosely
correspond to what are called \emph{game forms}.
A full understanding of the game-theoretic aspects of the system and the
strategic behaviour of its constituent players---and therefore which computations/runs/traces will be generated in equilibrium---also essentially depends on what goals the players desire to
achieve and on what strategies they may adopt in pursuit of these
goals. We therefore augment concurrent game structures with
preferences and strategies for the players. In this way CGSs define fully fledged strategic games and as such they are amenable to game theoretic analysis by standard solution concepts, among which Nash equilibrium is arguably the most prominent.
%

\subsection*{Strategies and Strategy Profiles}
Based on the distinction between computations, runs, and traces, we can also distinguish three types of strategy: computation-based, run-based, and trace-based strategies.
The importance of these distinctions is additionally corroborated by Bouyer et al.~\cite{BBMU:2011a,BBMU:2015a}, who show how the specific model of strategies adopted affects the computational complexity of some standard decision problems related to multi-agent systems.

A \emph{computation-based strategy} for a player~$i$ in a concurrent game structure~$M$ is a function
\[
	f^{\mathit{comp}}_i\colon\fincomputations_M\to\actions\text,
\]
such that $f_i^{\mathit{comp}}(\fincomputation)\in\Ac_i(s_{k})$ for every finite~$\fincomputation\in\fincomputations_M$ with $\finrun_M(\fincomputation)=s_0,\dots,s_k$.
Thus, in particular, $f_i^{\mathit{comp}}(\epsilon)\in\Ac_i(s^0_M)$, where~$\epsilon$ is the empty sequence of directions.

Similarly, a \emph{run-based strategy} for player~$i$ is a
function
\[
	f^{\mathit{run}}_i\colon\finruns_M\to\actions\text{,}
\]
where $f_i^{\mathit{run}}(s_0,\dots,s_k)\in\Ac_i(s_k)$ for every finite run~$(s_0,\dots,s_k)\in\finruns_M$.
Finally, a \emph{trace-based strategy} for~$i$ is a function
\[
	f^{\mathit{trace}}_i\colon\fintraces_M\to\actions\text,
\]
such that $f_i^{\mathit{trace}}(\fintrace)\in\Ac_i(s_k)$ for every trace~$\fintrace\in\fintraces_M$ and every run~$\finrun=s_0,\dots,s_k$ such that $\trace=\valf(s_0),\dots,\valf(s_k)$.

A \emph{computation-based strategy profile} is then a tuple
$\stratprof=(\strategy_1,\dots,\strategy_n)$ that associates with each
player~$i$ a computation-based strategy~$f_i$. Run-based and
trace-based strategy profiles are defined analogously.

Every computation-based strategy profile $f=(f_1,\dots,f_n)$ induces a unique computation
\[
	\computation_M(f)=d_0,d_1,d_2,\dots
\]
in~$M$ that is defined inductively as follows:
\begin{align*}
	d_0		&	=	(f_1(\epsilon),\dots,f_n(\epsilon))\\
	d_{k+1}	&	=	(f_1(d_0,\dots,d_{k}), \ldots,f_n(d_0,\dots,d_{k})).
\end{align*}
A run-based strategy profile $f=(f_1,\dots,f_n)$ defines a unique computation~$\computation_M(f)=d_0,d_1,d_2,\dots$ in a similar manner:
\begin{align*}
	d_0		&	=	(f_1(s^0_M),\dots,f_n(s^0_M))\text{, and}\\
	d_{k+1}	&	=	(f_1(\finrun(d_0,\dots,d_k)),\dots,f_n(\finrun(d_0,\dots,d_k)))\text.
\end{align*}
Finally, the computation~$\computation_M(f)$ defined by a trace-based strategy profile~$f$ is given by
\begin{align*}
	d_0		&	=	(f_1(\valf(s^0_M)),\dots,f_n(\valf(s^0_M)))\\
	d_{k+1}	&	=	(f_1(\trace(d_0,\dots,d_k)), \dots,f_n(\trace(d_0,\dots,d_k)))\text.
\end{align*}
If~$M$ is clear from the context, we usually omit the subscript in~$\computation_M(f)$.
For $f=(f_1,\dots,f_n)$ a profile of computation-based, run-based, or trace-based strategies,
we write with a slight abuse of notation  $\run(f_1,\dots,f_n)$ for $\run(\computation(f_1,\dots,f_n))$ and $\trace(f_1,\dots,f_n)$ for $\trace(\run(f_1,\dots,f_n))$.

As the computations of bisimilar concurrent games structures coincide (Lemma~\ref{lemma:computation_identity}), we can now establish that a
player's computation-based strategies coincide in bisimilar
concurrent game structures. Moreover, the computations induced by them will be identical.
Also, from the coincidence of traces between bisimilar concurrent game structures (Lemma~\ref{lemma:trace_identity}), we can establish also trace-based strategies coincide in bisimilar concurrent game structures.
%
%
%

\begin{lem}%
	\label{lemma:strategy_identity}\label{lemma:computation_tilde0}\label{lemma:trace_tilde0}
	Let~$M$ and $M'$ be bisimilar concurrent game structures and~$i$ a player.
	Then, every computation-based  strategy for~$i$ in~$M$ is also a computation-based strategy for~$i$ in~$M'$, and every trace-based  strategy for~$i$ in~$M$ is also a trace-based strategy for~$i$ in~$M'$. 	Moreover, for every computation-based profile~$f$ for~$M$ we have that $\computation_M(f)=\computation_{M'}(f)$, and for every trace-based profile~$g$ that $\computation_M(g)=\computation_{M'}(g)$.
\end{lem}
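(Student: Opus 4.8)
The plan is to treat the four assertions in turn, the crux being an auxiliary observation that bisimilar states carry identical feasible-action sets. Concretely, I first show that $s\bisim t$ implies $\Ac_i(s)=\Ac_i(t)$ for every player~$i$. Recall that a direction~$\direction$ is legal at a state~$u$ exactly when $u\transarrow{\direction}u'$ for some~$u'$; hence the two transfer clauses of the bisimulation definition give that $s$ and $t$ admit precisely the same legal directions. Since the legal directions at any state form the product $\prod_{j}\Ac_j(\cdot)$ of the (non-empty) feasible-action sets, and equal products of non-empty sets have equal factors, it follows that $\Ac_i(s)=\Ac_i(t)$. This fact is what allows feasibility of an action to migrate across bisimilar structures. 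For the first assertion, a computation-based strategy $f^{\mathit{comp}}_i$ has domain $\fincomputations_M=\fincomputations_{M'}$ by Lemma~\ref{lemma:computation_identity}, so it is already a map of the correct type for~$M'$; feasibility transfers because, for $\fincomputation\in\fincomputations_{M'}$ with $\finrun_{M'}(\fincomputation)=t_0,\dots,t_k$, Lemma~\ref{lemma:computations_induce_bisimilar_runs}(\ref{item:computations_induce_i}) gives $\finrun_M(\fincomputation)\statewisebisim\finrun_{M'}(\fincomputation)$, so its last state $s_k$ satisfies $s_k\bisim t_k$ and hence $f^{\mathit{comp}}_i(\fincomputation)\in\Ac_i(s_k)=\Ac_i(t_k)$.

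The trace-based case is where I expect the real work to lie. The domains again coincide, now by Lemma~\ref{lemma:trace_identity} (so $\fintraces_M=\fintraces_{M'}$), but feasibility of a trace-based strategy is required at the last state of \emph{every} run inducing a given trace, and two runs inducing the same trace need not be statewise bisimilar. I would circumvent this by routing through computations: given $\fintrace\in\fintraces_{M'}$ and an arbitrary run $t_0,\dots,t_k$ of~$M'$ inducing~$\fintrace$, choose a computation~$\fincomputation$ inducing that run; then $\fincomputation\in\fincomputations_M$ by Lemma~\ref{lemma:computation_identity}, and Lemma~\ref{lemma:computations_induce_bisimilar_runs} shows that $\finrun_M(\fincomputation)$ is statewise bisimilar to $t_0,\dots,t_k$ and carries the same trace~$\fintrace$. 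Thus $\finrun_M(\fincomputation)$ is a run of~$M$ inducing~$\fintrace$ whose final state $s_k$ satisfies $s_k\bisim t_k$, so $f^{\mathit{trace}}_i(\fintrace)\in\Ac_i(s_k)=\Ac_i(t_k)$; since $t_0,\dots,t_k$ was arbitrary, feasibility holds throughout~$M'$.

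For the two statements about induced computations I would argue by induction on the length of the prefix. For a computation-based profile~$f$, the defining recursion for $\computation_M(f)$ evaluates the $f_i$ only on finite computations and never mentions the transition function or the states of~$M$, so the same recursion produces the same sequence in~$M'$; a one-line induction gives $\computation_M(f)=\computation_{M'}(f)$. For a trace-based profile~$g$ the recursion does consult traces, so here I would prove that the prefixes of $\computation_M(g)$ and $\computation_{M'}(g)$ agree: the base case uses $\valf(s^0_M)=\valf(s^0_{M'})$, which holds because $s^0_M\bisim s^0_{M'}$ and bisimilar states have equal labels; and in the inductive step, once the common prefix $d_0,\dots,d_k$ is established it lies in $\fincomputations_M=\fincomputations_{M'}$, so Lemma~\ref{lemma:computations_induce_bisimilar_runs}(\ref{item:computations_induce_ii}) yields $\trace_M(d_0,\dots,d_k)=\trace_{M'}(d_0,\dots,d_k)$, and evaluating each $g_i$ at this single common trace forces $d_{k+1}$ to be the same in both structures. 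The one genuinely new ingredient is the feasible-action equality of bisimilar states, and the one point demanding care is the trace-based case, where the universal quantification over runs inducing a trace must be handled by the computation-routing argument above.
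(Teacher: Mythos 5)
Your proposal is correct and follows essentially the same route as the paper's proof: the domains of the strategies are identified via Lemma~\ref{lemma:computation_identity}/\ref{lemma:trace_identity}, feasibility is transferred by routing (finite) traces through computations and using statewise bisimilarity of the induced runs (Lemma~\ref{lemma:computations_induce_bisimilar_runs}), and the equalities of induced computations are proved by the same inductions. The only cosmetic difference is that you package the feasibility transfer as an explicit claim $\Ac_i(s)=\Ac_i(t)$ for bisimilar states (via equality of legal-direction products), whereas the paper argues this inline by transferring a single legal direction containing the prescribed action; both arguments are sound and interchangeable.
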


\begin{proof}
	First, let $f_i$ be a computation-based strategy for~$i$ in~$M$. We show that~$f_i$ is also a computation-based strategy for~$i$ in~$M'$. To this end, consider an arbitrary $\fincomputation\in\fincomputations_{M'}$. Let $\finrun_{M'}(\fincomputation)=t_0,\dots,t_k$.
It suffices to prove that $f_i(\fincomputation)\in\actions_i(t_k)$.
To see this, first observe that by Lemma~\ref{lemma:computation_identity} also $\fincomputation\in\fincomputations_M$ and let $\finrun_M(\fincomputation)=s_0,\dots,s_k$.
In virtue of Lemma~\ref{lemma:deterministic_bisimulation}, then $s_k\bisim t_k$.
Moreover, because~$f_i$ is a computation-based strategy for~$i$ in~$M$, we have $f_i(\fincomputation)\in\actions_i(s_k)$. Now consider any legal direction~$d_{k}=(a_1,\dots,a_n)$ at~$s_k$ with $a_i=f_i(\fincomputation)$.
Then, there is some state~$s_{k+1}\in\states_M$ such that $s_k\xtransarrow{d_k}s_{k+1}$. As $s_k\bisim t_k$, moreover, there is also a state~$t_{k+1}\in\states_{M'}$ such that $t_k\xtransarrow{d_k}t_{k+1}$. Accordingly,~$d_{k}$ is legal at~$t_k$ in~$M'$ and in particular $a_i=f_i(\fincomputation)\in\actions_i(t_k)$ as desired.

The case if~$g_i$ is a trace-based strategy for~$i$ in~$M$ is similar. We then have to prove that~$g_i$ is also a trace-based strategy for~$i$ in~$M'$ as well. To this end, consider an arbitrary finite trace~$\fintrace\in\fintraces_{M'}$ and run~$\finrun=t_0,\dots,t_k$ such that $\fintrace=\valf_{M'}(t_0),\dots,\valf_{M'}(t_k)$. It then suffices to prove that $g_i(\fintrace)\in\actions_i(t_k)$. We may assume that~$\finrun$ is induced by a computation~$\fincomputation\in\fincomputations_{M'}$, that is, $\finrun=\finrun_{M'}(\fincomputation)$. By Lemma~\ref{lemma:computation_identity} we have $\fincomputation\in\fincomputations_M$ and by Lemma~\ref{lemma:computations_induce_bisimilar_runs} both $\finrun_M(\fincomputation)\statewisebisim\finrun_{M'}(\fincomputation)$ and
$\trace_M(\fincomputation)=\trace_{M'}(\fincomputation)$.
Let $\finrun_M(\fincomputation)=s_0,\dots,s_k$. Hence, $s_k\bisim t_k$. As~$g_i$ is a run-based strategy for~$i$ in~$M$ we have $g_i(\fintrace)\in\actions_i(s_k)$.
Let, furthermore, $d=(a_1,\dots,a_k)$ be a legal direction at~$s_k$ with $a_i=g_i(\fintrace)$.
Then, there is some state~$s_{k+1}\in\states_M$ such that $s_k\xtransarrow{d_k}s_{k+1}$. As $s_k\bisim t_k$, there is also a state~$t_{k+1}\in\states_{M'}$ such that $t_k\xtransarrow{d_k}t_{k+1}$. Accordingly,~$d_{k}$ is legal at~$t_k$ in~$M'$ and in particular $a_i=g_i(\fintrace)\in\actions_i(t_k)$.

For the second part of the lemma, let $f=(f_1,\dots,f_n)$ be a computation-based strategy profile in~$M$. Then,~$f$ is a computation-based strategy profile in~$M'$ as well. Let $\computation_M(f)=d_0,d_1,d_2,\dots$
and $\computation_{M'}(f)=d'_0,d'_1,d'_2,\dots$. We show by induction on~$k$ that for every $k\ge0$ we have~$d_k=d'_k$. For $k=0$, immediately,
$
	d_0=(f_1(\epsilon),\dots,f_n(\epsilon))=d_0'\text.
$
For the induction step we may assume that $d_0,\dots,d_k=d_0',\dots,d'_k$. Hence,
\begin{align*}
	d_{k+1}		=	(f_1(d_0,\dots,d_k),\dots,f_n(d_0,\dots,d_k))
				=	(f_1(d'_0,\dots,d'_k),\dots,f_n(d'_0,\dots,d'_k))
				=	d'_{k+1}\text.
\end{align*}

Finally, let $g=(g_1,\dots,g_n)$ be a trace-based strategy profile. Again we let $\computation_M(f)=d_0,d_1,d_2,\dots$
and $\computation_{M'}(f)=d'_0,d'_1,d'_2,\dots$ and show by induction on~$k$ that for every $k\ge0$ we have that $d_k=d'_k$.
If $k=0$, observe that having assumed $M\bisim M'$ also $s^0_M=s^0_{M'}$.
Accordingly, $\evalf_M(s_M^0)=\evalf_{M'}(s^0_{M'})$ and, hence,
\begin{align*}
	d_{0}		=	(g_1(\evalf_M(s_M^0)),\dots,g_n(\evalf_M(s_M^0)))
				=	(g_1(\evalf_{M'}(s^0_{M'})),\dots,g_n(\evalf_{M'}(s^0_{M'})))
				=	d'_{k+1}\text.
\end{align*}
For the induction step, we may assume that $d_0,\dots,d_k=d'_0,\dots,d'_k$ and by Lemma~\ref{lemma:computations_induce_bisimilar_runs}, moreover, $\fintrace_{M}(d_0,\dots,d_k)=\fintrace_{M'}(d'_0,\dots,d'_k)$.
Now the following equations hold:
\begin{align*}
	d_{k+1}	&	=	(g_1(\fintrace_M(d_0,\dots,d_k)),\dots,g_n(\fintrace_{M}(d_0,\dots,d_k)))\\
			&	=	(g_1(\fintrace_{M'}(d'_0,\dots,d'_k)),\dots,g_n(\fintrace_{M'}(d'_0,\dots,d'_k)))\\
			&	=	d'_{k+1}\text,
\end{align*}
which concludes the proof.
\end{proof}

With the states of bisimilar structures possibly being
distinct, however, a statement  analogous to Lemma~\ref{lemma:strategy_identity} cannot
be shown to hold for run-based strategies.

\subsection*{Preferences and Goals}\label{section:preferences}

We assume the agents of a concurrent game structure to have preferences on basis of which they choose their strategies.
Formally, we specify the preferences of a player~$i$ of a CGS~$M$
as a subset~$\goalset_i$ of \emph{computations},
that is, $\goalset_i \subseteq \computations_M$ and refer
to~$\goalset_i$ as~$i$'s \emph{goal set}. Player~$i$ is then
understood to {(strictly) prefer} computations in~$\goalset_i$ to
those not in~$\goalset_i$ and to be {indifferent}
otherwise. Accordingly, each player's preferences are dichotomous,
only distinguishing between the preferred computations in~$\goalset_i$
and the not preferred ones not in~$\goalset_i$.  Formally, player~$i$
is said to \emph{weakly prefer} computation~$\computation$ to
computation~$\computation'$ if $\computation\in\goalset_i$
whenever~$\computation'\in\goalset_i$, and to \emph{strictly prefer}
$\computation$ to~$\computation'$ if~$i$ weakly prefers~$\computation$
to~$\computation'$ but not the other way round.  If~$i$ both weakly
prefers~$\computation$ to~$\computation'$ and weakly
prefers~$\computation'$ to~$\computation$, player~$i$ is said to be
\emph{indifferent} between~$\computation$ and~$\computation'$.

%
Our choice to assume the players' preferences to be \emph{computation-based} preferences---that is, to model their goals as sets of
\emph{computations} rather than, say, sets of runs or sets of
traces---is for technical convenience and flexibility. Recall that
every set of runs induces a set of computations, namely the set of
computations that give rise to the same runs, and similarly for every
set of traces. Thus, we say that a goal
set~$\goalset_i\subseteq\computations_M$ is \emph{run-based} if for
any two computations~$\computation$ and~$\computation'$ with
$\run(\computation)=\run(\computation')$ we have that
$\computation\in\goalset_i$ if and only if
$\computation'\in\goalset$. Similarly, $\goalset_i$ is said to be
\emph{trace-based} whenever
$\trace(\computation)=\trace(\computation')$ implies
$\computation\in\goalset_i$ if and only if
$\computation'\in\goalset_i$. In other words, in our setting, formally,
\emph{run-based} goals are \emph{computation-based} goals closed under induced runs, and
\emph{trace-based} goals are \emph{computation-based} goals closed
under induced traces.%
\footnote{We do not directly consider sets of runs or sets of traces as
possible models of players' preferences in this paper---formally, they are induced sets of computations. Accordingly, when talking about preferences,
we need not make the distinction between `run-based' (`trace-based') and
`run-invariant' (`trace-invariant') as we do for strategies. Our run-based
preferences and trace-based preferences can with as much justification be
referred to as run-invariant preferences and trace-invariant preferences,
respectively.}

Sometimes---as we did in the example in the introduction---players' goals are specified by \emph{temporal logic formulae}~\cite{demri:2016a}.
As the satisfaction of goals only depends on traces, they will
directly correspond to trace-based goals, given our formalisation of
goals and preferences.

%
\subsection*{Games and Nash Equilibrium}%
\label{section:trace_invariant}
With the above definitions in place, we are now in a position to
define a \emph{game on a concurrent game structure~$M$} (also called a \emph{CGS-game}) with
  $\agents=\set{1,\dots,n}$ as a
tuple
\[
	G=(M,\goalset_1,\dots,\goalset_n
	)\text,
\]
where, for each player~$i$ in~$M$, the
set~$\goalset_i\subseteq\computations_M$ is a goal set
specifying~$i$'s dichotomous preferences over the computations in~$M$.

In a CGS-game the players can all play either computation-based strategies,  run-based strategies, or trace-based strategies.
For each such choice of type of strategies, with the set of players and their preferences specified,
every CGS-game defines a strategic game in the standard game-theoretic sense.
Observe that the set of strategies is infinite in general.
Thus the game-theoretic solution concept of Nash equilibrium becomes available for the analysis of games on concurrent game structures.
If $f=(f_1,\dots,f_n)$ is a strategy profile and~$g_i$ a strategy for player~$i$, we write $(f_{-i},g_i)$ for the strategy profile
$(f_1,\dots,g_i,\dots,f_n)$,
which is identical to~$f$ except that~$i$'s strategy is replaced by~$g_i$.
Formally, given a CGS-game, we  say that a profile~$\strategy=(\strategy_1,\dots,\strategy_n)$ of computation-based strategies is a \emph{Nash equilibrium in computation-based strategies} (or \emph{computation-based equilibrium}) if, for every player~$i$ and every computation-based  strategy~$g_i$ available to~$i$,
\[
	\text{
	$\computation_M(f_{-i},g_i)\in\goalset_i$
	implies
	$\computation_M(f)\in\goalset_i$.
	}
\]
The concepts of \emph{Nash equilibrium in run-based strategies} and \emph{Nash equilibrium in trace-based strategies} are defined analogously, where, importantly, the strategies in~$f_{-i}$ and~$g_i$ are required to be of the same type, that is, either they are all run-based or they are all trace-based.
If $\computation(f)\notin\goalset_i$ whereas~$\computation(f_{-i},g_i)\in\goalset_i$, we also say that player~$i$ would like to \emph{deviate from~$f$} (and play~$g_i$ instead). Thus, a run-based profile~$f$ is a \emph{run-based equilibrium} whenever no player would like to deviate from it and play some \emph{run-based} strategy different from~$f_i$. Similarly, a trace-based profile~$f$ is a trace-based equilibrium if no player likes to deviate and play another \emph{trace-based} strategy.

We say that a computation~$\computation$, run~$\run$, or a trace~$\trace$ is \emph{sustained by a Nash equilibrium}~$\strategy=(\strategy_1,\dots,\strategy_n)$ (of any type) whenever $\computation=\computation(\strategy)$, $\run=\run(\strategy)$, and $\trace=\trace(\strategy)$, respectively. We also refer to a computation, run, or trace that is sustained by a Nash equilibrium as an \emph{equilibrium computation}, \emph{equilibrium run}, and \emph{equilibrium trace}, respectively.

Computation-based equilibrium is a weaker notion than run-based
equilibrium, in the sense that if~$f$ is a run-based equilibrium there
is also a corresponding computation-based equilibrium, but not
necessarily the other way round. Run-based equilibrium, in turn, is in
a similar way a weaker concept than trace-based equilibrium.
As computation-based, run-based, and trace-based strategies are
set-theoretically of different types, a comparison cannot be made
directly. To make the comparison precise, we therefore identify
two subclasses of computation-based strategies, \emph{run-invariant strategies} and \emph{trace-invariant strategies}, that characterise the behaviour of, respectively, run-based and trace-based strategies.

We say that a computation-based strategy~$f_i\colon\fincomputations_M\to\actions_i$ is \emph{run-invariant} in CGS~$M$ whenever  $\finrun_M(\fincomputation)=\finrun_M(\fincomputation')$ implies $f_i(\fincomputation)=f_i(\fincomputation')$, for all computations $\fincomputation,\fincomputation'\in\fincomputations_M$.
Similarly,~$f_i$ is \emph{trace-invariant} in~$M$ whenever  $\fintrace_M(\fincomputation)=\fintrace_M(\fincomputation')$ implies $f_i(\fincomputation)=f_i(\fincomputation')$, for all
$\fincomputation,\fincomputation'\in\fincomputations_M$.
Observe that thus a strategy~$f_i$ being trace-invariant implies~$f_i$ being run-invariant, but not necessarily the other way around.

We observe that there are one-to-one correspondences between run-based strategies on the one hand and run-invariant computation-based strategies on the other, and similarly between trace-based strategies and trace-invariant computation-based strategies.
Let $f_i\colon\finruns_M\to\actions$ be a run-based strategy. Then define $\check f_i\colon\fincomputations_M\to\actions$ as the computation-based strategy such that for every finite computation~$\fincomputation\in\fincomputations_M$ we have
$
	\check f_i(\fincomputation)=f_i(\finrun_M(\fincomputation)).
$
A similar statment holds if $g_i\colon\fintraces_M\to\actions$ is a trace-based strategy. Then, define $\dcheck g_i\colon\fincomputations\to\actions_i$ as the computation-based strategy such that for every finite computation~$\fincomputation\in\fincomputations_M$ we have
$
	\dcheck g_i(\fincomputation)=g_i(\fintrace_M(\fincomputation)).
$

\begin{lem}%
	\label{lemma:invariant_one_one}
	For run-based strategies~$f_i$ and trace-based strategies~$g_i$, the mapping that transforms~$f_i$ into $\check f_i$ and the mapping that transforms $g_i$ into~$\dcheck g_i$ are both one-to-one.
\end{lem}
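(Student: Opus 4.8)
The plan is to exploit the elementary fact that \emph{precomposition with a surjective function is an injective operation on function spaces}, and to observe that both induced maps are precisely precompositions of this kind. Indeed, for a run-based strategy $f_i\colon\finruns_M\to\actions$ the associated computation-based strategy is $\check f_i=f_i\circ\finrun_M$, where $\finrun_M\colon\fincomputations_M\to\finruns_M$ sends each finite computation to the finite run it induces; and for a trace-based strategy $g_i\colon\fintraces_M\to\actions$ we have $\dcheck g_i=g_i\circ\fintrace_M$, with $\fintrace_M\colon\fincomputations_M\to\fintraces_M$. Hence it suffices to show that $\finrun_M$ and $\fintrace_M$ are surjective onto $\finruns_M$ and $\fintraces_M$, respectively.

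First I would treat the run-based case. The surjectivity of $\finrun_M$ is exactly the fact, already recorded in the Preliminaries, that every (finite) run is induced by at least one (finite) computation. Granting this, suppose $\check f_i=\check f'_i$ for run-based strategies $f_i,f'_i$. Take an arbitrary $\finrun\in\finruns_M$ and choose $\fincomputation\in\fincomputations_M$ with $\finrun_M(\fincomputation)=\finrun$. Then
\[
	f_i(\finrun)=f_i(\finrun_M(\fincomputation))=\check f_i(\fincomputation)=\check f'_i(\fincomputation)=f'_i(\finrun_M(\fincomputation))=f'_i(\finrun).
\]
As $\finrun$ was arbitrary, $f_i=f'_i$, so $f_i\mapsto\check f_i$ is one-to-one.

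The trace-based case is entirely analogous. Here the relevant surjectivity is that every finite trace $\fintrace\in\fintraces_M$ equals $\fintrace_M(\fincomputation)$ for some $\fincomputation\in\fincomputations_M$: by definition every trace is induced by some run, and every run by some computation, so $\fintrace_M$ is onto. Assuming $\dcheck g_i=\dcheck g'_i$ and repeating the displayed chain of equalities with $\fintrace_M$ in place of $\finrun_M$ gives $g_i(\fintrace)=g'_i(\fintrace)$ for every $\fintrace\in\fintraces_M$, whence $g_i=g'_i$.

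The argument is routine and I do not anticipate a genuine obstacle: the only point to pin down is the surjectivity of the two induced maps, and this is immediate from the definitions of run and trace (every run arises from some computation, every trace from some run). The remainder is the standard observation that a function is determined by its values on the image of a surjection.
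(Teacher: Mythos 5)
Your injectivity argument is correct and, at its core, it is the paper's own: the paper likewise takes two run-based strategies that disagree on some run~$\finrun$, picks a computation~$\fincomputation$ with $\finrun_M(\fincomputation)=\finrun$ (the same surjectivity of $\finrun_M$ that you isolate), and concludes $\check f_i(\fincomputation)\neq\check f'_i(\fincomputation)$. Your packaging of this as ``precomposition with a surjection is injective'' is just a tidier way of stating the identical step, and the trace-based case goes through exactly as you say.

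However, the paper's proof of this lemma establishes more than injectivity, and the extra content is not decorative. The lemma is the formal counterpart of the sentence preceding it, which announces a \emph{one-to-one correspondence} between run-based strategies and \emph{run-invariant} computation-based strategies (similarly for traces). Accordingly, the paper also proves (i) that $\check f_i$ is run-invariant, so the map lands in the intended codomain, and (ii) that the map is \emph{onto} the run-invariant strategies: for any run-invariant~$g_i$, the run-based strategy $\hat g_i$ defined by $\hat g_i(\finrun)=g_i(\fincomputation)$ for any~$\fincomputation$ with $\finrun_M(\fincomputation)=\finrun$ is well defined precisely by run-invariance of~$g_i$, and it maps back to~$g_i$. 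This surjectivity is what Lemma~\ref{lemma:invariant_eq} later uses when it converts ``some \emph{run-invariant} strategy~$f''_i$ is a profitable deviation'' into ``some \emph{run-based} strategy~$f'_i$ with $\check f'_i=f''_i$ is a profitable deviation''; without it, that equivalence breaks. So if ``one-to-one'' is read strictly as ``injective'', your proof is complete; read as the one-to-one correspondence the paper intends and relies on downstream, your proof supplies only half, omitting both the run-invariance (resp.\ trace-invariance) of the image and the ontoness — both easy, but they need to be said.
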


\begin{proof}
	Let~$f_i$ a run-based strategy. We first show that~$\check f_i$ is run-invariant.
	To this end, let $\fincomputation,\fincomputation'\in\fincomputations_M$ be computations such that $\finrun_M(\fincomputation)=\finrun_M(\fincomputation')$. Then,
	\[
		\check f_i(\fincomputation)
		=	f_i(\finrun_M(\fincomputation))
		=	f_i(\finrun_M(\fincomputation'))
		= \check f_i(\fincomputation')\text.
	\]

	To show that the mapping is onto, let~$g_i$ be an arbitrary run-invariant strategy.
	Now define run-based strategy~$\hat g_i$ such that, for every run $\finrun\in\finruns_M$
	and $\fincomputation\in\fincomputations_M$ with $\finrun=\finrun_M(\fincomputation)$ we have
	$\hat g_i(\finrun)=g_i(\fincomputation)$.
Observe that~$\hat g_i$ is well-defined since, by run-invariance of~$g_i$, for all $\fincomputation,\fincomputation'\in\fincomputations_M$ with $\finrun_M(\fincomputation)=\finrun_M(\fincomputation')=\finrun$ we have that $g_i(\fincomputation)=g_i(\fincomputation')$.

	Finally, to see that the mapping is injective, let~$f_i$ and~$f'_i$ be two distinct run-based strategies. Then, there is a run~$\finrun\in\finruns_M$ such that $f_i(\finrun)\neq f'_i(\finrun)$. We may assume the existence of a computation~$\fincomputation\in\fincomputations_M$ such that $\finrun_M(\fincomputation)=\finrun$. Then,
	\[
		\check f_i(\fincomputation)
		=	f_i(\finrun_M(\fincomputation))
		=	f_i(\finrun)
		\neq f'_i(\finrun)
		=	f'_i(\finrun_M(\fincomputation))
		= 	\check f_i(\fincomputation)\text,
	\]
as desired. This concludes the proof.
\end{proof}
Furthermore, each profile of run-invariant strategies induces the same computation in a concurrent game structure as its run-based counterpart. A similar remark applies to trace-invariant and trace-based profiles.
\begin{lem}\label{lemma:invariant_computations}
	Let $f=(f_1,\dots,f_n)$ be a run-based profile and $g=(g_1,\dots,g_n)$ a trace-based profile. Then,
\begin{align*}
	\computation_M(f_1,\dots,f_n)&=\computation_M(\check f_1,\dots,\check f_n)
	&\text{and}&&
	\computation_M(f_1,\dots,f_n)&=\computation_M(\dcheck f_1,\dots,\dcheck f_n)\text.
\end{align*}
\end{lem}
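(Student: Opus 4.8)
The plan is to prove each of the two equalities by a straightforward induction on the length of the computation prefix, comparing the run-based (resp.\ trace-based) recursion defining $\computation_M(f_1,\dots,f_n)$ with the computation-based recursion defining $\computation_M(\check f_1,\dots,\check f_n)$ (resp.\ $\computation_M(\dcheck g_1,\dots,\dcheck g_n)$). The only facts needed are the defining identities $\check f_i(\fincomputation)=f_i(\finrun_M(\fincomputation))$ and $\dcheck g_i(\fincomputation)=g_i(\fintrace_M(\fincomputation))$ of the two transformations, together with the conventions $\finrun_M(\epsilon)=s^0_M$ and $\fintrace_M(\epsilon)=\valf(s^0_M)$.

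For the run-based case I would write $\computation_M(f_1,\dots,f_n)=d_0,d_1,\dots$ and $\computation_M(\check f_1,\dots,\check f_n)=e_0,e_1,\dots$ and show by induction on~$k$ that $d_k=e_k$ for every $k\ge0$. In the base case the computation-based recursion gives $e_0=(\check f_1(\epsilon),\dots,\check f_n(\epsilon))$, and since $\check f_i(\epsilon)=f_i(\finrun_M(\epsilon))=f_i(s^0_M)$ this matches the first direction $d_0=(f_1(s^0_M),\dots,f_n(s^0_M))$ of the run-based recursion. For the induction step, assuming $d_0,\dots,d_k=e_0,\dots,e_k$, the computation-based recursion yields $e_{k+1}=(\check f_1(e_0,\dots,e_k),\dots,\check f_n(e_0,\dots,e_k))$, which by the induction hypothesis equals $(\check f_1(d_0,\dots,d_k),\dots,\check f_n(d_0,\dots,d_k))$; applying $\check f_i(d_0,\dots,d_k)=f_i(\finrun_M(d_0,\dots,d_k))$ to each component turns this into exactly the direction $d_{k+1}$ prescribed by the run-based recursion, so $e_{k+1}=d_{k+1}$.

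The trace-based case (which concerns the profile $g$ and its transform $\dcheck g$) is entirely analogous: I would set $\computation_M(g_1,\dots,g_n)=d_0,d_1,\dots$ and $\computation_M(\dcheck g_1,\dots,\dcheck g_n)=e_0,e_1,\dots$, use $\dcheck g_i(\epsilon)=g_i(\fintrace_M(\epsilon))=g_i(\valf(s^0_M))$ in the base case, and $\dcheck g_i(d_0,\dots,d_k)=g_i(\fintrace_M(d_0,\dots,d_k))$ in the step, matching the trace-based recursion direction by direction.

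I expect no genuine obstacle here; the argument is purely a matter of unwinding the two recursions in lockstep. The only points that require any care are (i)~correctly invoking the conventions $\finrun_M(\epsilon)=s^0_M$ and $\fintrace_M(\epsilon)=\valf(s^0_M)$ so that the $\epsilon$-input of the transformed strategy lines up with the initial-state (resp.\ initial-valuation) input of the original strategy in the base case, and (ii)~noting that feasibility of the produced actions needs no separate verification, since $\check f_i$ and $\dcheck g_i$ are computation-based strategies by construction (indeed they are run-invariant and trace-invariant, respectively, as observed just before Lemma~\ref{lemma:invariant_one_one}).
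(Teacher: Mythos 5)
Your proof is correct and matches the paper's own argument essentially step for step: the same lockstep induction on $k$, the same use of the defining identities of $\check f_i$ and $\dcheck g_i$ together with the conventions $\finrun_M(\epsilon)=s^0_M$ and $\fintrace_M(\epsilon)=\valf(s^0_M)$ in the base case, with the trace-based case handled analogously (and you rightly read the lemma's second equality as concerning the profile $g$). No gaps.
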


\begin{proof}
	Let $\computation_M(f_1,\dots,f_n)=d_0,d_1,d_2,\dots$
	and $\computation_M(\check f_1,\dots,\check f_n)=d'_0,d'_1,d'_2,\dots$.
	We prove by induction on~$k$ that $d_0,\dots,d_k=d'_0,\dots,d'_k$ for every $k\ge 0$.

	If $k=0$, recall that $\finrun_M(\epsilon)=s_M^0$. Hence,
	\[
		d_0
		=
		(f_1(s_M^0),\dots,f_n(s_M^0))
		=
		(f_1(\finrun_M(\epsilon)),\dots,f_n(\finrun_M(\epsilon)))
		=
		(\check f_1(\epsilon),\dots,\check f_n(\epsilon))
		=
		d'_0\text.
	\]

    \noindent
	For the induction step, we may assume that $d_0,\dots,d_k=d'_0,\dots,d'_k$.
	Now the following equalities hold.
	\begin{align*}
		d_{k+1}
		&	=_{\phantom{i.h.}}
		(f_1(\finrun_M(d_0,\dots,d_k)),\dots,f_n(\finrun_M(d_0,\dots,d_k)))
		\\
		&=_{\phantom{i.h.}}
		(\check f_1(d_0,\dots,d_k),\dots,\check f_n(d_0,\dots,d_k))
		\\
		&=_{i.h.}
		(\check f_1(d'_0,\dots,d'_k),\dots,\check f_n(d'_0,\dots,d'_k))
		\\
		&=_{\phantom{i.h.}}
		d'_{k+1}\text.
	\end{align*}
	We may conclude that $d_0,\dots,d_{k+1}=d'_0,\dots,d'_{k+1}$.
		The argument for trace-based and trace-invariant strategies runs along analogous lines, \emph{mutatis mutandis}.
\end{proof}
We say that a computation-based profile $f=(f_1,\dots,f_n)$ is a \emph{run-invariant equilibrium} in a CGS-game if~$f$ is run-invariant and no player~$i$ wishes to deviate from~$f$ and play another \emph{run-invariant} strategy~$f'_i$.
Similarly, a computation-based profile $f=(f_1,\dots,f_n)$ is a \emph{trace-invariant equilibrium} in a CGS-game if~$f$ is trace-invariant and no player~$i$ wishes to deviate from~$f$ and play another \emph{trace-invariant} strategy~$f'_i$.
As an immediate consequence of Lemmas~\ref{lemma:invariant_one_one} and~\ref{lemma:invariant_computations} we have the following corollary.

\begin{lem}\label{lemma:invariant_eq}
	Let $f=(f_1,\dots,f_n)$ and $g=(g_1,\dots,g_n)$ be a run-based profile, respectively, a trace-based profile in a CGS-game~$G=(M,\goalset_1,\dots,\goalset_n)$ based on~$M$. Then,
	\begin{enumerate}
		\item $f$ is a \wordbox{run-based equilibrium}{trace-based equilibrium} if and only if~~$\check f$ is a \wordbox{run-invariant equilibrium}{trace-invariant equilibrium},
		\item $g$ is a trace-based equilibrium if and only if~~$\dcheck g$ is a trace-invariant equilibrium.
	\end{enumerate}
\end{lem}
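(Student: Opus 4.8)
The plan is to derive this lemma as an essentially formal consequence of the two preceding results, Lemma~\ref{lemma:invariant_one_one} and Lemma~\ref{lemma:invariant_computations}, since the statement is explicitly flagged as an immediate corollary. I would prove only the first biconditional in detail (the run-based/run-invariant case) and then observe that the trace-based/trace-invariant case follows \emph{mutatis mutandis}, exactly as the paper does for its other paired statements. The key bridge is that the map $f_i\mapsto\check f_i$ is a bijection between run-based and run-invariant strategies (Lemma~\ref{lemma:invariant_one_one}) and that this map preserves induced computations, both for a single replaced strategy and for the whole profile (Lemma~\ref{lemma:invariant_computations}).

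First I would unwind the definition of a run-based equilibrium: $f$ is a run-based equilibrium iff for every player~$i$ and every run-based strategy~$g_i$, whenever $\computation_M(f_{-i},g_i)\in\goalset_i$ then $\computation_M(f)\in\goalset_i$. The goal is to show this holds iff the analogous condition holds for $\check f$ ranging over run-invariant deviations $\check g_i$. The crucial point is that by Lemma~\ref{lemma:invariant_computations} we have $\computation_M(f_{-i},g_i)=\computation_M((\check f)_{-i},\check g_i)$, because the induced computation depends only on the profile via its run-based/run-invariant correspondence and the transformation is applied coordinatewise; in particular $\computation_M(f)=\computation_M(\check f)$. Hence the \emph{condition} appearing in the equilibrium definition is literally the same real-valued (here, membership-valued) test whether expressed via $(f_{-i},g_i)$ or via $((\check f)_{-i},\check g_i)$.

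The only genuine content beyond this substitution is that the deviations range over matching sets on the two sides: I must check that as $g_i$ ranges over all run-based strategies of player~$i$, the transformed $\check g_i$ ranges over exactly all run-invariant strategies of player~$i$, and conversely. This is precisely the surjectivity-and-injectivity content of Lemma~\ref{lemma:invariant_one_one}, which establishes that $g_i\mapsto\check g_i$ is a bijection from run-based strategies onto run-invariant computation-based strategies. So no possible deviation is gained or lost by passing between the two formulations, and the two universally quantified conditions are equivalent quantifier-by-quantifier. Combining the equality of induced computations with this bijection of deviation sets yields the first biconditional directly.

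The step I expect to require the most care is the bookkeeping around the replaced coordinate: one must confirm that $(\check f)_{-i}$ together with a run-invariant deviation $\check g_i$ is itself the coordinatewise $\check{}$-image of the profile $(f_{-i},g_i)$, so that Lemma~\ref{lemma:invariant_computations} applies to it as a genuine run-based profile. Since $\check{}$ is defined pointwise on each player's strategy, $\widecheck{(f_{-i},g_i)}=(\check f_{-i},\check g_i)=((\check f)_{-i},\check g_i)$, which closes the gap; there is no real obstacle here beyond noting that run-invariance is a per-player property preserved under forming profiles. I would present the run-based case as a short chain of iff's and then write a single sentence deferring the trace-based case to the identical argument using the $\dcheck{}$ transformation and Lemma~\ref{lemma:invariant_computations}'s second equality.
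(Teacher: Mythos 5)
Your proposal is correct and follows essentially the same route as the paper's own proof: both arguments combine the bijection of Lemma~\ref{lemma:invariant_one_one} (so that run-based deviations and run-invariant deviations match up exactly) with the computation-preservation of Lemma~\ref{lemma:invariant_computations} applied coordinatewise to the deviated profile $(f_{-i},g_i)$, and both defer the trace-based case to an analogous argument. The paper merely phrases this as a short chain of ``iff''s on the negated equilibrium condition, while you spell out the coordinatewise bookkeeping explicitly; the content is identical.
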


\begin{proof}
	For the run-based case, the following equivalences hold by virtue of Lemmas~\ref{lemma:invariant_one_one} and~\ref{lemma:invariant_computations}:
	\begin{align*}
	&	\text{$f$ is not a run-based equilibrium in $G$}\\
	&	\text{iff}\quad \text{$\computation_M(f)\notin\goalset_i$ and $\computation_{M}(f_{-i},f'_i)\in\goalset_i$ for some run-based strategy~$f'_i$ for some player~$i$}\\
	& 	\text{iff}\quad \text{$\computation_M(\check f)\notin\goalset_i$ and $\computation_{M}(\check f_{-i},\check f'_i)\in\goalset_i$ for some run-based strategy~$f'_i$ for some player~$i$}\\
	& 	\text{iff}\quad \text{$\computation_M(\check f)\notin\goalset_i$ and $\computation_{M}(\check f_{-i},f''_i)\in\goalset_i$ for some run-invariant strategy~$f''_i$ for some player~$i$}\\
	&	\text{iff}\quad\text{$\check f$ is not a run-invariant equilibrium in $G$}
	\end{align*}
	The proof of the second part is by an analogous argument, \emph{mutatis mutandis}.
\end{proof}

Computation-based strategies grant a player more strategic flexibility than do run-invariant strategies. A similar remark applies to run-invariant strategies and trace-invariant strategies. Still, we find that, if a player~$i$ wishes to deviate from a computation-based profile~$f$ and play another computation-based strategy,~$i$ would also like to deviate by playing a run-invariant or even a trace-invariant strategy. This insight underlies the following result.\footnote{The situation can be compared to the relation between equilibria in \emph{pure} and \emph{mixed} (or \emph{randomised}) strategies in game theory. There every equilibrium in pure strategies is also an equilibrium in mixed strategies, because, if a player wishes to deviate from a mixed profile, she wishes to deviate by playing a pure, that is, not randomised, strategy.}

\begin{thm}\label{theorem:comp_run_trace_invariant_eq}
	Let $f=(f_1,\dots,f_n)$ be a run-invariant profile and $g=(g_1,\dots,g_n)$ a trace-invariant profile in CGS-game~$G=(M,\goalset_1,\dots,\goalset_n)$ based on~$M$. Then,
	\begin{enumerate}
	\item\label{item:comp_run_trace_invariant_eq_i}  \wordbox{$f$ is a run-invariant equilibrium}{$g$ is a trace-invariant equilibrium} if and only if \wordbox{$f$ is a computation-based equilibrium}{$g$ is a computation-based equilibrium},
	\item\label{item:comp_run_trace_invariant_eq_ii} $g$ is a trace-invariant equilibrium if and only if~$g$ is a computation-based equilibrium.
	\end{enumerate}
\end{thm}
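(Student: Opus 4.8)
The plan is to prove both directions of each biconditional, with the ``only if'' directions being the genuine content and the ``if'' directions essentially trivial. I would begin by observing that a run-invariant equilibrium is defined by quantifying deviations over \emph{run-invariant} strategies only, whereas a computation-based equilibrium quantifies over \emph{all} computation-based strategies. Since every run-invariant strategy is in particular a computation-based strategy, the class of admissible deviations is strictly larger for the computation-based notion. Hence if $f$ is a computation-based equilibrium, no computation-based deviation is profitable, so in particular no run-invariant deviation is profitable, and since $f$ is assumed run-invariant this immediately yields that $f$ is a run-invariant equilibrium. This handles the ``if'' direction of part~\ref{item:comp_run_trace_invariant_eq_i} with no real work, and the analogous argument handles the ``if'' direction of part~\ref{item:comp_run_trace_invariant_eq_ii}.

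The substantive direction is the converse: assuming $f$ is a run-invariant equilibrium, I must show it is a computation-based equilibrium, i.e.\ that \emph{no} computation-based deviation $g_i$ is profitable for any player~$i$. The key step, flagged in the paragraph preceding the theorem, is that \emph{any} profitable computation-based deviation can be replaced by a profitable run-invariant deviation. Concretely, I would argue by contraposition: suppose $f$ is not a computation-based equilibrium, so there is a player~$i$ and a computation-based strategy~$g_i$ with $\computation_M(f)\notin\goalset_i$ but $\computation_M(f_{-i},g_i)\in\goalset_i$. The idea is that once all the \emph{other} players fix their run-invariant strategies $f_{-i}$, the run that will actually be generated is pinned down, so the only values of $g_i$ that matter are those along the resulting history. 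I would therefore construct a run-invariant strategy $g'_i$ that agrees with $g_i$ on the computation $\computation_M(f_{-i},g_i)$, and show $\computation_M(f_{-i},g'_i)=\computation_M(f_{-i},g_i)$, whence $g'_i$ is an equally profitable but now run-invariant deviation, contradicting that $f$ is a run-invariant equilibrium.

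The cleanest way to build $g'_i$ is to compose $g_i$ with a run-to-computation selection: since $f_{-i}$ is run-invariant and $g_i$ is arbitrary, define $g'_i(\fincomputation)=g_i(\fincomputation^*)$ where $\fincomputation^*$ is a canonical representative computation inducing the run $\finrun_M(\fincomputation)$. By construction $g'_i$ is run-invariant. The main obstacle, and the step deserving the most care, is verifying that switching from $g_i$ to $g'_i$ does not change the induced computation against $f_{-i}$: this is an induction on the length~$k$ of the prefix $d_0,\dots,d_k$ of $\computation_M(f_{-i},g'_i)$, showing it coincides with that of $\computation_M(f_{-i},g_i)$. At each step the components played by players other than~$i$ agree because $f_{-i}$ is run-invariant and the two partial runs coincide by the inductive hypothesis, while the component played by~$i$ agrees because $g'_i$ was defined to equal $g_i$ exactly on the canonical representative of the common run generated so far. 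This self-referential pinning-down of the generated run is where the argument has to be set up carefully to avoid circularity; everything else follows from Lemma~\ref{lemma:invariant_computations} and the definitions.

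For part~\ref{item:comp_run_trace_invariant_eq_ii}, the argument is entirely analogous, replacing ``run'' by ``trace'' and run-invariance by trace-invariance throughout, and using that trace-invariance implies run-invariance where convenient; I would simply remark that it proceeds \emph{mutatis mutandis}. Note also that parts~\ref{item:comp_run_trace_invariant_eq_i} and~\ref{item:comp_run_trace_invariant_eq_ii} together chain to give that, restricted to trace-invariant profiles, all three notions of equilibrium coincide, which is the conclusion the surrounding discussion is driving toward.
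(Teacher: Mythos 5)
Your proposal is correct and takes essentially the same route as the paper's own proof: the easy direction follows because run-invariant (resp.\ trace-invariant) strategies form a subclass of computation-based ones, and the hard direction replaces a profitable computation-based deviation~$g_i$ by a run-invariant mimic that agrees with~$g_i$ along the prefixes of $\computation_M(f_{-i},g_i)$, followed by an induction showing the two induced computations coincide. The only (immaterial) difference is how the mimic is defined off the deviation path---the paper reverts to the original run-invariant strategy~$f_i$ there, whereas you compose~$g_i$ with a canonical run-to-computation selection; both choices secure run-invariance, and the off-path values never enter the induction.
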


\begin{proof}
	For part~\ref{item:comp_run_trace_invariant_eq_i}, first assume that~$f$ is a run-invariant equilibrium in $G$.
	For a contradiction assume moreover that~$f$ is not a computation-based equilibrium.
	Then, there is a player~$i$ and a computation-based strategy~$f'_i$ such that $\computation_M(f)\notin\goalset_i$ whereas
	$\computation_M(f_{-i},f'_i)\in\goalset_i$.
	Let $\computation_M(f_{-i},f'_i)=d'_0,d'_1,d'_2,\dots$.
	 Observe that~$f'_i$ need not be run-invariant.
	We therefore define strategy~$f''_i$ for player~$i$ such that $f''_i(\epsilon)=f'_i(\epsilon)$ and,
	for all finite computations $d_0,\dots,d_k$,
	\[
		f''_i(d_0,\dots,d_k)
		=
		\begin{cases}
			f'_i(d'_0,\dots,d'_k)	&	\text{if $\finrun_M(d_0,\dots,d_k)=\finrun_M(d'_0,\dots,d'_k)$,}\\
			f_i(d_0,\dots,d_k)	&	\text{otherwise.}
		\end{cases}
	\]
	As~$f_i$ is run-invariant, this definition guarantees that~$f''_i$ is run-invariant as well.
	Let $\computation_M(f_{-i},f''_i)=d_0'',d_2'',d_2'',\dots$. We prove by induction
	on~$k$ that $d'_0,\dots,d'_k=d''_0,\dots,d''_k$, for every $k\ge 0$, and hence that $\computation_M(f_{-i},f'_i)=\computation_M(f_{-i},f''_i)$.
	If $k=0$, we immediately obtain that
	\begin{align*}
	d'_0
	&	=	(f_1(\epsilon),\dots,f'_i(\epsilon),\dots,f_n(\epsilon))
		=	(f_1(\epsilon),\dots,f''_i(\epsilon),\dots,f_n(\epsilon))
	=	\direction''_0\text.
	\end{align*}
	For the induction step, we may assume that $d'_0,\dots,d'_k=d''_0,\dots,d''_k$.
and the following equalities hold:
	\begin{align*}
		d'_{k+1}
		&		=_{\phantom{i.h.}}	(f_1(d_1',\dots,d_k'),\dots,f'_i(d_1',\dots,d_k'),\dots,f_n(d_1',\dots,d_k'))
		\\&		=_{\phantom{i.h.}}	(f_1(d_1',\dots,d_k'),\dots,f''_i(d_1'',\dots,d_k''),\dots,f_n(d_1',\dots,d_k'))
		\\&		=_{{i.h.}}			(f_1(d_1'',\dots,d_k''),\dots,f''_i(d_1'',\dots,d_k''),\dots,f_n(d_1'',\dots,d_k''))
		\\&		=_{\phantom{i.h.}} 	d''_{k+1}\text.
	\end{align*}
	Observe that the second equality holds by virtue of the definition of~$f''_i$ and $\finrun_M(d'_0,\dots,d'_k) = \finrun_M(d''_0,\dots,d''_k)$.
	It would follow that $\computation_M(f_{-i},f''_i)\in\goalset_i$ as well, and, as $f''_i$ is run-invariant, moreover that~$f$ is not a run-invariant equilibrium, a contradiction.

	For the opposite direction, assume for contraposition that~$f$ is not a run-invariant equilibrium. Then, there is some player~$i$
	who would like to deviate  from~$f$ and play some run-invariant strategy~$f'_i$. As run-invariant strategies are strategy-based by definition,
	it follows that~$f$ is not a computation-based equilibrium either.

	Part~\ref{item:comp_run_trace_invariant_eq_ii} follows by an analogous argument, \emph{mutatis mutandis}.
\end{proof}
Theorem~\ref{theorem:comp_run_trace_invariant_eq} does not preclude the existence of computation-based equilibria that fail to be run-invariant or trace-invariant, that is, the three equilibrium concepts---computation-based, run-invariant, and trace-invariant equilibrium---are not equivalent. However, they can be ordered with respect to how restrictive they are, that is, with respect to the sets of profiles they exclude as solutions.

\begin{cor}
	Let $f=(f_1,\dots,f_n)$ be a computation-based profile in some CGS-game~$G=(M,\goalset_1,\dots,\goalset_n)$ based on~$M$. Then,
	\begin{enumerate}
	\item \wordbox[l]{$f$ is a run-invariant equilibrium}{$f$ is a trace-invariant equilibrium} implies~$f$ is a computation-based equilibrium,
	\item $f$ is a trace-invariant equilibrium implies~$f$ is a run-invariant equilibrium.
	\end{enumerate}
\end{cor}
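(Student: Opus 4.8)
The plan is to derive both implications directly from Theorem~\ref{theorem:comp_run_trace_invariant_eq}, combined with the earlier observation that trace-invariance is strictly stronger than run-invariance. The corollary is essentially a bookkeeping exercise over the two biconditionals supplied by the theorem; the only care needed is to track the profile types and to select the correct direction of each equivalence.

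First I would handle the first part. If $f$ is a run-invariant equilibrium, then by definition $f$ is in particular a run-invariant profile, so the hypothesis of Theorem~\ref{theorem:comp_run_trace_invariant_eq}\ref{item:comp_run_trace_invariant_eq_i} is met. Applying the left-to-right direction of that equivalence then immediately yields that $f$ is a computation-based equilibrium, which is exactly the assertion.

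For the second part I would route through the computation-based notion as an intermediate. Suppose $f$ is a trace-invariant equilibrium; then $f$ is trace-invariant, and since every trace-invariant strategy is run-invariant, $f$ is also a run-invariant profile. Invoking the left-to-right direction of Theorem~\ref{theorem:comp_run_trace_invariant_eq}\ref{item:comp_run_trace_invariant_eq_ii} gives that $f$ is a computation-based equilibrium. Now, because $f$ is run-invariant, I can apply the right-to-left direction of Theorem~\ref{theorem:comp_run_trace_invariant_eq}\ref{item:comp_run_trace_invariant_eq_i} to conclude that $f$ is a run-invariant equilibrium, as desired.

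I do not anticipate any genuine obstacle: both parts follow by chaining the established equivalences. The one subtle point to watch is that the second part cannot be read off from a single biconditional—one must pass through the computation-based equilibrium notion—and this detour relies crucially on the implication \emph{trace-invariant} $\Rightarrow$ \emph{run-invariant}, which is precisely what keeps Theorem~\ref{theorem:comp_run_trace_invariant_eq}\ref{item:comp_run_trace_invariant_eq_i} applicable to the given profile $f$.
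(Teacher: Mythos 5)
Your proposal is correct and takes essentially the same route as the paper's own proof: both parts follow from Theorem~\ref{theorem:comp_run_trace_invariant_eq} once one observes that a run-invariant (respectively trace-invariant, hence also run-invariant) equilibrium is in particular a run-invariant profile, with part~2 obtained by chaining the two biconditionals through the computation-based notion. The paper's proof merely leaves implicit the detour you spell out explicitly.
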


\begin{proof}
	Merely observe that if~$f$ is a run-invariant equilibrium, it is also a run-invariant profile. If~$f$ is moreover trace-invariant it is also run-invariant. The result then immediately follows from Theorem~\ref{theorem:comp_run_trace_invariant_eq}.
\end{proof}
On basis of the findings in this section, we may with justification claim that every trace-based equilibrium corresponds to a run-based equilibrium, and that every run-based equilibrium corresponds with some computation-based equilibrium, even if the converses of these statements do not generally hold.

\section{Invariance of Nash Equilibria under Bisimilarity}%
\label{secn:fail-bis}

From a computational point of view, one may expect games based on bisimilar concurrent game structures and with identical players' preferences to exhibit similar properties, in particular with respect to their Nash equilibria.
We find that that this is indeed the case for games with
computation-based strategies as well as for games with trace-based
strategies.
Recall that (finite) computations and (finite) traces
are unaffected by (state-splitting and state-merging) operations on CGSs that preserve bisimilarity (Lemma~\ref{lemma:trace_identity}).
As a consequence the sets of computation-based strategies and trace-based strategies available to an again are the same in bisimilar CGSs (Lemma~\ref{lemma:strategy_identity}), providing the intuitive basis for these observations.

For games with run-based strategies the situation is
considerably more complicated. Here, a key observation is that, by contrast
to computation-based and trace-based strategies, there need not be a
natural \emph{one-to-one} mapping between the sets of run-based
strategies in bisimilar concurrent game models. By restricting attention to
so-called bisimulation-invariant run-based strategies, however, we
find that order can be restored.
%
\subsection*{Invariance under Bisimilarity and Preference Types}%
\label{section:congruence}
We are primarily interested in the
Nash equilibria of games that are the same up to bisimilarity of the
underlying concurrent game structures. The Nash equilibria of a game, however, essentially depend on the players' preferences. Accordingly, the Nash equilibria of two bisimilar CGS-games can only be meaningfully compared if we also
we assume that the players' preferences in these two games are identical. We
formalised players' preferences as sets of computations, and, due to
\bphtext{Lemma~\ref{lemma:computation_identity}}, this enables a
straightforward comparison of players' goal sets across bisimilar
concurrent game structures.

In Section~\ref{section:preferences}, we also distinguished run-based
and trace-based preferences, that is, goal sets closed under
computations that induce the same runs and traces, respectively. We
are also interested in the invariance of the existence of Nash equilibria
 in games on bisimilar concurrent game structures
where the players' preferences games are what we will call \emph{congruent}, that is, both \emph{the same} and \emph{of the same type in both games}.

For computation-based and
trace-based preferences the issue of congruence is moot.
\bphtext{Observe that for
  bisimilar concurrent game structures~$M$ and~$M'$, if a goal
  set~$\goalset_i$ is computation-based in~$M$, then it is also
  computation-based in~$M'$. Due to
  Lemma~\ref{lemma:computations_induce_bisimilar_runs}, the same holds
  for trace-based preferences.}

\begin{figure}[t]
\centering
 	\scalebox{.8}{
 	  \begin{tikzpicture}[scale=1]

 	  \tikzstyle{every ellipse node}=[draw,inner xsep=3.5em,inner ysep=1.2em,fill=black!15!white,draw=black!15!white]
 	  \tikzstyle{every circle node}=[fill=white,draw,minimum size=1.6em,inner sep=0pt]

	  \draw(0,-3)	node(){};
	  \draw(7.8,3.25)	node(){};

 	  \draw(0,0) node(0){}
 	  			++ 	( 0:.9)	  node[label=-90:{\footnotesize$p$},circle](v0){$s_0$}
					+ (.8*6,.8*3)  node[label=-90:{\footnotesize$p$},circle](v1){$s_1$}
 					+ (.8*6,.8*0) node[label=-90:{\footnotesize$p$},circle](v3){$s_2$}
 					+ (.8*6,.8*-3) node[label=-90:{\footnotesize$\bar p$},circle](v4){$s_3$}
 				;


 	\draw[-latex] (v0) --node[pos=.55,fill=white](){\footnotesize$\begin{array}{l}a,a\end{array}$} (v1);
 	\draw[-latex] (v0) --node[pos=.55,fill=white](){\footnotesize$\begin{array}{l}a,b\\b,a\end{array}$} (v3);
 	\draw[-latex] (v0) --node[pos=.55,fill=white](){\footnotesize$\begin{array}{l}b,b\end{array}$}  (v4);

 	\draw[-latex] (0) -- (v0);
 	%
 	\draw[-latex] (v1.70-90) .. controls +(60-90:4em) and +(120-90:4em) .. node[pos=.5,fill=white,right,xshift=.5ex](){\footnotesize$\ast,\ast$} (v1.110-90);


 	\draw[-latex] (v3.70-90) .. controls +(60-90:4em) and +(120-90:4em) .. node[pos=.5,fill=white,right,xshift=.5ex](){\footnotesize$\ast,\ast$} (v3.110-90);

 	\draw[-latex] (v4.70-90) .. controls +(60-90:4em) and +(120-90:4em) .. node[pos=.5,fill=white,right,xshift=.5ex](){\footnotesize$\ast,\ast$} (v4.110-90);

 	\end{tikzpicture}
 	}
\qquad\qquad
\scalebox{.8}{
 	  \begin{tikzpicture}[scale=1]

 	  \tikzstyle{every ellipse node}=[draw,inner xsep=3.5em,inner ysep=1.2em,fill=black!15!white,draw=black!15!white]
 	  \tikzstyle{every circle node}=[fill=white,draw,minimum size=1.6em,inner sep=0pt]

	  \draw(0,-3)	node(){};
	  \draw(7.8,3.25)	node(){};

 	  \draw(0,0) node(0){}
 	  			++ 	( 0:.9)	  node[label=-90:{\footnotesize$p$},circle](v0){$t_0$}
					+ (.8*6,.8*3)  node[label=-90:{\footnotesize$p$},circle](v1){$t_1$}
 					+ (.8*6,.8*0) node[label=-90:{\footnotesize$p$},circle](v3){$t_2$}
 					+ (.8*6,.8*-3) node[label=-90:{\footnotesize$\bar p$},circle](v4){$t_3$}
 				;


 	\draw[-latex] (v0) --node[pos=.55,fill=white](){\footnotesize$\begin{array}{l}a,a\\a,b\end{array}$} (v1);
 	\draw[-latex] (v0) --node[pos=.55,fill=white](){\footnotesize$\begin{array}{l}b,a\end{array}$} (v3);
 	\draw[-latex] (v0) --node[pos=.55,fill=white](){\footnotesize$\begin{array}{l}b,b\end{array}$}  (v4);

 	\draw[-latex] (0) -- (v0);
 	%
 	\draw[-latex] (v1.70-90) .. controls +(60-90:4em) and +(120-90:4em) .. node[pos=.5,fill=white,right,xshift=.5ex](){\footnotesize$\ast,\ast$} (v1.110-90);


 	\draw[-latex] (v3.70-90) .. controls +(60-90:4em) and +(120-90:4em) .. node[pos=.5,fill=white,right,xshift=.5ex](){\footnotesize$\ast,\ast$} (v3.110-90);

 	\draw[-latex] (v4.70-90) .. controls +(60-90:4em) and +(120-90:4em) .. node[pos=.5,fill=white,right,xshift=.5ex](){\footnotesize$\ast,\ast$} (v4.110-90);

 	\end{tikzpicture}
 	}
 	\caption{\bphtext{Two games $G_2$ (left) and $G_3$ (right) based on~$M_2$ and~$M_3$, respectively, showing that run-based preferences may not be preserved across bisimilar systems.}}%
 	\label{fig:run_based_preferences_closed}%
	\label{fig:another_counterexample}
\end{figure}
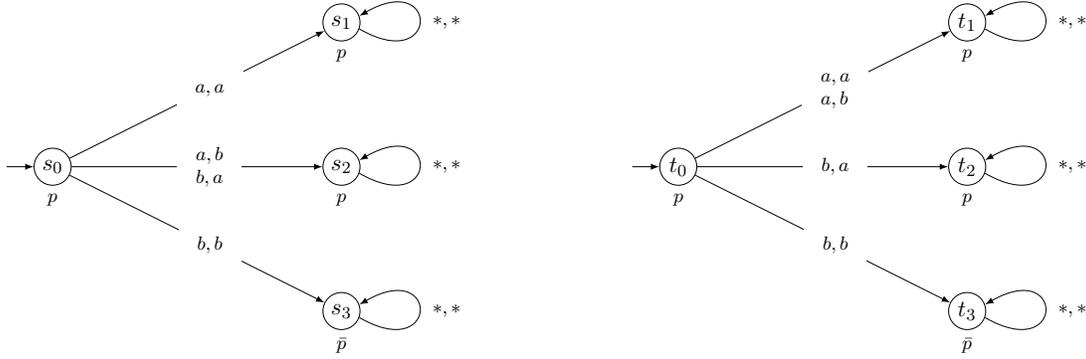

This preservation of preference type under bisimilarity, however, does
not extend to run-based preferences. To see this, consider
Figure~\ref{fig:another_counterexample} and let the goal
set~$\goalset_i$ of some player~$i$ be given by all
computations~$\computation=d_0,d_1,d_2,\dots$ with $d_0=(a,a)$.
Then,
obviously, $\goalset_i$ is run-based in the game~$G_2$ based on~$M_2$ on the left, but not in the game~$G_3$ based on~$M_3$ to the right.
%
To see the latter, consider any computation $\computation'=d'_0,d_1',d_2',\dots$ with $d'_0=(a,b)$.
Then,~$\computation'\notin\goalset_i$, but, nevertheless, in~$G_3$ we have $\run_{M_3}(\computation)=\run_{M_3}(\computation')$.
By contrast, the goal set given by all
computations~$\computation=d_0,d_1,d_2,\dots$ such that $d_0\neq(b,b)$ is
run-based in both games.

\subsection*{Computation-based Strategies}

If strategies are computation-based, players can have their actions
depend on virtually all information that is available in the
system. In an important sense full transparency prevails and different
actions can be chosen on bisimilar states provided that the
computations that led to them are different. Moreover, the strategies
available to players in bisimilar concurrent game structures are
identical. Thus we obtain our first main result.

\begin{thm}%
	\label{theorem:NE_computations}\label{app:theorem:NE_computations}
	Let $G=(M,\goalset_1,\dots,\goalset_n)$ and $G'=(M',\goalset_1,\dots,\goalset_n)$ be games on bisimilar concurrent game structures~$M$ and~$M'$, respectively, and let  $f=(f_1,\dots,f_n)$ be a computation-based profile.
	Then,~$f$ is a Nash equilibrium in computation-based strategies in~$G$ if and only if~$f$ is a Nash equilibrium in computation-based strategies in~$G'$.
\end{thm}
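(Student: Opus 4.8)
The plan is to show that the defining condition of a computation-based Nash equilibrium is literally the same predicate in $G$ and in $G'$, so that the two notions coincide term by term. All of the substantive work has in fact already been carried out in Lemma~\ref{lemma:strategy_identity}. Since $M\bisim M'$, that lemma (together with the symmetry of bisimilarity, which lets us swap the roles of $M$ and $M'$) tells us that every computation-based strategy for a player~$i$ in~$M$ is also one in~$M'$ and conversely, so the set of computation-based strategies available to each player is \emph{the same} in both structures. In particular, $f$ and every profile obtained from it by a unilateral deviation are legitimate computation-based profiles in both $G$ and $G'$, and the deviations $g_i$ over which the equilibrium condition quantifies range over exactly the same set regardless of which game we consider. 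The goal sets $\goalset_i$ are identical by hypothesis, and by Lemma~\ref{lemma:computation_identity} we have $\computations_M=\computations_{M'}$, so each $\goalset_i$ is meaningful as a set of computations in either game.

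First I would fix a player~$i$ and an arbitrary computation-based strategy~$g_i$, and form the profile $(f_{-i},g_i)$, which is again a computation-based profile in both structures. Applying the second part of Lemma~\ref{lemma:strategy_identity} to both $f$ and $(f_{-i},g_i)$ yields $\computation_M(f)=\computation_{M'}(f)$ and $\computation_M(f_{-i},g_i)=\computation_{M'}(f_{-i},g_i)$; that is, the induced computations are independent of whether they are computed in~$M$ or in~$M'$. Combined with $\goalset_i$ being common to $G$ and $G'$, this makes the implication
\[
	\computation_M(f_{-i},g_i)\in\goalset_i \;\text{ implies }\; \computation_M(f)\in\goalset_i
\]
identical to the corresponding implication with $M$ replaced by $M'$ throughout.

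Since this equivalence holds for every player~$i$ and every deviation~$g_i$, and the quantification over~$g_i$ ranges over the same set of strategies in both games, the universally quantified equilibrium conditions for $G$ and for $G'$ coincide, and the stated biconditional follows at once. I do not anticipate a genuine obstacle here: the only point that requires care is confirming that the collection of available deviations is the same in the two games, which is precisely what Lemma~\ref{lemma:strategy_identity} guarantees. Once that is in place, the argument reduces to the observation that both the induced computations and the goal sets are preserved, so nothing remains beyond a routine unfolding of the definition of computation-based equilibrium.
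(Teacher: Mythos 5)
Your proposal is correct and follows essentially the same route as the paper's proof: both rest entirely on Lemma~\ref{lemma:strategy_identity}, using its first part to identify the sets of computation-based strategies (hence of available deviations) across the bisimilar structures and its second part to get $\computation_M(f)=\computation_{M'}(f)$ and $\computation_M(f_{-i},g_i)=\computation_{M'}(f_{-i},g_i)$. The only difference is presentational—the paper argues by contraposition from a profitable deviation in one game to one in the other, while you observe directly that the quantified equilibrium condition is the same predicate in both games—which is the same argument unfolded in a different order.
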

\begin{proof}
	First assume
	for contraposition that $f=(f_1,\dots,f_n)$ is \emph{not} a Nash equilibrium in computation-based strategies in~$M'$.
	Then, there is a player~$i$ and a strategy~$g_i$ for~$i$ in~$M'$ such that $\computation_{M'}(f)\notin\goalset_i$ and $\computation_{M'}( f_{-i},g_i)\in\goalset_i$.
	Observe that, as the computation-based strategies of players in bisimilar structures coincide (cf., Lemma~\ref{lemma:strategy_identity}),~$f$ is also a strategy profile in~$M$ and~$g_i$ a strategy for~$i$ in~$M$.
	Moreover, recall that the computations induced by the same strategy profile in different but bisimilar structures are identical (cf., second part of Lemma~\ref{lemma:strategy_identity}). This yields $\computation_{M}(f)=\computation_{M'}(f)$
	and $\computation_{M}(f_{-i},g_i)=\computation_{M'}(f_{-i},g_i)$.
	Accordingly, $\computation_{M}(f)\notin\goalset_i$ whereas
	$\computation_{M}(f_{-i},g_i)\in\goalset_i$.
	 We may conclude that~$f$ is not a computation-based equilibrium in~$M$ either. The opposite direction follows by an analogous argument.
\end{proof}


Theorem~\ref{theorem:NE_computations} holds for computation-based preferences. As run-based preferences and trace-based preferences  are computation-based preferences of a special kind, the result immediately extends to games in which the players' preferences are run-based in both games or trace-based preferences in both games.
As a consequence of Theorem~\ref{theorem:NE_computations}, moreover, we find that sustenance of runs and traces by computation-based equilibrium is also preserved under bisimilarity.

\begin{cor}%
 \label{corollary:sustained_invariance_under_bisimulation}
	Let $G=(M,\goalset_1,\dots,\goalset_n)$ and $G'=(M',\goalset_1,\dots,\goalset_n)$ be games on bisimilar concurrent game structures~$M$ and~$M'$, respectively,~$\computation\in\computations_{M}$, and~$\trace\in\traces_M$.
	Then,
	\begin{enumerate}
	\item\label{item:app:sustained_invariance_under_bisimulation_i}\label{item:sustained_invariance_under_bisimulation_i} $\computation$  is sustained by computation-based equilibrium in~$G$ if and only if $\computation$  is sustained by a computation-based equilibrium in~$G'$.
	\item\label{item:app:sustained_invariance_under_bisimulation_ii}\label{item:sustained_invariance_under_bisimulation_ii} $\trace$ is sustained by a computation-based equilibrium in~$G$ if and only if~$\trace$ is sustained by a computation-based equilibrium in~$G'$.
	\end{enumerate}
\end{cor}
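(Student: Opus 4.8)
The plan is to deduce this corollary directly from Theorem~\ref{theorem:NE_computations}, treating the two parts essentially in parallel. The key conceptual point is that ``$\computation$ is sustained by a computation-based equilibrium'' unpacks, by definition, into the existence of a computation-based equilibrium profile~$f$ with $\computation_M(f)=\computation$. So for part~\ref{item:app:sustained_invariance_under_bisimulation_i}, I would start from an arbitrary $\computation\in\computations_M$ that is sustained in~$G$, fix a witnessing computation-based equilibrium $f=(f_1,\dots,f_n)$ with $\computation_M(f)=\computation$, and then transport~$f$ across the bisimulation.

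The transport is exactly what the earlier machinery provides. By Lemma~\ref{lemma:strategy_identity}, the profile~$f$ is also a (legal) computation-based profile in~$M'$, and moreover $\computation_{M}(f)=\computation_{M'}(f)$; hence the very same computation~$\computation$ is induced by~$f$ in~$G'$. By Theorem~\ref{theorem:NE_computations},~$f$ is a computation-based equilibrium in~$G'$ if and only if it is one in~$G$, so the witness survives the passage from~$G$ to~$G'$. Therefore $\computation=\computation_{M'}(f)$ is sustained by the computation-based equilibrium~$f$ in~$G'$. Since the roles of~$M$ and~$M'$ are symmetric (bisimilarity is symmetric, and both Lemma~\ref{lemma:strategy_identity} and Theorem~\ref{theorem:NE_computations} are stated as biconditionals), the converse direction is identical \emph{mutatis mutandis}, giving the ``if and only if'' of part~\ref{item:app:sustained_invariance_under_bisimulation_i}.

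For part~\ref{item:app:sustained_invariance_under_bisimulation_ii}, I would reduce to part~\ref{item:app:sustained_invariance_under_bisimulation_i}. A trace~$\trace$ is sustained in~$G$ iff there is a computation-based equilibrium~$f$ with $\trace_M(\computation_M(f))=\trace$, i.e.\ with $\trace_M(f)=\trace$. Using the same witnessing profile~$f$ and again invoking Lemma~\ref{lemma:strategy_identity} to get $\computation_M(f)=\computation_{M'}(f)$, the crucial extra input is Lemma~\ref{lemma:computations_induce_bisimilar_runs}(\ref{item:app:computations_induce_ii}), which gives $\trace_M(\computation)=\trace_{M'}(\computation)$ for this common computation $\computation=\computation_M(f)=\computation_{M'}(f)$. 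Combined with Theorem~\ref{theorem:NE_computations} (so that~$f$ remains an equilibrium in~$G'$), this shows $\trace_{M'}(f)=\trace_M(f)=\trace$, so the same trace is sustained in~$G'$; symmetry closes the biconditional.

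I do not expect a genuine obstacle here: the corollary is a bookkeeping consequence of results already in hand. The only point requiring minor care is keeping straight which object is being preserved---the \emph{profile}~$f$ (which literally transfers by Lemma~\ref{lemma:strategy_identity}), as opposed to the induced computation and trace (which are shown \emph{equal} across~$M$ and~$M'$ rather than merely transferred). Making explicit that it is one and the same profile~$f$ serving as witness in both games, and that its induced computation and trace are identical on the two structures, is the whole content; everything else is a direct appeal to the three cited lemmas and the theorem.
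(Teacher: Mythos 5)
Your proposal is correct and follows essentially the same route as the paper's own proof: transport the witnessing equilibrium profile~$f$ via Lemma~\ref{lemma:strategy_identity} (giving $\computation_M(f)=\computation_{M'}(f)$), invoke Theorem~\ref{theorem:NE_computations} to keep it an equilibrium in~$G'$, and for part~\ref{item:sustained_invariance_under_bisimulation_ii} additionally use Lemma~\ref{lemma:computations_induce_bisimilar_runs} to get equality of the induced traces, with symmetry closing each biconditional.
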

\begin{proof}
Recall that by Lemma~\ref{lemma:trace_identity}, we have that $\computation\in\computations_{M'}$ and $\trace\in\traces_{M'}$.
For part~\ref{item:sustained_invariance_under_bisimulation_i}, let~$\computation_M(f)=\computation$, where~$f$ is a computation-based equilibrium in~$G'$. Then, by Theorem~\ref{theorem:NE_computations}, profile~$f$ is a computation-based equilibrium in~$G'$ as well. By virtue of Lemma~\ref{lemma:strategy_identity}, moreover,
 $\computation_M(f)=\computation_{M'}(f)$, which gives the result. The implication in the other direction follows by the same argument \emph{mutatis mutandis}.

The argument for part~\ref{item:sustained_invariance_under_bisimulation_ii} runs along analogous lines. First assume that~$\trace$ is sustained by computation-based equilibrium~$f$, that is, $\trace=\trace_M(f)$.
By Theorem~\ref{theorem:NE_computations}, we have that~$f$ is a computation-based equilibrium in~$G'$ as well.
 Now consider~$\computation_M(f)$. By Lemma~\ref{lemma:strategy_identity} then  $\computation_{M'}(f)=\computation_M(f)$. Lemma~\ref{lemma:computations_induce_bisimilar_runs} then yields $\trace_M(\computation_M(f))=\trace_{M'}(\computation_{M'}(f))$. It thus follows that~$\trace$ is sustained by~$f$, a computation based Nash equilibrium, in~$G'$.
The argument in the opposite direction is analogous, giving the result.
\end{proof}

\subsection*{Trace-based Strategies}


As we saw in Lemma~\ref{lemma:trace_identity}, the sets of (finite) traces of two bisimilar concurrent game structures coincide. Lemma~\ref{lemma:trace_tilde0} shows that the same holds for the trace-based strategies that are available to the players.
As a consequence,
 we can directly compare their trace-based Nash-equilibria. We find that, like computation-based equilibria, trace-based Nash equilibria are preserved in  CGS-games based on bisimilar concurrent game structures.
%


\begin{thm}%
	\label{app:theorem:NE_computations_traces}%
	\label{theorem:NE_computations_traces}
	Let $G=(M,\goalset_1,\dots,\goalset_n)$ and
	$G'=(M',\goalset_1,\dots,\goalset_n)$ be games on bisimilar
	concurrent game structures~$M$ and~$M'$, respectively, and
	$f=(f_1,\dots,f_n)$ be a trace-based strategy profile.
	Then,~$f$ is a Nash equilibrium in trace-based strategies in~$G$ if
	and only if~$f$ is a Nash equilibrium in trace-based strategies
	in~$G'$.
\end{thm}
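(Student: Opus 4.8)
The plan is to mirror the proof of Theorem~\ref{theorem:NE_computations} almost verbatim, substituting the trace-based analogues of each ingredient. By Lemma~\ref{lemma:trace_tilde0}, every trace-based strategy for a player~$i$ in~$M$ is also a trace-based strategy for~$i$ in~$M'$ (and conversely, since~$M\sim M'$ is symmetric); and by the final part of that same lemma, for every trace-based profile~$g$ we have $\computation_M(g)=\computation_{M'}(g)$. These two facts are exactly the trace-based counterparts of the two facts about computation-based strategies exploited in Theorem~\ref{theorem:NE_computations}, so the argument transfers directly.

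First I would argue by contraposition. Suppose $f=(f_1,\dots,f_n)$ is \emph{not} a Nash equilibrium in trace-based strategies in~$G'$. Then there is a player~$i$ and a trace-based strategy~$g_i$ for~$i$ in~$M'$ with $\computation_{M'}(f)\notin\goalset_i$ and $\computation_{M'}(f_{-i},g_i)\in\goalset_i$. By Lemma~\ref{lemma:trace_tilde0}, each~$f_j$ and~$g_i$ are also trace-based strategies in~$M$, so both~$f$ and the deviation~$(f_{-i},g_i)$ are legitimate trace-based profiles in~$G$. Applying the computation-identity part of Lemma~\ref{lemma:trace_tilde0} to both profiles yields $\computation_M(f)=\computation_{M'}(f)$ and $\computation_M(f_{-i},g_i)=\computation_{M'}(f_{-i},g_i)$. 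Hence $\computation_M(f)\notin\goalset_i$ while $\computation_M(f_{-i},g_i)\in\goalset_i$, witnessing that~$f$ is not a trace-based equilibrium in~$G$. The reverse implication follows by the same argument with the roles of~$M$ and~$M'$ interchanged, which is legitimate because bisimilarity is symmetric.

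The one point that requires a little care—and which I expect to be the only genuine obstacle—is making sure that a deviating strategy defined over $\fintraces_{M'}$ really does transfer to a well-formed trace-based strategy over $\fintraces_M$. This is precisely what Lemma~\ref{lemma:trace_identity} guarantees, since $\fintraces_M=\fintraces_{M'}$ for bisimilar structures, so the domain of the strategy is literally the same set in both games; the feasibility condition $g_i(\fintrace)\in\Ac_i(s_k)$ is handled by Lemma~\ref{lemma:trace_tilde0}. Everything else is a routine repetition of the computation-based proof, so no new machinery is needed. I would therefore keep the write-up short, noting explicitly that the result inherits the same extension as Theorem~\ref{theorem:NE_computations}: it holds verbatim whenever the players' preferences are trace-based (hence also run-based or computation-based) in both games, and consequently the sustenance of traces by trace-based equilibria is preserved under bisimilarity exactly as in Corollary~\ref{corollary:sustained_invariance_under_bisimulation}.
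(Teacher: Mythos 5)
Your proposal is correct and follows essentially the same route as the paper's own proof: a contraposition argument that transfers the deviating strategy between~$M$ and~$M'$ via the trace-based parts of Lemma~\ref{lemma:strategy_identity} (the lemma you cite as Lemma~\ref{lemma:trace_tilde0}), uses the induced-computation identity $\computation_M(\cdot)=\computation_{M'}(\cdot)$ from the same lemma, and appeals to symmetry of bisimilarity for the converse direction. Your extra remark that $\fintraces_M=\fintraces_{M'}$ guarantees the strategies share a common domain is a sound elaboration of what the paper leaves implicit.
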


\begin{proof}
	The proof is analogous to the one for Theorem~\ref{theorem:NE_computations} for the computation-based case.
	First assume
	for contraposition that $f=(f_1,\dots,f_n)$ is \emph{not} a Nash equilibrium in trace-based strategies in~$M'$.
	Then, there is a player~$i$ and a trace-based strategy~$g_i$ for~$i$ in~$M'$ such that $\computation(f)\notin\goalset_i$ and $\computation( f_{-i},g_i)\in\goalset_i$.
	Observe that, as the trace-based strategies of players in bisimilar structures coincide (cf., Lemma~\ref{lemma:strategy_identity}),
we have that~$f$ is also a trace-based strategy profile in~$M$ and~$g_i$ a trace-based strategy for~$i$ in~$M$.
	By the second part of Lemma~\ref{lemma:strategy_identity}, moreover, $\computation_{M}(f)=\computation_{M'}(f)$
	and $\computation_{M}(f_{-i},g_i)=\computation_{M'}(f_{-i},g_i)$.
	Accordingly, $\computation_{M}(f)\notin\goalset_i$ whereas
	$\computation_{M}(f_{-i},g_i)\in\goalset_i$.
	 We may conclude that~$f$ is not a trace-based equilibrium in~$M$ either. The opposite direction follows by an analogous argument.
%
\end{proof}


Like Theorem~\ref{theorem:NE_computations}, this result is for computation-based preferences in general, and as such immediately extends
to the case in which the players' preferences are stipulated to be run-based in both games or trace-based in both games. Theorem~\ref{theorem:NE_computations_traces} has further the following result as an immediate consequence, which is analogous to Corollary~\ref{corollary:sustained_invariance_under_bisimulation}.

\begin{cor}%
 \label{corollary:sustained_invariance_under_bisimulation_trace-based}
	Let $G=(M,\goalset_1,\dots,\goalset_n)$ and $G'=(M',\goalset_1,\dots,\goalset_n)$ be games on bisimilar concurrent game structures~$M$ and~$M'$, respectively,~$\computation\in\computations_{M}$, and~$\trace\in\traces_M$.
	Then,
	\begin{enumerate}
	\item\label{item:app:sustained_invariance_under_bisimulation_i_trace-based}\label{item:sustained_invariance_under_bisimulation_i_trace-based} $\computation$  is sustained by trace-based equilibrium in~$G$ if and only if $\computation$  is sustained by a trace-based equilibrium in~$G'$.
	\item\label{item:app:sustained_invariance_under_bisimulation_ii_trace-based}\label{item:sustained_invariance_under_bisimulation_ii_trace-based} $\trace$ is sustained by a trace-based equilibrium in~$G$ if and only if~$\trace$ is sustained by a trace-based equilibrium in~$G'$.
	\end{enumerate}
\end{cor}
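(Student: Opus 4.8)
The plan is to transcribe the proof of Corollary~\ref{corollary:sustained_invariance_under_bisimulation} almost verbatim, substituting the trace-based preservation result Theorem~\ref{theorem:NE_computations_traces} for its computation-based analogue Theorem~\ref{theorem:NE_computations}. First I would record the ambient membership facts: since $M$ and $M'$ are bisimilar, Lemma~\ref{lemma:trace_identity} gives $\computations_M=\computations_{M'}$ and $\traces_M=\traces_{M'}$, so the hypotheses $\computation\in\computations_M$ and $\trace\in\traces_M$ also read as $\computation\in\computations_{M'}$ and $\trace\in\traces_{M'}$; the very statement of sustenance in~$G'$ therefore makes sense.

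For part~\ref{item:sustained_invariance_under_bisimulation_i_trace-based}, suppose $\computation$ is sustained in~$G$ by a trace-based equilibrium~$f$, that is, $\computation_M(f)=\computation$. By Lemma~\ref{lemma:strategy_identity} the trace-based profile~$f$ is also a trace-based profile in~$M'$ and induces the same computation there, so $\computation_{M'}(f)=\computation_M(f)=\computation$. By Theorem~\ref{theorem:NE_computations_traces}, $f$ is a trace-based equilibrium in~$G'$ as well, and hence $\computation$ is sustained by a trace-based equilibrium in~$G'$. The converse direction is obtained by exchanging the roles of~$M$ and~$M'$, which is legitimate because bisimilarity is an equivalence and hence symmetric, and all three cited results are stated symmetrically.

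For part~\ref{item:sustained_invariance_under_bisimulation_ii_trace-based}, suppose $\trace$ is sustained in~$G$ by a trace-based equilibrium~$f$, that is, $\trace_M(f)=\trace$. As before, Theorem~\ref{theorem:NE_computations_traces} gives that~$f$ is a trace-based equilibrium in~$G'$, and Lemma~\ref{lemma:strategy_identity} gives $\computation_{M'}(f)=\computation_M(f)$. The one extra step is to pass from equal computations to equal traces: Lemma~\ref{lemma:computations_induce_bisimilar_runs} yields $\trace_M(\computation_M(f))=\trace_{M'}(\computation_{M'}(f))$, so $\trace_{M'}(f)=\trace_M(f)=\trace$ and $\trace$ is sustained in~$G'$, the converse again following by symmetry. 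The only point requiring any care---rather than a genuine obstacle---is to keep the strategy \emph{type} fixed throughout: every profile invoked must be trace-based, so that Lemma~\ref{lemma:strategy_identity} (which asserts both the transfer of trace-based strategies across bisimilar structures and the identity of the induced computations) and Theorem~\ref{theorem:NE_computations_traces} apply without modification.
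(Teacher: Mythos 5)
Your proposal is correct and follows exactly the route the paper intends: the paper's own proof is the one-line remark that the argument is analogous to that of Corollary~\ref{corollary:sustained_invariance_under_bisimulation}, and your write-up is precisely that analogy carried out, with Theorem~\ref{theorem:NE_computations_traces} replacing Theorem~\ref{theorem:NE_computations} and the trace-based clauses of Lemmas~\ref{lemma:trace_identity}, \ref{lemma:strategy_identity}, and~\ref{lemma:computations_induce_bisimilar_runs} doing the same work as before. Your closing remark about keeping the strategy type fixed is exactly the right point of care, since Lemma~\ref{lemma:strategy_identity} covers trace-based profiles explicitly.
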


\begin{proof}
	The proof is analogous to the one for Corollary~\ref{corollary:sustained_invariance_under_bisimulation}.
\end{proof}

\subsection*{Run-based Strategies}%
\label{subsection:run_based_no_preservation}

The positive results obtained using computation-based and trace-based
strategies are now followed by a negative result, already
mentioned in the introduction of the paper, which establishes that
Nash equilibria in run-based strategies---the most widely-used
strategy model in logic, computer science, and AI---are not preserved by
bisimilarity. Previously we observed that the players'
run-based strategies cannot straightforwardly be identified across two
different concurrent game structures, even if they are
bisimilar. We would therefore have to establish a correspondence
between the run-based strategies in the one game and the run-based
strategies in the other in an arguably ad hoc way. To cut this Gordian
knot, we therefore show in this section the stronger result that the
very \emph{existence} of run-based equilibria is not preserved under
bisimilarity.  That is, we can have two bisimilar concurrent game
structures, say~$M$ and~$M'$, on which we base two games $G$
and $G'$ with congruent preferences, such that~$G$ has a Nash equilibrium and~$G'$
does not.

\begin{thm}%
\label{thm:variance}
	The existence of run-based Nash equilibria is not preserved under bisimilarity. That is, there are games $(M,\goalset_1,\dots,\goalset_n)$
	and $(M',\goalset_1,\dots,\goalset_n)$ on bisimilar concurrent game structures~$M$ and~$M'$ such that a  Nash equilibrium in run-based strategies exists in~$G$ but not in~$G'$.
\end{thm}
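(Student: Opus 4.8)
The statement is negative, so the plan is to supply a single counterexample, and the natural candidates are the two structures $M_0$ and $M_1$ already drawn in Figures~\ref{fig:cgswithne} and~\ref{fig:cgswithoutne}. First I would record that $M_0\bisim M_1$: the relation pairing both $s_1$ and $s'_1$ of $M_0$ with the single state $s_1$ of $M_1$, and acting as the identity on $s_0,s_2,s_3,s_4$, is readily checked to be a bisimulation, since the subtrees rooted at $s_1$ and $s'_1$ are isomorphic and every direction legal at one of them is legal at the other with a bisimilar successor. I would then fix the players' goals to be the trace-based (hence computation-based) goals induced by $\gamma_1=\sometime p$, $\gamma_2=\sometime q$, and $\gamma_3=\always\neg(p\vee q)$; being trace-based, these are congruent in both games, so the comparison is legitimate.

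For the positive half, that $G_0$ has a run-based equilibrium, I would exhibit an explicit profile whose equilibrium run reaches $s_4$ in one step from $s_0$, so that player~$3$'s goal $\gamma_3$ is met while players~$1$ and~$2$ fail. The design exploits that a run-based strategy in $M_0$ may act differently at the two distinct states $s_1$ and $s'_1$: I would have player~$3$ commit off-path to an action that ``engages'' player~$1$ at one of these states and player~$2$ at the other (a primed action at $s_1$, an unprimed one at $s'_1$). The heart of the verification is that, under the prescribed actions of the other two players at $s_0$, a unilateral deviation by player~$1$ is routed to $s_1$ while a unilateral deviation by player~$2$ is routed to $s'_1$; at whichever state each reaches, player~$3$'s committed action makes that deviator's own choice irrelevant, so neither can reach their target ($s_2$ or $s_3$). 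As player~$3$ is already satisfied, no one can profitably deviate.

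For the negative half, that $G_1$ has no run-based equilibrium, I would argue by exhaustion over the sink that the equilibrium run reaches: $s_4$ directly from $s_0$, or $s_1$ followed by $s_2$, $s_3$, or $s_4$. In the $s_2$ (resp.\ $s_3$) case player~$3$ must be engaging player~$1$ (resp.\ player~$2$) at $s_1$ and can switch to its other engaging action, which now matches the opponent's fixed choice and diverts the play to $s_4$, thereby securing $\gamma_3$. In the two ways of reaching $s_4$ through $s_1$, the player whom player~$3$ engages can instead mismatch and reach their own goal state. The only remaining case, reaching $s_4$ directly, fails because player~$3$'s single off-path action at the merged state $s_1$ cannot block both would-be deviators at once: deterring player~$1$ requires a primed action there, deterring player~$2$ an unprimed one.

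I expect the main obstacle to be this last direction. Unlike in $M_0$, where the two players' deviations can be routed to the two distinct states $s_1$ and $s'_1$ and countered separately, in $M_1$ player~$3$'s behaviour at the single merged state is forced to be one fixed action, and the crux is to show that no such action---and, more generally, no full profile---simultaneously deters both players while keeping the outcome in player~$3$'s goal set. Care is needed to treat off-path commitments correctly and to use that deviations are \emph{unilateral}, so that when one player deviates the other two (in particular player~$3$) retain their prescribed, possibly off-path, actions; this is precisely the asymmetry between the two structures that drives the whole result.
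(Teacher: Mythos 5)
Your proposal is correct and follows essentially the same route as the paper's own proof: the same pair of bisimilar structures $M_0$ and $M_1$ (Figures~\ref{fig:cgswithne} and~\ref{fig:cgswithoutne}) with goals $\sometime p$, $\sometime q$, $\always\neg(p\vee q)$, the same equilibrium in $G_0$ (player~$3$ deterring player~$1$'s deviation at $s_1$ with a primed action and player~$2$'s at $s'_1$ with an unprimed one), and the same exhaustive case analysis over the four possible runs of $G_1$ ending in the impossibility of simultaneously deterring both players at the merged state. The only cosmetic difference is that you spell out the bisimulation relation explicitly, which the paper leaves as an easy check.
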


To see that the above statement holds,
consider again the three-player game~$G_0$ on the concurrent game structure~$M_0$ in Figure~\ref{fig:cgswithne}. Assume, as before, that player~$1$'s goal set~$\goalset_1$ is given by those computations~$\computation$ such that $\trace_{M_0}(\computation)=v_0,v_1,v_2,\dots$, implies $p\in v_k$ for some $k\ge 0$. Similarly,~$\goalset_2$ consists of all computations~$\computation$ with $\trace_{M_0}(\computation)=v_0,v_1,v_2,\dots$ and $q\in v_k$ for some $k\ge 0$ and~$\goalset_3$ by those computations~$\computation$ with $\trace_{M_0}(\computation)=v_0,v_1,v_2,\dots$ and $v_k=\emptyset$ for all $k\ge0$. Recall that, consequently, the preferences of players~$1$, $2$, and~$3$ are trace-based and can be represented by the LTL formulas  $	\gamma_1=	\sometime p$,
$\gamma_2		=	\sometime q$, and
$ 	\gamma_3		=	\always \neg(p\vee q)$, respectively.

Let~$f^*_1$ and~$f^*_2$ be any run-based strategies for players~$1$ and~$2$ such that $f^*_1(s_0)=f^*_2(s_0)=a$. Let, furthermore, player~$3$'s run-based strategy~$f^*_3$ be such that
\[
\text{
	$f^*_3(s_0)		=	a$,\quad
	$f^*_3(s_0,s_1)		= 	a'$,\quad and \quad
	$f^*_3(s_0,s'_1)		=	b$.}
\]
Let $f^*=(f^*_1,f^*_2,f^*_3)$ and observe that $\run_{M_0}(f^*)=s_0,s_4,s_4,s_4,\dots$. Accordingly, player~$3$ has her goal achieved and does not want to deviate from~$f^*$. Players~$1$ and~$2$ do not have their goals achieved, but do not want to deviate from~$f^*$ either. To see this, let~$g_1$ be any run-based strategy for~$1$ such that $g_1(s_0)=b$; observe that this is required for any meaningful deviation from~$f^*$ by~$1$. Then $\run_{M_0}(g_1,f^*_2,f^*_3)=s_0,s_1,s_3,s_3,s_3,\dots$ or
$\run_{M_0}(g_1,f^*_2,f^*_3)=s_0,s_1,s_4,s_4,s_4,\dots$, depending on whether $f^*_2(s_0,s_1)=b$ or $f^*_2(s_0,s_1)=a$, respectively. In either case, player~$1$ does not get his goal achieved and it follows that he does not want to deviate from~$f^*$. An analogous argument---notice that the roles of player~$1$ and~$2$ are symmetric---shows that player~$2$ does not want to deviate from~$f^*$ either. We may thus conclude that~$f^*$ is a run-based equilibrium in~$G_0$.

Now, consider the game~$G_1$ on concurrent game structure~$M_1$ in Figure~\ref{fig:cgswithoutne} with the players' preferences as in~$M_0$. It is easy to check that~$M_0$ and~$M_1$ are bisimilar. Still, there is no run-based equilibrium in $G_1$. To see this, consider an arbitrary run-based strategy profile~$f=(f_1,f_2,f_3)$.
%
%
First, assume that $\run_{M_1}(f)=s_0,s_1,s_2,s_2,s_2,\dots$. Then, player~$1$ gets his goal achieved and  players~$2$ and~$3$ do not. If~$f_1(s_0,s_1)=a$ then~$f_3(s_0,s_1)=b$ and player~$3$ would like to deviate and play a strategy~$g_3$ with~$g_3(s_0,s_1)=a$. On the other hand, if $f_1(s_0,s_1)=b$, player~$3$ would like to deviate and play a strategy~$g_3$ with~$g_3(s_0,s_1)=b$.
%
%
Player~$3$ would similarly like to deviate from~$f$ if we assume that $\run_{M_1}(f)=s_0,s_1,s_3,s_3,s_3\dots$, in whose case it is player~$2$ who gets his goal achieved.
%
%
Now, assume that $\run_{M_1}(f)=s_0,s_1,s_4,s_4,s_4\dots$. In this case player~$3$ does get her goal achieved, but players~$1$ and~$2$ do not. However, player~$1$ would like to deviate and play a strategy~$g_1$ with $g_1(s_0,s_1)=b$ or $g_1(s_0,s_1)=a$, depending on whether $f_3(s_0,s_1)=a$ or~$f_3(s_0,s_1)=b$;
in a similar fashion, player~$2$ would like to deviate and play a strategy~$g_2$ with~$g_2(s_0,s_1)=b$ if $f_1(s_0,s_1)=a'$, and to one with~$g_2(s_0,s_1)=a$ if $f_1(s_0,s_1)=b'$.
%
%
Finally, assume that $\run_{M_1}(f)=s_0,s_4,s_4,s_4,\dots$. Then, neither player~$1$ nor player~$2$ gets his goal achieved. Now either $f_3(s_0,s_1)\in\set{a,b}$ or $f_3(s_0,s_1)\in\set{a',b'}$. If the former, player~$1$ would like to deviate and play a strategy~$g_1$ with $g_1(s_0)\neq f_1(s_0)$ and $g_1(s_0,s_1)\neq f_3(s_0,s_1)$. If the latter, player~$2$ would like to deviate and play a strategy~$g_2$ with $g_2(s_0)\neq f_2(s_0)$ and either $g_2(s_0,s_1)=b$ if $f_3(s_0,s_1)=a'$ or $g_2(s_0,s_1)=a$ if $f_3(s_0,s_1)=b'$.
We can then conclude that the CGS-game~$G_1$ does not have any run-based Nash equilibria.

The main idea behind this counter-example is that in~$G_0$ player~$3$
could distinguish which player deviates from~$f^*$ if the state
reached after the first round is not~$s_4$: if that state is~$s_1$, it
was player~$1$ who deviated, otherwise player~$2$. By choosing
either~$a'$ or~$b'$ at~$s_1$, and either~$a$ or~$b$ at~$s'_1$,
player~$3$ can guarantee that neither player~$1$ nor player~$2$ gets
his goal achieved (``punish'' them) and thus deter them from deviating
from~$f^*$. This possibility to punish deviations from~$f^*$ by players~$1$
and~$2$ in a single strategy is not available in the game on~$M_1$:
choosing from~$a$ and~$b$ will induce a deviation by player~$1$,
choosing from~$a'$ and~$b'$ one by player~$2$.

Observe that the games~$G_0$ and~$G_1$ do \emph{not} constitute a
counter-example against either the preservation under bisimilarity of
computation-based equilibria or the preservation of trace-based
equilibria. The reasons why such games fail to do so, however, are different. For the setting of computation-based strategies, player~$3$
can still distinguish and ``punish'' the deviating player in~$G_1$
as~$(a,b,a)$ and~$(b,a,a)$ are different directions and player~$3$ can
still have his action at~$s_1$ depend on which of these is played
at~$s_0$. By contrast, if we assume trace-based strategies, player~$3$
has to choose the same action at both~$s_1$ and~$s'_1$ in~$G_0$. As a
consequence, and contrarily to computation-based equilibria,
trace-based equilibria exist in neither~$G_0$ nor~$G_1$.
\bphnote{Please check this last remark.}
Also note that the goal sets~$\goalset_1$,~$\goalset_2$, and~$\goalset_3$ are run-based as well as computation-based both in~$G_1$ and~$G_2$. Accordingly, the counter-example also applies to settings wherein the players' preferences are assumed to be finer-grained in these two ways.

Observe at this point that~$s_1$ and~$s'_1$ are bisimilar states. Yet,
players are allowed to have run-based strategies
(which depend on state histories only) that choose \emph{different} actions
at bisimilar states. The above counter-example shows how this relative
richness of strategies 
makes a crucial difference. This raises the question as to whether we
actually want players to adopt run-based strategies in which they choose
different actions at bisimilar states. This observation leads us to the next
section.

\subsection*{Bisimulation-invariant Run-based Strategies}

Bisimilarity formally captures an informal concept of observational
indistinguishability on the part of an external observer of the
system. Now, the players in a concurrent game structure are in
essentially the same situation as an external observer if they are
assumed to be only able to observe the behaviour of the other players,
without knowing their internal structure or their interaction.

Drawing on this idea of indistinguishability, it is natural that
players cannot distinguish statewise bisimilar runs and, as a
consequence, can only adopt strategies that choose the same action at
runs that are statewise bisimilar. The situation is comparable to the
one in extensive games of imperfect information, in which players are
required to choose the same action in histories that are in the same
information set, that is, histories that cannot be distinguished (cf., {\em e.g.},~\cite{OR94,maschler_etal:2013a}).

To make this idea formally precise, we say that a run-based strategy
$f_i$ is \emph{bisimulation-invariant} if $f_i(\finrun)=f_i(\finrun')$
for all histories~$\finrun$ and~$\finrun'$ that are statewise
bisimilar.
The concept of Nash equilibrium is then similarly restricted to
bisimulation-invariant strategies. A profile
$\stratprof=(\strategy_1,\dots,\strategy_n)$ of
\emph{bisimulation-invariant} strategies is a \emph{Nash equilibrium
  in bisimulation-invariant strategies} (or a
\emph{bisimulation-invariant equilibrium}) in a game
$(M,\goalset_1,\dots,\goalset_n)$ if for all players~$i$ and every
\emph{bisimulation-invariant} strategy~$g_i$ for~$i$,
\[
	\trace(f_{-i},g_i)\in\goalset_i \ \text{implies} \ \trace(f)\in\goalset_i
\]
That is,~$f$ is a bisimulation-invariant equilibrium if no player wishes to deviate from~$f$ by playing a different bisimulation-invariant strategy.
In contrast to the situation for general run-based strategies, we
find that computations and traces that are sustained by a
bisimulation-invariant Nash equilibrium are preserved by
bisimulation. We show this by establishing a one-to-one correspondence
between the bisimulation-invariant strategies available to the players
in two bisimilar structures.

Based on the concept of state-wise bisimilarity, we associate with
every bisimulation-invariant strategy~$f_i$ for player~$i$ in
concurrent game structure~$M$, another bisimulation-invariant
strategy~$\tilde f_i$ for player~$i$ in any bisimilar concurrent game
structure~$M'$ such that for all $\finrun\in\finruns_{M'}$ and
$a\in\actions$,
\[
	\tilde f_i(\finrun) =a \quad \text{if}\ f_i(\finrun')=a \ \text{for some} \ \finrun'\in\finruns_M \ \text{with} \ \finrun\pairwisebisim\finrun'.
\]
Transitivity of~$\pairwisebisim$ guarantees that~$\tilde f_i$ is well
defined. To see this, observe that for all $\pi',\pi''\in\finruns_M$
with $\pi'\pairwisebisim \pi$ and $\pi''\pairwisebisim\pi$,
we have $\pi'\pairwisebisim\pi''$.
Having assumed that~$f_i$ is bisimulation-invariant, then
$f_i(\pi')=f_i(\pi'')$. By very much the same argument,~$\tilde f_i$
is 
bisimulation-invariant, if~$f_i$ is.

\begin{lem}%
	\label{lemma:tilde_properties_runs}
	Let~$M$ and~$M'$ be bisimilar concurrent game structures and let~$f_i$ be a bisimulation-invariant strategy for player~$i$ in~$M$. Then, $\tilde f_i$ is a bisimulation-invariant strategy in~$M'$.
\end{lem}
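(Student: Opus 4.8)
The plan is to check the two properties packaged into ``$\tilde f_i$ is a bisimulation-invariant strategy in $M'$'': that $\tilde f_i$ is a genuine (total and feasible) run-based strategy on $\finruns_{M'}$, and that it is bisimulation-invariant. The discussion immediately preceding the lemma already records that transitivity of $\pairwisebisim$ makes the defining assignment single-valued and, by the same reasoning, bisimulation-invariant, so the real work lies in \emph{totality} and \emph{feasibility}.

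First I would show totality: every history $\finrun \in \finruns_{M'}$ admits a statewise-bisimilar history $\finrun' \in \finruns_M$, so that the defining clause of $\tilde f_i$ actually applies at $\finrun$. The cleanest route is to recall that $\finrun$ is induced by some finite computation $\computation \in \fincomputations_{M'}$; by Lemma~\ref{lemma:computation_identity} we then have $\computation \in \fincomputations_M$, and Lemma~\ref{lemma:computations_induce_bisimilar_runs} gives $\finrun_M(\computation) \pairwisebisim \finrun_{M'}(\computation) = \finrun$. Thus $\finrun' := \finrun_M(\computation) \in \finruns_M$ is statewise bisimilar to $\finrun$, as needed. (Alternatively one can build $\finrun'$ directly by an induction on length as in the proof of Lemma~\ref{lemma:computation_identity}, matching the transitions of $\finrun$ through the bisimulation from $M'$ into $M$.)

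Next I would verify feasibility, i.e.\ $\tilde f_i(\finrun) \in \Ac_i(t_k)$, where $t_k$ is the last state of $\finrun$. Put $a = \tilde f_i(\finrun) = f_i(\finrun')$ for the $\finrun' = s_0,\dots,s_k$ just produced; feasibility of $f_i$ in $M$ gives $a \in \Ac_i(s_k)$, and it remains to transport this across $s_k \sim t_k$. Here I reuse verbatim the device from Lemma~\ref{lemma:strategy_identity}: choose any legal direction $d$ at $s_k$ whose $i$-th component is $a$ (possible because each $\Ac_j(s_k)$ is non-empty and $a \in \Ac_i(s_k)$); then $s_k \transarrow{d} s'$ for some $s'$, so the forward clause of bisimulation yields $t_k \transarrow{d} t''$, whence $d$ is legal at $t_k$ and in particular $a \in \Ac_i(t_k)$.

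Finally I would reconfirm invariance for the record: if $\finrun_1 \pairwisebisim \finrun_2$ in $M'$ and $\tilde f_i(\finrun_1) = f_i(\rho)$ with $\finrun_1 \pairwisebisim \rho$, then transitivity of $\pairwisebisim$ gives $\finrun_2 \pairwisebisim \rho$, so $\tilde f_i(\finrun_2) = f_i(\rho) = \tilde f_i(\finrun_1)$. I expect feasibility to be the only genuine obstacle, since it is the single point where one must argue that a legal action at a state survives passage to a bisimilar state; totality and invariance are then routine, resting only on $M \sim M'$ and on $\pairwisebisim$ being an equivalence relation.
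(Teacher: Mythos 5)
Your proof is correct, and its engine is the same as the paper's: obtain a statewise bisimilar run in $M$ from an inducing computation (Lemmas~\ref{lemma:computation_identity} and~\ref{lemma:computations_induce_bisimilar_runs}), then combine transitivity of $\statewisebisim$ with bisimulation-invariance of~$f_i$. The difference is in what gets proved explicitly. The paper's proof consists solely of the invariance check: for $\finrun\statewisebisim\finrun'$ in $\finruns_{M'}$ it picks inducing computations $\fincomputation,\fincomputation'$, derives $\finrun_M(\fincomputation)\statewisebisim\finrun_{M'}(\fincomputation)$ and $\finrun_M(\fincomputation')\statewisebisim\finrun_{M'}(\fincomputation')$, and chains transitivity with invariance of~$f_i$ to conclude $\tilde f_i(\finrun)=\tilde f_i(\finrun')$; totality is only implicit there (the runs $\finrun_M(\fincomputation)$ witness that $\tilde f_i$ is defined on every history of~$M'$), and feasibility---that $\tilde f_i(\finrun)$ lies in $\Ac_i$ of the last state of~$\finrun$---is not checked at all. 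Your decomposition makes totality explicit (by the same computation argument), trims the invariance step to a bare transitivity argument (no need to redo the computation detour once totality is in hand), and adds the feasibility check via the legal-direction transport device from Lemma~\ref{lemma:strategy_identity}, which is indeed the right tool: a legal direction at $s_k$ with $i$-th component $a$ must, by the forward clause of bisimulation, also be legal at the bisimilar state $t_k$. What your version buys is a complete verification that $\tilde f_i$ is literally a run-based strategy in~$M'$---total and feasible---rather than merely a $\statewisebisim$-invariant partial function on histories; what the paper's buys is brevity, at the price of leaving totality and feasibility to the reader.
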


\begin{proof}
	Consider two statewise bisimilar runs~$\finrun,\finrun'\in\finruns_{M'}$, that is, $\finrun\pairwisebisim\finrun'$. Then, there are computations~$\fincomputation,\fincomputation'\in\fincomputations_{M'}$ such that $\finrun_{M'}(\fincomputation)=\finrun$ and $\finrun_{M'}(\fincomputation')=\finrun'$.
	By Lemma~\ref{lemma:computations_induce_bisimilar_runs}, we have  $\finrun_{M}(\fincomputation)\pairwisebisim\finrun_{M'}(\fincomputation)$ and $\finrun_{M}(\fincomputation')\pairwisebisim\finrun_{M'}(\fincomputation')$.
	Now, transitivity of~$\statewisebisim$ yields $\finrun_{M}(\fincomputation)\statewisebisim\finrun_{M}(\fincomputation')$.
 Having assumed that~$f_i$ is bisimulation-invariant, we obtain that
	$f_i(\finrun_{M}(\fincomputation))=f_i(\finrun_M(\fincomputation'))$. Finally, we may conclude that $\tilde f_i(\finrun)=\tilde f_i(\finrun')$, as desired.
\end{proof}



Moreover, it is easily appreciated that the mapping that transforms a strategy~$f_i$ into
 strategy $\tilde f_i$ is one-to-one. We will find that this is an essential property for bisimulation-invariant equilibria to be preserved under bisimilarity.

For a profile of bisimulation-invariant strategies $f=(f_1,\dots,f_n)$
in~$M$ we denote $\tilde f=(\tilde f_1,\dots,\tilde f_n)$.  We then
find that profiles~$f$ and $\tilde f$ of bisimulation-invariant
strategies induce identical computations.

\begin{lem}\label{lemma:tilde_runs} Let $M$ and~$M'$ be bisimilar concurrent game structures, $f=(f_1,\dots,f_n)$ a bisimulation-invariant strategy profile. Then,
$\computation_M(f)=\computation_{M'}(\tilde f)$.
\end{lem}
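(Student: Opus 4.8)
The plan is to establish $\computation_M(f)=\computation_{M'}(\tilde f)$ by induction on the position $k$, showing that the two computations share all their initial segments $d_0,\dots,d_k$. This follows the same template as the earlier computation-identity arguments (the second part of Lemma~\ref{lemma:strategy_identity} and Lemma~\ref{lemma:invariant_computations}): write $\computation_M(f)=d_0,d_1,d_2,\dots$ and $\computation_{M'}(\tilde f)=d'_0,d'_1,d'_2,\dots$, and prove $d_0,\dots,d_k=d'_0,\dots,d'_k$ for every $k\ge 0$. The driving idea is that at each stage the direction chosen in $M$ is determined by $f$ applied to the history $\finrun_M(d_0,\dots,d_{k})$, while the direction chosen in $M'$ is determined by $\tilde f$ applied to $\finrun_{M'}(d'_0,\dots,d'_k)$, and these two histories are statewise bisimilar, which is exactly the situation the definition of $\tilde f_i$ is designed to handle.

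For the base case $k=0$, I would use $\finrun_M(\epsilon)=s^0_M$ and $\finrun_{M'}(\epsilon)=s^0_{M'}$; since $M\bisim M'$ the singleton histories $(s^0_M)$ and $(s^0_{M'})$ are statewise bisimilar, so the defining clause of $\tilde f_i$ (with witness $\finrun'=(s^0_M)$) yields $\tilde f_i(s^0_{M'})=f_i(s^0_M)$ componentwise, hence $d'_0=d_0$. For the inductive step, assuming $d_0,\dots,d_k=d'_0,\dots,d'_k$, I set $\fincomputation=d_0,\dots,d_k$, which is a common finite computation of both structures by $\fincomputations_M=\fincomputations_{M'}$ (Lemma~\ref{lemma:computation_identity}). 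The key ingredient is Lemma~\ref{lemma:computations_induce_bisimilar_runs}, giving $\finrun_M(\fincomputation)\statewisebisim\finrun_{M'}(\fincomputation)$. Since $d_{k+1}$ is obtained by applying $f$ to $\finrun_M(\fincomputation)$ and $d'_{k+1}$ by applying $\tilde f$ to $\finrun_{M'}(\fincomputation)$, I invoke the definition of $\tilde f_i$ once more, taking $\finrun_M(\fincomputation)$ as the statewise-bisimilar witness for $\finrun_{M'}(\fincomputation)$, to conclude $\tilde f_i(\finrun_{M'}(\fincomputation))=f_i(\finrun_M(\fincomputation))$ for every $i$, and therefore $d'_{k+1}=d_{k+1}$.

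The main point requiring care is the applicability of $\tilde f_i$ at each step: one must know that the finite run of $M'$ occurring in the computation really does admit a statewise-bisimilar witness run in $M$, and that the value of $\tilde f_i$ does not depend on which witness is chosen. Both are already secured by results available before the lemma. Existence of the witness follows from Lemma~\ref{lemma:computations_induce_bisimilar_runs} applied to $\fincomputation\in\fincomputations_{M'}=\fincomputations_M$, which delivers precisely $\finrun_M(\fincomputation)\statewisebisim\finrun_{M'}(\fincomputation)$; and well-definedness (independence of the witness) was argued just before the statement using transitivity of $\statewisebisim$ together with the bisimulation-invariance of $f_i$. Consequently the only genuine work is the bookkeeping of aligning the witness runs in the induction, and the remainder of the argument is routine.
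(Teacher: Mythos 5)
Your proposal is correct and follows essentially the same argument as the paper's proof: induction on $k$, with the base case resting on $s^0_M\bisim s^0_{M'}$ and the inductive step applying Lemma~\ref{lemma:computations_induce_bisimilar_runs} to the common prefix to obtain statewise bisimilar histories, on which the defining clause of $\tilde f_i$ forces the same next direction. Your additional remarks on the existence of a witness run and the well-definedness of $\tilde f_i$ are just explicit bookkeeping of points the paper handles before stating the lemma.
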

\begin{proof}
	Let $\computation_M(f)=d_0,d_1,d_2,\dots$ and $\computation_{M'}(\tilde f)=d'_0,d'_1,d'_2,\dots$. We prove by induction on~$k$ that $d_k=d'_k$ for every~$k\ge 0$. If $k=0$, we have $d_0=(f_1(s^0_M),\dots,f_n(s^0_M))$. Observe that, as $M\bisim M'$ also $s^0_M\bisim s^0_{M'}$ and, hence, $f_i(s^0_M)=\tilde f_i(s^0_{M'})$. Therefore,
	\[
		d'_0
		=
		(\tilde f_1(s^0_{M'}),\dots,\tilde f_n(s^0_{M'}))
		=
		(f_1(s^0_M),\dots,f_n(s^0_M))
		=
		d_0\text.
	\]
	For the induction step, we may assume that $d_0,\dots,d_k=d'_0,\dots,d_k'$.
	By Lemma~\ref{lemma:computations_induce_bisimilar_runs}, then $\finrun_M(d_0,\dots,d_k)\statewisebisim\finrun_{M'}(d_0',\dots,d_k')$. Accordingly, for every player~$i$ we have that $f_i(\finrun_M(d_0,\dots,d_k))=\tilde f_i(\finrun_{M'}(d'_0,\dots,d'_k))$.
It thus follows that
\begin{align*}
		d'_{k+1}
		&	=
		(\tilde f_1(\finrun_{M'}(d'_0,\dots,d_k')),\dots,\tilde f_n(\finrun_{M'}(d'_0,\dots,d_k')))\\
		&	=
		(f_1(\finrun_{M}(d_0,\dots,d_k)),\dots,f_n(\finrun_{M}(d_0,\dots,d_k)))\\
		&	=
		d_{k+1}\text.
\end{align*}
This concludes the proof.
\end{proof}

We are now in a position to state an equilibrium preservation theorem
for bisimulation-invariant strategies in a similar way as we were able
to obtain Theorem~\ref{theorem:NE_computations}, the analogous result
for computation-based and trace-based strategies.

\begin{thm}\label{theorem:NE_runs_pwbisim}
	Let $G=(M,\goalset_1,\dots,\goalset_n)$ and
	$G'=(M',\goalset_1,\dots,\goalset_n)$ be games on bisimilar
	concurrent game structures~$M$ and~$M'$, respectively.
	Then,~$f$ is a bisimulation-invariant equilibrium in~$G$ if and only if~$\tilde f$ is a bisimulation-invariant equilibrium in~$G'$.
\end{thm}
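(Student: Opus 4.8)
The plan is to follow, step for step, the contrapositive argument used for Theorem~\ref{theorem:NE_computations}, but with two substitutions: the coincidence of computation-based strategies across bisimilar structures is replaced by the map $f_i\mapsto\tilde f_i$ between bisimulation-invariant strategies, and the identity of induced computations is supplied by Lemma~\ref{lemma:tilde_runs} rather than by Lemma~\ref{lemma:strategy_identity}. Throughout I assume, as the statement implicitly requires, that $f=(f_1,\dots,f_n)$ is a bisimulation-invariant profile in~$M$, so that $\tilde f$ is defined and, by Lemma~\ref{lemma:tilde_properties_runs}, is a bisimulation-invariant profile in~$M'$.

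First I would isolate the two facts that drive everything. By Lemma~\ref{lemma:tilde_runs}, $\computation_M(f)=\computation_{M'}(\tilde f)$; and since the tilde operation acts componentwise, for any player~$i$ and any bisimulation-invariant strategy~$h_i$ in~$M$ the profile $(f_{-i},h_i)$ is again bisimulation-invariant with $\widetilde{(f_{-i},h_i)}=(\tilde f_{-i},\tilde h_i)$, so Lemma~\ref{lemma:tilde_runs} also gives $\computation_M(f_{-i},h_i)=\computation_{M'}(\tilde f_{-i},\tilde h_i)$. As these are equalities of single computations, applying Lemma~\ref{lemma:computations_induce_bisimilar_runs} to them yields the trace equalities $\trace_M(f)=\trace_{M'}(\tilde f)$ and $\trace_M(f_{-i},h_i)=\trace_{M'}(\tilde f_{-i},\tilde h_i)$. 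Because the induced computations are literally identical and their induced traces coincide, the goal-set membership conditions in the definition of bisimulation-invariant equilibrium transfer verbatim between~$G$ and~$G'$.

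For the implication ``$f$ is not a bisimulation-invariant equilibrium in~$G$ implies $\tilde f$ is not one in~$G'$'' I would argue by contraposition. If some player~$i$ profitably deviates from~$f$ via a bisimulation-invariant strategy~$h_i$, that is, $\trace_M(f_{-i},h_i)\in\goalset_i$ while $\trace_M(f)\notin\goalset_i$, then $\tilde h_i$ is bisimulation-invariant in~$M'$ by Lemma~\ref{lemma:tilde_properties_runs}, and the trace equalities above give $\trace_{M'}(\tilde f_{-i},\tilde h_i)\in\goalset_i$ while $\trace_{M'}(\tilde f)\notin\goalset_i$; hence~$i$ profitably deviates from~$\tilde f$ in~$G'$.

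The converse direction requires turning an \emph{arbitrary} profitable bisimulation-invariant deviation $g_i$ from~$\tilde f$ in~$G'$ into one from~$f$ in~$G$, and this is where I expect the only nonroutine point to sit: so far the tilde map has been asserted only to be one-to-one, whereas here I need every bisimulation-invariant strategy of~$M'$ to be of the form $\tilde h_i$ for some bisimulation-invariant~$h_i$ in~$M$, i.e. \emph{surjectivity}. I would discharge this by exploiting the symmetry of~$\pairwisebisim$: the construction of $\tilde{\cdot}$ is symmetric in~$M$ and~$M'$, so it induces an analogous map from the bisimulation-invariant strategies of~$M'$ to those of~$M$, and transitivity together with symmetry of~$\pairwisebisim$ shows the two maps are mutually inverse. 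Given $g_i$, its image~$h_i$ under this reverse map is bisimulation-invariant in~$M$ with $\tilde h_i=g_i$, and the trace equalities of the second paragraph then produce the profitable deviation from~$f$ in~$G$. Equivalently and more economically, once the forward implication is proved one may simply invoke the symmetry of bisimilarity, applying that implication to~$M'$, $\tilde f$, and the reverse-tilde image of~$\tilde f$ (which is~$f$), thereby obtaining the converse without re-running the argument.
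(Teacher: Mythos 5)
Your proposal is correct and takes essentially the same route as the paper's proof: contraposition, the componentwise tilde correspondence between bisimulation-invariant profiles (Lemma~\ref{lemma:tilde_properties_runs}), and Lemma~\ref{lemma:tilde_runs} to transfer the induced computations---and hence goal membership---between~$G$ and~$G'$. The only substantive difference is in the direction that needs a preimage of a deviation~$g_i$ in~$M'$: the paper obtains $f'_i$ with $\tilde f'_i=g_i$ by citing that the tilde map is ``one-to-one'', whereas you explicitly establish surjectivity via the symmetric reverse construction and show the two maps are mutually inverse using symmetry and transitivity of~$\pairwisebisim$---a welcome sharpening, since injectivity alone would not furnish the required preimage.
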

\begin{proof}
	First assume
	for contraposition  that $\tilde f=(\tilde f_1,\dots,\tilde f_n)$ is \emph{not} a Nash equilibrium in bisimulation-invariant strategies in~$G'$.
	Then, there is a player~$i$ and a bisimulation-invariant strategy~$g_i$ for~$i$ in~$M'$ such that $\computation_{M'}(\tilde f)\notin\goalset_i$ and $\computation_{M'}(\tilde f_{-i},g_i)\in\goalset_i$.
As the mapping that transforms a strategy~$f_i$ into
 strategy $\tilde f_i$ is one-to-one, there is a bisimulation-invariant strategy~$f'_i$ for~$i$ in~$M$ with $\tilde f'_i= g_i$.
	Accordingly, $\computation_{M'}(\tilde f_{-i},\tilde f'_i)\in\goalset_i$. Lemma~\ref{lemma:tilde_runs} then yields that $\computation_M(f_{-i},f'_i)\in\goalset_i$ and $\computation_M(f)\notin\goalset_i$. As~$f'_i$ is bisimulation-invariant, it follows that~$f$ is not an equilibrium in bisimulation-invariant strategies in~$G$.

The proof in the opposite direction runs along analogous lines.
\end{proof}

As an immediate corollary of Theorem~\ref{theorem:NE_runs_pwbisim}, it follows that the property of a computation or trace to be sustained by a bisimulation-invariant equilibria is also preserved  under bisimilarity.

\begin{cor}%
\label{app:corollary:sustained_invariance_under_bisimulation_run}%
\label{corollary:sustained_invariance_under_bisimulation_run}
	Let $G=(M,\goalset_1,\dots,\goalset_n)$ and $G'=(M',\goalset_1,\dots,\goalset_n)$ be  games on bisimilar concurrent game structures~$M$ and~$M'$, respectively.
	Then, for every computation $\computation\in\computations_M=\computations_{M'}$ and every trace $\trace\in\traces_{M}=\traces_{M'}$,
	\begin{enumerate}
	\item\label{item:app:sustained_invariance_under_bisimulation_run_i}\label{item:sustained_invariance_under_bisimulation_run_i} $\computation$  is sustained by a bisimulation invariant equilibrium in~$G$ if and only if $\computation$ is sustained by a  bisimulation-invariant equilibrium in~$G'$.
	\item\label{item:app:sustained_invariance_under_bisimulation_run_ii}%
	\label{item:sustained_invariance_under_bisimulation_run_ii} $\trace$ is sustained by a bisimulation-invariant equilibrium in~$G$ if and only if~$\trace$ is sustained by a bisimulation-invariant equilibrium in~$G'$.
	\end{enumerate}
\end{cor}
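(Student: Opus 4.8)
The plan is to mirror the structure of the proof of Corollary~\ref{corollary:sustained_invariance_under_bisimulation}, but with the appeals to computation-level identity replaced by the bisimulation-invariant machinery, namely Theorem~\ref{theorem:NE_runs_pwbisim} together with Lemma~\ref{lemma:tilde_runs}. Throughout I would rely on $\computations_M=\computations_{M'}$ and $\traces_M=\traces_{M'}$ (Lemma~\ref{lemma:computation_identity} and Lemma~\ref{lemma:trace_identity}), so that the computation~$\computation$ and trace~$\trace$ named in the statement genuinely reside in both structures and the two ``sustained by'' claims are meaningful on either side.

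For part~\ref{item:sustained_invariance_under_bisimulation_run_i} I would begin with the assumption that $\computation$ is sustained by a bisimulation-invariant equilibrium~$f$ in~$G$, that is, $\computation=\computation_M(f)$. Theorem~\ref{theorem:NE_runs_pwbisim} immediately gives that $\tilde f$ is a bisimulation-invariant equilibrium in~$G'$, and Lemma~\ref{lemma:tilde_runs} gives $\computation_{M'}(\tilde f)=\computation_M(f)=\computation$. Hence $\computation$ is sustained by the bisimulation-invariant equilibrium $\tilde f$ in~$G'$, as required.

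For part~\ref{item:sustained_invariance_under_bisimulation_run_ii} I would proceed identically to obtain $\tilde f$, a bisimulation-invariant equilibrium in~$G'$ with $\computation_{M'}(\tilde f)=\computation_M(f)$. The additional ingredient is the second part of Lemma~\ref{lemma:computations_induce_bisimilar_runs}, which says that a single computation induces identical traces in~$M$ and~$M'$, so that $\trace_{M'}(\computation_M(f))=\trace_M(\computation_M(f))$. Chaining the identities then yields $\trace_{M'}(\tilde f)=\trace_{M'}(\computation_{M'}(\tilde f))=\trace_{M'}(\computation_M(f))=\trace_M(\computation_M(f))=\trace_M(f)=\trace$, so $\trace$ is sustained by $\tilde f$ in~$G'$.

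The only genuinely non-routine point is the converse implication in each part. Since the operation $f_i\mapsto\tilde f_i$ is defined as a map from strategies in~$M$ to strategies in~$M'$, I cannot read the reverse direction directly off Lemma~\ref{lemma:tilde_runs}. The clean way around this is to exploit that bisimilarity is symmetric: the entire $\tilde{(\cdot)}$ construction, together with Lemma~\ref{lemma:tilde_properties_runs}, Lemma~\ref{lemma:tilde_runs}, and Theorem~\ref{theorem:NE_runs_pwbisim}, applies verbatim with the roles of~$M$ and~$M'$ interchanged. Thus, given a bisimulation-invariant equilibrium~$g$ in~$G'$ that sustains $\computation$ (respectively~$\trace$), I would apply this symmetric construction to produce a bisimulation-invariant equilibrium in~$G$ sustaining the same computation (respectively trace), by precisely the argument above. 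I expect this symmetry bookkeeping, rather than any new computation, to be the only thing requiring care.
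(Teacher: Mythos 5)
Your proposal is correct and follows essentially the same route as the paper's own proof: Theorem~\ref{theorem:NE_runs_pwbisim} plus Lemma~\ref{lemma:tilde_runs} for part~(1), with Lemma~\ref{lemma:computations_induce_bisimilar_runs} supplying the trace identity for part~(2). Your explicit treatment of the converse direction via symmetry of bisimilarity is exactly what the paper compresses into ``the argument in the opposite direction is analogous,'' so there is no substantive difference.
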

\begin{proof}
For part~\ref{item:sustained_invariance_under_bisimulation_run_i}, assume that~$\computation_M(f)=\computation$. Then in virtue of Lemma~\ref{lemma:tilde_runs} also $\computation_{M'}(\tilde f)=\computation$.
Moreover,
by Theorem~\ref{theorem:NE_runs_pwbisim} we have that profile~$f$ is a Nash equilibrium in bisimulation-invariant strategies in~$G$ if and only if~$\tilde f$ is a Nash equilibrium in bisimulation-invariant strategies in~$G'$, which gives the result.

The argument for part~\ref{item:sustained_invariance_under_bisimulation_run_ii} runs along analogous lines. First assume that~$\trace$ is sustained by bisimulation-invariant Nash equilibrium~$f$. Let~$\computation=\computation_{M}(f)$. Then, $\trace=\trace_{M}(\computation)$. By Theorem~\ref{theorem:NE_runs_pwbisim}, moreover,~$\tilde f$ is a bisimulation-invariant Nash equilibrium in~$G'$. An application of Lemma~\ref{lemma:tilde_runs} yields $\computation_M(\tilde f)=\computation_M(f)$. By Lemma~\ref{lemma:computations_induce_bisimilar_runs} then $\trace_M(\computation)=\trace_{M'}(\computation)$. It follows that $\trace$ is sustained by~$\tilde f$, a bisimulation-invariant Nash equilibrium, in~$G'$.
The argument in the opposite direction is analogous, giving the result.
\end{proof}

\section{Special Cases}
In the previous section we provided results about the preservation of
a given Nash equilibrium under bisimilarity, specifically, as long as we
do not consider run-based strategies or goals. In this
section we study two important special scenarios where
this is not the case.

Firstly, consider the scenario where we have two-player games with
run-based strategies and trace-based goals. This is an important
special case since run-based strategies, as we emphasised in the
introduction, are the ``conventional'' model of strategies used in
logics such as ATL$^*$ or SL, as well as in systems represented as concurrent
game structures. In particular, we show that with respect to
two-player games with run-based strategies and trace-based goals
(which include temporal logic goals), the
setting coincides with the one with bisimulation-invariant strategies
and trace-based goals, for which the preservation of Nash equilibria
under bisimilarity holds. A key observation in this case is that in
two-player games the existence of Nash equilibria can be characterised
in terms of the existence of certain winning strategies, which are
preserved across bisimilar systems.

Secondly, we also study the scenario where concurrent game structures are restricted to those that are induced by iterated Boolean games~\cite{GHW15} and Reactive Modules games~\cite{WGHMPT16}, two frameworks for the strategic analysis of AI and multi-agent systems, in particular, using model checking techniques.%
\footnote{For instance, Reactive Modules games provide a game semantics to formal specification languages such as Reactive Modules~\cite{AH99b}, which is widely used in model checking tools,
such as MOCHA~\cite{AHMQRT98} and PRISM~\cite{KNP11}.}
In this case, we show that bisimulation-invariant strategies also
coincide with run-based strategies, and therefore, that the positive
results for bisimulation-invariant strategies presented in the
previous section also transfer to this special case.

\subsection*{Two-Player Games}
This section concerns the preservation under bisimilarity of Nash equilibria under bisimulation in two-player games. We deal with the cases in which the players' preferences are computation-based, trace-based, and run-based separately.

\subsubsection*{Computation-based Preferences}
The counter-example against the preservation of the existence of Nash equilibria in Section~\ref{subsection:run_based_no_preservation} involved three players. We find that, if preferences are computation-based, the example can be adapted so as to involve only two players, which gives rise to the following result.

\begin{figure}[t]
\centering
 	\scalebox{.85}{
 	  \begin{tikzpicture}[scale=1]

 	  \tikzstyle{every ellipse node}=[draw,inner xsep=3.5em,inner ysep=1.2em,fill=black!15!white,draw=black!15!white]
 	  \tikzstyle{every circle node}=[fill=white,draw,minimum size=1.6em,inner sep=0pt]


 	  \draw(0,0) node(0){}
 	  			++ 	( 0:.9)	  	node[label=-90:{\footnotesize$\bar p\bar q$},circle](v0){$s_0$}
 				++	(  0:3)   	node[label=-90:{\footnotesize$\bar p\bar q$},circle](v1){$s_1'$}
					+( 90:3)  	node[label=-90:{\footnotesize$\bar p\bar q$},circle](w1){$s_1$}
 					++( 0:5)  	node[label=-90:{\footnotesize$\bar p     q$},circle](v3){$s_3$}
 					+( 90:3)  	node[label=-90:{\footnotesize$     p\bar q$},circle](v2){$s_2$}
					+(-90:4.5)	node[label=-90:{\footnotesize$\bar p\bar q$},circle](v4){$s_4$}
 				;

 	\draw[-latex] (v0) --node[inner xsep=0pt, inner ysep=0pt, pos=.5,fill=white](){
						\scalebox{.85}{$\begin{array}{l}
							\mwbactionstwo{c}{a}{\bthree}
						\end{array}$}} (v1);

	\draw[-latex] (v0) --node[inner xsep=0pt, inner ysep=0pt, pos=.5,fill=white](){
						\scalebox{.85}{$\begin{array}{l}
							\mwbactionstwo{b}{a}{\athree}
						\end{array}$}} (w1);

 	\draw[-latex] (v0) to[bend right=25]node[inner xsep=0pt, inner ysep=0pt, pos=.545,fill=white](){
						\scalebox{.85}{$\begin{array}{l}
							\mwbactionstwo{a}{a}{\athree}
						\end{array}$}} (v4);
 	\draw[-latex] (v1) --node[inner xsep=0pt, inner ysep=0pt, pos=.7,fill=white](){\scalebox{.85}{$\begin{array}{l}\mwbactionstwo{a}{b}{\ast}\\[-.2em]\mwbactionstwo{\bonetwooo}{\atwo}{\ast}\end{array}$}} (v2);
 	\draw[-latex] (v1) --node[inner xsep=0pt, inner ysep=0pt, pos=.7,fill=white](){\scalebox{.85}{$\begin{array}{l}\mwbactionstwo{a}{b'}{\bthree}\\[-.2em]\mwbactionstwo{b}{a'}{\athree}\end{array}$}} (v3);
 	\draw[-latex] (w1) --node[inner xsep=0pt, inner ysep=0pt, pos=.7,fill=white](){\scalebox{.85}{$\begin{array}{l}\mwbactionstwo{a}{b}{\ast}\\[-.2em]\mwbactionstwo{\bonetwooo}{\atwo}{\ast}\end{array}$}} (v2);
 	\draw[-latex] (w1) --node[inner xsep=0pt, inner ysep=0pt, pos=.7,fill=white](){\scalebox{.85}{$\begin{array}{l}\mwbactionstwo{a}{b'}{\bthree}\\[-.2em]\mwbactionstwo{b}{a'}{\athree}\end{array}$}} (v3);
	\draw[-latex] (v1) to[bend right=17.5]node[inner xsep=0pt, inner ysep=0pt, pos=.45,fill=white,fill opacity=1](){\scalebox{.85}{$\begin{array}{l}\mwbactionstwo{a}{a}{\ast}\\[-.2em]\mwbactionstwo{a}{a'}{\ast}\\[-.2em]\mwbactionstwo{b}{b}{\athree}\\[-.2em]\mwbactionstwo{b}{b'}{\bthree}\end{array}$}} (v4);
	\draw[-latex] (w1) --node[inner xsep=0pt, inner ysep=0pt, pos=.565,fill=white,fill opacity=1](){\scalebox{.85}{$\begin{array}{l}\mwbactionstwo{a}{a}{\ast}\\[-.2em]\mwbactionstwo{a}{a'}{\ast}\\[-.2em]\mwbactionstwo{b}{b}{\athree}\\[-.2em]\mwbactionstwo{b}{b'}{\bthree}\end{array}$}} (v4);
 	\draw[-latex] (0) -- (v0);

 	\draw[-latex] (v2.70-90) .. controls +(60-90:4em) and +(120-90:4em) .. node[pos=.5,fill=white,right,xshift=.5ex](){\footnotesize$\ast,\ast,\ast$} (v2.110-90);

 	\draw[-latex] (v3.70-90) .. controls +(60-90:4em) and +(120-90:4em) .. node[pos=.5,fill=white,right,xshift=.5ex](){\footnotesize$\ast,\ast,\ast$} (v3.110-90);

 	\draw[-latex] (v4.70-90) .. controls +(60-90:4em) and +(120-90:4em) .. node[pos=.5,fill=white,right,xshift=.5ex](){\footnotesize$\ast,\ast,\ast$} (v4.110-90);

 	\end{tikzpicture}
 	}
 	\caption{The concurrent game structure~$M_4$ underlying the game~$G_4$.}%
 	\label{fig:cgswithne_twoplayer}
\end{figure}
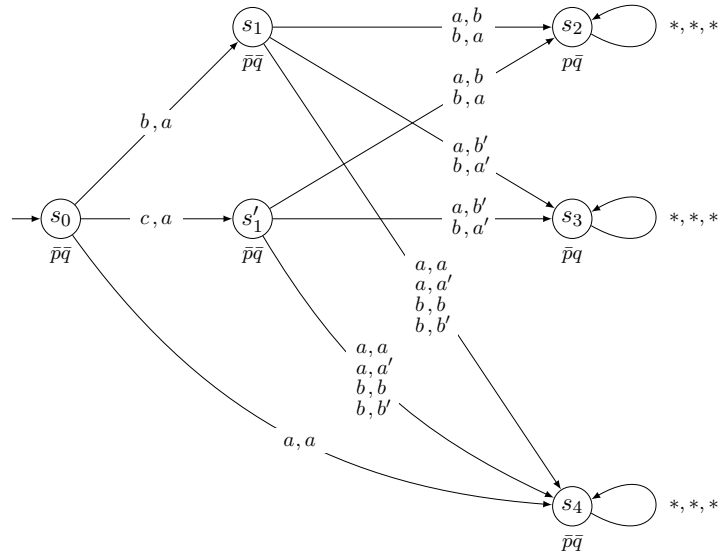

\begin{figure}[tb]
\centering
 	\scalebox{.85}{
 	  \begin{tikzpicture}[scale=1]

 	  \tikzstyle{every ellipse node}=[draw,inner xsep=3.5em,inner ysep=1.2em,fill=black!15!white,draw=black!15!white]
 	  \tikzstyle{every circle node}=[fill=white,draw,minimum size=1.6em,inner sep=0pt]


 	  \draw(0,0) node(0){}
 	  			++ 	( 0:.9)	  node[label=-90:{\footnotesize$\bar p\bar q$},circle](v0){$s_0$}
 				++	(  20:3.5)  node[label=-90:{\footnotesize$\bar p\bar q$},circle](v1){$s_1$}
 					+( 20:4)  node[label=-90:{\footnotesize$     p\bar q$},circle](v2){$s_2$}
 					+(-20:4)  node[label=-90:{\footnotesize$\bar p     q$},circle](v3){$s_3$}
 				;
 	 \draw(0,.9) ++(-27.5:9)	node[label=-90:{\footnotesize$\bar p\bar q$},circle](v4){$s_4$};

 	\draw[-latex] (v0) --node[inner xsep=0pt, inner ysep=0pt,pos=.5,fill=white](){\scalebox{.85}{$\begin{array}{l}\mwbactionstwo{b}{a}{\bthree}\\\mwbactionstwo{c}{a}{\athree}\end{array}$}} (v1);

 	\draw[-latex] (v1) --node[inner xsep=0pt, inner ysep=0pt,pos=.5,fill=white](){\scalebox{.85}{$\begin{array}{l}\mwbactionstwo{a}{b}{\ast}\\\mwbactionstwo{b}{a}{\ast}\end{array}$}} (v2);

 	\draw[-latex] (v1) --node[inner xsep=0pt, inner ysep=0pt,pos=.5,fill=white](){\scalebox{.85}{$\begin{array}{l}\mwbactionstwo{a}{b'}{\bthree}\\\mwbactionstwo{b}{a'}{\athree}\end{array}$}} (v3);

 	\draw[-latex] (v0) to[bend right=0]node[inner xsep=0pt, inner ysep=0pt,pos=.5,fill=white](){\scalebox{.85}{$\begin{array}{l}
							\mwbactionstwo{a}{a}{\athree}
						\end{array}$}} (v4);

	\draw[-latex] (v1) to[bend right=0]node[inner xsep=0pt, inner ysep=0pt,pos=.5,fill=white](){\scalebox{1}{\scalebox{.85}{$\begin{array}{l}\mwbactionstwo{a}{a}{\ast}\\[-.2em]\mwbactionstwo{a}{a'}{\ast}\\[-.2em]\mwbactionstwo{b}{b}{\athree}\\[-.2em]\mwbactionstwo{b}{b'}{\bthree}\end{array}$}}} (v4);

 	\draw[-latex] (0) -- (v0);

 	\draw[-latex] (v2.70-90) .. controls +(60-90:4em) and +(120-90:4em) .. node[pos=.5,fill=white,right,xshift=.5ex](){\footnotesize$\ast,\ast,\ast$} (v2.110-90);

 	\draw[-latex] (v3.70-90) .. controls +(60-90:4em) and +(120-90:4em) .. node[pos=.5,fill=white,right,xshift=.5ex](){\footnotesize$\ast,\ast,\ast$} (v3.110-90);

 	\draw[-latex] (v4.70-90) .. controls +(60-90:4em) and +(120-90:4em) .. node[pos=.5,fill=white,right,xshift=.5ex](){\footnotesize$\ast,\ast,\ast$} (v4.110-90);

 	\end{tikzpicture}
		}
 	\caption{The concurrent game structure~$M_5$ underlying the game~$G_5$.}%
 	\label{fig:cgswithoutne_2player}
\end{figure}
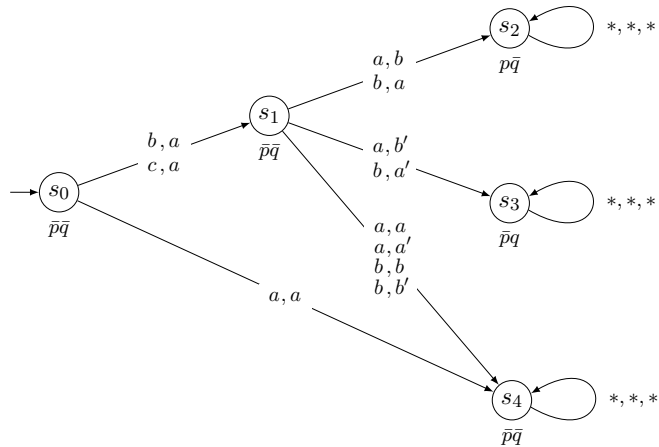

\begin{thm}%
\label{thm:variance_two-player}
	There are two-player games $(M,\goalset_1,\goalset_2)$
	and $(M',\goalset_1,\goalset_2)$ on bisimilar concurrent game structures~$M$ and~$M'$ with~$\goalset_1$ and~$\goalset_2$ computation-based such that a run-based Nash equilibrium exists in~$G$ but not in~$G'$.
\end{thm}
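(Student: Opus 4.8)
The plan is to instantiate the two concurrent game structures $M_4$ and $M_5$ drawn in Figures~\ref{fig:cgswithne_twoplayer} and~\ref{fig:cgswithoutne_2player}, equip both with the same pair of computation-based goals, and then discharge three obligations: that $M_4\bisim M_5$, that the game $G=G_4$ on $M_4$ has a run-based equilibrium, and that $G'=G_5$ on $M_5$ has none. The decisive feature of $M_4$ is that the two middle states $s_1$ and $s_1'$---reached from $s_0$ by player~$1$ playing $b$ and $c$ respectively, player~$2$ having only the action $a$ at $s_0$---are distinct, whereas $M_5$ merges them into a single state $s_1$ reachable by both $b$ and $c$. At either middle state a direction $(a_1,a_2)$ with $a_1\in\{a,b\}$ and $a_2\in\{a,b,a',b'\}$ leads to $s_2$ (labelled $p$), $s_3$ (labelled $q$), or $s_4$ (labelled $\bar p\bar q$); concretely, player~$2$'s \emph{unprimed} actions $\{a,b\}$ let player~$1$ steer between $s_2$ and $s_4$, while player~$2$'s \emph{primed} actions $\{a',b'\}$ let player~$1$ steer between $s_3$ and $s_4$.

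I take $\goalset_2=\{\computation : \trace(\computation)\models\always\neg(p\vee q)\}$ (player~$2$ wants neither $p$ nor $q$ ever to hold) and, crucially, the genuinely computation-based goal $\goalset_1$ consisting of every computation whose first direction is $(b,a)$ and whose trace satisfies $\sometime p$, together with every computation whose first direction is $(c,a)$ and whose trace satisfies $\sometime q$. Both are well defined as subsets of the common set $\computations_{M_4}=\computations_{M_5}$ (Lemma~\ref{lemma:computation_identity}): $\goalset_2$ is even trace-based, while $\goalset_1$ depends on player~$1$'s first action and is therefore only computation-based---indeed $(b,a)$ and $(c,a)$ produce the same initial trace $\bar p\bar q,\bar p\bar q$, so no trace-based goal could distinguish them, and this is exactly what lets two players simulate the three-player example of Section~\ref{subsection:run_based_no_preservation}. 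Note that $\goalset_1$ and $\goalset_2$ are disjoint, so no profile can satisfy both players.

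Bisimilarity is the easy step: I would exhibit the relation pairing $s_0,s_2,s_3,s_4$ with their namesakes and pairing \emph{both} $s_1$ and $s_1'$ of $M_4$ with the single $s_1$ of $M_5$, then check the three clauses---labels agree, and since $s_1$ and $s_1'$ carry identical outgoing transition tables the forth and back conditions hold direction by direction. For the equilibrium in $G_4$ I would exhibit the profile in which player~$1$ plays $a$ at $s_0$ (so the run is $s_0,s_4,s_4,\dots$, arbitrary thereafter) and player~$2$ plays a primed action at $s_1$ and an unprimed action at $s_1'$. Player~$2$ is satisfied and cannot improve. For player~$1$ I check the only candidate deviations: switching to $b$ reaches $s_1$, where the primed response confines the outcome to $\{s_3,s_4\}$ so that $\sometime p$ fails and the $b$-branch of $\goalset_1$ is unmet; switching to $c$ reaches $s_1'$, where the unprimed response confines the outcome to $\{s_2,s_4\}$ so that the $c$-branch is unmet. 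Hence player~$1$ can never enter $\goalset_1$, is indifferent, and has no profitable deviation; the profile is a run-based equilibrium.

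The main obstacle is the non-existence proof for $G_5$, a case analysis over the outcome of an arbitrary run-based profile $(f_1,f_2)$, exploiting that in $M_5$ player~$2$'s strategy fixes a \emph{single} action $\beta$ at $s_1$ and that player~$1$'s action at $s_1$ cannot depend on whether $b$ or $c$ was played, since both yield the same run $s_0,s_1$. The heart of the argument is that one action $\beta$ cannot simultaneously block the $b$-win (which requires $\beta$ primed, so $s_2$ is unreachable) and the $c$-win (which requires $\beta$ unprimed, so $s_3$ is unreachable): if $\beta$ is unprimed, player~$1$ deviates to a $b$-first strategy reaching $s_2$; if $\beta$ is primed, player~$1$ deviates to a $c$-first strategy reaching $s_3$, and by the merge this relabelling of the first action leaves the chosen middle action, hence the reached state, untouched---the very deviation that $G_4$ defeats by letting player~$2$ play differently at $s_1$ and $s_1'$. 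I would organise the cases by the realised run: if it ends in $s_4$, player~$1$ deviates as above; if it ends in $s_2$ or $s_3$ with the \emph{matching} first action, then player~$2$, who now sees $p$ or $q$ and so fails, deviates to an action at $s_1$ leading to $s_4$; if it ends in $s_2$ or $s_3$ with the \emph{mismatched} first action, player~$1$ deviates merely by relabelling $b\leftrightarrow c$. Since $\goalset_1$ and $\goalset_2$ are disjoint, some player is always unsatisfied, and the analysis shows this player always has a profitable run-based deviation, so $G_5$ has no run-based equilibrium.
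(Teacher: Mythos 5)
Your proposal is correct and follows the paper's own proof essentially step for step: it uses the same structures $M_4$ and $M_5$, extensionally the same goal sets (your hybrid computation/trace description of $\goalset_1$ and $\goalset_2$ coincides on these structures with the paper's explicit direction-based definition), the same equilibrium profile in $G_4$ (player~$1$ plays $a$ at $s_0$, player~$2$ punishes with primed actions at $s_1$ and unprimed ones at $s_1'$), and the same merging/punishment argument for non-existence in $G_5$. The only cosmetic difference is that you organise the non-existence case analysis by the realised run, whereas the paper organises it by which player's goal fails; both reduce to exactly the same profitable deviations.
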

\begin{proof}
	Consider the concurrent game structures~$M_4$ and~$M_5$ depicted in Figures~\ref{fig:cgswithne_twoplayer} and~\ref{fig:cgswithoutne_2player}, respectively.
	Observe that
	$\actions_1(s_0)=\set{a,b,c}$ and $\actions_1(s)=\set{a,b}$ at all states~$s$ distinct from~$s_0$, and that
	$\actions_2(s_0)=\set{a}$ and $\actions_2(s)=\set{a,a',b,b'}$ at all states~$s$ distinct from~$s_0$.
We define the games $G_4=(M_4,\goalset_1,\goalset_2)$ and $G_5=(M_5,\goalset_1,\goalset_2)$ by letting
 $\goalset_1$ contain exactly those computations $\computation=d_0,d_1,d_2,\dots$
such that either $d_0=(b,a)$ and $d_1\in\set{(a,b),(b,a)}$ or $d_0=(c,a)$ and $d_1\in\set{(a,b'),(b,a')}$, and
letting~$\goalset_2$ consist precisely of those computations $\computation=d_0,d_1,d_2,\dots$ such that
$d_0=(a,a)$ or $d_1\in\set{(a,a),(a,a'),(b,b),(b,b')}$. By an argument analogous to that presented in Section~\ref{subsection:run_based_no_preservation}, it can then be appreciated that~$G_4$ has a run-based equilibrium, whereas~$G_5$ has not.

To see the former, observe that any run-based profile~$f^*=(f^*_1,f^*_2)$ will be an equilibrium if
$f^*_1(s_0)=f^*_2(s_0)=a$, $f^*_2(s_1)\in\set{a',b'}$, and $f^*_2(s_1')\in\set{a,b}$. For every strategy~$g_1$ for player~$1$, we then have $\computation_{M_4}(g_1,f^*_2)\notin\goalset_1$, whereas $\computation_{M_4}(f^*_1,f^*_2)\in\goalset_2$.

To see that~$G_5$ has no run-based equilibrium, first let $f=(f_1,f_2)$ be a run-based profile such that $\computation_{M_5}(f_1,f_2)\notin\goalset_1$.
Then, if $f_2(s_0,s_1)=a$, player~$1$ would like to deviate and play a strategy~$g_1$ with $g_1(s_0)=g_1(s_0,s_1)=b$; if $f_2(s_0,s_1)=a'$, to deviate and play a strategy~$g'_1$ with
$g'_1(s_0)=c$ and $g'_1(s_0,s_1)=b$; if~$f_2(s_0,s_1)=b$ to deviate and play a strategy~$g''_1$ with $g''_1(s_0)=b$ and $g''_1(s_0,s_1)=a$; and, finally, if $f_2(s_0,s_1)=b'$ to deviate and play a strategy~$g'''_1$ with $g'''_1(s_0)=c$ and $g'''_1(s_0,s_1)=a$. On the other hand, if~$\computation_{M_3}(f_1,f_2)\notin\goalset_2$, it must be the case that $f_1(s_0)\in\set{b,c}$. Observe, however, that player~$2$ would then like to deviate and play any strategy~$g_2$ with $g_2(s_0,s_1)=a$ if $f_1(s_0,s_1)=a$ and to a strategy~$g'_2$ with $g'_2(s_0,s_1)=b$ if $f_1(s_0,s_1)=b$. As, furthermore,~$\goalset_1$ and~$\goalset_2$ are disjoint, that is, the goals players~$1$ and~$2$ cannot simultaneously be satisfied, it follows that~$G_5$ has no run-based equilibria.
\end{proof}

\subsubsection*{Run-based Preferences}%
\label{subsection:runbased_strats_runbased_prefs}

We now address the preservation (of the existence) of Nash equilibria in two-player CGS-games where both preferences and strategies are run-based.
In contrast to our findings in the previous section, we find that,
under a natural closure restriction on the players' preferences, we are able to obtain a positive result.
Our proof relies on the equivalence of run-based profiles and run-invariant profiles as expounded in Section~\ref{section:trace_invariant}.

As already noted above, run-based strategies cannot generally be identified directly across  bisimilar CGS-games. The reason for this complication is that runs are sequences of states, and the sets of states of the two CGS-games need not coincide. In Section~\ref{section:trace_invariant}, we saw, however, how run-based strategies correspond to run-invariant strategies, which are computation-based by definition.  Lemma~\ref{lemma:computation_identity}, moreover, allows us to compare computation-based strategies, even if they may be run-invariant in the one model but not in the other.
%
%


Let~$f=(f_1,f_2)$ be a given run-invariant equilibrium  in a CGS-game $G=(M,\goalset_1,\goalset_2)$ based on~$M$ and let $G'=(M',\goalset_1,\goalset_2)$ be a CGS-game based on a concurrent game structure~$M'$ bisimilar to~$M$.
We  define
a (computation-based) profile~$f^K=(f^K_1,f^K_2)$ that is a run-invariant equilibrium in both~$G$ and~$G'$.
To prove that $f^K=(f^K_1,f^K_2)$ is a run-invariant equilibrium if $f=(f_1,f_2)$ is,
we exploit a characterisation of Nash equilibria in terms of winning strategies.%
\footnote{Winning strategies have also been used to characterise the existence of Nash equilibria in other two-player games with binary outcomes---see, {\em e.g.},~\cite{GHW15-concur,GW14}.}
We say that a run-invariant strategy~$f_i$ for player~$i$ is \emph{winning against player~$j$} whenever $\computation_M(f_i,f_j)\notin\goalset_j$ for all run-invariant strategies~$f_j$ of player~$j$. We then have the following lemma, which is independent of the type of preferences that players have.

	\begin{lem}\label{lemma:NE_winningstrats}
	Let $G=(M,\goalset_1,\goalset_2)$ be a game. Then, a profile
	$f=(f_1,f_2)$ is a run-invariant equilibrium if and only if
	both
	\begin{enumerate}
		\item\label{item:NE_winningstrats_i} $\computation_M(f_1,f_2)\notin\goalset_1$ implies that~$f_2$ is a winning strategy against player~$1$, and
		\item\label{item:NE_winningstrats_ii} $\computation_M(f_1,f_2)\notin\goalset_2$ implies that~$f_1$ is a winning strategy against player~$2$.
	\end{enumerate}
\end{lem}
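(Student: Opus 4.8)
The plan is to prove the biconditional by unfolding the definition of a run-invariant equilibrium and matching it against the definition of a winning strategy, using crucially that in a two-player game each player faces a single opponent. Throughout I would take $f=(f_1,f_2)$ to be a run-invariant profile, since this is part of what it means for $f$ to be a run-invariant equilibrium and is exactly the hypothesis under which ``winning'' is defined for $f_1$ and $f_2$.

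First I would record the following reformulation. By definition $f$ is a run-invariant equilibrium precisely when no player wishes to deviate to another run-invariant strategy, i.e.\ for each player $i\in\{1,2\}$ and every run-invariant strategy $g_i$, $\computation_M(f_{-i},g_i)\in\goalset_i$ implies $\computation_M(f)\in\goalset_i$. Contraposing on player~$1$, this says that if $\computation_M(f_1,f_2)\notin\goalset_1$ then $\computation_M(g_1,f_2)\notin\goalset_1$ for every run-invariant $g_1$ --- which is verbatim the statement that $f_2$ is winning against player~$1$. The symmetric reformulation holds for player~$2$.

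For the forward direction I would assume $f$ is a run-invariant equilibrium and derive clauses~\ref{item:NE_winningstrats_i} and~\ref{item:NE_winningstrats_ii}. Assuming $\computation_M(f_1,f_2)\notin\goalset_1$, the fact that player~$1$ does not wish to deviate gives that no run-invariant $g_1$ satisfies $\computation_M(g_1,f_2)\in\goalset_1$, which by the reformulation above is exactly that $f_2$ is winning against player~$1$; this is clause~\ref{item:NE_winningstrats_i}, and clause~\ref{item:NE_winningstrats_ii} follows by interchanging the two players. For the converse I would assume clauses~\ref{item:NE_winningstrats_i} and~\ref{item:NE_winningstrats_ii} and rule out profitable deviations: if, say, player~$1$ wished to deviate, then $\computation_M(f_1,f_2)\notin\goalset_1$ and $\computation_M(g_1,f_2)\in\goalset_1$ for some run-invariant $g_1$, but clause~\ref{item:NE_winningstrats_i} applied to $\computation_M(f_1,f_2)\notin\goalset_1$ makes $f_2$ winning against player~$1$, contradicting $\computation_M(g_1,f_2)\in\goalset_1$; the case of player~$2$ is symmetric via clause~\ref{item:NE_winningstrats_ii}.

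The hard part, such as it is, will be purely bookkeeping: aligning the universal quantifier hidden in ``winning'' with the non-deviation condition, and confirming that, because there are only two players, ``no player can profitably deviate'' decomposes cleanly into the two single-opponent winning-strategy statements. No combinatorial or structural argument about the game structure is needed, since the lemma is essentially a restatement of the equilibrium condition specialised to two players with dichotomous preferences.
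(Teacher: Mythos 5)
Your proposal is correct and follows essentially the same route as the paper's proof: both simply unfold the definition of a run-invariant equilibrium (no profitable deviation to a run-invariant strategy) and the definition of a winning strategy, and observe that in a two-player game each non-deviation condition is exactly the contrapositive of the opponent's winning-strategy statement. The only cosmetic difference is that the paper argues both directions by contraposition, while you argue the forward direction directly and the converse by contradiction; your explicit remark that $f$ is taken to be a run-invariant profile throughout is a reasonable (and slightly more careful) reading of the same implicit assumption.
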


\begin{proof}
	For the ``if'' direction assume for contraposition that~$f=(f_1,f_2)$ is not a run-invariant equilibrium. Then, either $\computation_M(f_1,f_2)\notin\goalset_1$ and $\computation_M(g_1,f_2)\in\goalset_1$ for some run-invariant strategy~$g_1$ for player~$1$, or $\computation_M(f_1,f_2)\notin\goalset_2$ and $\computation_M(f_1,g_2)\in\goalset_2$ for some run-invariant strategy~$g_2$ for player~$2$. If the former,~$f_2$ is not winning against player~$1$, refuting~\ref{item:NE_winningstrats_i}. If the latter,~$f_1$ is not winning  against player~$2$, which refutes~\ref{item:NE_winningstrats_ii}.

	The opposite direction is also by contraposition. Assume that either~\ref{item:NE_winningstrats_i} or~\ref{item:NE_winningstrats_ii} is not satisfied. Without loss of generality we may assume the former.
	Then, $\computation_M(f_1,f_2)\notin\goalset_1$ and~$f_2$ is not a winning strategy against player~$1$. Accordingly, there is some run-invariant strategy~$g_1$ for player~$1$ such that $\computation_{M}(g_1,f_2)\in\goalset_1$ and it follows that~$f=(f_1,f_2)$ is not a run-invariant equilibrium.
\end{proof}

In order to have a formally convenient characterisation of the goal sets $\goalset_1$ and~$\goalset_2$ to be run-based in two bisimilar CGS-games and to define the profile $f^K=(f^K_1,f^K_2)$, we furthermore introduce the following notations and auxiliary concepts.
For a concurrent game structure~$M$ and finite computations $\fincomputation,\fincomputation'\in\fincomputations_M$, we write $\computation\equiv_M\computation'$ if $\finrun_M(\computation)=\finrun_M(\computation)$.
Furthermore, we  say that finite computations~$\fincomputation$ and $\fincomputation'$
are \emph{finitely congruent in $M$ and $M'$}, in symbols $\fincomputation\equiv_{K_{M,M'}} \fincomputation'$, whenever there are (not necessarily distinct) \emph{intermediate computations} $\fincomputation_0,\dots,\fincomputation_m$ such that
\begin{enumerate}
\item $\mathwordbox[l]{\fincomputation  }{\fincomputation_j}=\fincomputation_0$,
\item $\mathwordbox[l]{\fincomputation' }{\fincomputation_j}=\fincomputation_m$, and
\item $\mathwordbox[l]{\fincomputation_j}{\fincomputation_j}\equiv_M \fincomputation_{j+1}$ or
$\fincomputation_j\equiv_{M'} \fincomputation_{j+1}$, for every $0\le j<m$.
\end{enumerate}
As $\equiv_M$ and $\equiv_{M'}$ are equivalence relations, we may assume that here
$\equiv_M$ and $\equiv_{M'}$ alternate and  $\computation_j\equiv_M\computation_{j+1}$ if $j$ is even, and $\computation_j\equiv_{M'}\computation_{j+1}$ if $j$ is odd.
We will generally omit the subscript in $K_{M,M'}$ when $M$ and $M'$ are clear from the context.
For an example, see again Figure~\ref{fig:run_based_preferences_closed}.
Consider the (one-step) computations $\fincomputation_1=(a,a)$, $\fincomputation_2=(a,b)$, $\fincomputation_3=(b,a)$, and
$\fincomputation_4=(b,b)$. Then, $\fincomputation_1\equiv_{M_2,M_3}\fincomputation_3$, because $\fincomputation_1\equiv_{M_3}\fincomputation_2$ and $\fincomputation_2\equiv_{M_2}\fincomputation_3$. On the other hand, some reflection reveals that $\fincomputation_1\not\equiv_{K_{M_1,M_2}}\fincomputation_4$.
It is worth noting that finite congruence of two computations implies statewise bisimilarity of the runs induced, that is, $\fincomputation\equiv_{K_{M,M'}}\fincomputation'$ implies
$\finrun_M(\fincomputation)\statewisebisim\finrun_M(\fincomputation')$.
\begin{lem}%
	\label{lemma:finite_congruence_statewise_bisim}
	Let~$M$ and~$M'$ be two bisimilar concurrent game structures and $\fincomputation=d_0,\dots,d_k$ and $\fincomputation'=d'_0,\dots,d'_k$. Then,
	$\fincomputation\equiv_{K_{M,M'}}\fincomputation'$ implies $\finrun_M(\fincomputation)\statewisebisim\finrun_{M}(\fincomputation')$.
\end{lem}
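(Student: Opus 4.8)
The plan is to prove the claim by chaining statewise bisimilarity along the witnessing sequence of intermediate computations, exploiting the fact that $\statewisebisim$ is an equivalence relation. By definition of $\equiv_{K_{M,M'}}$, there are intermediate computations $\fincomputation_0,\dots,\fincomputation_m$ with $\fincomputation=\fincomputation_0$, $\fincomputation'=\fincomputation_m$, and for each $0\le j<m$ either $\fincomputation_j\equiv_M\fincomputation_{j+1}$ or $\fincomputation_j\equiv_{M'}\fincomputation_{j+1}$. Since $\statewisebisim$ is transitive, it suffices to establish the single-link statement $\finrun_M(\fincomputation_j)\statewisebisim\finrun_M(\fincomputation_{j+1})$ for every $0\le j<m$; the full conclusion then follows by composing these along $\finrun_M(\fincomputation)=\finrun_M(\fincomputation_0)\statewisebisim\cdots\statewisebisim\finrun_M(\fincomputation_m)=\finrun_M(\fincomputation')$.

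First I would treat the two cases of a single link. If $\fincomputation_j\equiv_M\fincomputation_{j+1}$, then by definition $\finrun_M(\fincomputation_j)=\finrun_M(\fincomputation_{j+1})$, and since every state is bisimilar to itself (reflexivity of $\bisim$), we immediately obtain $\finrun_M(\fincomputation_j)\statewisebisim\finrun_M(\fincomputation_{j+1})$. If instead $\fincomputation_j\equiv_{M'}\fincomputation_{j+1}$, then $\finrun_{M'}(\fincomputation_j)=\finrun_{M'}(\fincomputation_{j+1})$; here the idea is to cross over into $M'$, use the equality there, and cross back. By Lemma~\ref{lemma:computation_identity} both $\fincomputation_j$ and $\fincomputation_{j+1}$ lie in $\fincomputations_M=\fincomputations_{M'}$, so Lemma~\ref{lemma:computations_induce_bisimilar_runs} supplies $\finrun_M(\fincomputation_j)\statewisebisim\finrun_{M'}(\fincomputation_j)$ and $\finrun_M(\fincomputation_{j+1})\statewisebisim\finrun_{M'}(\fincomputation_{j+1})$. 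Combining these with $\finrun_{M'}(\fincomputation_j)=\finrun_{M'}(\fincomputation_{j+1})$ and using symmetry and transitivity of $\statewisebisim$ gives $\finrun_M(\fincomputation_j)\statewisebisim\finrun_{M'}(\fincomputation_j)=\finrun_{M'}(\fincomputation_{j+1})\statewisebisim\finrun_M(\fincomputation_{j+1})$, as required.

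I do not expect a genuine obstacle; the argument is essentially bookkeeping over the chain of congruences, with the equivalence-relation properties of $\statewisebisim$ doing the gluing. The one point needing care is that statewise bisimilarity be well-typed at every link, i.e.\ that all intermediate computations induce runs of equal length: this is automatic, since both $\equiv_M$ and $\equiv_{M'}$ force the two related computations to induce \emph{identical} (hence equal-length) runs, so by transitivity every $\fincomputation_j$ has the same length as $\fincomputation$ and $\fincomputation'$. The only mildly delicate step is the $\equiv_{M'}$ case, where one must route through $M'$ via Lemma~\ref{lemma:computations_induce_bisimilar_runs} rather than comparing runs in $M$ directly: states reached in $M'$ need not coincide with those reached in $M$, and it is precisely the statewise-bisimilarity bridge supplied by that lemma that makes the crossing sound.
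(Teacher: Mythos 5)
Your proof is correct and follows essentially the same route as the paper: both chain statewise bisimilarity along the intermediate computations, using Lemma~\ref{lemma:computations_induce_bisimilar_runs} to cross between~$M$ and~$M'$ and the equivalence-relation properties of~$\statewisebisim$ to glue the links. The only cosmetic difference is that you handle each link by a two-case analysis, whereas the paper normalises the chain so that $\equiv_M$ and $\equiv_{M'}$ alternate and then writes out the resulting chain of equalities and bisimilarities directly.
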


\begin{proof}
	Assume $\fincomputation\equiv_{K_{M,M'}}\fincomputation'$.
	Then there are $\fincomputation_0,\dots,\fincomputation_m$ such that
	$\fincomputation=\fincomputation_0$,
	$\fincomputation_m=\fincomputation'$, and, for all $0\le \ell<m$,
	$\finrun_M(\fincomputation_\ell)=\finrun_M(\fincomputation_{\ell+1})$ if~$\ell$ is even
	and
	$\finrun_{M'}(\fincomputation_\ell)=\finrun_{M'}(\fincomputation_{\ell+1})$ if~$\ell$ is odd.
	We assume that~$m$ is even; the case where~$m$ is odd follows by the same argument \emph{mutatis mutandis}.
	By virtue of Lemma~\ref{lemma:computations_induce_bisimilar_runs}-\ref{item:computations_induce_i},
we have $\finrun_M(\fincomputation_\ell)\statewisebisim\finrun_{M'}(\fincomputation_\ell)$ for every $0\le \ell<m$. Hence,
	\[
		\finrun_M(\fincomputation_0)
		=
		\finrun_{M}(\fincomputation_1)
		\statewisebisim
		\finrun_{M'}(\fincomputation_1)
		=
		\cdots
		=
		\finrun_{M'}(\fincomputation_{m-1})
		\statewisebisim
		\finrun_{M}(\fincomputation_{m-1})
		=
		\finrun_{M}(\fincomputation_{m})\text.
	\]
As obviously $\finrun_M(\fincomputation_\ell)=\finrun_M(\fincomputation_{\ell+1})$
and $\finrun_{M'}(\fincomputation_\ell)=\finrun_{M'}(\fincomputation_{\ell+1})$
imply, respectively,
$\finrun_M(\fincomputation_\ell)\statewisebisim\finrun_M(\fincomputation_{\ell+1})$
and $\finrun_{M'}(\fincomputation_\ell)\statewisebisim\finrun_{M'}(\fincomputation_{\ell+1})$,
also
	\[
		\finrun_M(\fincomputation_0)
		\statewisebisim
		\finrun_{M}(\fincomputation_1)
		\statewisebisim
		\finrun_{M'}(\fincomputation_1)
		\statewisebisim
		\cdots
		\statewisebisim
		\finrun_{M'}(\fincomputation_{m-1})
		\statewisebisim
		\finrun_{M}(\fincomputation_{m-1})
		\statewisebisim
		\finrun_{M}(\fincomputation_{m})\text.
	\]
By transitivity of~$\statewisebisim$ we may conclude that $\finrun_M(\fincomputation)\statewisebisim\finrun_M(\fincomputation)$.
\end{proof}

For bisimilar concurrent game structures~$M$ and~$M'$, we say that
a computation-based strategy~$f$ is \emph{${K_{M,M'}}$-invariant} if  $\fincomputation\equiv_{K_{M,M'}}\fincomputation'$ implies $f(\fincomputation)=f(\fincomputation')$, for all finite computations $\fincomputation,\fincomputation'\in\fincomputations_M$.
We find that $K_{M,M'}$-invariance exactly captures the concept of a strategy that is run-invariant in two  bisimilar concurrent game structures.

\begin{lem}
	Let~$M$ and~$M'$ be bisimilar concurrent game structures and~$f_i$ a computation-based strategy for player~$i$. Then,~$f_i$ is $K_{M,M'}$-invariant if and only if~$f_i$ is run-invariant in both~$M$ and~$M'$.
\end{lem}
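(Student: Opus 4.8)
The plan is to recognise that $\equiv_{K_{M,M'}}$ is precisely the transitive closure of the union $\equiv_M\cup\equiv_{M'}$ --- the least equivalence relation containing both --- so that the statement reduces to the elementary fact that a function respects the join of two equivalence relations if and only if it respects each of them separately. I would first note that both $\equiv_M$ and $\equiv_{M'}$ are relations on the common domain $\fincomputations_M=\fincomputations_{M'}$ (Lemma~\ref{lemma:computation_identity}), and that the same function $f_i$ legitimately counts as a computation-based strategy in both structures (Lemma~\ref{lemma:strategy_identity}); hence no domain mismatch arises and the two notions ``run-invariant in $M$'' and ``run-invariant in $M'$'' are meaningfully applied to a single object.

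For the left-to-right direction I would assume $f_i$ is $K_{M,M'}$-invariant and establish run-invariance in $M$, the argument for $M'$ being symmetric. Given $\fincomputation,\fincomputation'\in\fincomputations_M$ with $\finrun_M(\fincomputation)=\finrun_M(\fincomputation')$, i.e.\ $\fincomputation\equiv_M\fincomputation'$, the trivial chain $\fincomputation_0=\fincomputation$, $\fincomputation_1=\fincomputation'$ with $m=1$ witnesses $\fincomputation\equiv_{K_{M,M'}}\fincomputation'$ (taking the single link to be an $\equiv_M$-link). The assumed $K_{M,M'}$-invariance then immediately yields $f_i(\fincomputation)=f_i(\fincomputation')$, and symmetrically for $\equiv_{M'}$.

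For the converse I would assume $f_i$ is run-invariant in both $M$ and $M'$ and take arbitrary $\fincomputation,\fincomputation'$ with $\fincomputation\equiv_{K_{M,M'}}\fincomputation'$. By definition there is a chain $\fincomputation=\fincomputation_0,\dots,\fincomputation_m=\fincomputation'$ in which, for each $0\le j<m$, either $\fincomputation_j\equiv_M\fincomputation_{j+1}$ or $\fincomputation_j\equiv_{M'}\fincomputation_{j+1}$. At each link, whichever of the two relations holds, the corresponding run-invariance hypothesis gives $f_i(\fincomputation_j)=f_i(\fincomputation_{j+1})$; chaining these equalities by transitivity of equality along the whole chain yields $f_i(\fincomputation)=f_i(\fincomputation')$, so $f_i$ is $K_{M,M'}$-invariant.

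I do not expect a genuine obstacle here: the entire content is the standard interplay between a join of equivalence relations and functions that are constant on their classes. The only points requiring a word of care are the coincidence of the two domains, supplied by Lemma~\ref{lemma:computation_identity}, and the fact that $f_i$ is a valid strategy in both concurrent game structures, supplied by Lemma~\ref{lemma:strategy_identity}; once these are in place, the two directions follow by the chain witnessing $\equiv_{K_{M,M'}}$ and by reflexivity of $\equiv_M$ and $\equiv_{M'}$, respectively.
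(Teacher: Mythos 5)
Your proposal is correct and follows essentially the same route as the paper's own proof: the ``only if'' direction observes that an $\equiv_M$ (or $\equiv_{M'}$) link is a one-step witness of $\equiv_{K_{M,M'}}$, and the ``if'' direction chains the run-invariance equalities $f_i(\fincomputation_j)=f_i(\fincomputation_{j+1})$ along the intermediate computations. Your additional remarks about the common domain and the validity of $f_i$ as a strategy in both structures are sound framing but not needed beyond what the paper already assumes.
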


\begin{proof}
	For the ``only if''-direction, assume that $f_i$ is $K_{M,M'}$-invariant and consider arbitrary $\fincomputation,\fincomputation'\in\fincomputations_M$ such that $\finrun_M(\fincomputation)=\finrun_M(\fincomputation')$, that is,
	$\fincomputation\equiv_M\fincomputation'$. By $K_{M,M'}$-invariance of $f_i$ then immediately
	$f_i(\fincomputation)=f_i(\fincomputation')$. Accordingly,~$f_i$ is run-invariant in~$M$. The argument for~$f_i$ being run-invariant in $M'$ is analogous.

	For the ``if''-direction, assume that~$f_i$ is run-invariant in both~$M$ and~$M'$, and consider arbitrary $\fincomputation,\fincomputation'\in\fincomputations$ such that
	$\fincomputation\equiv_K\fincomputation'$. Then, we may assume that there are $\fincomputation_0,\dots,\fincomputation_m$ such that
	\[
		\fincomputation=
		\fincomputation_0 \equiv_M
		\fincomputation_1 \equiv_{M'}
		\fincomputation_2 \equiv_M
		\dots
		\equiv_{M'}
		\fincomputation_m
		=
		\fincomputation'\text.
	\]
Having assumed that~$f_i$ is run-invariant in both~$M$ and~$M'$, then also
	\[
		f_i(\fincomputation)=
		f_i(\fincomputation_0) =
		f_i(\fincomputation_1) =
		f_i(\fincomputation_2) =
		\dots
		=
		f_i(\fincomputation_m)
		=
		f_i(\fincomputation')\text,
	\]
from which follows that~$f_i$ is $K_{M,M'}$-invariant.
\end{proof}

As we have argued in Section~\ref{tab:summary}, for the question whether the Nash equilibria across two bisimilar CGS-games are preserved to make sense, the players' preferences in the two games have to be congruent, that is, they have to be the same and of the same type in both games.
In this section we deal with run-based preferences. We have already seen in Section~\ref{section:congruence}, that identity of a player's computation-based preferences in two CGSs does not guarantee their being congruent as run-based preferences. By imposing an additional closedness restriction, however, we find that a computation-based goal set can be guaranteed to be run-based in two CGS-games based on bisimilar concurrent game structures~$M$ and~$M'$.
Accordingly, call a goal set $\goalset_i$ \emph{$K_{M,M'}$-closed} if for all computations~$\computation=d_0,d_1,d_2,\dots$ and~$\computation'=d'_0,d'_1,d_2',\dots$, we have that $\computation\in\goalset_i$
implies $\computation'\in\goalset_i$ whenever
$d_0,\dots,d_k\equiv_{K_{M,M'}} d_0',\dots,d_k'$ for all $k\ge 0$.

\begin{lem}%
	\label{lemma:K-closed_runbased_M_and_Mprime}
	Let $G=(M,\goalset_1,\dots,\goalset_n)$ and $G'=(M',\goalset_1,\dots,\goalset_n)$ be CGS-games on bisimilar concurrent game structures $M$ and $M'$, and~$\goalset_i$ is $K_{M,M'}$-closed for some player~$i$. Then,~$\goalset_i$ is run-based in both~$M$ and~$M'$.
\end{lem}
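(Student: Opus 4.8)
The plan is to unwind the definitions and reduce everything to a single observation: both $\equiv_M$ and $\equiv_{M'}$ are contained in the congruence $\equiv_{K_{M,M'}}$. Indeed, by the very definition of finite congruence, a single step $\fincomputation\equiv_M\fincomputation'$ (take $m=1$ with intermediate computations $\fincomputation_0=\fincomputation$ and $\fincomputation_1=\fincomputation'$, so that $\fincomputation_0\equiv_M\fincomputation_1$) already witnesses $\fincomputation\equiv_{K_{M,M'}}\fincomputation'$, and symmetrically for $\equiv_{M'}$. This containment is precisely what lets run-equality in either structure trigger the $K_{M,M'}$-closedness hypothesis.

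To show that $\goalset_i$ is run-based in~$M$, I would take arbitrary computations $\computation=d_0,d_1,d_2,\dots$ and $\computation'=d'_0,d'_1,d'_2,\dots$ in $\computations_M$ with $\run_M(\computation)=\run_M(\computation')$, and argue that $\computation\in\goalset_i$ iff $\computation'\in\goalset_i$. Since the two infinite runs coincide, so do all of their finite prefixes; that is, $\finrun_M(d_0,\dots,d_k)=\finrun_M(d'_0,\dots,d'_k)$ for every $k\ge 0$, which is exactly the statement $d_0,\dots,d_k\equiv_M d'_0,\dots,d'_k$. By the containment noted above, $d_0,\dots,d_k\equiv_{K_{M,M'}}d'_0,\dots,d'_k$ for every $k\ge 0$. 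The $K_{M,M'}$-closedness of $\goalset_i$ then yields that $\computation\in\goalset_i$ implies $\computation'\in\goalset_i$; swapping the roles of $\computation$ and $\computation'$ gives the converse, so $\computation\in\goalset_i$ iff $\computation'\in\goalset_i$, as required.

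For run-based in~$M'$ the argument is symmetric. First I would recall that $\computations_M=\computations_{M'}$ by Lemma~\ref{lemma:computation_identity}, so that $\goalset_i$ is a goal set over the same computations in both games and the notion of being run-based in~$M'$ is meaningful. Then, taking $\computation,\computation'\in\computations_{M'}$ with $\run_{M'}(\computation)=\run_{M'}(\computation')$, I would deduce $d_0,\dots,d_k\equiv_{M'}d'_0,\dots,d'_k$ for all $k\ge 0$ exactly as before, and again invoke $\equiv_{M'}\subseteq\equiv_{K_{M,M'}}$ together with $K_{M,M'}$-closedness. There is no genuine obstacle here: the lemma is essentially a definition chase. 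The only point requiring any care is to convert the equality of infinite runs into prefix-wise $\equiv_M$ (respectively $\equiv_{M'}$) congruence at \emph{every} finite prefix before applying $K_{M,M'}$-closedness, since the latter quantifies over all $k\ge 0$ simultaneously.
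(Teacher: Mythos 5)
Your proof is correct and follows essentially the same route as the paper's: convert equality of infinite runs into prefix-wise $\equiv_M$ (resp.\ $\equiv_{M'}$) congruence, note that this implies $\equiv_{K_{M,M'}}$ congruence at every prefix, and then invoke $K_{M,M'}$-closedness. In fact your write-up is slightly more complete than the paper's, which leaves the containment $\equiv_M\subseteq\equiv_{K_{M,M'}}$ implicit and only spells out the case for~$M$, whereas you also treat run-basedness in~$M'$ explicitly and justify its well-definedness via $\computations_M=\computations_{M'}$.
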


\begin{proof}
	Assume that $\goalset_i$ is $K_{M,M'}$-invariant and consider arbitrary infinite computations $\computation=d_0,d_1,d_2,\dots$ and $\computation'=d'_0,d'_1,d'_2,\dots$ such that $\run_M(\computation)=\run_M(\computation')$.
	Also assume that $\computation\in\goalset_i$.
	Let $\run_M(\computation)=s_0,s_1,s_2,\dots$ and $\run_M(\computation')=t_0,t_1,t_2,\dots$.
	Then, for every $k\ge 0$, we also have
	that
	\[
	\finrun_M(d_0,\dots,d_k)=s_0,\dots,s_{k+1}=t_0,\dots,t_{k+1}=\finrun_M(d'_0,\dots,d_k')\text.
	\]
	Having assumed $\goalset_i$ to be $K_{M,M'}$-invariant, it follows that $\computation'\in\goalset_i$, as desired.
\end{proof}


For the remainder, let $G=(M,\Gamma_1,\Gamma_2)$ and $G'=(M',\Gamma_1,\Gamma_2)$ be two two-player CGS-games  based on bisimilar concurrent game structures~$M$ and~$M'$ such that both~$\goalset_1$ and~$\goalset_2$ are $K_{M,M'}$-closed (and thus, in particular, run-based). We prove that if there is a run-invariant equilibrium in~$M$, then there is also a~$K$-invariant profile that is a run-invariant equilibrium in~$M$.
We construct for
a given strategy profile~$f=(f_1,f_2)$ that is run-invariant in $M$ a $K_{M,M'}$-invariant profile $f^K=(f^K_1,f^K_2)$ such that
	\begin{enumerate}[label=$(\roman*)$]
		\item
		$\computation_M(f_1,f_2)=\computation_M(f^K_1,f^K_2)$,
		\item if $f_1$ is a winning strategy against player~$2$, then so is $f^K_1$,
		\item if $f_2$ is a winning strategy against player~$1$, then so is $f^K_2$.
	\end{enumerate}
On basis of Lemma~\ref{lemma:NE_winningstrats} we may then conclude that~$f^K$ corresponds to a run-invariant equilibrium in~$M$. Having defined~$f^K$ formally as a computation-based profile, by Theorem~\ref{theorem:NE_computations} it follows that~$f^K$ is also a computation-based equilibrium in~$G'$. Finally, because~$f^K$ is $K_{M,M'}$-invariant, we know that it furthermore corresponds to a \emph{run-invariant} equilibrium in both~$M$ and~$M'$.

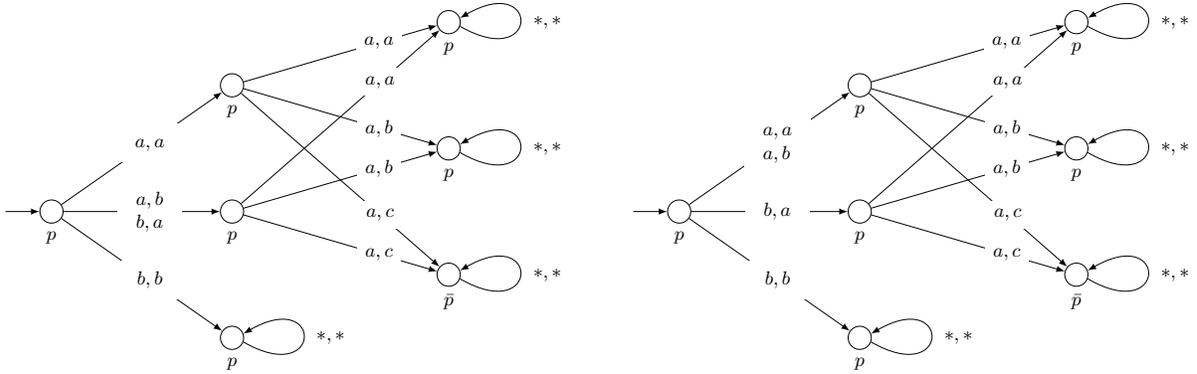
\begin{figure}[t]
\centering
 	\scalebox{.8}{
 	  \begin{tikzpicture}[scale=1]

	  \useasboundingbox[draw=white] (1,-3) rectangle (9.5,4);
 	  \tikzstyle{every ellipse node}=[draw,inner xsep=3.5em,inner ysep=1.2em,fill=black!15!white,draw=black!15!white]
 	  \tikzstyle{every circle node}=[fill=white,draw,minimum size=1em,inner sep=0pt]

	  \draw(0,-3)	node(){};
	  \draw(7.8,3.25)	node(){};

 	  \draw(0,0) node(0){}
 	  			++ 	( 0:.9)	  node[label=-90:{\footnotesize$p$},circle](v0){          $ $}
					+  (.5*6,.7*3)  node[label=-90:{\footnotesize$p$},circle](v1){     $ $}
					+  (.5*6,.7*0) node[label=-90:{\footnotesize$p$},circle](v3){      $ $}
					+  (.5*6,.7*-3) node[label=-90:{\footnotesize$p$},circle](v4){$ $}
					++ (.5*6,.35*3)node(vv0){      $ $}
	 					+  (.6*6,.7*3)  node[label=-90:{\footnotesize$p$},circle](vv1){     $ $}
						+  (.6*6,.7*0) node[label=-90:{\footnotesize$p$},circle](vv3){      $ $}
						+  (.6*6,.7*-3) node[label=-90:{\footnotesize$\bar p$},circle](vv4){$ $}
 				;

 	\draw[-latex] (v0) --node[pos=.55,fill=white](){\footnotesize$\begin{array}{l}a,a\end{array}$} (v1);
 	\draw[-latex] (v0) --node[pos=.55,fill=white](){\footnotesize$\begin{array}{l}a,b\\b,a\end{array}$} (v3);
 	\draw[-latex] (v0) --node[pos=.55,fill=white](){\footnotesize$\begin{array}{l}b,b\end{array}$}  (v4);

 	\draw[-latex] (v1) --node[pos=.7,fill=white](){\footnotesize$a,a$}      (vv1);
 	\draw[-latex] (v1) --node[pos=.7,fill=white](){\footnotesize$a,b$} (vv3);
 	\draw[-latex] (v1) --node[pos=.7,fill=white](){\footnotesize$a,c$}      (vv4);
 	\draw[-latex] (v3) --node[pos=.7,fill=white](){\footnotesize$a,a$}      (vv1);
 	\draw[-latex] (v3) --node[pos=.7,fill=white](){\footnotesize$a,b$} (vv3);
 	\draw[-latex] (v3) --node[pos=.7,fill=white](){\footnotesize$a,c$}      (vv4);

 	\draw[-latex] (0) -- (v0);

 	\draw[-latex] (vv1.70-90) .. controls +(60-90:4em) and +(120-90:4em) .. node[pos=.5,fill=white,right,xshift=.5ex](){\footnotesize$\ast,\ast$} (vv1.110-90);
 	\draw[-latex] (vv3.70-90) .. controls +(60-90:4em) and +(120-90:4em) .. node[pos=.5,fill=white,right,xshift=.5ex](){\footnotesize$\ast,\ast$} (vv3.110-90);
 	\draw[-latex] (vv4.70-90) .. controls +(60-90:4em) and +(120-90:4em) .. node[pos=.5,fill=white,right,xshift=.5ex](){\footnotesize$\ast,\ast$} (vv4.110-90);
 	\draw[-latex] (v4.70-90) .. controls +(60-90:4em) and +(120-90:4em) .. node[pos=.5,fill=white,right,xshift=.5ex](){\footnotesize$\ast,\ast$} (v4.110-90);
 	\end{tikzpicture}
 	}
 	\scalebox{.8}{
 	  \begin{tikzpicture}[scale=1]
	  \useasboundingbox[draw=white] (-.5,-3) rectangle (8,4);

 	  \tikzstyle{every ellipse node}=[draw,inner xsep=3.5em,inner ysep=1.2em,fill=black!15!white,draw=black!15!white]
 	  \tikzstyle{every circle node}=[fill=white,draw,minimum size=1em,inner sep=0pt]

	  \draw(0,-3)	node(){};
	  \draw(7.8,3.25)	node(){};

 	  \draw(0,0) node(0){}
 	  			++ 	( 0:.9)	  node[label=-90:{\footnotesize$p$},circle](v0){          $ $}
					+  (.5*6,.7*3)  node[label=-90:{\footnotesize$p$},circle](v1){     $ $}
					+  (.5*6,.7*0) node[label=-90:{\footnotesize$p$},circle](v3){      $ $}
					+  (.5*6,.7*-3) node[label=-90:{\footnotesize$p$},circle](v4){$ $}
					++ (.5*6,.35*3)node(vv0){      $ $}
	 					+  (.6*6,.7*3)  node[label=-90:{\footnotesize$p$},circle](vv1){     $ $}
						+  (.6*6,.7*0) node[label=-90:{\footnotesize$p$},circle](vv3){      $ $}
						+  (.6*6,.7*-3) node[label=-90:{\footnotesize$\bar p$},circle](vv4){$ $}
 				;

 	\draw[-latex] (v0) --node[pos=.55,fill=white](){\footnotesize$\begin{array}{l}a,a\\a,b\end{array}$} (v1);
 	\draw[-latex] (v0) --node[pos=.55,fill=white](){\footnotesize$\begin{array}{l}b,a\end{array}$} (v3);
 	\draw[-latex] (v0) --node[pos=.55,fill=white](){\footnotesize$\begin{array}{l}b,b\end{array}$}  (v4);

 	\draw[-latex] (v1) --node[pos=.7,fill=white](){\footnotesize$a,a$}      (vv1);
 	\draw[-latex] (v1) --node[pos=.7,fill=white](){\footnotesize$a,b$} (vv3);
 	\draw[-latex] (v1) --node[pos=.7,fill=white](){\footnotesize$a,c$}      (vv4);
 	\draw[-latex] (v3) --node[pos=.7,fill=white](){\footnotesize$a,a$}      (vv1);
 	\draw[-latex] (v3) --node[pos=.7,fill=white](){\footnotesize$a,b$} (vv3);
 	\draw[-latex] (v3) --node[pos=.7,fill=white](){\footnotesize$a,c$}      (vv4);

 	\draw[-latex] (0) -- (v0);

 	\draw[-latex] (vv1.70-90) .. controls +(60-90:4em) and +(120-90:4em) .. node[pos=.5,fill=white,right,xshift=.5ex](){\footnotesize$\ast,\ast$} (vv1.110-90);
 	\draw[-latex] (vv3.70-90) .. controls +(60-90:4em) and +(120-90:4em) .. node[pos=.5,fill=white,right,xshift=.5ex](){\footnotesize$\ast,\ast$} (vv3.110-90);
 	\draw[-latex] (vv4.70-90) .. controls +(60-90:4em) and +(120-90:4em) .. node[pos=.5,fill=white,right,xshift=.5ex](){\footnotesize$\ast,\ast$} (vv4.110-90);
 	\draw[-latex] (v4.70-90) .. controls +(60-90:4em) and +(120-90:4em) .. node[pos=.5,fill=white,right,xshift=.5ex](){\footnotesize$\ast,\ast$} (v4.110-90);
 	\end{tikzpicture}
 	}

 	\caption{\bphtext{Two games $G_6$ (left) and $G_7$ (right) based on~$M_6$ and~$M_7$, respectively.}}%
 	\label{fig:run_invariant_construction}
\end{figure}

For an example consider the games~$G_6$ and~$G_7$ depicted in Figure~\ref{fig:run_invariant_construction}.
The underlying concurrent game structures $M_6$ and $M_7$ only differ with respect to direction~$(a,b)$ at the
initial state and their bisimilarity is easily established.
Assume that the goal of player~$1$ is to see~$p$ false at some point in the future, that is,
\[
	\goalset_1=\set{d_0,d_1,d_2,\dots\in\computations\midd \text{$d_0\in\set{(a,a),(a,b),(b,a)}$ and $d_1=(a,c)$} }\text,
\]
and that player~$2$ tries to prevent this, that is, $\Gamma_2=\computations\setminus\Gamma_1$.
Observe that defined thus, the players' preferences are run-based in both~$G_6$ and~$G_7$.
Concentrating on~$G_6$ first, define the run-invariant strategy profile~$f=(f_1,f_2)$, such that,
\begin{align*}
	f_1(\epsilon)	&	=	b	&
	f_1(a,a)		&	=	a	&
	f_1(a,b)		&	=	a	&
	f_1(b,a) 		&	=	a\\
	f_2(\epsilon)	&	=	b	&
	f_2(a,a)		&	=	a	&
	f_2(a,b)		&	=	b	&
	f_2(b,a) 		&	=	b
\end{align*}
We find that~$f=(f_1,f_2)$ is a run-invariant equilibrium in~$G_6$.
Observe, however, that $f=(f_1,f_2)$ is \emph{not} run-invariant in~$G_7$, as $f_2(a,a)\neq f_2(a,b)$ even though
$\finrun_{M_7}(a,a)=\finrun_{M_7}(a,b)$.
Accordingly, $f=(f_1,f_2)$ fails as a $K_{M_6,M_7}$-invariant equilibrium.
Let $g_2$ be defined such that,
\begin{align*}
	g_2(\epsilon)	&	=	b	&
	g_2(a,a)		&	=	b	&
	g_2(a,b)		&	=	b	&
	g_2(b,a) 		&	=	b
\end{align*}
Then, $(f_1,g_2)$ is readily seen to be a $K_{M_6,M_7}$-invariant equilibrium in $G_6$.
We will find that under the conditions specified, this is no coincidence and that $K$-invariant equilibria can be
constructed from run-invariant equilibria in a systematic fashion.

We first define the strategy profile~$f^K=(f_1^K,f^K_2)$.
The underlying idea is to carefully choose for each finite computation $d_0,\dots,d_k$
computations
$\fincomputation^{d_0,\dots,d_k}_{f_1}$
and
$\fincomputation^{d_0,\dots,d_k}_{f_2}$
from the equivalence class of $d_0,\dots,d_k$  under $\equiv_{K_{M,M'}}$, and set
$f_1^K(d_0,\dots,d_k)=f_1(\fincomputation^{d_0,\dots,d_k}_{f_1})$
and
$f_2^K(d_0,\dots,d_k)=f_2(\fincomputation^{d_0,\dots,d_k}_{f_2})$, respectively.
This guarantees that $f^K=(f^K_1,f^K_2)$ is $K_{M,M'}$-invariant.
Here we give priority to prefixes of $\computation(f_1,f_2)$, that is if
$\computation(f_1,f_2)=d'_0,d'_1,d'_2,\dots$ and $d'_0,\dots,d'_k\equiv_{K_{M,M'}}d_0,\dots,d_k$,
then $\fincomputation_{f_1}^{d_0,\dots,d_k}=\fincomputation_{f_2}^{d_0,\dots,d_k}=d'_0\dots,d_k'$.
In a similar way, we also choose $\fincomputation^{d_0,\dots,d_k}_{f_1}$
and
$\fincomputation^{d_0,\dots,d_k}_{f_2}$
so as to preserve the two players' punishment strategies.
This guarantees that $f^K=(f^K_1,f^K_2)$ is an equilibrium, as it inherits this property from~$f=(f_1,f_2)$.\footnote{The precise definition is rather involved, and the reader may want to skip to page~\pageref{theorem:runbased_runbased_1}, where the main theorem of the section is stated and proven.}
To do so we assume a well-ordering of the action sets~$\actions_1$ and $\actions_2$ of players~$1$ and~$2$, respectively. 		Then, for all finite computations $\fincomputation=d_0,\dots,d_k$ in~$\fincomputations_M$, we define inductively and simultaneously computations $\fincomputation_{f_1}^{d_0,\dots,d_k}$
and
		$\fincomputation_{f_2}^{d_0,\dots,d_k}$ as follows.
For $\fincomputation=\epsilon$ we have
$
		\fincomputation^\epsilon_{f_1}							=	\fincomputation^\epsilon_{f_2}	=\epsilon,
$
		and, for $\fincomputation=d_0,\dots,d_{k+1}$,
		\begin{align*}
		\fincomputation_{f_1}^{d_0,\dots,d_k,d_{k+1}}		& 	=	d'_0,\dots,d'_k,(x_1,x_2)
		\\
		\fincomputation_{f_2}^{d_0,\dots,d_k,d_{k+1}}		&	=	d''_0,\dots,d''_k,(y_1,y_2),
		\end{align*}
		where
		$\fincomputation_{f_1}^{d_0,\dots,d_k}=d'_0,\dots,d'_k$,
		$\fincomputation_{f_2}^{d_0,\dots,d_k}=d''_0,\dots,d''_k$, and
		\begin{enumerate}[label=$(\roman*)$]
		\renewcommand{\itemsep}{2ex}
			\item[$(i.1)$] $x_1=f_1(d'_0,\dots,d'_k)$ and $x_2=f_2(d'_0,\dots,d'_k)$, if
			\[d'_0,\dots,d'_k,(f_1(d'_0,\dots,d'_k),f_2(d'_0,\dots,d'_k))\equiv_{K}d_0,\dots,d_k,d_{k+1}\text,
			\]
			\item[$(i.2)$] $y_1=f_1(d''_0,\dots,d''_k)$ and $y_2=f_2(d''_0,\dots,d''_k)$, if
			\[d''_0,\dots,d''_k,(f_1(d''_0,\dots,d''_k),f_2(d''_0,\dots,d''_k))\equiv_{K}d_0,\dots,d_k,d_{k+1}\text,
			\]
			\item[$(ii.1)$] $x_1=f_1({d'_0,\dots,d'_k})$ and $x_2$ is the least action available to player~$2$ such that
			\[
			d'_0,\dots,d'_k,(x_1,x_2)\equiv_{K}d_0,\dots,d_k,d_{k+1},
			\]
			if such an action~$x_2$ exists and $(i.1)$ does not apply,
			\item[$(ii.2)$] $y_2=f_2({d''_0,\dots,d''_k})$ and $y_1$ is the least action available to player~$1$ such that
			\[
			d''_0,\dots,d''_k,(y_1,y_2)\equiv_{K}d_0,\dots,d_k,d_{k+1},
			\]
			if such an action~$y_1$ exists and $(i.2)$ does not apply,
			\item[$(iii.1)$] $x_1$ and $x_2$ are the least actions available to players~$1$ and~$2$, respectively,  such that $d'_0,\dots,d'_k,(x_1,x_2)\equiv_{K}d_0,\dots,d_k,d_{k+1}$, if neither $(i.1)$ nor $(ii.1)$ apply,
			\item[$(iii.2)$] $y_1$ and $y_2$ are the least actions available to players~$1$ and~$2$, respectively,  such that $d''_0,\dots,d''_k,(x_1,x_2)\equiv_{K}d_0,\dots,d_k,d_{k+1}$, if neither $(i.2)$ nor $(ii.2)$ apply,
		\end{enumerate}
Observe that the actions~$x_1$ and~$x_2$ in the definition above always exist. The reason for this is that, if $d_0,\dots,d_k\equiv_{K_{M,M'}}d'_0,\dots,d'_k$,  by Lemma~\ref{lemma:finite_congruence_statewise_bisim} also $\finrun_M(d_0,\dots,d_k)\statewisebisim\finrun_M(d'_0,\dots,d'_k)$.
Let $\finrun(d_0,\dots,d_k)=s_0,\dots,s_k$,
$\finrun(d'_0,\dots,d'_k)=s'_0,\dots,s'_k$, and
$d'_{k+1}=(a_1,a_2)$.
Then, obviously, $a_1$ and $a_2$ are available to players~$1$ and~$2$ respectively at state~$s'_k$. As~$s_k\bisim s'_k$, that should also be the case at~$s_k$. A similar argument applies to the case where $d_0,\dots,d_k\equiv_{K_{M,M'}}d''_0,\dots,d''_k$.

We now define strategies~$f^K_1$ and~$f^K_2$ such that, for all finite computations~$\fincomputation=d_0,\dots,d_k$ in~$\fincomputations_M$,
	\begin{align*}
		f^K_1(d_0,\dots,d_k)	&= f_1(\fincomputation_{f_1}^{d_0,\dots,d_k})
		&\text{and}&&
		f^K_2(d_0,\dots,d_k)	&=f_2(\fincomputation_{f_2}^{d_0,\dots,d_k})\text.
	\end{align*}
To illustrate how $f^K=(f_1^K,f_2^K)$ is constructed from $f=(f_1,f_2)$, recall the games $M_6$ and $G_7$
in Figure~\ref{fig:run_invariant_construction}, and assume that actions for both players are ordered alphabetically.
For the empty computation~$\epsilon$, we have
\begin{align*}
	f^K_1(\epsilon) &	=	f_1(\fincomputation_{f_1}^\epsilon) =	f_1(\epsilon) = b	& \text{and}&&
	f^K_2(\epsilon) &	=	f_2(\fincomputation_{f_2}^\epsilon) =	f_2(\epsilon) = b
\end{align*}
Now consider the finite computations of length one.
For computation $(b,b)$, we find that $(b,b)\equiv_{K_{M_6,M_7}}(f_1(\fincomputation^\epsilon_{f_1}),f_2(\fincomputation^\epsilon_{f_1}))$.
Now case $(i.1)$ applies and we obtain $\fincomputation_{f_1}^{(b,b)}=(b,b)$.
Similarly, $\fincomputation_{f_2}^{(b,b)}=(b,b)$.
For the other one-step computations~$(x_1,x_2)\in\set{(a,a),(a,b),(b,a)}$, we find that
$(x_1,x_2)\not\equiv_{K_{M_6,M_7}}(f_1(\epsilon),f_2(\epsilon))$.
First consider $(a,a)$, and to determine $\fincomputation^{(a,a)}_{f_1}$, first
observe that
$\fincomputation^{\epsilon}_{f_1}=d'_0,\dots,d'_k=\epsilon$,
$d_0,\dots,d_k=\epsilon$, and $d_{k+1}=(a,a)$.
Now, for $x_1=f_1(\epsilon)=b$ we have
that
\[
	(x_1,a)
	=				(b,a)
	\equiv_{M_6} 	(a,b)
	\equiv_{M_7} 	(a,a)\text.
\]
Accordingly, $(x_1,a)\equiv_{M_6,M_7}(a,a)=d_{k+1}$, and case $(ii.1)$ applies.
With~$a$ being moreover player~$2$'s alphabetically least action, we may therefore conclude that
$
	\fincomputation_{f_1}^{(a,a)}	=
	(b,a)\text.
$
In a similar way we obtain
$
\fincomputation_{f_1}^{(a,b)}	=
	\fincomputation_{f_1}^{(b,a)} 	=
	(b,a)
$
as well as
$
	\fincomputation_{f_2}^{(a,a)}	=
	\fincomputation_{f_2}^{(a,b)}	=
	\fincomputation_{f_2}^{(b,a)} 	=
	(a,b)\text.
$
Hence,
\[
\begin{array}{l@{\;\;=\;\;}l@{\;\;=\;\;}l@{\;\;=\;\;}l@{\quad\quad\quad}l@{\;\;=\;\;}l@{\;\;=\;\;}l@{\;\;=\;\;}l}
	f_1^K(a,a)	   &	 f_1(\fincomputation_{f_1}^{(a,a)})	& f_1(b,a)	&	a	&		f_2^K(a,a)	&		f_2(\fincomputation_{f_2}^{(a,a)})		& f_2(a,b)	&	b	\\[8pt]
	f_1^K(a,b)	   &	 f_1(\fincomputation_{f_1}^{(a,b)})	& f_1(b,a)	&	a	&		f_2^K(a,b)	&		f_2(\fincomputation_{f_2}^{(a,b)})		& f_2(a,b)	&	b	\\[8pt]
	f_1^K(b,a)	   &	 f_1(\fincomputation_{f_1}^{(b,a)})	& f_1(b,a)	&	a	&		f_2^K(b,a)	&		f_2(\fincomputation_{f_2}^{(b,a)})		& f_2(a,b)	&	b	\\[8pt]
	f_1^K(b,b)	   &	 f_1(\fincomputation_{f_1}^{(b,b)})	& f_1(b,b)	&	b	&		f_2^K(b,b)	&		f_2(\fincomputation_{f_2}^{(b,b)})		& f_2(b,b)	&	b
\end{array}
\]
We thus find that $f^K=(f_1^K,f_2^K)$ coincides with the $K_{M_6,M_7}$-invariant equilibrium~$(f_1,g_2)$
that we identified above.

The definition of strategies~$f_1^K$ and $f_2^K$ ensures that $f^K=(f^K_1,f^K_2)$ is ${K}_{M,M'}$-invariant.
\begin{lem}%
	\label{lemma:f^K_Kinvariant}
	Let $\fincomputation=d_0,\dots,d_k$ be s finite computation in $\fincomputations_M$. Then, for $i=1,2$,
	\[
		d_0,\dots,d_k\mathrel{\equiv_{K}} \fincomputation_{f_i}^{d_0,\dots,d_k}\text.
	\]
	Accordingly, $f^K=(f^K_1,f^K_2)$ is ${K}_{M,M'}$-invariant.
\end{lem}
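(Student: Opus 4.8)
The plan is to establish the displayed finite congruence by induction on the length of the computation, and then to read off $K_{M,M'}$-invariance of $f^K$ from the fact that the construction assigns to each finite computation a representative determined entirely by its $\equiv_{K_{M,M'}}$-class. I would fix $i=1$ (the case $i=2$ being symmetric, via the clauses $(i.2)$, $(ii.2)$, $(iii.2)$) and argue by induction on~$k$. For the base case $\fincomputation=\epsilon$ we have $\fincomputation^{\epsilon}_{f_1}=\epsilon$, so $\epsilon\equiv_{K_{M,M'}}\epsilon$ holds by reflexivity. For the inductive step, write $\fincomputation^{d_0,\dots,d_k}_{f_1}=d'_0,\dots,d'_k$ and assume $d_0,\dots,d_k\equiv_{K_{M,M'}}d'_0,\dots,d'_k$. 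By Lemma~\ref{lemma:finite_congruence_statewise_bisim} this induction hypothesis yields $\finrun_M(d_0,\dots,d_k)\statewisebisim\finrun_M(d'_0,\dots,d'_k)$, so the final states are bisimilar; this is exactly the observation made just before the lemma, which guarantees that an action pair $(x_1,x_2)$ with $d'_0,\dots,d'_k,(x_1,x_2)\equiv_{K_{M,M'}}d_0,\dots,d_k,d_{k+1}$ exists, and hence that at least one of the clauses $(i.1)$, $(ii.1)$, $(iii.1)$ applies. Since in each of these clauses the appended pair is chosen \emph{precisely} so that $d'_0,\dots,d'_k,(x_1,x_2)\equiv_{K_{M,M'}}d_0,\dots,d_k,d_{k+1}$, and $\fincomputation^{d_0,\dots,d_{k+1}}_{f_1}=d'_0,\dots,d'_k,(x_1,x_2)$ by definition, the relation for length $k+1$ follows at once.

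For the ``Accordingly'' clause I would first record that $\equiv_{K_{M,M'}}$ is closed under taking prefixes: since $\equiv_M$ and $\equiv_{M'}$ identify computations inducing the same run (and hence of equal length), any finite-congruence chain between $d_0,\dots,d_{k+1}$ and $e_0,\dots,e_{k+1}$ restricts, step by step, to a chain between the prefixes $d_0,\dots,d_k$ and $e_0,\dots,e_k$. I would then strengthen the statement and prove, again by induction on length, that $d_0,\dots,d_k\equiv_{K_{M,M'}}e_0,\dots,e_k$ implies $\fincomputation^{d_0,\dots,d_k}_{f_i}=\fincomputation^{e_0,\dots,e_k}_{f_i}$; that is, the representative is literally constant on each congruence class. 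In the inductive step the common prefix representatives coincide by the induction hypothesis (using prefix-closure), and because $d_0,\dots,d_{k+1}\equiv_{K_{M,M'}}e_0,\dots,e_{k+1}$, transitivity of $\equiv_{K_{M,M'}}$ makes the two selection conditions ``$\cdots\equiv_{K_{M,M'}}d_0,\dots,d_{k+1}$'' and ``$\cdots\equiv_{K_{M,M'}}e_0,\dots,e_{k+1}$'' equivalent, so the same clause fires and the same (least) actions are selected, whence the appended pairs agree. Since $f^K_i(d_0,\dots,d_k)=f_i(\fincomputation^{d_0,\dots,d_k}_{f_i})$ by definition, equality of representatives gives $f^K_i(d_0,\dots,d_k)=f^K_i(e_0,\dots,e_k)$, which is exactly $K_{M,M'}$-invariance of $f^K$.

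The step I expect to be the main obstacle is precisely this last invariance argument. The displayed relation $d_0,\dots,d_k\equiv_{K_{M,M'}}\fincomputation^{d_0,\dots,d_k}_{f_i}$ on its own does \emph{not} yield invariance, because $f_i$ is only run-invariant in~$M$ and not $K_{M,M'}$-invariant, so one cannot pass from $\fincomputation^{d_0,\dots,d_k}_{f_i}\equiv_{K_{M,M'}}\fincomputation^{e_0,\dots,e_k}_{f_i}$ to equality of the corresponding $f_i$-values. The real content is therefore the canonicity of the construction: the fixed priority rule together with the well-orderings of $\actions_1$ and $\actions_2$ must force identical choices on $\equiv_{K_{M,M'}}$-related inputs, and the delicate bookkeeping is to verify that exactly the same clause among $(i.1)$--$(iii.1)$ is triggered for both inputs, so that the ``least action'' selections return the same pair.
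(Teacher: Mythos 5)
Your proof is correct, and it is worth comparing it with the paper's own, because the two differ in where they place the work. For the displayed congruence $d_0,\dots,d_k\equiv_{K}\fincomputation_{f_i}^{d_0,\dots,d_k}$ you argue exactly as the paper does: induction on the length of the computation, with the induction hypothesis feeding (via Lemma~\ref{lemma:finite_congruence_statewise_bisim} and the existence observation stated just before the lemma) into the fact that one of the clauses $(i.1)$--$(iii.1)$ must fire, and with each clause appending a pair chosen precisely so that the congruence holds; the paper merely spells out the three cases where you compress them into one sentence. The real divergence is in the ``Accordingly'' clause. The paper dispatches it with the single sentence ``Strategies $f^K_1$ and $f^K_2$ have been defined so as to be $K$-invariant,'' i.e.\ it treats invariance as immediate from the construction, whereas you correctly identify that this is the nontrivial content: the displayed relation by itself does \emph{not} give invariance, since $f_i$ is only run-invariant in~$M$, so congruence of two representatives does not force equality of their $f_i$-values. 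Your repair --- prefix-closure of $\equiv_{K_{M,M'}}$ (which holds because $\equiv_M$ and $\equiv_{M'}$ relate computations inducing the same run, hence of equal length, and runs of prefixes are prefixes of runs), followed by a second induction showing that $\equiv_{K_{M,M'}}$-related inputs receive \emph{identical} representatives because transitivity and symmetry of $\equiv_{K_{M,M'}}$ make the selection conditions in $(i.1)$--$(iii.1)$ equivalent and the well-orderings then pick the same least actions --- is exactly the canonicity argument the paper leaves implicit. So your proof buys rigor precisely where the paper is terse, at the cost of an extra induction; nothing in your argument fails.
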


\begin{proof}
	Strategies~$f^K_1$ and~$f^K_2$ have been defined so as to be $K$-invariant. The claim then follows by induction on the length of~$\fincomputation$.
Let~$i=1$; the case for $i=2$ is analogous.
If the length of~$\fincomputation$ is~$0$, we have~$\fincomputation=\fincomputation_{f_1}^\epsilon=\epsilon$, and it immediately follows that $\fincomputation\equiv_K\fincomputation_{f_1}^\epsilon$.
For the induction step, let $\fincomputation=d_0,\dots,d_k,d_{k+1}$ and assume that
$d_0,\dots,d_k\mathrel{\equiv_{K}} \fincomputation_{f_1}^{d_0,\dots,d_k}$.
Let $\fincomputation_{f_1}^{d_0,\dots,d_{k+1}}$ be denoted by $d'_0,\dots,d'_{k},d'_{k+1}$ where $d'_{k+1}=(x_1,x_2)$.
There are three possibilities.

First, assume that $d'_0,\dots,d'_k,(f_1(d'_0,\dots,d'_k),f_2(d'_0,\dots,d'_k))\equiv_{K}d_0,\dots,d_k,d_{k+1}$.
Then, clause~$(i.1)$ applies and we have $x_1=f_1(d'_0,\dots,d'_k)$ and $x_2=f_2(d'_0,\dots,d'_k)$.
It then follows that $d'_0,\dots,d_k',d_{k+1}'\equiv_K d_0,\dots,d_k,d_{k+1}$.
Second, assume that
\[d'_0,\dots,d'_k,(f_1(d'_0,\dots,d'_k),y_2)\equiv_{K}d_0,\dots,d_k,d_{k+1}\] for some action~$y_2\in\actions_2$, but
\[d'_0,\dots,d'_k,(f_1(d'_0,\dots,d'_k),f_2(d'_0,\dots,d'_k))\not\equiv_{K}d_0,\dots,d_k,d_{k+1}.\]
Then clause $(ii.1)$ applies and we have that $x_1=f_1(d'_0,\dots,d'_k)$ and $x_2$ is the least action available to player~$2$ such that $d'_0,\dots,d'_k,(f_1(d'_0,\dots,d'_k),y_2)\equiv_{K}d_0,\dots,d_k,d_{k+1}$.
Again, it immediately follows that $d'_0,\dots,d_k',d_{k+1}'\equiv_K d_0,\dots,d_k,d_{k+1}$.

Finally, assume that neither of the above, then clause~$(iii.1)$ applies and we have that~$x_1$ and~$x_2$ are the least actions available to players~$1$ and~$2$  such that $d'_0,\dots,d'_k,(x_1,x_2)\equiv_{K}d_0,\dots,d_{k+1}$. As such actions exist, we conclude that $d'_0,\dots,d_k',d_{k+1}'\equiv_K d_0,\dots,d_k,d_{k+1}$.
\end{proof}

We are now in a position to prove the following crucial lemmas.\footnote{The proofs of these two lemmas extend over a couple of pages and the reader may skip to page~\pageref{page:cont_1}, where the running text continues.}

\begin{lem}\label{lemma:soundness_fK_1}\label{item:soundness_fK_i}
	Let $f=(f_1,f_2)$ be a run-invariant strategy profile
	for concurrent game structure~$M$ and~$f^K$ defined as above. Then,
	$\fincomputation_M(f_1,f_2)=\fincomputation_M(f_1^K,f_2^K)$.
\end{lem}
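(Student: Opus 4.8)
The plan is to prove that both profiles induce the same computation by a single induction on the position $k$, carried out under a strengthened hypothesis that simultaneously controls the realized play of $f^K$ \emph{and} the auxiliary witness computations $\fincomputation_{f_1}^{\cdot}$ and $\fincomputation_{f_2}^{\cdot}$ through which $f^K$ is defined. Write $\computation_M(f_1,f_2)=d'_0,d'_1,d'_2,\dots$ for the target computation and $\computation_M(f^K_1,f^K_2)=d_0,d_1,d_2,\dots$ for the computation we must identify with it. The hypothesis I would carry is, for every $k\ge 0$: (a) $d_0,\dots,d_k=d'_0,\dots,d'_k$, and (b) $\fincomputation_{f_1}^{d_0,\dots,d_k}=\fincomputation_{f_2}^{d_0,\dots,d_k}=d'_0,\dots,d'_k$. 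Part (b) records that along the genuine play of $f^K$ the chosen witnesses never drift from the true prefix of $\computation_M(f_1,f_2)$; this is exactly the ``priority to prefixes'' feature built into the construction, and it is what makes part (a) self-sustaining.

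For the base case $k=0$, I would use $\fincomputation_{f_1}^{\epsilon}=\fincomputation_{f_2}^{\epsilon}=\epsilon$ to get $d_0=(f^K_1(\epsilon),f^K_2(\epsilon))=(f_1(\epsilon),f_2(\epsilon))=d'_0$, establishing (a). For (b), I would compute $\fincomputation_{f_1}^{d_0}$ from the clause scheme with empty parent: clause $(i.1)$ tests whether $(f_1(\epsilon),f_2(\epsilon))\equiv_{K}d_0$, which holds by reflexivity of $\equiv_{K}$ since $d_0=(f_1(\epsilon),f_2(\epsilon))$, so the clause fires and yields $\fincomputation_{f_1}^{d_0}=(f_1(\epsilon),f_2(\epsilon))=d'_0$, and symmetrically for $f_2$ via $(i.2)$.

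The induction step is the crux, and it is driven entirely by clauses $(i.1)$/$(i.2)$ always firing along the realized play. Assuming (a) and (b) at $k$, I would read off $d_{k+1}=(f^K_1(d_0,\dots,d_k),f^K_2(d_0,\dots,d_k))=(f_1(\fincomputation_{f_1}^{d_0,\dots,d_k}),f_2(\fincomputation_{f_2}^{d_0,\dots,d_k}))=(f_1(d'_0,\dots,d'_k),f_2(d'_0,\dots,d'_k))=d'_{k+1}$, the last equality being the definition of the computation induced by the computation-based profile $(f_1,f_2)$; this gives (a) at $k+1$. For (b), I would feed $\fincomputation=d_0,\dots,d_{k+1}=d'_0,\dots,d'_{k+1}$ into the clause scheme with parent $\fincomputation_{f_1}^{d_0,\dots,d_k}=d'_0,\dots,d'_k$; the test of clause $(i.1)$ then asks whether $d'_0,\dots,d'_k,(f_1(d'_0,\dots,d'_k),f_2(d'_0,\dots,d'_k))\equiv_{K}d_0,\dots,d_{k+1}$, and both sides equal $d'_0,\dots,d'_{k+1}$, so reflexivity forces the clause to apply and delivers $\fincomputation_{f_1}^{d_0,\dots,d_{k+1}}=d'_0,\dots,d'_{k+1}$, with the analogous argument via $(i.2)$ for $f_2$. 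This closes the induction, whence $d_k=d'_k$ for all $k$ and the two computations coincide.

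The main obstacle is not any deep fact but the bookkeeping: because $f^K_i$ is defined indirectly through the witness computations rather than by a direct rule on histories, one cannot reason about the play of $f^K$ without simultaneously pinning down the witnesses, which is precisely why I would strengthen the induction hypothesis with clause (b) instead of attempting (a) in isolation. The one point to verify with care is that clause $(i.1)$ genuinely takes precedence over $(ii.1)$ and $(iii.1)$ whenever its test succeeds, which it does by construction, together with reflexivity of $\equiv_{K}$. The existence of the witnesses selected by the clauses is already guaranteed by Lemma~\ref{lemma:f^K_Kinvariant}, so no separate well-definedness check is needed here, and run-invariance of $f$ plays no role in this particular identity.
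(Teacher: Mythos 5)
Your proposal is correct and matches the paper's own proof in essence: the paper likewise proves, by a single induction on $k$, the four-way identity $d_k = d^K_k = d'_k = d''_k$ (the plays of $(f_1,f_2)$ and $(f^K_1,f^K_2)$ together with the two witness computations), which is exactly your strengthened hypothesis (a)$+$(b), with both the base case and the step driven by clause $(i.1)$ firing by reflexivity of $\equiv_K$. Your closing remarks (priority of $(i.1)$, run-invariance not being needed for this identity) are also consistent with the paper's argument.
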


\begin{proof}
	Let us use the following notations:
$\fincomputation_M(f_1,f_2)					=	d_0,d_1,\dots$;
$\fincomputation(f^K_1, f^K_2)				=	d^K_0,d^K_1,\dots$;
$\fincomputation_{f_1}^{d_0,\dots,d_k}		=	d_0',d_1',\dots$;
$\fincomputation_{f_2}^{d_0,\dots,d_k}		=	d_0'',d''_1,\dots$;
		where, for every $k\ge 0$,
		\begin{align*}
			d_k	&	= (a_1^k,a_2^k),
			&
			d^K_k	&	= (b_1^k,b_2^k),
			&
			d'_k	&	= (x^k_1,x_2^k),
			&
			d''_k	&	=	(y^k_1,y^k_2)\text.
		\end{align*}
		It then suffices to prove by induction on~$k$ that for every $k\ge0$, we have
		\[
			d_k = d^K_k = d'_k =d''_k\text.
		\]
		For $k=0$, let $d_0=(a_1^0,a_2^0)$ and $d^K_0=(b_1,b_2)$. Then,
		\[
			a_1^0
			=	f_1(\epsilon)
			=	f_1(\fincomputation^\epsilon_{f_1})
			=	f_1^K(\epsilon)
			=   b_1^0
			\text.
		\]
		In a similar way we find that $a_2^0=b_2^0$, and hence $d_0=d_0^K$.
		Because both $f_1(\epsilon)=a_1^0$ and $f_2(\epsilon)=a_2^0$, we also have $d'_0=d_0$, and, \emph{a fortiori}, $d'_0\equiv_K d_0$.
		Hence, clause $(i.1)$ applies and therefore $d'_0=(f_1(\epsilon),f_2(\epsilon))=(a_1^0,a_2^0)=d_0$.
		In a similar way we can establish that $d''_0=d_0$.

		For the induction step, we may assume that
		\[
			d_0,\dots,d_k
			\mathwordbox{=}{===}
			d^K_0,\dots,d^K_k
			\mathwordbox{=}{===}
			d'_0,\dots,d'_k
			\mathwordbox{=}{===}
			d''_0,\dots,d''_k\text.
			\tag{\emph{i.h.}}
		\]
		Then,
		\[
			a_1^{k+1}
			=
			f_1(d_0,\dots,d_k)
			=_{i.h.}
			f_1(d'_0,\dots,d'_k)
			=
			f_1^K(d_0,\dots,d_k)
			=_{i.h.}
			f_1^K(d'_0,\dots,d'_k)
			=
			b_1^{k+1}\text.
		\]
		Observe that the third equality holds because
			$d'_0,\dots,d'_k
			=
			\fincomputation_{f_1}^{d_0,\dots,d_k}
			$.
		For player~$2$, the following equalities hold:
		\[
			a_2^{k+1}
			=
			f_2(d_0,\dots,d_k)
			=_{i.h.}
			f_2(d''_0,\dots,d''_k)
			=
			f_2^K(d_0,\dots,d_k)
			=_{i.h.}
			f_2^K(d'_0,\dots,d'_k)
			=
			b_2^{k+1}\text.
		\]
		Now the third equality holds since $d''_0,\dots,d''_k
			=
			\fincomputation_{f_2}^{d_0,\dots,d_k}
			$.

		Finally, by the induction hypothesis,
			$d'_0,\dots,d'_k=d_0,\dots,d_k$.
		From the above, we already had
		$
			f_1(d'_0,\dots,d'_k)
				=
			a_1^{k+1}
			$ \text{and} $
			f_2(d''_0,\dots,d''_k)
				=
			a_2^{k+1}
			$
		Hence, \[(f_1(d'_0,\dots,d'_k),f_2(d''_0,\dots,d''_k))=d^{k+1}.\]
		It follows that,
		$
			d'_0,\dots,d'_k,(f_1(d'_0,\dots,d'_k),f_2(d''_0,\dots,d''_k))=d_0,\dots,d_k,d_{k+1}.
		$
		In particular,
		\[
			d'_0,\dots,d'_k,(f_1(d'_0,\dots,d'_k),f_2(d''_0,\dots,d''_k))\equiv_K d_0,\dots,d_k,d_{k+1}.
		\]
		Therefore, clause $(i.1)$ is applicable, and both
		\begin{align*}
			x^{k+1}_1
			&	=	f_1(d'_0,\dots,d'_k)
				=	a_1^{k+1}
			&\text{and}&&
			x^{k+1}_2
			&	=	f_2(d''_0,\dots,d''_k)
				=	a_2^{k+1}\text,
		\end{align*}
		signifying that $d_{k+1}=d'_{k+1}$. By an analogous reasoning we show that $d_{k+1}=d''_{k+1}$.
\end{proof}

\noindent The next lemma establishes that~$f^K_1$ and~$f^K_2$ are winning run-invariant strategies against player~$2$ and player~$1$, respectively, if the goal sets~$\goalset_1$ and~$\goalset_2$ are to be run-based. Notice that this result requires~$\goalset_1$ and~$\goalset_2$ to be $K_{M,M'}$-closed.
\begin{lem}\label{lemma:soundness_fK_2}
	Let $f=(f_1,f_2)$ be a run-invariant strategy profile in game $G=(M,\goalset_1,\goalset_2)$ with~$\goalset_1$ and~$\goalset_2$ run-based and $K$-closed, and~$f^K$ defined as above. Then,
		\begin{enumerate}
			\renewcommand{\itemsep}{.75ex}
			\item\label{item:soundness_fK_ii} if $f_1$ is a winning (run-invariant) strategy against player~$2$, then so is $f^K_1$,
			\item\label{item:soundness_fK_iii} if $f_2$ is a winning (run-invariant) strategy against player~$1$, then so is $f^K_2$.
		\end{enumerate}
\end{lem}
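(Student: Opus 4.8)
The plan is to prove part~\ref{item:soundness_fK_ii}; part~\ref{item:soundness_fK_iii} then follows by a completely symmetric argument in which the roles of the two players are exchanged and clauses~$(i.2)$--$(iii.2)$ are used in place of $(i.1)$--$(iii.1)$. I would argue by contraposition, so assume that~$f_1$ is winning against player~$2$ and, towards a contradiction, that~$f^K_1$ is \emph{not}: there is then a run-invariant strategy~$g_2$ with $\computation_M(f^K_1,g_2)\in\goalset_2$. Write this computation as $\computation_M(f^K_1,g_2)=e_0,e_1,e_2,\dots$ and introduce the \emph{tracked} computation $\computation^*=c_0,c_1,c_2,\dots$ given by $c_0,\dots,c_k=\fincomputation_{f_1}^{e_0,\dots,e_k}$. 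Because the construction of $\fincomputation_{f_1}^{\cdot}$ extends the previously chosen computation by exactly one direction at each step, these finite computations are nested, so $\computation^*$ is a well-defined element of~$\computations_M$, and by Lemma~\ref{lemma:f^K_Kinvariant} we have $e_0,\dots,e_k\equiv_{K}c_0,\dots,c_k$ for every $k\ge 0$.

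The key preliminary observation is a right-congruence property: \emph{if $\fincomputation\equiv_{K}\fincomputation'$ and $d$ is a direction legal at the last state of $\finrun_M(\fincomputation)$, then $\fincomputation d\equiv_{K}\fincomputation' d$.} I would prove this by appending~$d$ to every computation of a congruence chain $\fincomputation=\fincomputation_0\equiv_M\fincomputation_1\equiv_{M'}\cdots\equiv\fincomputation_m=\fincomputation'$. By Lemmas~\ref{lemma:finite_congruence_statewise_bisim} and~\ref{lemma:computations_induce_bisimilar_runs} all the runs $\finrun_M(\fincomputation_j)$ and $\finrun_{M'}(\fincomputation_j)$ are statewise bisimilar, hence their last states are pairwise bisimilar; since bisimilar states admit exactly the same legal directions (an immediate consequence of the bisimulation clauses),~$d$ is legal wherever needed, and each link $\fincomputation_jd\equiv_M\fincomputation_{j+1}d$ (respectively $\equiv_{M'}$) is preserved because appending a legal direction to two computations with the same induced run yields two computations with the same induced run.

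With this in hand, the heart of the argument is to show that player~$1$ follows~$f_1$ along the tracked computation, i.e.\ that the first component of each $c_{k+1}$ equals $f_1(c_0,\dots,c_k)$; equivalently, that clause~$(iii.1)$ is never invoked during the tracking. Here I would exploit the defining identity $e_{k+1}=(a_1,a_2)$ with $a_1=f^K_1(e_0,\dots,e_k)=f_1(\fincomputation_{f_1}^{e_0,\dots,e_k})=f_1(c_0,\dots,c_k)$. Taking $x_2:=a_2$ and appending the direction $(a_1,a_2)=e_{k+1}$ to $c_0,\dots,c_k$, the right-congruence property together with $c_0,\dots,c_k\equiv_{K}e_0,\dots,e_k$ yields $c_0,\dots,c_k,(f_1(c_0,\dots,c_k),a_2)\equiv_{K}e_0,\dots,e_{k+1}$. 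Thus the existence condition of clause~$(ii.1)$ is always met whenever $(i.1)$ fails, so clause~$(iii.1)$ can never apply, and in either of the remaining clauses the first component of $c_{k+1}$ is $f_1(c_0,\dots,c_k)$.

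It remains to derive the contradiction. Since player~$1$ plays~$f_1$ all along $\computation^*$, I can define a run-invariant strategy~$g^*_2$ for player~$2$ that replays, on each run prefix of $\computation^*$, the second component of the corresponding direction of $\computation^*$, choosing an arbitrary legal action on every other history; as the run prefixes of $\computation^*$ have pairwise distinct lengths this is well-defined and run-invariant, and a routine induction shows $\computation_M(f_1,g^*_2)=\computation^*$. Finally, $\computation^*$ and $\computation_M(f^K_1,g_2)$ are prefixwise $K$-congruent, so $K$-closedness of $\goalset_2$ and $\computation_M(f^K_1,g_2)\in\goalset_2$ give $\computation_M(f_1,g^*_2)=\computation^*\in\goalset_2$, contradicting that~$f_1$ is winning against player~$2$. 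I expect the main obstacle to be the right-congruence property and the attendant verification that clause~$(iii.1)$ never fires: everything hinges on the bookkeeping identity $a_1=f_1(c_0,\dots,c_k)$, which is exactly where the priority structure built into the definition of $f^K_1$ does its work.
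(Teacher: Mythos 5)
Your proof is correct, and it rests on exactly the same substantive facts as the paper's: both arguments transfer the deviation $g_2$ against $f^K_1$ to a deviation against $f_1$ whose induced computation is the sequence of reference computations $\fincomputation_{f_1}^{e_0,\dots,e_k}$, and both close with $K$-closedness of $\goalset_2$. Where you differ is in the decomposition. The paper builds the deviating strategy $g^K_2$ \emph{first}, via a second three-clause construction ($(i.3)$--$(iii.4)$) mirroring the priority structure of $f^K$, and then runs an induction showing that $\computation_M(f_1,g^K_2)$ tracks the reference computations; you invert the order: you first identify the target computation $\computation^*$ as the (nested) limit of the $\fincomputation_{f_1}^{e_0,\dots,e_k}$, prove that clause $(iii.1)$ never fires along it---so its first components are given by $f_1$---and only then define the deviation as a replay strategy on the run prefixes of $\computation^*$, which makes the concluding induction $\computation_M(f_1,g^*_2)=\computation^*$ essentially trivial. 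This buys a cleaner argument: the clause-based bookkeeping for $g^K_2$ and the two-case matching of $(i.1)/(i.4)$ and $(ii.1)/(ii.4)$ disappear. You also make explicit, as a right-congruence lemma for $\equiv_{K_{M,M'}}$ (appending a common legal direction preserves congruence, with legality propagated through statewise bisimilar last states), a step the paper only asserts with ``it follows that'' inside its induction; this is a genuine tightening. The one thing the paper's route buys that yours does not is that $g^K_2$ is $K_{M,M'}$-invariant, hence run-invariant in both~$M$ and~$M'$, whereas your $g^*_2$ is only run-invariant in~$M$---but since ``winning against player~$2$'' quantifies over run-invariant strategies in~$M$, that weaker property is all the contradiction needs.
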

\begin{proof}
	For part~\ref{item:soundness_fK_ii}, assume for contraposition that $f^K_1$ is not winning against player~$2$. Then, there is a strategy $g_2$ for player~$2$ such that $\computation(f^K_1,g_2) \in \Gamma_2$.
	We define a $K$-invariant strategy~$g^K_2$ for player~$2$ such that $d_0,\dots,d_k\equiv_K e_0,\dots,e_k$ for every $k\ge 0$, where
	\begin{align*}
	\computation(f^K_1,g_2)
		&	=	d_0,d_1,d_2,\dots
		&\text{and}&&
		d_k
		&	=	(a_1,a_2)\text,
		\\
	\computation(f_1,g^K_2)
		&	=	e_0,e_1,e_2,\dots.
		&\text{and}&&
		e_k
		&	=	(b_1,b_2)\text.
	\end{align*}
	By~$\Gamma_2$ being $K$-closed it then follows that also
	$\computation(f_1,g^K_2)\in\Gamma_2$, which contradicts our initial assumption that~$f_1$ is winning strategy against player~$2$.
		For each $k\ge 0$, let furthermore
		\begin{align*}
		\fincomputation^{d_0,\dots,d_k}
				&	=	d'_0,\dots,d'_k\phantom{,\dots}
				& \text{and}&&
		 d'_k	&	=	(c_1,c_2).
		\end{align*}
	In order to define the strategy~$g^K_2$, we may assume the existence of some $K$-invariant strategy~$h_2$ for player~$2$.
For the empty computation~$\epsilon$ we have $g^K_2(\epsilon)=x_2^0$ where
\begin{enumerate}[label=$(\roman*.3)$]
	\item $x^{0}_2= f_2(\epsilon)$, if
	$
		(f_1(\epsilon),f_2(\epsilon))
		\equiv_K
		d_{0}\text,
	$
	\item $x^{0}_2$ is the least action~$y_2^{0}$ available to player~$2$ such that
	$
		(f_1(\epsilon),y_2^{k+1})
		\equiv_K
		d_0\text,
	$
	if such an action $y^{k+1}_2$ exists and case $(i)$ does not apply,
	\item $x^{k+1}_2=h(\epsilon)$ in all other cases.
\end{enumerate}
 For every finite computation $d_0,\dots,d_k$, we have that $g^K_2(d_0,\dots,d_k)=x^{k+1}_2$, where
\begin{enumerate}[label=$(\roman*.4)$]
	\item $x^{k+1}_2= f_2(d_0,\dots,d_k)$, if
	\[
		d_0,\dots,d_k,(f_1(d_0,\dots,d_k),f_2(d_0,\dots,d_k))
		\equiv_K
		d_0,\dots,d_k,d_{k+1}\text,
	\]
	\item $x^{k+1}_2$ is the least action~$y_2^{k+1}$ available to player~$2$ such that
	\[
		d_0,\dots,d_k,(f_1(d_0,\dots,d_k),y_2^{k+1})
		\equiv_K
		d_0,\dots,d_k,d_{k+1}\text,
	\]
	if such an action $y^{k+1}_2$ exists and case $(i)$ does not apply,
	\item $x^{k+1}_2=h(d_0,\dots,d_k)$ in all other cases.
\end{enumerate}
Observe that~$g^K_2$ is $K$-invariant  by construction.

We now prove by induction on $k$ that $	e_0,\dots,e_k
	=
	d_0',\dots,d_k'
$ for every $k\ge 0$.
%
%
If $k=0$, recall that $e_0=(b_1^0,b_2^0)$ and $d^0_0=(c_1^0,c_2^0)$.
Observe that $f^K_1(\epsilon)=f_1(\fincomputation^\epsilon)= f_1(\epsilon)$.
Thus we have $d_0=(f_1(\epsilon),g_2(\epsilon))=(f_1(\epsilon),g_2(\epsilon))$, and in particular $d_0\equiv_K (f_1(\epsilon),g_2(\epsilon))$.
Hence, there is some $y_2^0$
such that $(f_1(\epsilon),y^0_2)\equiv_K d_0$.

First consider the case where
\text{
		$(f_1(\epsilon),f_2(\epsilon))\equiv_K d_0$.
}
Now clause $(i.1)$ applies and
$c_1^0=f_1(\epsilon)$ and $c_2^0=f_2(\epsilon)$.
Accordingly, clause $(i.3)$ is applicable as well, and we obtain both
\begin{align*}
	b_1^0
	&	=	f_1(\epsilon)	=	c_1^0
	&	\text{and}	&&
	b_2^0
	&	=	g^k_2(\epsilon) = f_2(\epsilon) = c_2^0\text.
\end{align*}

Otherwise,  there is some least $x_2^0$ such that $(f_1(\epsilon),x_2^0)\equiv_K d_0$. Thus, due to clause~$(ii.1)$ we have $c_1^0=f_1(\epsilon)$ and $c_2^0=x_2^0$.
Clause~$(ii.3)$ now also applies and we obtain:
\begin{align*}
	b_1^0
	&	=	f_1(\epsilon)	=	c_1^0
	&	\text{and}	&&
	b_2^0
	&	=	g^K_2(\epsilon) = x_2^0 = c_2^0\text.
\end{align*}

The induction step runs along similar lines.
We may assume that
\[\text{
$	e_0,\dots,e_k
	=
	d_0^0,\dots,d_k$.
}
\tag{$i.h.$}\]
Observe that
\begin{align*}
f^K_1(d_0,\dots,d_k)
&	=	f_1(\fincomputation^{d_0,\dots,d_k})
	=	f_1(d_0',\dots,d_k')\text.
\end{align*}
Thus,
\begin{align*}
d_{k+1}
&	=	(f^K_1(d_0,\dots,d_k),g_2(d_0,\dots,d_k))
	=	(f_1(d'_0,\dots,d'_k),g_2(d_0,\dots,d_k))\text.
\end{align*}
By Lemma~\ref{lemma:f^K_Kinvariant}, moreover, $d_0,\dots,d_k\equiv_K d_0',\dots,d_k'$, and it follows that
\[
	d_0,\dots,d_k,d_{k+1}
	\equiv_K
	d_0',\dots,d'_k,(f_1(d'_0,\dots,d'_{k}),g_2(d_0,\dots,d_k)).
\]
Hence, there is some $y_2^{k+1}$  such that
\[
	d_0,\dots,d_k,d_{k+1}
	\equiv_K
	d_0',\dots,d'_k,(f_1(d'_0,\dots,d'_k),y_2^{k+1})	.
	\tag{$\ast$}
\]

%


First assume that equation~$(\ast)$ holds for $y_2^{k+1}= f_2(d_0',\dots,d_k')$.
Then clause $(i.1)$ applies and for $d'_{k+1}=(c_1^{k+1},c_2^{k+1})$
we have
\begin{align*}
	c_1^{k+1}
	&	=	f_1(d'_0,\dots,d'_k)
	& \text{and}&&
	c_2^{k+1}=f_2(d'_0,\dots,d'_k).
\end{align*}
Recall that $e^{k+1}=(b_1^{k+1},b_2^{k+1})$. Now for player~$1$ we find that
	\[
		b_1^{k+1}
		=
		f_1(e_0,\dots,e_k)
		=_{i.h.}
		f_1(d_0',\dots,d_k')
		=
		c_1^{k+1}\text.
	\]

For player~$2$, observe that, in the case we are considering,
\begin{align*}
	e_0,\dots,e_k,(f_1(d'_0,\dots,d'_k),f_2(d'_0,\dots,d'_k))
	&
		\mathwordbox[l]{\quad=_{i.h.}\quad}{xxxxxx}
	d'_0,\dots,d'_k,(f_1(d'_0,\dots,d'_k),f_2(d'_0,\dots,d'_k))\\
	&
	\mathwordbox[l]{\quad=\quad}{xxxxxx}
	d'_0,\dots,d'_k,d'_{k+1}\text.
\end{align*}
Accordingly, clause $(i.4)$ applies, that is,
$g^K_2(e_0,\dots,e_k)=f_2(d_0',\dots,d_k')$.
Hence,
\[
	b_2^{k+1}
	=
	g^K_2(e_0,\dots,e_k)
	=
	f_2(d_0',\dots,d_k')
	=
	c_2^{k+1}\text,
\]
and we may conclude that
$e_{k+1}
=
(b_1^{k+1},b_2^{k+1})
=
(c_1^{k+1},c_2^{k+1})
=
d'_{k+1}\text.
$

Finally, assume that equation~$(\ast)$ does not hold for $y_2^{k+1}=f_2(d'_0,\dots,d_k')$.
Then, let~$x_2^{k+1}$ be the least action for player~$2$ for which equation~$(\ast)$ does hold with $y^{k+1}=x^{k+1}_2$.
As in this case clause~$(i.2)$ applies and for $d'_{k+1}=(c_1^{k+1},c_2^{k+1})$,
we have,
\begin{align*}
	c_1^{k+1}
	&	=	f_1(d'_0,\dots,d'_k)
	& \text{and}&&
	c_2^{k+1}=x_2^{k+1}.
\end{align*}
Recall that $e_{k+1}=(b_1^{k+1},b_2^{k+1})$.
For player~$1$ we again find that,
	\[
		b_1^{k+1}
		=
		f_1(e_0,\dots,e_k)
		=_{i.h.}
		f_1(d_0',\dots,d_k')
		=
		c_1^{k+1}\text.
	\]
For player~$2$, observe that, in the case we are considering,
\begin{align*}
	e_0,\dots,e_k,(f_1(d'_0,\dots,d_k'),x_2^{k+1})
	&
		\mathwordbox[l]{\quad=_{i.h.}\quad}{xxxxxx}
	d'_0,\dots,d'_k,(f_1(d'_0,\dots,d'_k),x_2^{k+1})\\
	&
	\mathwordbox[l]{\quad=\quad}{xxxxxx}
	d'_0,\dots,d'_k,d'_{k+1}\text.
\end{align*}
Accordingly, clause $(ii.4)$ applies and we have
$g^K_2(e_0,\dots,e_k)=x_2^{k+1}$.
It then follows that,
\[
	b_2^{k+1}
	=
	g^K_2(e_0,\dots,e_k)
	=
	f_2(d_0',\dots,d'_k)
	=
	c_2^{k+1}\text.
\]
Again we may conclude that
$e_{k+1}
=
(b_1^{k+1},b_2^{k+1})
=
(c_1^{k+1},c_2^{k+1})
=
d'_{k+1}\text,
$
as desired.

The proof for part~\ref{item:soundness_fK_iii} is analogous to that of part~\ref{item:soundness_fK_ii}.
\end{proof}

\noindent\label{page:cont_1} The ground has now been cleared for the main result of this paper that the existence of run-invariant equilibria is preserved under bisimulation in two-player games provided that the run-based preferences of the players are $K_{M,M'}$-closed.

\begin{thm}\label{theorem:runbased_runbased_1}
	Let $G=(M,\goalset_1,\goalset_2)$ and $G'=(M',\goalset_1,\goalset_2)$ be two two-player games on bisimilar concurrent game structures such that $\goalset_1$ and~$\goalset_2$ are run-based and $K_{M,M'}$-closed. Then, if $f=(f_1,f_2)$ is a run-invariant strategy profile in~$M$, then $f^{K}=(f^{K}_1,f^{K}_2)$ is a run-invariant equilibrium in~$M'$.
\end{thm}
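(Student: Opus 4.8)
The plan is to treat the hypothesis as asserting that $f=(f_1,f_2)$ is a run-invariant \emph{equilibrium} in $G$ (this is what the construction of $f^K$ is tailored to exploit), and then to prove the claim in three movements: establish that $f^K$ is run-invariant in \emph{both} structures, show it is a run-invariant equilibrium in $G$, and finally transport this to $G'$ through the computation-based setting, which is the only equilibrium notion already known to be bisimulation-invariant.

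First I would record that $f^K$ is run-invariant in both $M$ and $M'$. By Lemma~\ref{lemma:f^K_Kinvariant} the profile $f^K$ is $K_{M,M'}$-invariant, and the preceding lemma characterising $K_{M,M'}$-invariance as run-invariance in both $M$ and $M'$ then yields run-invariance in each structure separately. This dual run-invariance is precisely what will license the passage between the two games at the end, so it is worth isolating up front.

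Next I would verify that $f^K$ is a run-invariant equilibrium in $G$ using the winning-strategy characterisation of Lemma~\ref{lemma:NE_winningstrats}. By Lemma~\ref{lemma:soundness_fK_1} we have $\computation_M(f_1,f_2)=\computation_M(f^K_1,f^K_2)$, so the induced computation, and hence membership in each $\goalset_i$, coincides for $f$ and $f^K$. Now suppose $\computation_M(f^K)\notin\goalset_1$; then $\computation_M(f)\notin\goalset_1$, so, $f$ being an equilibrium, Lemma~\ref{lemma:NE_winningstrats} gives that $f_2$ is winning against player~$1$, whence $f^K_2$ is winning against player~$1$ by Lemma~\ref{lemma:soundness_fK_2}(\ref{item:soundness_fK_iii}). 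Symmetrically, if $\computation_M(f^K)\notin\goalset_2$ then $f^K_1$ is winning against player~$2$ by Lemma~\ref{lemma:soundness_fK_2}(\ref{item:soundness_fK_ii}). These are exactly conditions~\ref{item:NE_winningstrats_i} and~\ref{item:NE_winningstrats_ii} of Lemma~\ref{lemma:NE_winningstrats} for $f^K$, so its converse direction certifies that $f^K$ is a run-invariant equilibrium in $G$.

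Finally I would transport the conclusion to $G'$. Since $f^K$ is run-invariant in $M$ and a run-invariant equilibrium in $G$, Theorem~\ref{theorem:comp_run_trace_invariant_eq}(\ref{item:comp_run_trace_invariant_eq_i}) makes it a computation-based equilibrium in $G$; Theorem~\ref{theorem:NE_computations} then yields that $f^K$ is a computation-based equilibrium in $G'$ as well. Because $f^K$ is \emph{also} run-invariant in $M'$ (the first movement), a second application of Theorem~\ref{theorem:comp_run_trace_invariant_eq}(\ref{item:comp_run_trace_invariant_eq_i}), now in $M'$, converts this back into a run-invariant equilibrium in $G'$, which is the desired statement. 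The genuinely hard content is already discharged in Lemmas~\ref{lemma:soundness_fK_1} and~\ref{lemma:soundness_fK_2}—in particular the reconstruction of the players' punishment (winning) strategies under the $K_{M,M'}$-invariant profile; at the level of the theorem the only point demanding care is ensuring run-invariance of $f^K$ in \emph{both} structures, since the detour through computation-based equilibria requires run-invariance at each end to be applicable.
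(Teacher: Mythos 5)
Your proof is correct and takes essentially the same route as the paper's: both arguments use the winning-strategy characterisation (Lemma~\ref{lemma:NE_winningstrats}) together with Lemmas~\ref{lemma:soundness_fK_1} and~\ref{lemma:soundness_fK_2} to establish that $f^K$ is a run-invariant equilibrium in~$G$, and then transfer this to~$G'$ via computation-based equilibria (Theorem~\ref{theorem:comp_run_trace_invariant_eq} and Theorem~\ref{theorem:NE_computations}) and the $K_{M,M'}$-invariance of~$f^K$ (Lemma~\ref{lemma:f^K_Kinvariant}). Your decision to read the hypothesis as ``$f$ is a run-invariant \emph{equilibrium}'' is exactly what the paper's proof also (implicitly) assumes---its opening appeal to Lemma~\ref{lemma:NE_winningstrats} requires it---so you have correctly identified and repaired the same wording slip in the theorem statement.
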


\begin{proof}
	Assume that  $f=(f_1,f_2)$ is a run-invariant strategy profile in~$M$.
	By Lemma~\ref{lemma:NE_winningstrats}, then both
	\begin{enumerate}
		\item\label{item:runbased_runbased_1_i} $\computation_M(f_1,f_2)\notin\goalset_1$ implies that~$f_2$ is a winning strategy against player~$1$ in~$M$, and
		\item\label{item:runbased_runbased_1_ii}  $\computation_M(f_1,f_2)\notin\goalset_2$ implies that~$f_1$ is a winning strategy against player~$2$ in~$M$.
	\end{enumerate}
On basis of~\ref{item:runbased_runbased_1_i}, Lemma~\ref{item:soundness_fK_i}
yields $\fincomputation_M(f_1,f_2)=\fincomputation_M(f_1^K,f_2^K)$.
Now assume $\computation_M(f^K_1,f^K_2)\notin\goalset_1$. Then, also
$\computation_M(f_1,f_2)\notin\goalset_1$ and we may assume that~$f_2$ is a winning strategy against player~$1$ in~$M$. In virtue of  Lemma~\ref{lemma:soundness_fK_2} we may then conclude that~$f^K_2$ is a winning strategy against player~$1$ in~$M$.
Assuming that $\computation_M(f^K_1,f^K_2)\notin\goalset_2$, we can reason analogously and infer that~$f_1$ is a winning strategy against player~$2$ in~$M$. Hence,
	\begin{enumerate}[label=$(\arabic*')$]
		\renewcommand{\itemsep}{.5ex}
		\item\label{item:runbased_runbased_1_iii} $\computation_M(f^K_1,f^K_2)\notin\goalset_1$ implies that~$f^K_2$ is a winning strategy against player~$1$~in $M$, and
		\item\label{item:runbased_runbased_1_iv} $\computation_M(f^K_1,f^K_2)\notin\goalset_2$ implies that~$f^K_1$ is a winning strategy against player~$2$~in $M$.
	\end{enumerate}
Accordingly,~$f^K$ is a computation-based equilibrium in~$M$.
By Theorem~\ref{theorem:NE_computations} we may infer that~$f^K$ is also a computation-based equilibrium in~$M'$.
Lemma~\ref{lemma:f^K_Kinvariant} then guarantees that~$f^K$ is $K_{M,M'}$-invariant, and it follows that~$f^K$ is run-invariant in~$M'$ as well. By virtue of Theorem~\ref{theorem:comp_run_trace_invariant_eq} we may finally conclude that~$f^K$ is also a run-invariant equilibrium in~$M'$.
\end{proof}
\noindent As an immediate consequence of Theorem~\ref{theorem:runbased_runbased_1}, we have the following result, which is phrased in terms of run-based strategies instead of run-invariant strategies.
\begin{cor}\label{corollary:runbased_runbased_1}
	Let $G=(M,\goalset_1,\goalset_2)$ and $G'=(M',\goalset_1,\goalset_2)$ be two two-player games on bisimilar concurrent game structures~$M$ and~$M'$ such that $\goalset_1$ and~$\goalset_2$ are run-based and $K_{M,M'}$-closed. Let furthermore~$\run\in\runs_{M}$ be a run in~$M$ that is sustained by a run-based equilibrium in~$M$. Then, there is a run~$\run'\in\runs_{M'}$ in~$M'$ that is statewise bisimilar to~$\run$ and that is also sustained by a run-based equilibrium in~$M'$.
\end{cor}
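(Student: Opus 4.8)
The plan is to lift the machinery developed for run-invariant profiles back to run-based profiles via the bijections of Section~\ref{section:trace_invariant}, and to pin down the sustained run in $M'$ by exploiting that $f^K$ is a genuinely computation-based profile, so it induces one and the same computation in both bisimilar structures.

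First I would unpack the hypothesis. Since $\run$ is sustained by a run-based equilibrium in $M$, there is a profile $f=(f_1,f_2)$ of run-based strategies that is a run-based equilibrium with $\run_M(f)=\run$. Passing to its computation-based counterpart $\check f=(\check f_1,\check f_2)$, Lemma~\ref{lemma:invariant_eq} guarantees that $\check f$ is a run-invariant equilibrium in $M$, while Lemma~\ref{lemma:invariant_computations} gives $\computation_M(\check f)=\computation_M(f)$ and hence $\run_M(\check f)=\run$. Thus the run of interest is also the one induced by the run-invariant equilibrium $\check f$.

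Next I would feed $\check f$ into Theorem~\ref{theorem:runbased_runbased_1}, which (as $\goalset_1$ and $\goalset_2$ are run-based and $K_{M,M'}$-closed) yields the profile $f^K=(f^K_1,f^K_2)$ that is a run-invariant equilibrium in $M'$ and, by Lemma~\ref{lemma:f^K_Kinvariant}, is $K_{M,M'}$-invariant and computation-based by construction. Lemma~\ref{lemma:soundness_fK_1} then gives $\computation_M(f^K)=\computation_M(\check f)$, so $\run_M(f^K)=\run$. Because $f^K$ is computation-based, Lemma~\ref{lemma:strategy_identity} yields $\computation_M(f^K)=\computation_{M'}(f^K)$; writing $\computation$ for this common computation and setting $\run'=\run_{M'}(f^K)$, Lemma~\ref{lemma:computations_induce_bisimilar_runs} gives $\run=\run_M(\computation)\statewisebisim\run_{M'}(\computation)=\run'$, which establishes the required statewise bisimilarity.

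Finally I would transport $f^K$ back into a run-based equilibrium in $M'$. Since $f^K$ is run-invariant in $M'$, the surjectivity established in the proof of Lemma~\ref{lemma:invariant_one_one} supplies a run-based profile $h=(h_1,h_2)$ in $M'$ with $\check h=f^K$, and Lemma~\ref{lemma:invariant_eq} then converts the run-invariant equilibrium $f^K$ into the run-based equilibrium $h$ in $M'$; by Lemma~\ref{lemma:invariant_computations} this profile sustains exactly $\run_{M'}(f^K)=\run'$, giving the desired run-based equilibrium sustaining a run statewise bisimilar to $\run$. The main obstacle is not any single hard step but the careful bookkeeping of the two independent correspondences---the run-based/run-invariant bijection on the one hand and the $K_{M,M'}$-invariant construction $f^K$ on the other---and in particular ensuring that the run produced in $M'$ is $\statewisebisim$ to $\run$; this hinges precisely on $f^K$ being computation-based, so that it induces the very same computation in the two bisimilar structures.
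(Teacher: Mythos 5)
Your proof is correct and follows essentially the same route as the paper's: convert the run-based equilibrium sustaining $\run$ into a run-invariant one via Lemma~\ref{lemma:invariant_eq}, apply Theorem~\ref{theorem:runbased_runbased_1} to obtain the $K_{M,M'}$-invariant profile $f^K$ that is a run-invariant equilibrium in both games, and then use the identity of computations across bisimilar structures to conclude the induced runs are statewise bisimilar. If anything, your write-up is more careful than the paper's own proof, which leaves implicit both that $\run_M(f^K)=\run$ (your appeal to Lemma~\ref{lemma:soundness_fK_1}) and the final conversion of $f^K$ back into a genuinely run-based equilibrium in $M'$ (your use of Lemmas~\ref{lemma:invariant_one_one} and~\ref{lemma:invariant_eq}).
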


\begin{proof}
	Let run $\run\in\runs_M$ be sustained by a run-based equilibrium~$f=(f_1,f_2)$ in~$G$ and let $\check f=(\check f_1,\check f_2)$ be the run-invariant strategy profile corresponding to~$f$.
	Lemma~\ref{lemma:invariant_eq} guarantees that $\check f=(\check f_1,\check f_2)$ is a run-invariant equilibrium in~$G$.
	Now construct profile $\check f^K=(\check f^K_1,\check f^K_2)$, which by virtue of Theorem~\ref{theorem:runbased_runbased_1}  is then $K_{M,M'}$-invariant and is a run-invariant equilibrium in both~$G$ and~$G'$. By virtue Lemma~\ref{lemma:computation_identity}-\ref{item:computation_identity}, it moreover follows that $\run_M(g^K_1,g^K_2)\pairwisebisim\run_{M'}(g^K_1,g^K_2)$, that is, $\run_M(g^K_1,g^K_2)$ and~$\run_{M'}(g^K_1,g^K_2)$ are statewise bisimilar, which concludes the proof.
\end{proof}

\noindent  A further corollary of Theorem~\ref{theorem:runbased_runbased_1} is that the \emph{existence} of run-based equilibria is preserved in two-player games with run-based and $K_{M,M'}$-closed preferences.

\subsubsection*{Trace-based Preferences}
We find that, with a couple of slight modifications, essentially the same construction as in the previous section can be leveraged to prove that run-based equilibria are also preserved under bisimulation in two-player games with \emph{trace-based} preferences. It be emphasised that here we do not require the preferences to satisfy any other condition than being trace-based.

Let two CGS-games $G=(M,\goalset_1,\goalset_2)$ and $G'=(M',\goalset_1,\goalset_2)$ on
bisimilar concurrent game structures~$M$ and~$M'$ and with~$\goalset_1$ and~$\goalset_2$ trace-based be given.
For a run-invariant equilibrium~$f=(f_1,f_2)$ in game~$G$, we define the $K_{M,M'}$-invariant  strategy-profile $f^K=(f^K_1,f^K_2)$ as in the previous section.
We prove that $f^K=(f^K_1,f^K_2)$ is also a run-invariant equilibrium in~$G'$.
To this end, we adapt Lemma~\ref{lemma:soundness_fK_2} so as to apply to trace-based preferences instead of preferences that are both run-based and $K_{M,M'}$-closed.

\begin{lem}\label{lemma:soundness_fK_tracebased}
	Let $f=(f_1,f_2)$ be a run-invariant strategy profile in game $G=(M,\goalset_1,\goalset_2)$ with~$\goalset_1$ and~$\goalset_2$ trace-based, and~$f^K$ defined as above. Then,
		\begin{enumerate}
			\renewcommand{\itemsep}{.5ex}
			\item\label{item:soundness_fKtb_ii} if $f_1$ is a winning (run-invariant) strategy against player~$2$, then so is $f^K_1$,
			\item\label{item:soundness_fKtb_iii} if $f_2$ is a winning (run-invariant) strategy against player~$1$, then so is $f^K_2$.
		\end{enumerate}
\end{lem}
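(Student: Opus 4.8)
The plan is to prove the lemma by contraposition, reusing wholesale the construction of the $K_{M,M'}$-invariant punishment strategy from the proof of Lemma~\ref{lemma:soundness_fK_2}, and to replace only its final appeal to $K_{M,M'}$-closedness by an argument based on traces. I focus on part~\ref{item:soundness_fKtb_ii}; part~\ref{item:soundness_fKtb_iii} is entirely symmetric.

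First I would assume, for contraposition, that $f^K_1$ is not winning against player~$2$, so that there is a run-invariant strategy $g_2$ for player~$2$ with $\computation_M(f^K_1,g_2)\in\goalset_2$. I would then carry out, verbatim, the definition of the strategy $g^K_2$ together with the accompanying induction on~$k$ from the proof of Lemma~\ref{lemma:soundness_fK_2}. The crucial observation is that this construction and its correctness proof never invoke $K_{M,M'}$-closedness of~$\goalset_2$: clauses~$(i.3)$--$(iii.4)$ refer only to $f_1$, $f_2$, a fixed $K_{M,M'}$-invariant reference strategy~$h$, and the finite-congruence relation $\equiv_{K_{M,M'}}$. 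Hence the construction carries over unchanged and yields, writing $\computation_M(f^K_1,g_2)=d_0,d_1,\dots$ and $\computation_M(f_1,g^K_2)=e_0,e_1,\dots$, that $d_0,\dots,d_k\equiv_{K_{M,M'}} e_0,\dots,e_k$ for every $k\ge 0$, with $g^K_2$ being $K_{M,M'}$-invariant (hence, in particular, run-invariant) by construction.

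The genuinely new step is the conclusion. From $d_0,\dots,d_k\equiv_{K_{M,M'}} e_0,\dots,e_k$ I would apply Lemma~\ref{lemma:finite_congruence_statewise_bisim} to obtain $\finrun_M(d_0,\dots,d_k)\statewisebisim\finrun_M(e_0,\dots,e_k)$ for each~$k$. Since these finite runs are precisely the prefixes of the infinite runs $\run_M(\computation_M(f^K_1,g_2))$ and $\run_M(\computation_M(f_1,g^K_2))$, it follows that these two infinite runs are statewise bisimilar. Statewise bisimilar states carry equal valuations (by the first clause of the definition of bisimulation), so the two runs induce the very same trace, that is $\trace_M(f^K_1,g_2)=\trace_M(f_1,g^K_2)$. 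As $\goalset_2$ is trace-based and $\computation_M(f^K_1,g_2)\in\goalset_2$, this gives $\computation_M(f_1,g^K_2)\in\goalset_2$; since $g^K_2$ is run-invariant, this contradicts the hypothesis that $f_1$ is winning against player~$2$.

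I expect the main obstacle to be essentially one of bookkeeping rather than of mathematical difficulty: one must verify carefully that the multi-case definition of $g^K_2$ and the long induction establishing finite congruence at every prefix in Lemma~\ref{lemma:soundness_fK_2} really are independent of the $K_{M,M'}$-closedness hypothesis, so that they may be imported here without change. The substantive, and short, new content is the replacement of $K_{M,M'}$-closedness by trace-basedness via Lemma~\ref{lemma:finite_congruence_statewise_bisim} and the observation that statewise bisimilar runs in $M$ share a trace; this is exactly where the weaker hypothesis (trace-based preferences, with no closedness condition required) is exploited.
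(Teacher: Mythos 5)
Your proposal is correct and follows essentially the same route as the paper's own proof: contraposition, wholesale reuse of the $g^K_2$ construction and its finite-congruence induction from Lemma~\ref{lemma:soundness_fK_2} (which indeed never invokes $K_{M,M'}$-closedness), and then Lemma~\ref{lemma:finite_congruence_statewise_bisim} to pass from finite congruence to statewise bisimilar runs, equal traces, and a contradiction via trace-basedness of $\goalset_2$. The paper's argument is word-for-word this plan, so nothing further is needed.
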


\begin{proof}
	For part~\ref{item:soundness_fKtb_ii}---part~\ref{item:soundness_fKtb_iii} follows by an analogous argument---assume for contraposition that $f^K_1$ is not a winning strategy against player~$2$. Then, there is a strategy $g_2$ for player~$2$ such that $\computation(f^K_1,g_2) \in \Gamma_2$.
	We define a $K_{M,M'}$-invariant strategy~$g^K_2$ for player~$2$
	exactly as in the proof of Lemma~\ref{lemma:soundness_fK_2}.
	Accordingly, $d_0,\dots,d_k\equiv_{K_{M,M'}} e_0,\dots,e_k$ for every $k\ge 0$, where
	\begin{align*}
	\computation(f^K_1,g_2)
		&	=	d_0,d_1,d_2,\dots
		&\text{and}&&
		d_k
		&	=	(a_1,a_2)\text,
		\\
	\computation(f_1,g^K_2)
		&	=	e_0,e_1,e_2,\dots.
		&\text{and}&&
		e_k
		&	=	(b_1,b_2)\text.
	\end{align*}
	Now consider an arbitrary~$k\ge 0$. Then, by Lemma~\ref{lemma:finite_congruence_statewise_bisim}, also $\finrun_M(d_0,\dots,d_k)\statewisebisim \finrun_M(e_0,\dots,e_k)$.
Letting
$\finrun_M(d_0,\dots,d_k)=s_0,\dots,s_k$
and
$\finrun_M(d'_0,\dots,d'_k)=s'_0,\dots,s'_k$, we then also have~$s_k\bisim s_k'$.
It follows that
$\run_M(f^K_1,g_2)\statewisebisim\run_M(f_1,g^K_2)$ and hence
$\trace_M(f^K_1,g_2)=\trace_M(f_1,g^K_2)$.
As a consequence of $\goalset_2$ being trace-based, we obtain
	$\computation(f_1,g^K_2)\in\Gamma_2$, which contradicts our initial assumption that~$f_1$ is winning strategy against player~$2$.
\end{proof}

\noindent We are now in a position to prove the counterpart of Theorem~\ref{theorem:runbased_runbased_1} for trace-based preferences, showing that run-invariant equilibria are preserved under bisimulation if the players' preferences are trace-based.
\begin{thm}\label{theorem:runbased_tracebased_1}
	Let $G=(M,\goalset_1,\goalset_2)$ and $G'=(M',\goalset_1,\goalset_2)$ be two two-player games on bisimilar concurrent game structures such that $\goalset_1$ and~$\goalset_2$ are trace-based. Then, if $f=(f_1,f_2)$ is  run-invariant in~$M$, then $f^{K}=(f^{K}_1,f^{K}_2)$ is a run-invariant equilibrium in~$M'$.
\end{thm}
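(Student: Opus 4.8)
The plan is to follow the proof of Theorem~\ref{theorem:runbased_runbased_1} essentially verbatim, the only change being that the appeal to Lemma~\ref{lemma:soundness_fK_2}---which relied on the goal sets being run-based and $K_{M,M'}$-closed---is replaced by an appeal to Lemma~\ref{lemma:soundness_fK_tracebased}, which requires only that~$\goalset_1$ and~$\goalset_2$ are trace-based. Recall that~$f^K=(f^K_1,f^K_2)$ is the $K_{M,M'}$-invariant profile built from~$f$ exactly as in the previous section, so that Lemmas~\ref{lemma:f^K_Kinvariant} and~\ref{lemma:soundness_fK_1} apply to it unchanged. Note also that, since identical runs induce identical traces, every trace-based goal set is in particular run-based; hence the notion of a run-invariant equilibrium and the characterisation in Lemma~\ref{lemma:NE_winningstrats} remain available in the present setting, and the hypothesis on~$f$ is read exactly as in Theorem~\ref{theorem:runbased_runbased_1}, namely that~$f=(f_1,f_2)$ is a run-invariant \emph{equilibrium} in~$M$.

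First I would invoke Lemma~\ref{lemma:NE_winningstrats} to obtain that $\computation_M(f_1,f_2)\notin\goalset_1$ implies~$f_2$ is winning against player~$1$, and $\computation_M(f_1,f_2)\notin\goalset_2$ implies~$f_1$ is winning against player~$2$. By Lemma~\ref{lemma:soundness_fK_1}, which imposes no condition on the preferences, we have $\computation_M(f_1,f_2)=\computation_M(f^K_1,f^K_2)$. Consequently, if $\computation_M(f^K_1,f^K_2)\notin\goalset_1$ then also $\computation_M(f_1,f_2)\notin\goalset_1$, so~$f_2$ is winning against player~$1$, and Lemma~\ref{lemma:soundness_fK_tracebased} then yields that~$f^K_2$ is winning against player~$1$ as well; the symmetric argument handles~$\goalset_2$ and~$f^K_1$. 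Applying the ``if''-direction of Lemma~\ref{lemma:NE_winningstrats} to these two conditions shows that~$f^K$ is a run-invariant equilibrium in~$M$, and since~$f^K$ is a genuine computation-based profile, Theorem~\ref{theorem:comp_run_trace_invariant_eq} upgrades this to~$f^K$ being a computation-based equilibrium in~$M$.

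It then remains to transport the result to~$M'$. Theorem~\ref{theorem:NE_computations} gives that~$f^K$ is a computation-based equilibrium in~$M'$ as well, and Lemma~\ref{lemma:f^K_Kinvariant} guarantees that~$f^K$ is $K_{M,M'}$-invariant, hence run-invariant in~$M'$. A final application of Theorem~\ref{theorem:comp_run_trace_invariant_eq}, in the converse direction, then turns this computation-based equilibrium into a run-invariant equilibrium in~$M'$, as required.

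The theorem is therefore almost pure bookkeeping once Lemma~\ref{lemma:soundness_fK_tracebased} is in hand: the genuine difficulty was already absorbed into that lemma, whose proof uses Lemma~\ref{lemma:finite_congruence_statewise_bisim} to argue that finite congruence of computations yields statewise bisimilar runs, whence $\trace_M(f^K_1,g_2)=\trace_M(f_1,g^K_2)$ and trace-basedness of~$\goalset_2$ closes the gap. Thus the only point in the present proof that requires care---and the step I expect to be the main obstacle to getting right---is confirming that the remaining scaffolding, namely Lemmas~\ref{lemma:soundness_fK_1} and~\ref{lemma:f^K_Kinvariant} together with Theorems~\ref{theorem:NE_computations} and~\ref{theorem:comp_run_trace_invariant_eq}, never silently invoked the $K_{M,M'}$-closedness hypothesis, so that dropping it in favour of the (formally stronger) trace-based assumption leaves every step of the argument intact.
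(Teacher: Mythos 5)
Your proof is correct and follows exactly the route the paper takes: its own proof of Theorem~\ref{theorem:runbased_tracebased_1} consists precisely of re-running the argument for Theorem~\ref{theorem:runbased_runbased_1} with Lemma~\ref{lemma:soundness_fK_tracebased} substituted for Lemma~\ref{lemma:soundness_fK_2}, which is what you do. Your additional checks---that Lemmas~\ref{lemma:soundness_fK_1} and~\ref{lemma:f^K_Kinvariant} and Theorems~\ref{theorem:NE_computations} and~\ref{theorem:comp_run_trace_invariant_eq} never use $K_{M,M'}$-closedness, and that the hypothesis on~$f$ is to be read as ``run-invariant \emph{equilibrium}''---merely make explicit what the paper leaves implicit.
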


\begin{proof}
	The proof is fully analogous to that for Theorem~\ref{theorem:runbased_runbased_1}, invoking Lemma~\ref{lemma:soundness_fK_tracebased} instead of Lemma~\ref{lemma:soundness_fK_2}.
\end{proof}
\noindent As an immediate consequence of Theorem~\ref{theorem:runbased_tracebased_1}, we
find that also the \emph{existence} of run-invariant equilibria is preserved in two-player games with trace-based preferences. Furthermore, also the counterpart of Corollary~\ref{corollary:runbased_runbased_1} for trace-based preferences can easily be demonstrated.

\subsection*{Boolean Game Structures}
We now consider a subclass of concurrent game structures
in which Nash equilibrium is invariant under bisimilarity.
Specifically, we study games played over the class of concurrent game structures induced by iterated Boolean games~\cite{GHW15}, a framework that can be used to reason about Nash equilibria in games and multi-agent systems modelled using the Reactive Modules specification language~\cite{AH99b}.

By a \emph{Boolean game structure} we understand a special type of concurrent game structure $M=(\Ag,\AP,\Ac,\States,s^0_M,\valf,\transf)$ for which there is a partition $\set{\AP_1,\dots,\AP_n}$ of~$\AP$ such that $\Ac_i(s) \subseteq (2^{\AP_i}\setminus\emptyset)$ for all players~$i$ and states~$s$ and for every direction $\direction'=(a_1,\dots,a_n)$ in $2^{\AP_1}\times\cdots\times 2^{\AP_n}$ and every state~$s$, it holds that
\[
	\text{$\transf(s,\direction')=s'$ implies $\valf(s')=a_1\cup\dots\cup a_n$.}
\]

Then, informally, in a Boolean game structure, choice profiles correspond to system states, which is not generally the case in concurrent game structures. In other words, in a Boolean game structure~$M$, if a strategy profile induces a run $s^0_M,s_1,s_2,\ldots$, then we know that it has been induced by the computation~$s_1,s_2,\ldots$. Even more, we also know that the trace induced by such a computation is precisely $s^0_M,s_1,s_2,\ldots$. This very strong correspondence between computations, runs, and traces is key to the proof that in Boolean game structures all strategies for a player are in fact bisimulation-invariant. This result, in turn, can also be used to show that Nash equilibrium is invariant under bisimilarity, regardless of the model of strategies or goals that one chooses. 
To see this, the following preliminary results will be useful.

\begin{lem}\label{lemma:BGS_pairwise_bisim_identity}
	Let $M=(\Ag,\AP,\Ac,\States,s^0,\valf,\transf)$ be a Boolean game structure with partition $\set{\AP_1,\dots,\AP_n}$. Let $\finrun=s_0,\dots,s_k$ and $\finrun'=s'_0,\dots,s_k'$ be statewise bisimilar finite histories, that is, $\finrun\statewisebisim\finrun'$. Then, $\finrun=\finrun'$.
\end{lem}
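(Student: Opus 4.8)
The plan is to prove the statement by induction on the position $j$, showing that $s_j = s'_j$ for every $0 \le j \le k$, where $\finrun = s_0,\dots,s_k$ and $\finrun' = s'_0,\dots,s'_k$. Two facts drive the argument. First, since $\finrun \statewisebisim \finrun'$ we have $s_j \bisim s'_j$ for every $j$, and by the first clause of the definition of bisimulation, bisimilar states carry identical valuations, so $\valf(s_j) = \valf(s'_j)$ throughout. Second, and crucially, the defining property of a Boolean game structure is that the valuation of a state reached by a direction is exactly the union of the component actions: if $\transf(s, (a_1,\dots,a_n)) = s'$ then $\valf(s') = a_1 \cup \dots \cup a_n$. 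Because $\set{\AP_1,\dots,\AP_n}$ is a \emph{partition} of $\AP$ and each $a_i \in 2^{\AP_i}$, this union can be decomposed back into its components via $a_i = \valf(s') \cap \AP_i$. In other words, the valuation of any non-initial state along a run uniquely determines the direction that produced it from its predecessor.

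For the base case, both $\finrun$ and $\finrun'$ are histories, hence prefixes of runs, so $s_0 = s'_0 = s^0_M$ by the definition of a run. For the inductive step I would assume $s_j = s'_j$ and establish $s_{j+1} = s'_{j+1}$. Since consecutive states of a run are accessible, there are directions $d = (a_1,\dots,a_n)$ and $d' = (a'_1,\dots,a'_n)$ with $s_j \transarrow{d} s_{j+1}$ and $s'_j \transarrow{d'} s'_{j+1}$. From $s_{j+1} \bisim s'_{j+1}$ we obtain $\valf(s_{j+1}) = \valf(s'_{j+1})$, and the Boolean-structure property turns this into $a_1 \cup \dots \cup a_n = a'_1 \cup \dots \cup a'_n$. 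Intersecting both sides with each block $\AP_i$ of the partition yields $a_i = a'_i$ for every player~$i$, so $d = d'$. Finally, using the induction hypothesis $s_j = s'_j$ together with the determinism of $\transf$, I conclude $s_{j+1} = \transf(s_j, d) = \transf(s'_j, d') = s'_{j+1}$, which closes the induction and gives $\finrun = \finrun'$.

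I do not expect a genuine obstacle here: the result is really a clean consequence of the fact that in a Boolean game structure a non-initial state along a run is individuated by its valuation together with its predecessor, so that statewise bisimilarity (which only controls valuations) is already enough to force equality once combined with determinism. The only point requiring care is to make explicit that the decomposition $a_i = \valf(s_{j+1}) \cap \AP_i$ is legitimate; this rests squarely on $\set{\AP_1,\dots,\AP_n}$ being a partition and on each action lying in $2^{\AP_i}$, which is exactly what guarantees that the map from directions to successor-valuations is injective at any fixed source state.
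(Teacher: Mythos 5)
Your proof is correct and follows essentially the same route as the paper's: induction on the position along the history, with the key step being that in a Boolean game structure the direction producing a state can be recovered from that state's valuation via $a_i = \valf(s_{j+1}) \cap \AP_i$, so equal valuations (forced by statewise bisimilarity) give equal directions, and determinism of $\transf$ then closes the induction. Your explicit justification of the decomposition step via the partition property is a detail the paper's proof asserts without comment, but the argument is the same.
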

\begin{proof}
	We may assume that there are finite computations $\computation=d_0,\dots,d_{k-1}$ and $\computation'=d'_0,\dots,d'_{k-1}$ such that  $s_0\transarrow{\direction_0}\cdots\xtransarrow{\direction_{k-1}}s_k$
	and $s'_0\transarrow{\direction'_0}\cdots\xtransarrow{\direction'_{k-1}}s_k'$.
	We show by induction that $s_m=s'_m$ for all~$0\le m\le k$.
	For the basis, we have $s_0=s_M^0=s'_0$. For the induction step we may assume that $s_m=s'_m$.
	Moreover, as $s_{m+1}\bisim s_{m+1}'$, also $\lambda(s_{m+1})=\lambda(s'_{m+1})$. Furthermore, $s_m\xtransarrow{\direction_m}s_{m+1}$ and $s'_m\xtransarrow{\direction'_m}s'_{m+1}$. As~$M$ is a Boolean game structure, it follows that $\direction_m=(\valf(s_{m+1})\cap\AP_1,\dots,\valf(s_{m+1})\cap\AP_n)$ and
	$\direction'=(\valf(s'_{m+1})\cap\AP_1,\dots,\valf(s'_{m+1})\cap\AP_n)$ and, hence, $\direction_{m}=\direction'_{m}$.
	By determinism of~$\transf$, we may conclude that $s_{m+1}=\transf(s_m,\direction_m)=\transf(s_m',\direction'_m)=s'_{m+1}$.
\end{proof}

The above lemma can be used to show that in fact, for Boolean game structures, all models of strategies collapse to the model of bisimulation-invariant strategies.

\begin{lem}\label{lemma:BGS_sw_bisim_strats}
	In Boolean game structures, all strategies for every player are bisimulation-invariant.
\end{lem}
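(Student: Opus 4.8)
The plan is to obtain this immediately from Lemma~\ref{lemma:BGS_pairwise_bisim_identity}. Recall that a run-based strategy~$f_i$ for player~$i$ is called bisimulation-invariant exactly when $f_i(\finrun) = f_i(\finrun')$ holds for every pair of statewise bisimilar histories $\finrun \statewisebisim \finrun'$. The key point is that, in a Boolean game structure, this requirement is never actually constraining: the preceding lemma tells us that statewise bisimilar histories in fact coincide.

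Concretely, I would fix an arbitrary player~$i$ and an arbitrary run-based strategy~$f_i$, and then take any two histories $\finrun,\finrun' \in \finruns_M$ with $\finrun \statewisebisim \finrun'$. By Lemma~\ref{lemma:BGS_pairwise_bisim_identity} we have $\finrun = \finrun'$, whence trivially $f_i(\finrun) = f_i(\finrun')$. Since $f_i$ was arbitrary, every run-based strategy for player~$i$ is bisimulation-invariant, and since~$i$ was arbitrary, the claim follows for every player. No case distinction or induction is needed at this level; the induction has already been discharged in Lemma~\ref{lemma:BGS_pairwise_bisim_identity}.

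I expect essentially no obstacle here, as the entire content sits in the previous lemma. The reason that lemma holds---and hence that this one does---is that in a Boolean game structure the valuation of a state fully determines the direction that produced it, since $\valf(s') = a_1 \cup \dots \cup a_n$ with the components~$a_i$ drawn from a fixed partition of~$\AP$. Thus two histories exhibiting the same valuations step by step must arise from the same sequence of directions and, by determinism of~$\transf$, be the very same run. The statewise-bisimilarity relation therefore collapses to equality on histories, which is exactly what makes every run-based strategy vacuously respect it.
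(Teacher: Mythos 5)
Your proof is correct and is essentially identical to the paper's own: both reduce the claim to Lemma~\ref{lemma:BGS_pairwise_bisim_identity}, which collapses statewise bisimilarity of histories to equality in Boolean game structures, making bisimulation-invariance of any strategy trivial. Your closing remarks merely re-justify that lemma, which is harmless but not needed here.
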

\begin{proof}
	Consider an arbitrary strategy~$f_i$ of a player~$i$ in a Boolean game structure~$M$ along with arbitrary statewise bisimilar histories~$\finrun,\finrun'\in\finruns_M$, that is, $\finrun\statewisebisim\finrun'$. By Lemma~\ref{lemma:BGS_pairwise_bisim_identity}, then $\finrun=\finrun'$. Hence, trivially, $f_i(\finrun)=f_i(\finrun')$.
\end{proof}

We can now present the main result of this section.

\begin{thm}\label{thm:BGS_invariance}
	In Boolean game structures, (the existence of a) Nash equilibrium is invariant under bisimilarity.
\end{thm}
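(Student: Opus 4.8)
The plan is to reduce the statement to the preservation result for bisimulation-invariant equilibria, Theorem~\ref{theorem:NE_runs_pwbisim}, exploiting the fact that in a Boolean game structure every strategy is already bisimulation-invariant (Lemma~\ref{lemma:BGS_sw_bisim_strats}). So I would fix bisimilar structures with $M$ Boolean and a common tuple of goal sets, giving games $G=(M,\goalset_1,\dots,\goalset_n)$ and $G'=(M',\goalset_1,\dots,\goalset_n)$, and show that $G$ has a Nash equilibrium if and only if $G'$ does.

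First I would observe that $M'$ is itself Boolean, at least on its reachable states, which are the only ones that matter for the game semantics. Keeping the same partition $\set{\AP_1,\dots,\AP_n}$, for any reachable $t\in\States'$ I would pick $s\in\States$ with $s\bisim t$ and use the back clause of the bisimulation together with the Boolean condition on $M$ and the fact that bisimilar states carry the same valuation: completing any $a_i\in\actions_i'(t)$ to a legal direction and pulling it back to $s$ gives $a_i\in\actions_i(s)\subseteq 2^{\AP_i}\setminus\emptyset$, and for any $d'=(a_1,\dots,a_n)$ with $\transf'(t,d')=t'$ the pulled-back transition $s\transarrow{d'}s''$ with $s''\bisim t'$ yields $\valf'(t')=\valf(s'')=a_1\cup\dots\cup a_n$. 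Hence Lemmas~\ref{lemma:BGS_pairwise_bisim_identity} and~\ref{lemma:BGS_sw_bisim_strats} are available for $M'$ as well.

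The key step is then the observation that, in both $M$ and $M'$, the class of run-based strategies coincides with the class of bisimulation-invariant strategies: by Lemma~\ref{lemma:BGS_sw_bisim_strats} every run-based strategy is automatically bisimulation-invariant, and the reverse inclusion is trivial. Consequently, in each of $G$ and $G'$ the notion of a Nash equilibrium in run-based strategies and the notion of a Nash equilibrium in bisimulation-invariant strategies literally coincide, since both the admissible profiles and the admissible unilateral deviations are the same. With this identification, preservation is immediate: if $G$ has a run-based Nash equilibrium $f$, then $f$ is a bisimulation-invariant equilibrium in $G$, so Theorem~\ref{theorem:NE_runs_pwbisim} gives that $\tilde f$ is a bisimulation-invariant equilibrium in $G'$, and applying the identification in $M'$ shows $\tilde f$ is a run-based Nash equilibrium in $G'$. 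The converse follows by symmetry, applying Theorem~\ref{theorem:NE_runs_pwbisim} with the roles of $M$ and $M'$ interchanged.

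Since the argument never uses the particular shape of the goal sets, the same conclusion holds for arbitrary computation-, run-, or trace-based goals, matching the claim that invariance holds regardless of the model of strategies or goals. The main obstacle I anticipate is precisely the first step—verifying that the Boolean property transfers across bisimilarity so that the strategy-class collapse of Lemma~\ref{lemma:BGS_sw_bisim_strats} is genuinely available on \emph{both} sides; without it, a bisimulation-invariant equilibrium of $G'$ could in principle be destabilised by a deviation to a run-based but non-bisimulation-invariant strategy, and the reduction to Theorem~\ref{theorem:NE_runs_pwbisim} would break. Once that transfer is secured, the remainder is a direct appeal to the already-established preservation theorem.
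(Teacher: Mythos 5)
Your proposal is correct and follows essentially the same route as the paper: the paper's proof likewise observes that, by Lemma~\ref{lemma:BGS_sw_bisim_strats}, a profile in a Boolean game structure is a Nash equilibrium if and only if it is a Nash equilibrium in bisimulation-invariant strategies, and then invokes the preservation result for bisimulation-invariant equilibria (the paper cites Corollary~\ref{corollary:sustained_invariance_under_bisimulation_run}, which is an immediate consequence of Theorem~\ref{theorem:NE_runs_pwbisim}, the result you apply directly). Your additional first step---transferring the Boolean property to the reachable part of~$M'$---does not appear in the paper, which tacitly reads the statement as comparing two structures that are \emph{both} Boolean; your argument is sound and has the minor benefit of also covering the reading in which only one of the two bisimilar structures is assumed to be Boolean.
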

\begin{proof}
	Observe that because of Lemma~\ref{lemma:BGS_sw_bisim_strats}, in Boolean game structures, a strategy profile~$f=(f_1,\dots,f_n)$ is a Nash equilibrium if and only if~$f$ is a Nash equilibrium in bisimulation-invariant strategies. The result then immediately follows from Corollary~\ref{corollary:sustained_invariance_under_bisimulation_run}.
\end{proof}

\section{Nondeterminism}%
\label{secn:nondeterminism}
Our results so far, summarised in Table~\ref{tab:summary},
apply to profiles of deterministic strategies and deterministic systems. In this section, we investigate the case of nondeterministic systems. In this more general setting, most of our notations and definitions remain the same, except for two that are particularly relevant: the notions of \emph{outcome} of a game and Nash {\em equilibrium}.

\begin{table}
\small
\begin{tabular}{llc@{\quad}c@{\quad}c}
\toprule
&&\multicolumn{3}{c}{Preferences}\\
&& computation-based & run-based & trace-based\\\toprule
&computation-based 		& $+$\quad (Th.~\ref{theorem:NE_computations}) & $\phantom{{}^{\ast}}+^{\phantom{\mathrm a}}$\quad (Th.~\ref{theorem:NE_computations})	& $+$\quad(Th.~\ref{theorem:NE_computations})  \\
&run-based (general) 	& $-$\quad(Th.~\ref{thm:variance}) 						& $\phantom{{}^{\ast}}-^{\phantom{\mathrm a}}$\quad (Th.~\ref{thm:variance}) & $-$\quad (Th.~\ref{thm:variance})  \\
Strategies&run-based (two players) 	& $-$\quad (Th.~\ref{thm:variance_two-player}) & $\phantom{{}^{\ast}}+^{\dagger}$\quad (Th.~\ref{theorem:runbased_runbased_1}) 	& $+$\quad \wordbox[l]{(Th.~\ref{theorem:runbased_tracebased_1})}{(Th.~\ref{theorem:NE_computations})}  \\
&trace-based 			& $+$\quad (Th.~\ref{theorem:NE_computations_traces}) & $\phantom{{}^{\ast}}+^{\phantom{\mathrm a}}$\quad (Th.~\ref{theorem:NE_computations_traces})	& $+$\quad (Th.~\ref{theorem:NE_computations_traces})  \\
&bisimulation-invariant 			& $+$\quad(Th.~\ref{theorem:NE_runs_pwbisim}) 	& $\phantom{{}^{\ast}}+^{\phantom{\mathrm a}}$\quad (Th.~\ref{theorem:NE_runs_pwbisim}) & $+$\quad (Th.~\ref{theorem:NE_runs_pwbisim})  \\
\bottomrule
\multicolumn{5}{l}{$\rule{0pt}{3ex}{}^{\text{$\dagger$}}${\footnotesize Assuming the players' (run-based) preferences to be $K_{M,M'}$-closed.}}
\end{tabular}
\caption{Summary of main bisimulation-invariance results for multi-player games in deterministic systems as well as the results in this paper they are based on. In this figure,~$+$ means that Nash equilibria are preserved in (computation/run/trace)-based strategy profiles with preferences given by sets of computations/runs/traces, while~$-$ indicates that they are not for such a pair.}%
\label{tab:summary}
\end{table}

Note that in a deterministic system, a profile of deterministic strategies induces a \emph{unique} system path (and therefore a unique computation, run, and trace). However, if the system is nondeterministic, a profile of deterministic strategies might, instead, determine a \emph{set} of paths of the system: all those complying with the profile of strategies. For instance, in the system in Figure~\ref{fig:nondetex1}, the deterministic strategy profile where every player~$i$ chooses to play $a_i$ at the beginning determines two different runs and traces of the system.


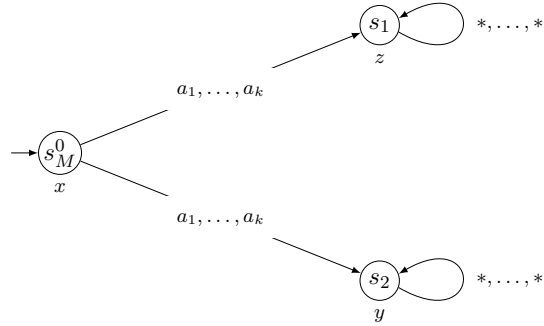
\begin{figure}
\scalebox{.85}{
 	  \begin{tikzpicture}[scale=1]

 	  \tikzstyle{every ellipse node}=[draw,inner xsep=3.5em,inner ysep=1.2em,fill=black!15!white,draw=black!15!white]
 	  \tikzstyle{every circle node}=[fill=white,draw,minimum size=1.6em,inner sep=0pt]


	  \draw(-.9,0)  node(0){};
	  \draw(0,0)	node[label=-90:{\footnotesize$x$},circle](v0){$s_M^0$};
	  \draw(5,2)	node[label=-90:{\footnotesize$z$},circle](v1){$s_1$};
	  \draw(5,-2)	node[label=-90:{\footnotesize$y$},circle](v2){$s_2$};



	\draw[-latex] (0) -- (v0);
 	\draw[-latex] (v0) --node[pos=.5,fill=white](){\footnotesize$\begin{array}{l}a_1,\dots,a_k\end{array}$} (v1);
 	\draw[-latex] (v0) --node[pos=.5,fill=white](){\footnotesize$\begin{array}{l}a_1,\dots,a_k\end{array}$} (v2);

 	%

 	\draw[-latex] (v2.70-90) .. controls +(60-90:4em) and +(120-90:4em) .. node[pos=.5,fill=white,right,xshift=.5ex](){\footnotesize$\ast,\dots,\ast$} (v2.110-90);

 	\draw[-latex] (v1.70-90) .. controls +(60-90:4em) and +(120-90:4em) .. node[pos=.5,fill=white,right,xshift=.5ex](){\footnotesize$\ast,\dots,\ast$} (v1.110-90);


 	\end{tikzpicture}
 	}
\caption{Nondeterministic system with sets of infinite runs given by~$s^0_M.(s_1^\omega+s_2^\omega)$ and infinite traces given by~$x.(z^\omega+y^\omega)$.}%
\label{fig:nondetex1}
\end{figure}

Therefore, formally, a deterministic strategy profile~$f$ on a nondeterministic system~$M$ may determine a set of computations in $\computations_M$. To simplify notations, we will write $\computation_M(f) \subseteq \computations_M$ for such a set, which will correspond to the set of computations that could result in~$M$ when playing strategy profile~$f$. Likewise, we will write~$\run_M(f)\subseteq\runs_M$ and $\trace_M(f)\subseteq\traces_M$, respectively, for the sets of runs and traces determined by~$f$ on~$M$. These three sets of computations, runs, and traces determined by~$f$, namely $\computation_M(f)$, $\run_M(f)$, and $\trace_M(f)$, will correspond to our more general notion of (computation, run, trace) \emph{outcome} of a multi-player game. Clearly, for deterministic systems, these sets of computations, runs, and traces will correspond to the special case where outcomes are singleton sets.%
\footnote{Later on, in this section, we will present some examples of how sets of computations/runs/traces can be induced by (deterministic) computation-based/run-based/trace-based strategies in nondeterministic systems.}


Our more general definition of outcome call for a (just slightly) more general definition of equilibrium. The generalisation is rather simple. With respect to a nondeterministic system~$M$, we will define the preferences~$\Gamma_i$ of a player~$i$ as a set of sets of computations of~$M$, rather than just a set of computations, as in the deterministic case. In other words, while in a deterministic system we have $\goalset_i\subseteq\computations_M$, in a nondeterministic system we have~$\goalset_i\subseteq 2^{\computations_M}$. With this definition in place, we can define a Nash equilibrium in exactly the same way that it is defined for deterministic systems, that is, as a strategy profile~$\strategy=(\strategy_1,\dots,\strategy_n)$ such that for every player~$i$ and every strategy~$g_i$ available to~$i$,
\[
	\text{
	$\computation_M(f_{-i},g_i)\in\goalset_i$
	implies
	$\computation_M(f)\in\goalset_i$.
	}
\]
As for deterministic systems, the concepts of \emph{Nash equilibrium in run-based strategies} and \emph{Nash equilibrium in trace-based strategies} are defined analogously.

We first note that all negative results for deterministic systems immediately carry over to this more general setting as those are simply the case when deterministic strategy profiles induce a unique computation (a singleton set of computations). On the other hand, although positive results for computations and traces also carry over to nondeterministic systems, this is not something that one can immediately conclude. A couple of technical results are needed. In the reminder of this section we will study why positive results for computations and traces do carry over to nondeterministic systems.

The first observation to make is that the set of strategy profiles across bisimilar systems is invariant, that is, that every collection of (computation-based, trace-based) strategies~$f=(f_1,\ldots,f_n)$ is a strategy profile in a system~$M$ if and only if~$f$ is a strategy profile in~$M'$, for every~$M'$ that is bisimilar to~$M$.
\begin{lem}%
\label{lemma:nondet-inv1}
Let $M$ and $M'$ be two bisimilar systems. For all (computation-based, trace-based) strategy profiles~$f$:
\begin{center}
$f$ is a strategy profile in~$M$ if and only if $f$ is a strategy profile in~$M'$.
\end{center}
\end{lem}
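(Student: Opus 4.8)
The plan is to prove both directions at once by exploiting symmetry: since bisimilarity is a symmetric relation between $M$ and~$M'$, it suffices to show that if $f=(f_1,\dots,f_n)$ is a strategy profile in~$M$ then it is one in~$M'$, the converse following by the same argument with the roles of $M$ and~$M'$ interchanged. The whole claim then reduces to two ingredients. First, the \emph{domains} of the strategies must agree across $M$ and~$M'$: for computation-based strategies this is $\fincomputations_M=\fincomputations_{M'}$ and for trace-based strategies $\fintraces_M=\fintraces_{M'}$. These set identities are exactly Lemma~\ref{lemma:computation_identity}, and I would note that the portion of its proof establishing them invokes only the forth and back clauses of bisimulation, so it carries over verbatim to nondeterministic systems. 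Second, the \emph{feasibility constraint} defining a valid strategy must transfer: I must verify that $f_i(\fincomputation)\in\Ac_i(t_k)$ for every run of $\fincomputation$ in~$M'$ ending in~$t_k$, given that the analogous condition holds in~$M$.

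The crux is a small auxiliary fact I would isolate first: \emph{bisimilar states have identical feasible-action sets}, i.e.\ $s\bisim t$ implies $\Ac_i(s)=\Ac_i(t)$ for every player~$i$. This follows directly from the bisimulation clauses and uses no determinism. Given $a_i\in\Ac_i(s)$, complete it to a legal direction $\direction=(a_1,\dots,a_n)$ at~$s$ (possible since each $\Ac_j(s)$ is nonempty); then $s\transarrow{\direction}s'$ for some~$s'$, and the forth clause yields $t\transarrow{\direction}t''$, which forces $\direction$ to be legal at~$t$ and in particular $a_i\in\Ac_i(t)$. The reverse inclusion follows symmetrically from the back clause, giving $\Ac_i(s)=\Ac_i(t)$.

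With these in hand the computation-based case runs as follows. Fix $\fincomputation\in\fincomputations_{M'}=\fincomputations_M$ and any run $t_0,\dots,t_k$ induced by~$\fincomputation$ in~$M'$; I must show $f_i(\fincomputation)\in\Ac_i(t_k)$. Building a matching run in~$M$ step by step with the back clause (starting from $s^0_M\bisim s^0_{M'}$ and matching each direction of~$\fincomputation$), I obtain a run $s_0,\dots,s_k$ induced by~$\fincomputation$ in~$M$ with $s_j\bisim t_j$ for all~$j$, so in particular $s_k\bisim t_k$. Since $f_i$ is a valid strategy in~$M$, $f_i(\fincomputation)\in\Ac_i(s_k)$, and by the action-set fact $\Ac_i(s_k)=\Ac_i(t_k)$, whence $f_i(\fincomputation)\in\Ac_i(t_k)$. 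The trace-based case is analogous: for a trace $\trace\in\fintraces_{M'}=\fintraces_M$ and a run in~$M'$ inducing it with last state~$t_k$, I extract the underlying computation, transport it to a statewise-bisimilar run in~$M$ whose last state $s_k\bisim t_k$ induces the \emph{same} trace (bisimilar states carry equal labels), and conclude $f_i(\trace)\in\Ac_i(s_k)=\Ac_i(t_k)$.

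The main obstacle, and the reason this cannot simply quote Lemma~\ref{lemma:strategy_identity} verbatim, is that under nondeterminism the uniqueness of induced runs breaks down: Lemma~\ref{lemma:deterministic_bisimulation}, and with it the ``the induced run is statewise bisimilar'' form of Lemma~\ref{lemma:computations_induce_bisimilar_runs}, relies on determinism, and one computation may now induce many runs. I therefore replace ``the induced run is statewise bisimilar'' by the weaker but sufficient ``\emph{for every} induced run in one structure \emph{there exists} a statewise-bisimilar induced run in the other,'' which is precisely what the forth and back clauses deliver. The only care required is to quantify the feasibility condition over \emph{all} runs induced by a computation (respectively inducing a trace) rather than over a single distinguished one.
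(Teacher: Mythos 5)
Your proposal is correct and follows essentially the same route as the paper's (very terse) proof, which reads in full: ``by induction on the length of computations/traces, and noting that, for every player, the set of actions available to a player in bisimilar states is the same'' --- exactly your step-by-step run transport via the forth/back clauses together with your auxiliary fact $s\bisim t \Rightarrow \Ac_i(s)=\Ac_i(t)$. Your write-up simply supplies the details the paper omits, including the correct observation that under nondeterminism the feasibility condition must be quantified over \emph{all} runs induced by a computation and that uniqueness-based lemmas must be weakened to existential form.
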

\begin{proof}
By induction on the length of computations/traces, and noting that, for every player, the set of actions available to a player in bisimilar states is the same (as otherwise the two states would not be bisimilar).
\end{proof}

The second observation is that, despite nondeterminism, the outcome of games across bisimilar systems is invariant. Formally, we have the following result.

\begin{lem}%
\label{lemma:nondet-inv}
Let $M$ and $M'$ be two bisimilar systems. For all (computation-based, trace-based) strategy profiles~$f$ we have
\begin{align*}
& \computation_M(f) = \computation_{M'}(f) &
&	\text{and}&
& \trace_M(f) = \trace_{M'}(f) \ .
\end{align*}
\end{lem}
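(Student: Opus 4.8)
The plan is to reduce both identities to a single back-and-forth construction on \emph{plays}. By a play consistent with $f$ (an $f$-play) in $M$ I mean a sequence $s_0\transarrow{d_0}s_1\transarrow{d_1}\cdots$ with $s_0=s^0_M$, every step a legal transition of $M$, and every direction $d_k$ equal to the action profile prescribed by $f$ on the relevant history up to $s_k$: the direction prefix $d_0,\dots,d_{k-1}$ if $f$ is computation-based, and the trace prefix $\valf(s_0),\dots,\valf(s_k)$ if $f$ is trace-based. Concretely, $\computation_M(f)$ is then the set of direction sequences arising from $f$-plays and $\trace_M(f)$ the set of their valuation sequences, and likewise for $M'$. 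So it suffices to show that $M$ and $M'$ admit $f$-plays with exactly the same direction sequences and exactly the same valuation sequences. By Lemma~\ref{lemma:nondet-inv1}, $f$ is a strategy profile in both systems, so $f$-plays are well defined in each.

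First I would prove the transfer step: given an $f$-play $s_0\transarrow{d_0}s_1\transarrow{d_1}\cdots$ in $M$, there is an $f$-play $t_0\transarrow{d_0}t_1\transarrow{d_1}\cdots$ in $M'$ with the \emph{same} directions and $s_k\bisim t_k$ for all $k$. This is an induction on $k$: set $t_0=s^0_{M'}$, so $s_0\bisim t_0$ since $M\bisim M'$; and given $s_k\bisim t_k$ together with $s_k\transarrow{d_k}s_{k+1}$, the forward clause of bisimulation yields $t_{k+1}$ with $t_k\transarrow{d_k}t_{k+1}$ and $s_{k+1}\bisim t_{k+1}$. The key point is that the resulting sequence is again an $f$-play, because the direction prescribed at step $k$ is computed from data the two plays share. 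For a computation-based $f$ this data is the direction prefix $d_0,\dots,d_{k-1}$, which is literally identical in both plays; for a trace-based $f$ it is the trace prefix, and $s_j\bisim t_j$ gives $\valf(s_j)=\valf'(t_j)$ for every $j\le k$, so the prescribed directions again coincide. Hence $d_k=f(\cdot)$ holds along the $M'$-play exactly as it did along the $M$-play.

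Since bisimilarity is symmetric, the same argument using the backward clause transfers $f$-plays from $M'$ to $M$. Consequently each $f$-play of $M$ has a companion $f$-play of $M'$ with identical direction sequence and identical valuation sequence, and conversely. Reading off direction sequences yields $\computation_M(f)=\computation_{M'}(f)$, and reading off valuation sequences yields $\trace_M(f)=\trace_{M'}(f)$, as required.

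The main obstacle, and precisely the reason the analogous statement fails for run-based profiles, is pinning down which history information the consistency condition $d_k=f(\cdot)$ actually consumes. In the nondeterministic setting a single computation no longer induces a unique run, so I cannot invoke Lemma~\ref{lemma:computations_induce_bisimilar_runs} to obtain ``the'' matching run; instead the matching play must be built transition by transition, choosing successors through the bisimulation clauses. This construction preserves directions and valuations but \emph{not} the underlying states, so it goes through exactly when $f$ reads only directions (computation-based) or only valuations (trace-based). A run-based $f$ would prescribe $d_k=f(s_0,\dots,s_k)$, and since $s_k$ and $t_k$ are merely bisimilar rather than equal, the transferred play need not remain $f$-consistent; this is the very gap exploited by Theorem~\ref{thm:variance}.
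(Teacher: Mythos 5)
Your proof is correct, and it is organised quite differently from the paper's. The paper proves the lemma by splitting into four cases (computation-based vs.\ trace-based $f$, crossed with computation vs.\ trace outcomes), each handled by double inclusion and a proof by contradiction over a smallest non-matching prefix; in each case the bisimulation clauses are invoked to extend that prefix one step and derive the contradiction. You instead prove a single, stronger transfer statement: every $f$-play of $M$ has a companion $f$-play of $M'$ with identical direction sequence, identical valuation sequence, and statewise-bisimilar states, built by a direct induction that maintains $s_k \bisim t_k$; symmetry of bisimilarity gives the converse, and both set identities are then read off simultaneously. This buys two things. First, economy: one construction replaces four arguments, because preserving directions and valuations together makes the distinction between ``outcome as computations'' and ``outcome as traces'' evaporate. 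Second, rigour: the paper's case arguments contain steps of the form ``necessarily, the prefix $\trace^*_{k-1}$ is the prefix of some trace in $\trace_{M'}(f)$ that leads to a state $s'$ bisimilar to $s$,'' which is exactly the inductive invariant your construction maintains explicitly but which the paper never isolates; your proof makes that implicit dependency visible. One caveat worth stating if you write this up: the paper leaves $\computation_M(f)$ and $\trace_M(f)$ informal (``the set of computations that could result''), so your play-based definition of these sets is a formalisation choice --- it is the right one, and it matches the paper's examples (Figures~\ref{fig:nondetex1}, \ref{fig:nondetex3}, \ref{fig:nondetex4}), but you should flag that you are fixing the semantics before proving anything about it. Your closing diagnosis of why the argument breaks for run-based profiles --- the transferred play preserves directions and valuations but not states, and no unique run is induced by a computation in the nondeterministic setting --- is also exactly right and consistent with the paper's restriction of the lemma.
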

\begin{proof}
There are four different cases to consider here: either~$f$ is \emph{computation-based} or~$f$ is {\em trace-based}, and either the outcome of the game is taken to be the set of {\em computations}, or the outcome of the game is taken to be the set of {\em traces}.

By double inclusion, we show the first case: $f$ being \emph{computation-based} and the outcome of the game taken to be the set of {\em computations}. To show that~$\computation_M(f) \subseteq \computation_{M'}(f)$, with~$f$ computation-based, reason by contradiction. Suppose that there is a computation~$\computation^*$ in~$\computation_M(f)$ that is not in~$\computation_{M'}(f)$. Since~$M$ and~$M'$ are bisimilar, $\computation^*$ is also a computation of~$M'$, and due to Lemma~\ref{lemma:nondet-inv1}, for every prefix~$\computation^*_k$ of~$\computation^*$, we know that~$f(\computation^*_k)$ is defined. Since $f$ is functional, $f(\computation^*_k)$ in~$M'$ is the same as $f(\computation^*_k)$ in~$M$, which must be precisely the last direction of~$\computation^*_{k+1}$. By an inductive argument we can conclude that~$\computation^*$ must also be a computation of~$\computation_{M'}(f)$, which is a contradiction to our hypothesis, proving the statement. We can reason in a symmetric way to prove the inclusion in the other direction. Note that for computation-based strategies not only $\computation_M(f) = \computation_{M'}(f)$, but also they are \emph{singleton} sets.

The second case we consider is when~$f$ is \emph{trace-based} and the outcome of the game is taken to be the set of {\em traces}.
To show this case, we can reason similarly, but, unlike for computation-based strategies, the sets $\trace_M(f)$ and $\trace_{M'}(f)$ may not be singleton sets. We, again, show the result by double inclusion, and each direction by contradiction. Thus, first, suppose that there is a trace~$\trace^*$ in~$\trace_M(f)$ that is not in~$\trace_{M'}(f)$. Since~$M$ and~$M'$ are bisimilar, $\trace^*$ is also a trace of~$M'$, and due to Lemma~\ref{lemma:nondet-inv1}, for every prefix~$\trace^*_k$ of~$\trace^*$, we know that~$f(\trace^*_k)$ is defined. Let~$\trace^*_k$ be the smallest prefix of~$\trace^*$ that is not a prefix of any trace in~$\trace_{M'}(f)$, and let~$s$ be any state that can be reached after following the finite trace~$\trace^*_{k-1}$ from~$s^0_M$, the initial state of~$M$. Then, we know that $s \xtransarrow{f(\trace^*_{k-1})} q$, for some~$q$ such that~$\lambda(q)$ is the last element of~$\trace^*_k$.  Necessarily, the prefix~$\trace^*_{k-1}$ is the prefix of some trace in~$\trace_{M'}(f)$ that leads to a state, say~$s'$, that is bisimilar to~$s$. Because $s$ and $s'$ are bisimilar, $s' \transarrow{f(\trace^*_{k-1})} q'$ for some state~$q'$ that is bisimilar to~$q$. Lemma~\ref{lemma:nondet-inv1} ensures that~$f$ is defined at~$\trace^*_{k-1}$ in~$M'$. Since~$q$ and~$q'$ are bisimilar, it also follows that~$\lambda(q)=\lambda(q')$, and therefore that~$\trace^*_{k}$, with~$\lambda(q')$ being the last element of~$\trace^*_k$, is the prefix of some trace in~$\trace_{M'}(f)$, which is a contradiction to our hypothesis. Therefore, via induction on the length of traces, we can conclude, in particular, that $\trace^*\in\trace_{M'}(f)$, and in general that every trace in~$\trace_{M}(f)$ must also be a trace in~$\trace_{M'}(f)$. The inclusion in the other direction is, as before, obtained by symmetric reasoning.

The third case we consider is when~$f$ is \emph{computation-based} and the outcome of the game is taken to be the set of \emph{traces}. This proof is \emph{almost} identical to the previous case.
To show this case we, again, show the result by double inclusion, and each direction by contradiction. First, suppose that there is a trace~$\trace^*$ in~$\trace_M(f)$ that is not in~$\trace_{M'}(f)$. Since~$M$ and~$M'$ are bisimilar, $\trace^*$ is also a trace of~$M'$, and due to Lemma~\ref{lemma:nondet-inv1}, for every prefix~$\trace^*_k$ of~$\trace^*$ and every computation~$\computation_k\in\computation(\trace^*_k)$, we know that~$f(\computation_k)$ is defined. Let~$\trace^*_k$ be the smallest prefix of~$\trace^*$ that is not a prefix of any trace in~$\trace_{M'}(f)$, and let~$s$ be any state that can be reached after following the finite trace~$\trace^*_{k-1}$ from~$s^0_M$, the initial state of~$M$. Then, we know that for some computation~$\computation_{k-1}\in\computation(\trace^*_{k-1})$, we have~$s \xtransarrow{f(\computation_{k-1})} q$, for some~$q$ such that~$\lambda(q)$ is the last element of~$\trace^*_k$. Necessarily, the prefix~$\trace^*_{k-1}$ is the prefix of some trace in~$\trace_{M'}(f)$ that leads to a state, say~$s'$, that is bisimilar to~$s$. Because $s$ and $s'$ are bisimilar, $s' \xtransarrow{f(\computation_{k-1})} q'$ for some state~$q'$ that is bisimilar to~$q$. Lemma~\ref{lemma:nondet-inv1} ensures that~$f$ is defined at~$\computation_{k-1}$ in~$M'$. Since~$q$ and~$q'$ are bisimilar, it also follows that~$\lambda(q)=\lambda(q')$, and therefore that~$\trace^*_{k}$, with~$\lambda(q')$ being the last element of~$\trace^*_k$, is the prefix of some trace in~$\trace_{M'}(f)$, which is a contradiction to our hypothesis. Therefore, via induction on the length of traces, we can conclude, in particular, that $\trace^*\in\trace_{M'}(f)$, and in general that every trace in~$\trace_{M}(f)$ must also be a trace in~$\trace_{M'}(f)$. The inclusion in the other direction is also obtained by symmetric reasoning.

The fourth and final case we consider is when~$f$ is \emph{trace-based} and the outcome of the game is taken to be the set of {\em computations}. This proof is also \emph{almost} identical to the previous two cases.
To show this case we, again, show the result by double inclusion, and each direction by contradiction. First, suppose that there is a computation~$\computation^*$ in~$\computation_M(f)$ that is not in~$\computation_{M'}(f)$. Since~$M$ and~$M'$ are bisimilar, $\computation^*$ is also a computation of~$M'$, and due to Lemma~\ref{lemma:nondet-inv1}, for every prefix~$\computation^*_k$ of~$\computation^*$ and every trace~$\trace_k\in\trace(\computation^*_k)$, we know that~$f(\trace_k)$ is defined. Let~$\computation^*_k$ be the smallest prefix of~$\computation^*$ that is not a prefix of any computation in~$\computation_{M'}(f)$, and let~$s$ be any state that can be reached after following the finite computation~$\computation^*_{k-1}$ from~$s^0_M$, the initial state of~$M$, while complying with $\trace^*_{k-1}$, that is, any state~$s$ such that
\[
\Theta_{k-1} = s^0_M \xtransarrow{f(\lambda(s^0_M))} s_1 \xtransarrow{f(\lambda(s^0_M),\lambda(s_1))} \ldots \xtransarrow{f(\trace_{k-1})} s
\]
with
\[
\trace^*_{k-1} = \lambda(s^0_M),\lambda(s_1),\ldots,\lambda(s)
\]
and
\[
\computation^*_{k-1} = f(\lambda(s^0_M)), f(\lambda(s^0_M),\lambda(s_1)), \ldots, f(\trace_{k-1}) \ .
\]

Then, for trace~$\trace_{k-1}\in \trace(\computation^*_{k-1})$ as above, we have~$s \xtransarrow{f(\trace^*_{k-1})} q$, for some~$q$ such that~$\lambda(q)$ is the last element of~$\trace^*_k$. Necessarily, the prefix~$\computation^*_{k-1}$ is the prefix of some computation in~$\computation_{M'}(f)$ that leads to a state, say~$s'$, that is bisimilar to~$s$, that is, a computation
\[
\computation^*_{k-1} = f(\lambda(s^0_{M'})), f(\lambda(s^0_{M'}),\lambda(s'_1)), \ldots, f(\trace_{k-1})
\]
with
\[
\trace^*_{k-1} = \lambda(s^0_{M'}),\lambda(s'_1),\ldots,\lambda(s')
\]
and
\[
\Theta'_{k-1} = s^0_{M'} \xtransarrow{f(\lambda(s^0_{M'}))} s'_1 \xtransarrow{f(\lambda(s^0_{M'}),\lambda(s'_1))} \ldots \xtransarrow{f(\trace_{k-1})} s'
\]
and~$\Theta_{k-1}(i)$ bisimilar to $\Theta'_{k-1}(i)$, for every~$0\leq i \leq k-1$.

Because~$f$ is functional and
\[
\lambda(s^0_M),\lambda(s_1),\ldots,\lambda(s) = \lambda(s^0_{M'}),\lambda(s'_1),\ldots,\lambda(s')
\]
it follows that $f(\lambda(s^0_M),\lambda(s_1),\ldots,\lambda(s_k)) = f(\lambda(s^0_{M'}),\lambda(s'_1),\ldots,\lambda(s'))$ and that $s' \xtransarrow{f(\trace^*_{k-1})} q'$ for some state~$q'$ bisimilar to~$s'$. Lemma~\ref{lemma:nondet-inv1} ensures that~$f$ is defined at~$\computation_{k-1}$ in~$M'$. Since~$q$ and~$q'$ are bisimilar, it also follows that~$\lambda(q)=\lambda(q')$, and therefore that~$\trace^*_{k}$, with~$\lambda(q')$ being the last element of~$\trace^*_k$, is the prefix of some trace in~$\trace_{M'}(f)$. Since~$\trace^*_k$ is indeed a trace in~$\trace_{M'}(f)$ which can be obtained following some computation~$\computation^*_k$ in~$M'$, then we can conclude that~$\computation^*_k$ is the prefix of some computation~$\computation^*$ in~$\computation_{M'}(f)$, which is a contradiction to our hypothesis. Therefore, via induction on the length of computations, we can infer, in particular, that $\computation^*\in\trace_{M'}(f)$, and in general that every computation in~$\computation_{M}(f)$ must also be a computation in~$\computation_{M'}(f)$. As in all previous cases, because $M$ and $M'$ are bisimilar, the inclusion in the other direction is also obtained by symmetric reasoning.
\end{proof}

Using Lemma~\ref{lemma:nondet-inv1} and Lemma~\ref{lemma:nondet-inv}, we can then show that the set of (computation-based, trace-based) Nash equilibria across bisimilar systems remains invariant too.


\begin{thm}%
	\label{thm:nondet-inv-paul}
	Let $G=(M,\goalset_1,\dots,\goalset_n)$ and $G'=(M',\goalset_1,\dots,\goalset_n)$ be games on bisimilar nondeterministic concurrent game structures~$M$ and~$M'$, respectively.
	Let further~$f^\kappa$ be a computation-based strategy profile and~$f^\tau$ be a trace-based strategy profile.
	Then,
	\begin{enumerate}
		\item
		$f^\kappa$ is a computation-based Nash equilibrium in~$G$ if and only if
				$f^\kappa$ is a computation-based equilibrium in~$G'$, and
		\item
		$f^\tau$ is a trace-based Nash equilibrium in~$G$ if and only if
				$f^\tau$ is a trace-based equilibrium in~$G'$.
	\end{enumerate}
\end{thm}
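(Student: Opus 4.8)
The plan is to mimic the contraposition argument used for Theorem~\ref{theorem:NE_computations} in the deterministic setting, now exploiting the two transfer lemmas just established for the nondeterministic case. Since bisimilarity is an equivalence relation, and in particular symmetric, it suffices in each part to prove one implication; the converse then follows by interchanging the roles of~$M$ and~$M'$. Throughout I rely on the fact that, by bisimilarity, $\computations_M=\computations_{M'}$, so that $\goalset_i\subseteq 2^{\computations_M}=2^{\computations_{M'}}$ denotes the very same set of sets of computations in both games.

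For part~1, I would argue by contraposition. Assume $f^\kappa$ is not a computation-based Nash equilibrium in~$G'$. Then there is a player~$i$ and a computation-based strategy~$g_i$ for~$i$ in~$M'$ with $\computation_{M'}(f^\kappa_{-i},g_i)\in\goalset_i$ while $\computation_{M'}(f^\kappa)\notin\goalset_i$. By Lemma~\ref{lemma:nondet-inv1}, both $f^\kappa$ and the deviating profile $(f^\kappa_{-i},g_i)$ are computation-based strategy profiles in~$M$ as well. By Lemma~\ref{lemma:nondet-inv}, the induced outcome sets coincide, namely $\computation_M(f^\kappa)=\computation_{M'}(f^\kappa)$ and $\computation_M(f^\kappa_{-i},g_i)=\computation_{M'}(f^\kappa_{-i},g_i)$. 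Substituting these equalities into the two membership conditions yields $\computation_M(f^\kappa_{-i},g_i)\in\goalset_i$ and $\computation_M(f^\kappa)\notin\goalset_i$, so player~$i$ would also like to deviate from~$f^\kappa$ in~$G$, and hence~$f^\kappa$ is not a Nash equilibrium in~$G$. The converse is the same argument with~$M$ and~$M'$ swapped.

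For part~2, the argument is verbatim the same, with~$f^\tau$ in place of~$f^\kappa$ and the deviating strategy~$g_i$ required to be trace-based. The only additional thing to check is that the two transfer lemmas apply to trace-based profiles, which they do: Lemma~\ref{lemma:nondet-inv1} and Lemma~\ref{lemma:nondet-inv} are both stated for computation-based \emph{and} trace-based strategy profiles. Crucially, the equilibrium condition is phrased in terms of the computation-outcome $\computation_M(\cdot)$ even for trace-based strategies, so once more it is the equality $\computation_M(f^\tau)=\computation_{M'}(f^\tau)$, together with the analogous equality for the deviation, furnished by Lemma~\ref{lemma:nondet-inv} that carries membership in~$\goalset_i$ across the two games.

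I do not expect a genuine obstacle here, because all the substantive work---that the available strategy profiles and, above all, the set-valued outcomes are invariant across bisimilar nondeterministic systems---has already been discharged in Lemmas~\ref{lemma:nondet-inv1} and~\ref{lemma:nondet-inv}. The single point that demands care is conceptual rather than technical: in the nondeterministic setting $\computation_M(f)$ is a \emph{set} of computations and $\goalset_i$ a set of such sets, so one must compare outcomes as sets and invoke set equality, rather than equality of a single induced computation, when transferring the membership conditions $\computation_M(\cdot)\in\goalset_i$ between~$G$ and~$G'$.
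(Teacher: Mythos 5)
Your proposal is correct and follows essentially the same route as the paper's own proof: both directions are handled by contraposition/contradiction, using Lemma~\ref{lemma:nondet-inv1} to transfer the deviating strategy between the bisimilar structures and Lemma~\ref{lemma:nondet-inv} to equate the set-valued outcomes $\computation_M(\cdot)$ and $\computation_{M'}(\cdot)$, so that membership in~$\goalset_i$ carries over; part~2 is then the same argument for trace-based profiles. The only differences are presentational (clean contraposition plus an explicit symmetry remark, versus the paper's slightly more narrative contradiction argument), not substantive.
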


\begin{proof}
Both proofs are by double implication, where each direction is proved by contradiction. For part~(1), first assume that there is some computation-based equilibrium~$f^\kappa$ in~$G$ that  is not a computation-based equilibrium in~$G'$.
Because of Lemma~\ref{lemma:nondet-inv}, every player~$i$ who gets its goal achieved in~$M$ also gets its goal achieved in~$M'$. Then, they will not deviate in~$M'$. Therefore, there must be a player~$j$ who does not get its goals achieved in~$M$ and has a beneficial deviation~$g_j$ in~$M'$, that is, while~$\computation_M(f^\kappa)\not\in\goalset_j$, we have~$\computation_{M'}(f^\kappa_{-j},g_j)\in\goalset_j$. Lemma~\ref{lemma:nondet-inv1} then ensures that~$g_i$ is also a strategy in~$M$, and Lemma~\ref{lemma:nondet-inv} that~$\computation_{M}(f^\kappa_{-j},g_j)\in\goalset_j$, which is a contradiction with $f^\kappa$ being a computation-based Nash equilibrium in~$M$. We can reason in a symmetric way to show the implication in the opposite direction.
Notice that because of Lemmas~\ref{lemma:nondet-inv1} and~\ref{lemma:nondet-inv}, the result holds for both computation-based preferences and trace-based preferences.
Finally, the proof for trace-based strategies (part~(2)), with either computation-based or trace-based preferences, follows the exact same reasoning.
\end{proof}

We note that the main idea behind the proofs in this section is that if a given strategy profile~$f$, whether computation-based or trace-based, does not determine the same set of computations and traces in bisimilar systems, then that computation or trace could be used to show that the two systems are in fact not bisimilar. This is the main argument behind the four cases in Lemma~\ref{lemma:nondet-inv}, each requiring slightly different proofs that such a witness to the non-bisimilarity of~$M$ and~$M'$ does not exist.
However, it is also important to note that if~$M$ and~$M'$ are not bisimilar, then a given strategy profile~$f$, well defined in both systems, may not determine the same set of outcomes---see, for instance, the example in Figure~\ref{fig:nondetex2}.


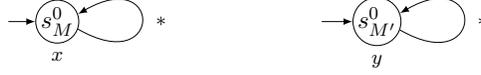
\begin{figure}
\scalebox{.85}{
 	  \begin{tikzpicture}[scale=1]

 	  \tikzstyle{every ellipse node}=[draw,inner xsep=3.5em,inner ysep=1.2em,fill=black!15!white,draw=black!15!white]
 	  \tikzstyle{every circle node}=[fill=white,draw,minimum size=1.6em,inner sep=0pt]


	  \draw(-.9,0)  node(0){};
	  \draw(0,0)	node[label=-90:{\footnotesize$x$},circle](v0){$s_M^0$};

	  \draw(4.0,0)  node(00){};
	  \draw(5,0)	node[label=-90:{\footnotesize$y$},circle](v1){$s_{M'}^0$};
	\draw[-latex] (0) -- (v0);
	\draw[-latex] (00) -- (v1);
 	\draw[-latex] (v0.70-90) .. controls +(60-90:4em) and +(120-90:4em) .. node[pos=.5,fill=white,right,xshift=.5ex](){\footnotesize$\ast$} (v0.110-90);

 	\draw[-latex] (v1.70-90) .. controls +(60-90:4em) and +(120-90:4em) .. node[pos=.5,fill=white,right,xshift=.5ex](){\footnotesize$\ast$} (v1.110-90);

 	\end{tikzpicture}
 	}
\caption{Two non-bisimilar systems where every computation-based strategy profile determines two different (infinite) trace outcomes, namely, $x^\omega$ in the system on the left ($M$) and~$y^\omega$ in the system on the right ($M'$).}%
\label{fig:nondetex2}
\end{figure}

We would also like to note that even though for deterministic systems, computation-based strategies strictly generalise run-based strategies, and run-based strategies strictly generalise trace-base strategies, for nondeterministic systems this is no longer the case. Run-based strategies still generalise trace-based strategies, but not other relation like this holds between any other pair of models of strategies. For instance, as shown in the example in Figure~\ref{fig:nondetex3}, there is a system in which, for instance, a trace-based strategy can be defined (as well as a run-based strategy) while a computation-based strategy cannot.

In case strategies are well defined, as mentioned before, they can induce sets of computations, runs, and traces in a nondeterministic system, except for one case: whenever defined, a computation-based strategy always determines a unique computation of the system, whether deterministic or nondeterministic. Examples of all other cases (8 in total) are easy to build. For instance, in the nondeterministic system in Figure~\ref{fig:nondetex3}, any run-based or trace-based strategy will induce a set of computations containing both $ab^\omega$ and $ac^\omega$. Correspondingly, they will also induce a set of runs and a set of traces, namely, those containing, respectively, $s^0_{M}s_1^\omega$ and $s^0_{M}s_2^\omega$ in case of runs, and $xy^\omega$ and $xz^\omega$ in case of traces. For the two remaining cases, a set of runs and a set of traces induced by a computation-based strategy, consider the nondeterministic system in Figure~\ref{fig:nondetex4}, which is almost the same as the system in Figure~\ref{fig:nondetex3}, save that a computation-based strategy can be defined. In such a system, any well defined computation-based strategy will induce a set of runs and a set of traces containing, respectively, $s^0_{M}s_1^\omega$ and $s^0_{M}s_2^\omega$ in case of runs, and $xy^\omega$ and $xz^\omega$ in case of traces.


\begin{figure}
\scalebox{.85}{
 	  \begin{tikzpicture}[scale=1]

 	  \tikzstyle{every ellipse node}=[draw,inner xsep=3.5em,inner ysep=1.2em,fill=black!15!white,draw=black!15!white]
 	  \tikzstyle{every circle node}=[fill=white,draw,minimum size=1.6em,inner sep=0pt]


	  \draw(-.9,0)  node(0){};
	  \draw(0,0)	node[label=-90:{\footnotesize$x$},circle](v0){$s_M^0$};
	  \draw(4,1.5)	node[label=-90:{\footnotesize$y$},circle](v1){$s_1$};
	  \draw(4,-1.5)	node[label=-90:{\footnotesize$z$},circle](v2){$s_2$};



	\draw[-latex] (0) -- (v0);
 	\draw[-latex] (v0) --node[pos=.5,fill=white](){\footnotesize$\begin{array}{l}a\end{array}$} (v1);
 	\draw[-latex] (v0) --node[pos=.5,fill=white](){\footnotesize$\begin{array}{l}a\end{array}$} (v2);

 	%

 	\draw[-latex] (v2.70-90) .. controls +(60-90:4em) and +(120-90:4em) .. node[pos=.5,fill=white,right,xshift=.5ex](){\footnotesize$c$} (v2.110-90);

 	\draw[-latex] (v1.70-90) .. controls +(60-90:4em) and +(120-90:4em) .. node[pos=.5,fill=white,right,xshift=.5ex](){\footnotesize$b$} (v1.110-90);


 	\end{tikzpicture}
 	}
\caption{A nondeterministic system in which no computation-based strategy can be defined, but where both run-based and trace-based strategies can be defined.}%
\label{fig:nondetex3}
\end{figure}
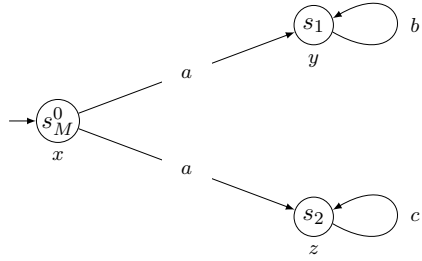

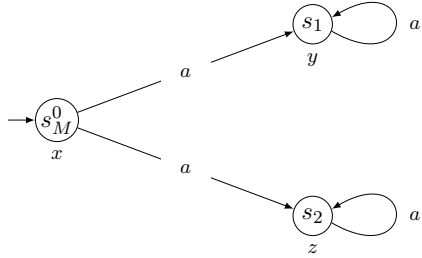
\begin{figure}
\scalebox{.85}{
 	  \begin{tikzpicture}[scale=1]

 	  \tikzstyle{every ellipse node}=[draw,inner xsep=3.5em,inner ysep=1.2em,fill=black!15!white,draw=black!15!white]
 	  \tikzstyle{every circle node}=[fill=white,draw,minimum size=1.6em,inner sep=0pt]


	  \draw(-.9,0)  node(0){};
	  \draw(0,0)	node[label=-90:{\footnotesize$x$},circle](v0){$s_M^0$};
	  \draw(4,1.5)	node[label=-90:{\footnotesize$y$},circle](v1){$s_1$};
	  \draw(4,-1.5)	node[label=-90:{\footnotesize$z$},circle](v2){$s_2$};



	\draw[-latex] (0) -- (v0);
 	\draw[-latex] (v0) --node[pos=.5,fill=white](){\footnotesize$\begin{array}{l}a\end{array}$} (v1);
 	\draw[-latex] (v0) --node[pos=.5,fill=white](){\footnotesize$\begin{array}{l}a\end{array}$} (v2);

 	%

 	\draw[-latex] (v2.70-90) .. controls +(60-90:4em) and +(120-90:4em) .. node[pos=.5,fill=white,right,xshift=.5ex](){\footnotesize$a$} (v2.110-90);

 	\draw[-latex] (v1.70-90) .. controls +(60-90:4em) and +(120-90:4em) .. node[pos=.5,fill=white,right,xshift=.5ex](){\footnotesize$a$} (v1.110-90);


 	\end{tikzpicture}
 	}
\caption{A system in which any computation-based strategy induces a set of runs and a set of traces containing, respectively, runs $s^0_{M}s_1^\omega$ and $s^0_{M}s_2^\omega$ and traces $xy^\omega$ and $xz^\omega$.}%
\label{fig:nondetex4}
\end{figure}

Finally, the reader may have noticed that in this section we did not study the case considering run-based preferences (for run-based strategies we know that the negative results for deterministic systems carry over). The reason is that, as shown for deterministic systems, we can ensure invariance of (the existence of) Nash equilibria with respect to bisimilarity only if the sets of run-based preferences are congruent between bisimilar systems. As this is regarded as a major drawback, even for deterministic systems as illustrated in the example in Figure~\ref{fig:runprefex1}, it is really not an interesting question to be investigated any further.


\begin{figure}[t]
\scalebox{.85}{
 	  \begin{tikzpicture}[scale=1]

 	  \tikzstyle{every ellipse node}=[draw,inner xsep=3.5em,inner ysep=1.2em,fill=black!15!white,draw=black!15!white]
 	  \tikzstyle{every circle node}=[fill=white,draw,minimum size=1.6em,inner sep=0pt]

 	  \draw[use as bounding box,draw opacity=0] (-1,-3) rectangle (7.5,3);

	  \draw(-.9,0)  node(0){};
	  \draw(0,0)	node[label=-90:{\footnotesize$x$},circle](v0){$s_M^0$};
	  \draw(4,1.5)	node[label=-90:{\footnotesize$x$},circle](v1){$s_1$};
	  \draw(4,-1.5)	node[label=-90:{\footnotesize$x$},circle](v2){$s_2$};



	\draw[-latex] (0) -- (v0);
 	\draw[-latex] (v0) --node[pos=.5,fill=white](){\footnotesize$\begin{array}{l}a,a\\b,b\end{array}$} (v1);
 	\draw[-latex] (v0) --node[pos=.5,fill=white](){\footnotesize$\begin{array}{l}a,b\\b,a\end{array}$} (v2);

 	%

 	\draw[-latex] (v2.70-90) .. controls +(60-90:4em) and +(120-90:4em) .. node[pos=.5,fill=white,right,xshift=.5ex](){\footnotesize$\ast$} (v2.110-90);

 	\draw[-latex] (v1.70-90) .. controls +(60-90:4em) and +(120-90:4em) .. node[pos=.5,fill=white,right,xshift=.5ex](){\footnotesize$\ast$} (v1.110-90);


 	\end{tikzpicture}
 	}
	\scalebox{.8}{
	 	  \begin{tikzpicture}[scale=1]

	 	  \tikzstyle{every ellipse node}=[draw,inner xsep=3.5em,inner ysep=1.2em,fill=black!15!white,draw=black!15!white]
	 	  \tikzstyle{every circle node}=[fill=white,draw,minimum size=1.6em,inner sep=0pt]

	 	  \draw[use as bounding box,draw opacity=0] (-2,-3) rectangle (6.5,3);

		  \draw(-.9,0)  node(0){};
		  \draw(0,0)	node[label=-90:{\footnotesize$x$},circle](v0){$s_{M'}^0$};
		  \draw(4,0)	node[label=-90:{\footnotesize$x$},circle](v1){$s_1$};



		\draw[-latex] (0) -- (v0);
	 	\draw[-latex] (v0) --node[pos=.5,fill=white](){\footnotesize$\begin{array}{l}\ast\end{array}$} (v1);

	 	%


	 	\draw[-latex] (v1.70-90) .. controls +(60-90:4em) and +(120-90:4em) .. node[pos=.5,fill=white,right,xshift=.5ex](){\footnotesize$\ast$} (v1.110-90);


	 	\end{tikzpicture}
	 	}

\caption{A pair of bisimilar systems, $M$ and $M'$, where the sets of run-based preferences given by~$\Gamma_1 = \{s^0_M,s_1,s_1,\ldots\}$ for player~1 and $\Gamma_2 = \{s^0_M,s_2,s_2,\ldots\}$ for player~2 in system~$M$, do not have a congruent counterpart in system~$M'$.}%
\label{fig:runprefex1}
\end{figure}
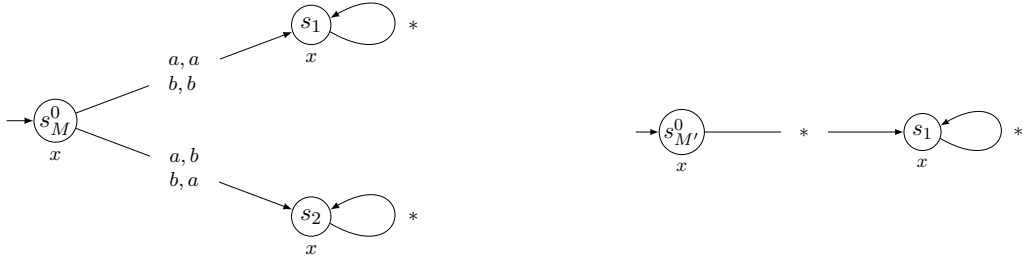

\section{Strategy Logics: New Semantic Foundations}%
\label{secn:logic}

Several logics for strategic reasoning have been proposed in the
literature of computer science and AI, such as
ATL$^*$~\cite{AHK02}, Strategy Logic~\cite{MMPV14,CHP10},
Coalition Logic~\cite{P02}, Coordination
Logic~\cite{FS10}, Game Logic~\cite{PP03a}, and
Equilibrium Logic~\cite{GHW17}. In several cases, the model of
strategies that is used is the one that we refer to as run-based in
this paper, that is, strategies are functions from finite sequences of
states (of some arena) to actions/decisions/choices of players in a
given game. As can be seen from our results so far, of the four
options we have explored, run-based strategies form the least
desirable model of strategies from a semantic point of view since in
such a case Nash equilibrium is not preserved under bisimilarity.

This does not necessarily immediately imply that a particular logic
with a run-based strategy model is not invariant under
bisimilarity. For instance, ATL$^*$ is a bisimulation-invariant logic
and, as shown in~\cite{GHW15-concur} one can reason about Nash
equilibrium using ATL$^*$ only up-to bisimilarity. A question then
remains: whether any of these logics for strategic reasoning becomes
invariant under bisimilarity---as explained before, a desirable
property---if one changes the model of strategies considered there to,
for instance, computation-based or trace-based strategies. We find
that this question has a satisfactory positive answer in some cases.
	In particular, we will consider the above question in the context of Strategy Logic as studied in~\cite{MMPV14}, and in doing so we will provide new semantic foundations for strategy logics.

	Let us start by introducing the syntax and semantics under the
	run-based
	 model of strategies for Strategy Logic (SL~\cite{MMPV14}) as it has been given in~\cite{MMPV16}.
	Syntactically, SL extends LTL with two \emph{strategy quantifiers}, $\EExs{x}$ and $\AAll{x}$, and an \emph{agent binding} operator $(i, x)$, where $i$ is an agent and $x$ is a variable.
	Intuitively, these operators can be understood as \emph{``there exists a strategy $x$''}, \emph{``for all strategies $x$''}, and \emph{``bind agent $i$ to the strategy associated with the variable $x$''}, respectively.
	Formally, SL formulae are inductively built from a set of atomic propositions $\AP$, variables $\VarSet$, and agents $\Ag$, using the following grammar, where $p \in \AP$, $x \in \VarSet$, and $i \in \Ag$:

	\begin{center}
		$\phi ::= p \mid \neg \phi \mid \phi \wedge \phi \mid \X \phi \mid \phi \U \phi \mid \EExs{x} \phi \mid \AAll{x} \phi \mid (i, x) \phi$.
	\end{center}
	We also use the usual abbreviations for LTL formulae, that is, those
  for Boolean and temporal logic formulae.

	We can now present the semantics of SL formulae.
	Given a concurrent game structure $M$, for all SL formulae $\phi$,
  states $s \in \States$ in $M$, and assignments $\asgFun \in
  \AsgSet = (\VarSet \cup \AgSet) \to \StrSet$, mapping variables and
  agents to strategies, the relation $M, \asgFun, s \models \phi$ is
  defined as follows:
	\begin{enumerate}
		\item\label{def:sl(semantics:ap)}
			$M, \asgFun, s \models p$ if $p \in \labFun(s)$,
			with $p \in \APSet$.
		\item\label{def:sl(semantics:bool)}
			For all formulae $\phi$, $\phi_{1}$, and $\phi_{2}$, we have:
			\begin{enumerate}
				\item\label{def:sl(semantics:neg)}
					$M, \asgFun, s \models \neg \phi$ if not $M,
					\asgFun, s \models \phi$;
				\item\label{def:sl(semantics:conj)}
					\scalebox{.92}[1]{$M, \asgFun, s \models \phi_{1} \wedge \phi_{2}$ if
					$M, \asgFun, s \models \phi_{1}$ and $M, \asgFun,
					s \models \phi_{2}$.} 
			\end{enumerate}
		\item\label{def:sl(semantics:qnt)}
			For all formulae~$\varphi$ and variables $x \in \VarSet$ we have:
			\begin{enumerate}
				\item\label{def:sl(semantics:eqnt)}
					$M, \asgFun, s \models \EExs{x} \phi$ if there
					is a strategy $\strFun \in \StrSet$ such that $M,
					\asgFun[x \mapsto \strFun], s \models \phi$;
				\item\label{def:sl(semantics:aqnt)}
					$M, \asgFun, s \models \AAll{x} \varphi$ if for all
					strategies $\strFun \in \StrSet$ we have that $M,
					\asgFun[x \mapsto \strFun], s \models \phi$.
			\end{enumerate}
		\item\label{def:sl(semantics:bnd)}
			For all $i \in \AgSet$ and $x \in \VarSet$, we have
			$M, \asgFun, s \models
			(i, x) \phi$ if $M, \asgFun[i \mapsto
			\asgFun(x)], s \models \phi$.
		\item\label{def:sl(semantics:path)}
			Moreover, for all formulas $\phi$, $\phi_{1}$, and $\phi_{2}$, we
			have:
			\begin{enumerate}
				\item\label{def:sl(semantics:next)}
					$M, \asgFun, s \models \X \phi$ if $M, {(\asgFun, s)}^{1}, \transf(s, \direction) \models \phi$, where $\direction$ is the decision taken from $s$ by following $\asgFun$ and ${(\asgFun, s)}^{1}$ is the update of the assignment function as described in~\cite{MMPV14};
				\item\label{def:sl(semantics:until)}
					$M, \asgFun, s \models \phi_{1} \U \phi_{2}$ if
					there exist $k \in  \SetN$ such that $M,
					{(\asgFun, s)}^{k}, \transf(s, \vec{\direction}) \models \phi_{2}$ and, for all $h \in \SetN$ with $h \leq k$, we have $M, {(\asgFun, s)}^{h}, \transf(s, \vec{\direction}_{\leq h}) \models \phi_{1}$, where $\vec{\direction}$ is the sequence of decisions identified by the assignment function $\asgFun$ starting from $s$, and ${(\asgFun, s)}^{k}$ is the update of the assignment given by the execution of $k$ steps of the strategy profile in $\asgFun$ starting from $s$.
			\end{enumerate}
	\end{enumerate}

    \noindent
	Intuitively, rules~\ref{def:sl(semantics:eqnt)}
	and~\ref{def:sl(semantics:aqnt)}, respectively, are used to interpret the existential
	$\EExs{x}$ and universal $\AAll{x}$ quantifiers over strategies, and
	rule~\ref{def:sl(semantics:bnd)} is used to bind an agent to the strategies associated
	with variable~$x$. All other rules are as in LTL over concurrent game structures.

	As can be seen from its semantics, SL can be interpreted under
  different models of strategies and goals.  As it was originally formulated, SL
  considers run-based strategies and trace-based preferences/goals.
  More specifically, the model of goals is a proper subset of the
  trace-based one, represented by LTL goals over the set $\APSet$ of
  variables.
	In SL, it is possible to represent the existence of a Nash equilibrium in a
  concurrent game structure~\cite{MMPV14}.  This implies, given Theorem~\ref{thm:variance},
  that SL under the standard interpretation is not invariant under
  bisimulation, as the formula expressing the existence of a Nash equilibrium can
  distinguish between two bisimilar models.

  Given the semantics of SL formulae given above,
  we now consider SL under the model of computation-based strategies, and find that
  in such a case SL becomes invariant under bisimilarity.
  %
  Formally, we have the following result.

	\begin{thm}%
		\label{thm:slcompbisin}
		Let $M_{1} = (\Ag, \AP, \Ac, \States_{1}, s_{1}^{0}, \labFun_{1}, \transf_{1})$ and $M_{2} = (\Ag, \AP, \Ac, \States_{2}, s^{0}_{2}, \labFun_{2}, \transf_{2})$ be two bisimilar CGSs.
		Moreover, let $\asgFun$ be an assignment of strategies and $s_{1} \sim s_{2} $ be two bisimilar states.
		Then, for all $\varphi \in SL$, it holds that
		\[
		M_{1}, \asgFun,  s_{1}, \models \varphi \qquad \text{ if and only if } \qquad M_{2}, \asgFun, s_{2} \models \varphi.
		\]
	\end{thm}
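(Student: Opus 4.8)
The plan is to proceed by structural induction on the SL formula $\varphi$, establishing the biconditional simultaneously for \emph{all} pairs of bisimilar states $s_1\bisim s_2$ and \emph{all} assignments $\asgFun\in\AsgSet$. Phrasing the induction hypothesis with these two universal quantifiers is essential, since the temporal operators move to successor states and update the assignment, so the hypothesis must already range over arbitrary bisimilar pairs and arbitrary assignments. The enabling observation, supplied by Lemma~\ref{lemma:strategy_identity} together with Lemma~\ref{lemma:computation_identity}, is that under the computation-based interpretation the set $\StrSet$ of strategies is \emph{literally the same} in $M_1$ and $M_2$ (their common domain being $\fincomputations_{M_1}=\fincomputations_{M_2}$). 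Consequently an assignment $\asgFun$ is a single object usable verbatim in both structures, the quantifier ranges of $\EExs{x}$ and $\AAll{x}$ coincide, and no translation of strategies between the two models is required.

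For the base and Boolean cases the argument is routine. For an atomic proposition $p$ I would use the first bisimulation clause, $\labFun_1(s_1)=\labFun_2(s_2)$, so that $p\in\labFun_1(s_1)$ iff $p\in\labFun_2(s_2)$. Negation and conjunction follow immediately from the induction hypothesis applied to the subformulae at the \emph{same} pair $(s_1,s_2)$ and assignment $\asgFun$. For $\EExs{x}\phi$ one picks a witnessing strategy $\strFun\in\StrSet$ in $M_1$; since $\StrSet$ is shared, the same $\strFun$ is available in $M_2$, and the induction hypothesis applied to $\phi$ under $\asgFun[x\mapsto \strFun]$ transfers satisfaction across, symmetrically for the converse. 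The case $\AAll{x}\phi$ is dual, and the binding operator $(i,x)$ reduces directly to the induction hypothesis under $\asgFun[i\mapsto\asgFun(x)]$.

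The temporal operators $\X$ and $\U$ carry the substance of the proof. The key point is that the profile of agent strategies extracted from $\asgFun$ is a computation-based profile, so by the second part of Lemma~\ref{lemma:strategy_identity} it induces \emph{one and the same} computation $\computation_{M_1}(\asgFun)=\computation_{M_2}(\asgFun)$ in both structures. Hence the decision $\direction$ taken from $s_1$ by following $\asgFun$ equals the one taken from $s_2$, and applying this identical $\direction$ to the bisimilar states, Lemma~\ref{lemma:deterministic_bisimulation} yields $\transf_1(s_1,\direction)\bisim\transf_2(s_2,\direction)$; more generally, along the entire unfolding the visited states are pairwise bisimilar by Lemma~\ref{lemma:computations_induce_bisimilar_runs}. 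Because the sequence of directions driving the update is the same in both structures, the updated assignments $(\asgFun,s_1)^{k}$ and $(\asgFun,s_2)^{k}$ are again the same shared computation-based object. For $\X\phi$ I would then apply the induction hypothesis to $\phi$ at the bisimilar pair $(\transf_1(s_1,\direction),\transf_2(s_2,\direction))$ under $(\asgFun,s_1)^1=(\asgFun,s_2)^1$; for $\phi_1\U\phi_2$ the witnessing index $k$ and all intermediate indices $h\le k$ are shared, and the hypothesis is invoked at each corresponding pair of pairwise-bisimilar states.

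The main obstacle is precisely the verification that the assignment-update mechanism $(\asgFun,s)^k$ is bisimulation-coherent, that is, that it is determined by the shared computation prefix rather than by the concrete states of either structure. This is where one must unfold the definition of the update imported from~\cite{MMPV14} and check that, under the computation-based reading, advancing a strategy records only the sequence of directions already played; since that sequence is identical in $M_1$ and $M_2$ by the computation identity above, the residual strategies, and hence the updated assignments, coincide. Once this coherence is confirmed, the temporal cases close under the induction hypothesis exactly as sketched, completing the induction and the proof.
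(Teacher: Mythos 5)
Your proposal is correct and takes essentially the same route as the paper's proof: a structural induction on $\varphi$ whose enabling fact is that, under the computation-based interpretation, the strategy set is literally shared between bisimilar structures (Lemma~\ref{lemma:strategy_identity}), so quantifier and binding cases transfer verbatim, while the temporal cases use that the identical directions extracted from $\asgFun$ lead to bisimilar successor states. Your explicit treatment of the coherence of the assignment update $(\asgFun,s)^{k}$ is a point the paper's proof handles only implicitly, but it is a refinement of the same argument rather than a different approach.
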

	\begin{proof}
		The proof proceeds by induction on the structure of $\varphi$.
		First note that we do not need to prove all the cases, as, for example, we have that $\psi_{1} \vee \psi_{2} = \neg (\neg \psi_{1} \wedge \neg \psi_{2})$ and $\AAll{x} \varphi = \neg \EExs{x} \neg \varphi$.
		Moreover, recall from Lemma~\ref{lemma:strategy_identity} that every computation-based strategy in $M_{1}$ is also a computation-based strategy in $M_{2}$ and vice-versa.
		We have the following.

		\begin{itemize}
			\item $\varphi = p \in \AP$.
			We have that $M_{1}, \asgFun, s_{1} \models p$ if and only if $p \in \labFun_{1}(s_{1}) = \labFun_{2}(s_{2})$ if and only if $M_{2}, \asgFun, s_{2} \models p$;

			\item $\varphi = \neg \psi$.
			We have that $M_{1}, \asgFun, s_{1} \models \neg \psi$ if and only if $M_{1}, \asgFun, s_{1} \not\models \psi$ if and only if, by induction hypothesis, $M_{2}, \asgFun, s_{2} \not\models \psi$ if and only if $M_{2}, \asgFun, s_{2} \models \neg \psi$.

			\item $\varphi = \psi_{1} \wedge \psi_{2}$.
			We have that $M_{1}, \asgFun, s_{1} \models \psi_{1} \wedge \psi_{2}$ if and only if $M_{1}, \asgFun, s_{1} \models \psi_{1}$ and  $M_{1}, \asgFun, s_{1} \models \psi_{2}$, which holds, by induction hypothesis, if and only if $M_{2}, \asgFun, s_{2} \models \psi_{1}$ and  $M_{2}, \asgFun, s_{2} \models \psi_{2}$ if and only if $M_{2}, \asgFun, s_{2} \models \psi_{1} \wedge \psi_{2}$.

			\item $\varphi = \X \psi$.
			We have that $M_{1}, \asgFun, s_{1} \models \X \psi$ if and only if $M_{1}, {(\asgFun)}^{1}, \transf_{1}(s_{1}, \direction) \models \psi$, where $\direction = (\asgFun(1)(\epsilon), \ldots \asgFun(n)(\epsilon))$ is the decision taken by the agents on the first round of the game, according the assignment $\asgFun$. By bisimilarity, we have that $\transf_{1}(s_{1}, \direction) \sim \transf_{2}(s_{2}, \direction)$, and so, by induction hypothesis, that $M_{2}, {(\asgFun)}^{1}, \transf_{2}(s_{2}, \direction) \models \psi$, that holds if and only if $M_{2}, \asgFun, s_{2} \models \X \psi$.

			\item $\varphi = \varphi_{1} \U \varphi_{2}$.
			We have that $M_{1}, \asgFun, s_{1} \models \varphi_{1} \U \varphi_{2}$ if and only if there exists $k \in \SetN$ such that $M_{1}, \asgFun^{k}, \transf_{1}^{*}(s_{1}, \vec{\direction}) \models \varphi_{2}$ and $M_{1}, \asgFun^{h}, \transf_{1}^{*}(s_{1}, \vec{\direction}_{\leq h}) \models \varphi_{1}$ for every $h < k$, where $\vec{\direction}$ is the unique sequence of decisions identified by the $k$-steps application of the transition function that follows $\asgFun$.
			Observe that, for each $h \leq k$, we have that $\transf_{1}(s_{1}, \vec{\direction}_{\leq h}) \sim \transf_{2}(s_{2}, \vec{\direction}_{\leq h})$ and so, by induction hypothesis, we have that $M_{2}, \asgFun^{k}, \transf_{2}^{*}(s_{2}, \vec{\direction}) \models \varphi_{2}$ and $M_{2}, \asgFun^{h}, \transf_{2}^{*}(s_{1}, \vec{\direction}_{\leq h}) \models \varphi_{1}$ for every $h < k$, that is, if and only if $M_{2}, \asgFun, s_{2} \models \varphi_{1} \U \varphi_{2}$.

			\item $\varphi = (i, x) \psi$.
			We have that $M_{1}, \asgFun, s_{1} \models (i, x) \psi$ if and only if $M_{1}, \asgFun[i \mapsto \asgFun(x)], s_{1} \models \psi$ if and only if, by induction hypothesis, $M_{1}, \asgFun[i \mapsto \asgFun(x)], s_{2} \models \psi$ if and only if $M_{2}, \asgFun, s_{2} \models (i, x) \psi$.

			\item $\varphi = \EExs{x} \psi$.
			We have that $M_{1}, \asgFun, s_{1} \models \EExs{x} \psi$ if and only if there exists a strategy $\strFun \in \StrSet$ such that $M_{1}, \asgFun[x \mapsto \strFun], s_{1} \models \psi$ if and only if, by induction hypothesis $M_{2}, \asgFun[x \mapsto \strFun], s_{2} \models \psi$, if and only if $M_{2}, \asgFun, s_{2} \models \EExs{x} \psi$.

		\end{itemize}
		This concludes the proof.
	\end{proof}

	As an immediate corollary, we then obtain the following result about the semantic relationship between the properties that can be expressed in SL and the concept of bisimilarity.

	\begin{cor}%
		\label{cor:slcompbisinvbis}
		SL with the computation-based model of strategies is invariant
    under bisimilarity.
	\end{cor}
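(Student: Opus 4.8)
The plan is to obtain the corollary directly from Theorem~\ref{thm:slcompbisin} by specialising it to the initial states of the two structures. Recall that a logic is \emph{invariant under bisimilarity} precisely when any two bisimilar models satisfy exactly the same sentences; and that, by definition, $M \Bisimilar M'$ means $s^0_{M} \Bisimilar s^0_{M'}$. So the first step is to fix two bisimilar CGSs $M \Bisimilar M'$ and an arbitrary SL sentence $\varphi$, and to observe that $s^0_{M} \Bisimilar s^0_{M'}$ holds by the very definition of structure bisimilarity.

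Next I would address the role of the assignment. Theorem~\ref{thm:slcompbisin} is stated for an arbitrary assignment $\asgFun$ used \emph{simultaneously} in both structures, which is meaningful because, by Lemma~\ref{lemma:strategy_identity}, every computation-based strategy in $M$ is also a computation-based strategy in $M'$ and vice versa; hence a single $\asgFun \in (\VarSet \cup \Ag) \to \StrSet$ is a legal assignment over both $M$ and $M'$. For a sentence $\varphi$ every variable is quantified and every agent is bound within $\varphi$, so the truth of $\varphi$ does not depend on the initial assignment; I would therefore fix any such $\asgFun$ and write $M \models \varphi$ as shorthand for $M, \asgFun, s^0_{M} \models \varphi$.

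The conclusion is then immediate: applying Theorem~\ref{thm:slcompbisin} with $s_1 = s^0_{M}$, $s_2 = s^0_{M'}$, and the chosen $\asgFun$ yields $M, \asgFun, s^0_{M} \models \varphi$ if and only if $M', \asgFun, s^0_{M'} \models \varphi$, that is, $M \models \varphi$ iff $M' \models \varphi$. Since $\varphi$ was an arbitrary SL sentence, the two bisimilar structures satisfy the same SL sentences, which is exactly bisimulation invariance. The only point requiring any care---and the closest thing to an obstacle---is the bookkeeping around free variables: one must note that the transfer provided by the theorem for arbitrary assignments specialises correctly to sentences, and that the shared assignment is well typed over both structures by Lemma~\ref{lemma:strategy_identity}. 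Beyond this there is no genuine computational content, which is why the statement is phrased as an immediate corollary.
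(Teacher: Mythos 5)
Your proposal is correct and follows essentially the same route as the paper, which treats the corollary as an immediate consequence of Theorem~\ref{thm:slcompbisin} applied to the initial states $s^0_{M}\Bisimilar s^0_{M'}$ with a common assignment (legitimate by Lemma~\ref{lemma:strategy_identity}). Your additional bookkeeping about sentences and the well-typedness of the shared assignment is sound and merely makes explicit what the paper leaves implicit.
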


Finally, an analogous statement to the above Corollary can also be proved if we consider the model of trace-based strategies,
leading to the next result on the semantics of SL\@.

	\begin{cor}%
		\label{cor:sltrabisinvbis}
		SL with the trace-based model of strategies is invariant
    under bisimilarity.
	\end{cor}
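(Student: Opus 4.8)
The plan is to derive Corollary~\ref{cor:sltrabisinvbis} from a trace-based counterpart of Theorem~\ref{thm:slcompbisin}, obtained by re-running the very same structural induction with ``computation-based'' uniformly replaced by ``trace-based''. Concretely, I would first establish that for bisimilar CGSs $M_1 \sim M_2$, bisimilar states $s_1 \sim s_2$, every assignment $\asgFun$ of \emph{trace-based} strategies, and every $\varphi \in SL$,
\[
	M_1, \asgFun, s_1 \models \varphi \quad\text{iff}\quad M_2, \asgFun, s_2 \models \varphi.
\]
The corollary is then immediate: instantiating $s_1 = s^0_{M_1}$ and $s_2 = s^0_{M_2}$, which are bisimilar whenever $M_1 \sim M_2$, shows that no SL formula can separate $M_1$ from any bisimilar $M_2$.

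For the induction, the atomic case ($p \in \AP$), the Boolean cases ($\neg\psi$ and $\psi_1 \wedge \psi_2$), and the binding case ($(i,x)\psi$) go through verbatim, since they do not refer to the strategy model: they rely only on $\labFun_1(s_1) = \labFun_2(s_2)$, which follows from $s_1 \sim s_2$, together with the induction hypothesis. The quantifier case $\EExs{x}\psi$ is where the change of strategy model is felt, and here I would invoke the trace-based part of Lemma~\ref{lemma:strategy_identity}: every trace-based strategy $\strFun$ for a player in $M_1$ is also a trace-based strategy in $M_2$, and conversely. Hence a witness strategy on one side is automatically a legitimate witness on the other, and applying the induction hypothesis to $\asgFun[x \mapsto \strFun]$ closes the case exactly as in the proof of Theorem~\ref{thm:slcompbisin}.

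The main obstacle lies in the temporal operators $\X\psi$ and $\varphi_1 \U \varphi_2$, where one must verify that following the trace-based strategies in $\asgFun$ from $s_1$ and from $s_2$ yields matching decisions and hence bisimilar successor states. The key observation is that a trace-based strategy depends only on the trace generated so far, and by Lemma~\ref{lemma:computations_induce_bisimilar_runs} statewise bisimilar runs induce \emph{identical} traces; consequently the decision $\direction$ selected by $\asgFun$ at $s_1$ coincides with the one selected at $s_2$, whereupon Lemma~\ref{lemma:deterministic_bisimulation} gives $\transf_1(s_1, \direction) \sim \transf_2(s_2, \direction)$. Iterating along the play, the positional assignment updates $(\asgFun, s)^k$ stay synchronized on both sides precisely because they are driven by the same finite trace, so the induction hypothesis applies at every step, with $\transf^*_1(s_1, \vec{\direction}_{\le h}) \sim \transf^*_2(s_2, \vec{\direction}_{\le h})$. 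The one point requiring genuine care is this synchronization: one must make explicit that the object on which a trace-based strategy is evaluated—namely the trace, rather than the full computation used in the computation-based argument—is the quantity preserved across the bisimulation. Once this is checked, the $\X$ and $\U$ cases reduce to the induction hypothesis exactly as in Theorem~\ref{thm:slcompbisin}, and the corollary follows as an instantiation at the initial states, \emph{mutatis mutandis} to Corollary~\ref{cor:slcompbisinvbis}.
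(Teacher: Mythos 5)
Your proposal is correct and follows essentially the same route as the paper: the paper gives no separate argument for this corollary, stating only that it holds by an argument analogous to Theorem~\ref{thm:slcompbisin} and Corollary~\ref{cor:slcompbisinvbis}, i.e., precisely the \emph{mutatis mutandis} re-run of the structural induction that you carry out, with the trace-based part of Lemma~\ref{lemma:strategy_identity} replacing the computation-based part and Lemmas~\ref{lemma:computations_induce_bisimilar_runs} and~\ref{lemma:deterministic_bisimulation} handling the synchronization in the temporal cases. Your write-up in fact makes explicit the synchronization detail (identical traces driving identical decisions) that the paper leaves implicit.
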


\section{Concluding Remarks and Related Work}\label{secn:conc}
In this paper we showed that with the conventional model of strategies
used in the logic, computer science, and AI literatures, the existence
of Nash equilibria is not necessarily preserved under bisimilarity---in
particular this is the case for multi-player games played over
deterministic concurrent games structures.  By way of some examples, we also
illustrated some of the implications of this result---for example, in
the context of automated formal verification. To resolve this
difficulty, we furthermore investigated alternative models of
strategies which exhibit some desirable properties, in particular,
{\em allowing for a formalisation of Nash equilibrium that is
  invariant under bisimilarity}, even on nondeterministic systems.

We studied applications of these models and found that through their
use, not only Nash equilibria become invariant under bisimilarity, but
also full logics such as Strategy Logic.  This renders it possible to
combine commonly used optimisation techniques for model checking with
decision procedures for the analysis of Nash equilibria, thus
overcoming a critical problem of this kind of logics regarding
practical applications via automated verification.
Some work also in the intersection between bisimulation equivalences,
concurrent game structures, Nash equilibria, and automated formal
verification is summarised next.

\subsection*{Logics for Strategic Reasoning}
There is now a large literature on logics for strategic reasoning.
From this literature, ATL$^*$~\cite{AHK02} and SL~\cite{MMPV14} stand
out, both due to their use within a number of practical tools for
automated verification, and because of their expressive power. On the
one hand, ATL$^*$ is known to be invariant under bisimilarity using
the conventional model of strategies. As such, Nash equilibria can be
expressed within ATL$^*$ only up to
bisimilarity~\cite{GHW15-concur}. On the other hand, SL, which is
strictly more expressive than ATL$^*$, allows for a simple
specification of Nash equilibria, but suffers from not being invariant
under bisimilarity with respect to the conventional model of
strategies. In this paper, we have put forward a number of solutions
to this problem. An additional advantage of replacing the model of
strategies for SL (and therefore for concurrent game structures) is
that other solution concepts in game theory also become invariant
under bisimilarity. For instance, subgame-perfect Nash equilibria and
strong Nash equilibria---which are widely used when considering,
respectively, dynamic behaviour and cooperative behaviour in
multi-agent systems---can also be expressed in SL\@.  Our results
therefore imply that these concepts are also invariant under
bisimilarity, when considering games over concurrent game structures
and goals given by LTL formulae (which correspond to preferences over
traces).

\subsection*{Bisimulation Equivalences for Multi-Agent Systems}
Even though bisimilarity is probably the most widely used behavioural equivalence in concurrency, in the context of multi-agent systems other relations may be preferred, for instance, equivalence relations that take a detailed account of the independent interactions and behaviour of individual components in a multi-agent system.
In such a setting, ``alternating'' relations with natural ATL$^*$ characterisations have been studied~\cite{AHKV98}.
Our results also apply to such alternating equivalence relations.
Alternating bisimulation is very similar to bisimilarity on labelled transition systems~\cite{Milner80,HM85}, only that when defined on concurrent game structures, instead of action profiles taken as possible transitions, one allows individual player's actions, which must be matched in the bisimulation game. Because of this, it immediately follows that any alternating bisimulation as defined in~\cite{AHKV98} is also a bisimilarity as defined here. Despite having a different formal definition, a simple observation can be made: that the counter-example shown in Figures~\ref{fig:cgswithne} and~\ref{fig:cgswithoutne} also apply to such alternating (bisimulation) relations. This immediately implies that Nash equilibria are not preserved by the alternating (bisimulation) equivalence relations in~\cite{AHKV98} either.
Nevertheless, as discussed in~\cite{Benthem02}, the ``right'' notion of equivalence for games and their game theoretic solution concepts is, undoubtedly, an important and interesting topic of debate, which deserves to be investigated further.

\subsection*{Computations vs.\ Traces}
An important remark about the difference between computations and traces is that even though Nash equilibria and their existence are preserved under bisimilarity by three of the four strategy models we have studied, it is not the case that with each strategy model we obtain the same set of Nash equilibria in a given system, or that we can sustain the same set of computations or traces.
For instance, consider again the games in Figures~\ref{fig:cgswithne} and~\ref{fig:cgswithoutne}.
As we discussed above, if we consider the model of computation-based strategies and LTL goals ({\em i.e.}, trace-based goals) as shown in the example, then we obtain two  games, each with an associated non-empty set of Nash
equilibria, which are preserved by bisimilarity.
However, if we consider, instead, the model of trace-based strategies and the same LTL goals, then we obtain two concurrent games both with empty sets of Nash equilibria---thus, in this case, the non-existence of Nash equilibria is preserved by bisimilarity! To observe this, note that whereas in the case of computation-based strategies player~$3$ can implement a uniform ``punishment'' strategy for both player~$1$ and player~$2$, in the case of trace-based strategies player~$3$ cannot do so, even in the game in Figure~\ref{fig:cgswithne}.

\subsection*{Two-Player Games with Trace-Based Goals}
We also showed that if we consider two-player games together with the conventional model of strategies, the problems that arise with respect to the preservation of Nash equilibria disappear.
This is indeed an important finding since most verification games ({\em e.g.}, model and module checking, synthesis, etc.) can be phrased in terms of zero-sum two-player games together with temporal logic specifications ({\em e.g.}, using LTL, CTL, or ATL$^*$).
Our results, then, provide conclusive proof that, if only two-player games and temporal logic goals are needed, then all equilibrium analyses can be carried out using the conventional model of strategies---along with their associated reasoning tools and formal verification techniques.

\subsection*{Nondeterminism}
We extended our main bisimulation-invariant results to nondeterministic systems, making it possible to analyse more complex systems.
This was possible, in turn, because our two main models of strategies, namely computation-based and trace-based, are themselves oblivious to nondeterministic choices.
As a consequence, given a particular strategy (or strategy profile, more generally), the set of outcomes of a multi-player game across bisimilar structures remains the same.
Indeed, the definitions of strategies in the computation-based and trace-based models can be used to show that the set of Nash equilibria in strategy profiles given by these two models is invariant across systems that are equivalent with respect to equivalences for concurrency that are weaker than bisimilarity; for instance, across trace equivalent systems as defined in CSP~\cite{BrookesHR84}.
Thus, with respect to this kind of systems, all our positive results also carry over, even for nondeterministic processes.

\subsection*{Tools for Model Checking and Equilibrium Analysis}
Due to the success of temporal logics and model checking in the verification of concurrent and multi-agent systems, some model checking tools have been extended to cope with the strategic analysis of concurrent systems modelled as multi-player games. For instance, tools such as MCMAS~\cite{CLMM14}, EAGLE~\cite{TGW15}, PRALINE~\cite{Brenguier13}, MOCHA~\cite{AHMQRT98}, and PRISM~\cite{KPW16}, allow for the analysis of \emph{some} strategic properties in a system.
Because all of these tools rely on underlying algorithms for temporal logic model checking, hardly any optimisations are possible when moving to the more complex game-theoretic setting where Nash equilibria needs to be analysed.
In this way, our results find a powerful, and immediate, practical application. Indeed, based on the work presented in this paper, we have developed a new tool for temporal equilibrium analysis~\cite{GNPW18}, which uses the computation-based model of strategies studied here.

As mentioned before, we have developed a new tool for temporal equilibrium analysis, which we call~{EVE}~\cite{GNPW18} (Equilibrium Verification Environment). EVE uses the computation-based model of strategies and trace-based preferences given by LTL formulae. EVE is a formal verification tool for the automated analysis of temporal equilibrium properties of concurrent and multi-agent systems modelled using the {Simple Reactive Module Language} (SRML~\cite{AH99b,vanderhoek:2006c}) as a collection of independent system components (players/agents in a game). In particular, EVE automatically solves three key decision problems in rational synthesis and verification~\cite{GutierrezHW17,WGHMPT16,FismanKL10}: \textsc{Non-Emptiness}, \textsc{E-Nash}, and \textsc{A-Nash}. These problems ask, respectively, whether a multi-player game has at least one Nash equilibrium, whether an LTL formula holds on \emph{some} Nash equilibrium, and whether an LTL formula holds on \emph{all} Nash equilibria. EVE uses a technique based on parity games to check for the existence of Nash equilibria in a concurrent and multi-player game, which crucially relies on the underlying model of strategies being bisimulation invariant.



%
%

\section*{Acknowledgment}
This paper is a revised and extended version of~\cite{GutierrezHPW17}. All authors acknowledge with gratitude the financial support of ERC Advanced Investigator Grant 291528 (``RACE'') at the University of Oxford.
Paul Harrenstein was also supported in part by ERC Starting Grant 639945 (``ACCORD'') also at the University of Oxford. Michael Wooldridge and Paul Harrenstein furthermore acknowledge the financial support of the Alan Turing Institute in London.
Giuseppe Perelli was also supported in part by the ERC Consolidator Grant 772459 (``DSynMA'').
We also thank Johan van Benthem, the reviewers of CONCUR 2017, and the participants of Dagstuhl seminar 17111 (``Game Theory in AI, Logic, and Algorithms'') for their comments and helpful discussions. Finally, we would also like to thank the reviewers of \emph{Logical Methods in Computer Science} for their detailed and thoughtful comments.

\bibliographystyle{alpha}
\bibliography{References}

\end{document}